\newcommand{\nocontentsline}[3]{}
\newcommand{\tocless}[2]{\bgroup\let\addcontentsline=\nocontentsline#1{#2}\egroup}
\numberwithin{equation}{section}
\newtheorem{theorem}{Theorem}[section]
\newtheorem{lemma}[theorem]{Lemma}
\newtheorem{proposition}[theorem]{Proposition}
\newtheorem{remark}[theorem]{Remark}
\newtheorem{definition}[theorem]{Definition}
\theoremstyle{definition}
\renewcommand{\tilde}{\widetilde}          
\DeclareMathSymbol{\leqslant}{\mathalpha}{AMSa}{"36} 
\DeclareMathSymbol{\geqslant}{\mathalpha}{AMSa}{"3E} 
\DeclareMathSymbol{\eset}{\mathalpha}{AMSb}{"3F}     
\renewcommand{\leq}{\;\leqslant\;}                   
\renewcommand{\geq}{\;\geqslant\;}                   
\newcommand{\dd}{\text{\rm d}}             
\newcommand{\C}{\mathbb{C}}
\renewcommand{\H}{\mathbb{H}}
\newcommand{\D}{\mathbb{D}}
\newcommand{\R}{\mathbb{R}}
\newcommand{\Z}{\mathbb{Z}}
\newcommand{\N}{\mathbb{N}}
\newcommand{\A}{\mathbb{A}}
\newcommand{\E}{\mathds{E}}
\renewcommand{\P}{\mathds{P}}
\newcommand{\cjd}{\rangle}
\newcommand{\cjg}{\langle}
\newcommand{\hf}{\frac{_1}{^2}}
\newcommand{\pl}{\partial}
\newcommand{\bbar}{\overline}
\newcommand{\mc}{\mathcal}
\newcommand{\la}{\lambda}
\def\eps{\epsilon}
\def\T{\mathbb{T}}
\def\bi{\begin{itemize}}
\def\ei{\end{itemize}}
\def\bnum{\begin{enumerate}}
\def\enum{\end{enumerate}}
\def\<#1{\langle #1 \rangle}
\newcommand{\caA}{{\mathcal A}}
\newcommand{\caB}{{\mathcal B}}
\newcommand{\caC}{{\mathcal C}}
\newcommand{\caN}{{\mathcal N}}
\newcommand{\caP}{{\mathcal P}}
\author{Colin Guillarmou}
\address{Universit\'e Paris-Saclay, CNRS,  Laboratoire de math\'ematiques d'Orsay, 91405, Orsay, France.}
\email{colin.guillarmou@universite-paris-saclay.fr}
\author{R\'emi Rhodes}
\address{Aix-Marseille Universit\'e, CNRS  (UMR 7373), Institut de Math\'ematiques de Marseille (I2M), and Institut Universitaire de France (IUF)}
\email{remi.rhodes@univ-amu.fr}
\author{Baojun Wu}
\address{Peking University, Beijing International Center for Mathematical Research (BICMR)}
\email{ wubaojunmathe@outlook.com }
\title{Conformal Bootstrap for surfaces with boundary in Liouville CFT. Part 1: Segal axioms.}    
\date{}
\begin{document}

\begin{abstract} 
This paper is the first part of the proof of the conformal bootstrap for Liouville conformal field theory on surfaces with a boundary, devoted to Segal's axioms in this context.  We introduce the notion of Segal's amplitudes on surfaces with corners and prove the gluing property for such amplitudes. The semi-group of half-annuli and its generator are studied and we develop the necessary material for proving its spectral decomposition using scattering theory in the companion paper \cite{GRW2}. The Segal gluing properties and the spectral decomposition allows us to prove the conformal bootstrap formula for correlation functions of Liouville conformal field theory with a boundary. This  has several important applications to the study of conformal blocks (analyticity and convergence) in \cite{remypreprint}, in the construction of a unitary representation of mapping class group in the space of conformal blocks,  and the study of random moduli \cite{ARSmoduliRPM} in Liouville quantum gravity.
  \end{abstract}

\maketitle
\tableofcontents

\section{Introduction}


The study of Conformal Field Theory (CFT) defined on Riemann surfaces with a boundary, dubbed boundary CFT, started 40 years ago in physics and has found important applications ranging from condensed matter physics to particle physics, from cosmology to string theory. Major contributions were developed, for instance,  by John Cardy \cite{CARDY1984514,MR1048596}. The conformal bootstrap, pioneered by \cite{BPZ84}, is a general philosophy to solve CFT in physics, which consists in encoding the symmetries of the system and imposing some consistency conditions in order to derive  constraints that lead, eventually, to an exact expression for the correlation functions of the CFT in terms of representation theoretic special functions. In the case of boundary CFT, the conformal bootstrap led to specific results, which turned out to have many important applications as we will (at least partly) see. The purpose of this manuscript is to initiate  the mathematical study of the conformal bootstrap of a specific boundary CFT: the boundary Liouville CFT, which we will describe soon.

In mathematics, the paper \cite{BPZ84} has been as influential as challenging. Various methods have been developed to axiomatize CFT in mathematical terms. Borcherds \cite{Borcherds} and Frenkel-Lepowsky-Meurman \cite{Frenkel:1988xz}  introduced the concept of Vertex Operator Algebras, based on representation theory and algebraic geometry, providing a formal framework  in line with the conformal bootstrap philosophy in physics. However, this approach is mainly limited so far to rational CFTs, such as minimal models.  
Friedan and Shenker \cite{FriedanShenker87} formulated 2D CFT using analytic geometry on the moduli space of Riemann surfaces, treating correlation and partition functions as analytic functions on this space. They introduced a holomorphic vector bundle and a projectively flat Hermitian connection, viewing the partition function as the squared norm of a holomorphic section of this bundle. This method has inspired further work in algebraic geometry.
In   \cite{Segal87}, Graeme Segal  proposed a set of  axioms  to capture the conformal bootstrap approach to CFTs using a geometrical perspective, inspired by heuristics based on the path integral approach to CFT (for a path integral oriented introduction to mathematicians see \cite{Gawedzki96_CFT}, or the lecture notes \cite{LectureAndre}). A related description has been developed by Moore and Seiberg \cite{Moore-Seiberg} in the case of Rational CFTs.
 In the case of closed Riemann surfaces, Segal's axioms basically assume that one can define the CFT on Riemann surfaces with analytic boundary, namely a closed Riemann surface with analytic disks removed (hence the resulting Riemann surface has a boundary made up of several analytic circles), with some Dirichlet type boundary conditions imposed for the CFT  along the boundary circles. Such objects are called amplitudes and the boundary conditions are assumed to live in some prescribed Hilbert space. As such, they can be identified with operators acting on and taking values in tensor products of the Hilbert space, each copy of the Hilbert space being attached to one boundary circle. The main axiom is then that the amplitudes, seen as operators, compose in a natural way   under gluing of the underlying Riemann surfaces along the boundary circles\footnote{The gluing should actually be compatible with an orientation prescribed on the boundary circles but we will sweep  this point under the rug in this introduction.}. This can be rephrased in terms of functorial correspondence between the categories of 2-cobordisms and trace class operators acting on tensor products of the Hilbert space, hence the name of Segal functor. Then Segal considered the semigroup of amplitudes obtained by gluing annuli. The generator of this semigroup is called the Hamiltonian ${\bf H}$ of the CFT, acting on the Hilbert space. The conformal bootstrap can then be understood as an iterated Plancherel formula associated to a basis of eigenfunctions of this Hamiltonian: 
\begin{itemize}
\item Decomposing the surface into pairs of pants, the correlation/partition functions can be expressed as pairings of amplitudes of such pairs of pants, and using the spectral decomposition of ${\bf H}$, they can be written as multiple integrals over the spectrum of  ${\bf H}$ of matrix coefficients of pants amplitudes on the eigenbasis. 
\item Using the conformal symmetries (via Ward identities), these matrix coefficients can be expressed purely in terms of the $3$-point correlation functions of the CFT  on the Riemann sphere and algebraic coefficients involving the central charge and the eigenvalues of ${\bf H}$.  The $3$-point correlation functions depend on the considered model of CFT but the algebraic coefficient do not.
 \end{itemize}
In the final expression for correlation/partition functions, the structure constants are "sewed" together by means of the modulus square of a universal holomorphic functions on the Teichm\"uller space of a given  closed surface with marked points called conformal block, which encapsulates the conformal structure of the surface. Segal's approach of CFT is in some aspects the most intuitive way to understand the conformal bootstrap starting from the statistical physics description of a given CFT. The Segal functor   is crucial in this picture (see \cite{guillarmou2024} for further details). However,  examples of CFT for which Segal's axioms were established are rare (mainly for free theories or variants \cite{posthuma,Tener2017} and Liouville CFT \cite{GKRV21_Segal,CILT}).   The definition of this functor extends to the case of boundary CFT\footnote{Private communication with Andr\'e Henriques.} but no example of CFT obeying this definition is known so far.

\subsection{Liouville CFT}
Liouville CFT was originally introduced by Polyakov \cite{Polyakov81}   as a model for random Riemannian metrics in 2 dimensions arising in string theory and the paper \cite{BPZ84} was primarily aimed at solving the Liouville CFT. At the physics level, on a closed Riemann surface $\Sigma$ equipped with a metric $g$,  it corresponds to the path integral  
\begin{equation}\label{thePI}
F\mapsto \int F(\phi) e^{-S_{{\rm L}}(\phi ,g  )} \,D\phi 
\end{equation}
where $D\phi$ is the formal Lebesgue measure on some space of maps $\phi:\Sigma\to\R$,  $F$ is a  test functions on this functional space and the Liouville functional $S_{{\rm L}} $ is given by
 \begin{equation}\label{AL}
S_{{\rm L}}(\phi ,g  )=       \frac{1}{4\pi} \int_{\Sigma}\big(|d\phi |^2_{{g }}+QK_{{g }} \phi  +4\pi \mu e^{ \gamma \phi  }\big)\, {\rm dv}_{{g }}  .
\end{equation}
 Here $|\cdot|_g$ is the induced metric on $T^\ast\Sigma$,  ${\rm v}_g$ the Riemannian volume measure,
  $K_{{g }}$  is  the scalar curvature of the metric $g$, and the parameters are $\gamma\in (0,2)$, $\mu>0$ and $Q=\frac{\gamma}{2}+\frac{2}{\gamma}$.
The local fields for Liouville CFT  are formally given by $V_\alpha(z,\phi)=e^{\alpha\phi(z)}$ with $\alpha\in\C$ so that the correlation functions are then formally defined by
 \begin{equation}\label{correl}
\langle\prod_{i=1}^mV_{\alpha_i}(z_i)\rangle=\int_{\Sigma}  \Big(\prod_{i=1}^ne^{\alpha_i\phi(z_i)}\Big)e^{- S_{{\rm L}}(\phi ,g  )}D\phi 
\end{equation}
 where $z_1,\dots,z_m$ are distinct points on $\Sigma$ and $\alpha_1,\dots,\alpha_m\in\C$ are complex numbers. The probabilistic construction of this path integral was carried out in \cite{DKRV16,Guillarmou2019} using Gaussian Multiplicative Chaos theory (GMC) \cite{Kahane85,rhodes2014_gmcReview} (this is recalled in Section \ref{sec:backclosed}), the structure constants were computed in \cite{KRV19_local,KRV_DOZZ}, and shown to coincide with the DOZZ formula proposed in physics \cite{DornOtto94,Zamolodchikov96}, and     the proof of  the conformal bootstrap for the sphere was then obtained   from the harmonic analysis of the Hamiltonian in \cite{GKRV20_bootstrap}, pioneered in physics by \cite{Teschner_revisited}, and the construction of the Segal functor and the bootstrap on all surfaces in \cite{GKRV21_Segal}.   
 
 \subsection{Liouville CFT with (Neumann) boundary}
 In this paper we initiate the mathematical study of the conformal bootstrap for boundary Liouville CFT.  Boundary Liouville CFT is defined on Riemann surfaces with boundary. The boundary of such surfaces is a collection of circles. On such a Riemann surface $\Sigma$ (with boundary denoted by $\partial \Sigma$) equipped with a Riemannian metric $g$ compatible with the complex structure, Liouville CFT formally corresponds to the path integral \eqref{thePI} where the Liouville action is now
  \begin{equation}\label{ALboundary}
S_{{\rm L}}(\phi ,g  )=       \frac{1}{4\pi} \int_{\Sigma}\big(|d\phi |^2_{{g }}+QK_{{g }} \phi  +4\pi \mu e^{ \gamma \phi  }\big)\, {\rm dv}_{{g }} +  \frac{1}{2\pi} \int_{\partial\Sigma}\big(Qk_{{g }} \phi  +2\pi \mu_B e^{\frac{ \gamma}{2} \phi  }\big)\, \dd \ell_g 
\end{equation}
where $k_g$ is the geodesic curvature, $\dd \ell_g$ is the line element on $\partial\Sigma$, and $\mu_B$ is a nonnegative piecewise constant function on $\partial\Sigma$. The local fields are now of two types: the bulk insertions are $V_\alpha(z,\phi)=e^{\alpha\phi(z)}$, with $\alpha\in\C$ and $z\in \mathring{\Sigma}$,  and the boundary insertions are $V_\beta(s,\phi)=e^{\frac{\beta}{2}\phi(s)}$ with $\beta\in\C$ and $s\in \partial{\Sigma}$. The correlation functions involve both bulk and boundary insertions, and represent the formal path integral
  \begin{equation}\label{correlboundary}
\langle\prod_{i=1}^mV_{\alpha_i}(z_i)\prod_{j=1}^{m_B}V_{\beta_j}(s_j)\rangle=\int_{\Sigma}  \Big(\prod_{i=1}^ne^{\alpha_i\phi(z_i)})\prod_{j=1}^{m_B}e^{\frac{\beta_j}{2}\phi(s_j)})\Big)e^{- S_{{\rm L}}(\phi ,g  )}D\phi 
\end{equation}
 where $z_1,\dots,z_m$ are distinct points in $\mathring{\Sigma}$, $s_1,\dots, s_{m_B}$ are distinct points on $\partial\Sigma$, and $\alpha_1,\dots,\alpha_m,\beta_1,\dots,\beta_{m_B}$ are complex numbers.  The rigorous path integral construction using probabilistic methods was initiated for boundary Liouville CFT  in \cite{huang2018,MR3843631} and then extended to all Riemann surfaces in \cite{Wu1}. The construction, recalled in Section \ref{sec:open}, involves the Gaussian Free Field (GFF) with Neumann condition on $\pl \Sigma$.
 
 The CFT structure of boundary Liouville CFT is richer than the case of closed surfaces. There are several possible structure constants, which corresponds to correlation functions on the unit disk with  1 bulk insertion, 1 bulk and 1 boundary insertion, or 2 or 3 boundary insertions. Their expressions was conjectured in physics (see \cite{Nakayama} for a review) and were mathematically proved in \cite{Remy20,MR4483018} when the bulk cosmological constant is null $\mu=0$, and then in the general case in \cite{ang2022fzz,Ang_Remy_Sun_Zhu}.
 
\subsection{Segal functor}
The first goal of this paper is to introduce the necessary mathematical formalism and construct the Segal functor for boundary Liouville CFT.  Segal functor can be interpreted as the operation of decomposing the path integral on a surface into a composition of operators, called \emph{amplitudes}, associated to a geometric decomposition of the surface into smaller pieces (namely surfaces with boundary or with corners) along cuts. These operators are typically bounded, or even Hilbert-Schmidt, on some Hilbert spaces where one associates a Hilbert space  to each cut.\\ 

\textbf{Surface decomposition.}
A surface $\Sigma$ with boundary can be cut into pieces along circles not intersecting $\pl\Sigma$ or half-circles intersecting $\pl \Sigma$, decomposing this surface into Riemann surfaces with \emph{corners} (see Section \ref{sec:back} for more details).  In our probabilistic construction, the cut (circles or half-circles) 
will always be associated to Dirichlet boundary conditions for the  GFF  on the surface, 
while the original boundary before cutting was associated to the Neumann condition for the GFF. 
This leads us to consider surfaces with corners where each boundary circle or half-circle will be equipped with a marking, N or D (for Neumann and Dirichlet), and two adjacent half circles have necessarily different markings. We further impose that such a surface is, roughly speaking,``half a Riemann surface with boundary": this means  that if we double the surface with corners along the boundary with marking N,  we get a Riemann surface with boundary, see Figure \ref{f:corner}. Technically speaking, this forces the corner angles to be $\pi/2$. We shall also assume that the D-marked boundary components are real-analytic.\\

\textbf{Hilbert spaces.} To the D-marked boundary circles or half-circles, we associate two types of 
Hilbert spaces, which will serve to encore the Dirichlet boundary conditions. A circle will be parametrized by $\T=\{ z\in \C\,|\, |z|=1\}$ and a half-circle by $\T^+:=\{ z\in \C\,|\, |z|=1, {\rm Im}(z)\geq 0\}$, and we define 
\begin{align*} 
& \textrm{Hilbert space of the circle:}\qquad \mc{H}:=L^2(H^{-s}(\T),\mu_0) ,   \\
&\textrm{Hilbert space of the half-circle:}\qquad \mc{H}_+:=L^2(H^{-s}_{\rm even}(\T),\mu_0^+) 
\end{align*} 
where $H^{-s}(\T)$ is the Sobolev space of order $-s<0$ (for some $s$ fixed), $H^{-s}_{\rm even}(\T)$ is the closed subspace of $H^{-s}(\T)$ consisting of even functions with respect to the involution $z\mapsto \bar{z}$, $\mu_0$ (resp. $\mu_0^+$) is respectively the distribution law of the random variable $c+\varphi$ on $\T$ (resp. $c+\varphi^h$ on $\T$) where $c\in \R$ is a constant sampled using Lebesgue measure $dc$ and $\varphi,\varphi^h$ are the random centred Gaussian distributions on $\T$ with covariances 
\[\E[\varphi(e^{i\theta})\varphi(e^{i\theta'})]=-\log|e^{i\theta}-e^{i\theta'}|,\quad \E[\varphi^h(e^{i\theta})\varphi^h(e^{i\theta'})]=-\log(|e^{i\theta}-e^{i\theta'}||e^{i\theta}-e^{-i\theta'}|).\]

\textbf{Amplitudes of surfaces with corners.} To each surface $\Sigma$ equipped with a compatible Riemannian metric $g$, and with some D-marked boundary circles $\mc{C}_1,\dots,\mc{C}_{b_\ell}$ and half-circles $\mc{B}_1,\dots,\mc{B}_{b_h}$
parametrized by some real-analytic maps $\zeta_j^{\ell}:\T \to \mc{C}_j$ and $\zeta^h_{j}:\T^+\to \mc{B}_j$, the Segal functor   associate an amplitude (with $\boldsymbol{\zeta}$ denoting the collection of parametrizations)
\[ \mc{A}_{\Sigma,g,\boldsymbol{\zeta}}\in \mc{H}^{\otimes b_\ell}\bigotimes \mc{H}_+^{\otimes b_h}\] 
More generally we can add a collection of marked points ${\bf x}=(x_1,\dots,x_m)$ in the interior $\mathring{\Sigma}$, with associated weights $\boldsymbol{\alpha}=(\alpha_1,\dots,\alpha_m)\in (-\infty,Q)^m$, and points ${\bf s}=(s_1,\dots,s_{m_B})$ on the Neumann boundary,   with weights $\boldsymbol{\beta}=(\beta_1,\dots,\beta_{m_B})\in (-\infty,Q)^{m_B}$, satisfying the Seiberg bounds (with $\chi(\Sigma)$ the Euler characteristic)
\[\sum_{i}\alpha_i + \sum_j \frac{\beta_j}{2}>Q\chi(\Sigma).\]
Under this assumption, we define the Liouville amplitude 
\[\mc{A}_{\Sigma,g,{\bf x},{\bf s},\boldsymbol{\alpha},\boldsymbol{\beta},\boldsymbol{\zeta}}\in \mc{H}^{\otimes b_\ell}\bigotimes \mc{H}_+^{\otimes b_h}=L^2(H^{-s}(\T)^{b_\ell}\times H^{-s}_{\rm even}(\T)^{b_{h}},\mu_0^{\otimes b_\ell}\otimes {\mu_0^+}^{\otimes b_h})\]
on a surface with corners with marked points in Definition \ref{def:amp} as a (limit of) conditional 
expectations of some random variables constructed using the GFF on $\Sigma$ with Neumann boundary condition on the N-marked boundary and with conditional value on the D boundary components. Technically this is done using the GFF with mixed (Neumann on N and Dirichlet on D-marked boundaries) condition and the harmonic extension on $\Sigma$ of $b_\ell$ independent GFF $\varphi\in H^{-s}(\T)$ on the D-marked boundary circles and $b_h$ independent GFF $\varphi^h\in H^{-s}_{\rm even}(\T)$ on the D-marked half-circles.\\

\textbf{Gluing theorem.} Our main theorem is the gluing Segal axioms\footnote{In the statement below, we number the D-marked circles and half-circles in a way that the glued circles (resp. glued half-circles) are the first one (resp. the last ones) only to simplify the notations in the composition formula.}. 
 \begin{theorem}
For $j=1,2$, let $(\Sigma_j,g_j,\boldsymbol{\zeta}_j)$ be two surfaces with corners and parametrized D-marked analytic boundary circles 
$(\mc{C}_{jk})_{k=1,\dots,b_\ell^j}$ and half-circles $(\mc{B}_{jk})_{k=1,\dots,b_h^j}$, let $({\bf x}_j,\boldsymbol{\alpha}_j)$ be a collection of interior marked points with weights and $({\bf s}_j,\boldsymbol{\beta}_j)$ boundary marked points with weights.\\ 
1) \textbf{Gluing of circles.} Let $(\Sigma,g,\boldsymbol{\zeta})$ be the surface obtained by gluing $\Sigma_1$ to $\Sigma_2$ by identifying $\mc{C}_{11}$ to $\mc{C}_{21}$ thanks to the parametrizations, and let $({\bf x},\boldsymbol{\alpha}), ({\bf s},\boldsymbol{\beta})$ the collection of marked points with weights inherited from $\Sigma_1,\Sigma_2$. Then  for $\tilde{\boldsymbol{\varphi}}_j 
\in H^{-s}(\T)^{b_\ell^j-1}\times H^{-s}_{\rm even}(\T)^{b_h^j}$
\[\mc{A}_{\Sigma,g,{\bf x},{\bf s},\boldsymbol{\alpha},\boldsymbol{\beta},\boldsymbol{\zeta}}(\tilde{\boldsymbol{\varphi}}_1,\tilde{\boldsymbol{\varphi}}_2)=C\int_{H^{-s}(\T)} \caA_{\Sigma_1,g_1,{\bf x}_1,\boldsymbol{\alpha}_1,{\bf s}_1,\boldsymbol{\beta}_1,\boldsymbol{\zeta}_1}(\tilde{\varphi} ,\tilde{\boldsymbol{\varphi}}_1)\caA_{\Sigma_2,g_2,{\bf x}_2,\boldsymbol{\alpha}_2,{\bf s}_2,\boldsymbol{\beta}_2,\boldsymbol{\zeta}_2}(\tilde{\varphi},\tilde{\boldsymbol{\varphi}}_2) \dd\mu_0(\tilde{\varphi})\]
for some explicit constant $C$.
\\
2) \textbf{Gluing of half-circles.} Let $(\Sigma,g,\boldsymbol{\zeta})$ be the surface obtained by gluing $\Sigma_1$ to $\Sigma_2$ by identifying $\mc{B}_{1b_h^1}$ to $\mc{B}_{2b_h^2}$ thanks to the parametrizations, and let $({\bf x},\boldsymbol{\alpha}), ({\bf s},\boldsymbol{\beta})$ the collection of marked points with weights inherited from $\Sigma_1,\Sigma_2$. Then  for $\tilde{\boldsymbol{\varphi}}_j 
\in H^{-s}(\T)^{b_\ell^j}\times H^{-s}_{\rm even}(\T)^{b_h^j-1}$
\[\mc{A}_{\Sigma,g,{\bf x},{\bf s},\boldsymbol{\alpha},\boldsymbol{\beta},\boldsymbol{\zeta}}(\tilde{\boldsymbol{\varphi}}_1,\tilde{\boldsymbol{\varphi}}_2)=C^+\int_{H^{-s}_{\rm even}(\T)} \caA_{\Sigma_1,g_1,{\bf x}_1,\boldsymbol{\alpha}_1,{\bf s}_1,\boldsymbol{\beta}_1,\boldsymbol{\zeta}_1}(\tilde{\boldsymbol{\varphi}}_1,\tilde{\varphi}^h)\caA_{\Sigma_2,g_2,{\bf x}_2,\boldsymbol{\alpha}_2,{\bf s}_2,\boldsymbol{\beta}_2,\boldsymbol{\zeta}_2}(\tilde{\boldsymbol{\varphi}}_2,\tilde{\varphi}^h) \dd\mu_0^+(\tilde{\varphi}^h)\]
for some explicit constant $C^+$.
\end{theorem}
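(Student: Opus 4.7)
The proof is an application of the Markov property of the Gaussian free field (GFF) with mixed Neumann--Dirichlet boundary conditions, combined with the multiplicativity of the Gaussian Multiplicative Chaos (GMC) measure and of the vertex insertions over disjoint open sets. I would follow the template developed for the closed-surface case in \cite{GKRV21_Segal}, adapting it to the mixed boundary setting. Let me describe the circle gluing in detail; the half-circle case then reduces to it by a reflection (doubling) argument.

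\emph{Step 1: Markov decomposition.} Write $\Phi_\Sigma$ for the GFF on the glued surface $\Sigma$ with Neumann condition on the N--marked boundary and Dirichlet condition on the remaining D--marked boundary components $\mc{C}_{12},\dots,\mc{C}_{1b_\ell^1},\mc{C}_{22},\dots,\mc{B}_{2b_h^2}$. Let $\tilde\varphi=\Phi_\Sigma|_{\mc{C}}$ be the trace along the gluing circle $\mc{C}=\mc{C}_{11}=\mc{C}_{21}$ (pulled back to $\T$ via $\zeta_{11}^{\ell}=\zeta_{21}^{\ell}$). The Markov property decomposes
\[
\Phi_\Sigma \;=\; P_\Sigma\tilde\varphi \,+\, \Phi_\Sigma^D,
\]
where $P_\Sigma\tilde\varphi$ is the harmonic extension of $\tilde\varphi$ to $\Sigma$ with the appropriate mixed boundary data on the remaining boundary, and $\Phi_\Sigma^D$ is the independent mixed GFF on $\Sigma\setminus \mc{C}$ with Dirichlet condition added along $\mc{C}$. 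Since the added Dirichlet condition on $\mc{C}$ fully decouples the two sides, $\Phi_\Sigma^D$ splits as a pair of independent mixed GFFs $\Phi_1^D,\Phi_2^D$ on $\Sigma_1,\Sigma_2$, and $P_\Sigma\tilde\varphi=P_1\tilde\varphi\sqcup P_2\tilde\varphi$ is a disjoint union of the harmonic extensions on each piece. Consequently, conditionally on $\tilde\varphi$, the restriction of $\Phi_\Sigma$ to $\Sigma_j$ has the same law as the field used to define $\caA_{\Sigma_j,\dots}(\tilde\varphi,\tilde{\boldsymbol{\varphi}}_j)$.

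\emph{Step 2: Factorization of the Liouville weight.} Under this decomposition, every ingredient in the Liouville functional splits additively between $\Sigma_1$ and $\Sigma_2$: the Dirichlet energy decomposes because $\mc{C}$ has measure zero; the bulk and boundary GMC measures $e^{\gamma\phi}\dd{\rm v}_g$ and $e^{\frac{\gamma}{2}\phi}\dd\ell_g$ are additive over the disjoint open pieces $\mathring\Sigma_1,\mathring\Sigma_2$; the vertex insertions at ${\bf x},{\bf s}$ are distributed between $\Sigma_1$ and $\Sigma_2$ according to the position of each marked point; and the curvature terms $\int K_g\phi\,\dd{\rm v}_g + 2\int k_g\phi\,\dd\ell_g$ decompose because the matching of metrics along $\mc{C}$ produces no distributional curvature contribution. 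Taking conditional expectation with respect to $\tilde\varphi$ therefore turns the Liouville amplitude on $\Sigma$ into the product
\[
\caA_{\Sigma_1,g_1,\dots}(\tilde\varphi,\tilde{\boldsymbol{\varphi}}_1)\,\caA_{\Sigma_2,g_2,\dots}(\tilde\varphi,\tilde{\boldsymbol{\varphi}}_2).
\]

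\emph{Step 3: Identification of the law of $\tilde\varphi$ with $\mu_0$.} Finally, integrating out $\tilde\varphi$ recovers the unconditioned amplitude $\caA_{\Sigma,g,\dots}(\tilde{\boldsymbol{\varphi}}_1,\tilde{\boldsymbol{\varphi}}_2)$. The measure to be integrated against is the law of the trace $\tilde\varphi=\Phi_\Sigma|_\mc{C}$, which is a Gaussian on $H^{-s}(\T)$ combined with the Lebesgue-distributed zero mode $c$ coming from the Neumann zero mode of $\Phi_\Sigma$. The covariance of the centered Gaussian part differs from the reference covariance $-\log|e^{i\theta}-e^{i\theta'}|$ defining $\mu_0$ by a smoothing kernel (roughly the difference between the mixed Green's function of $\Sigma$ restricted to $\mc{C}$ and the round-circle Green's function), and a Cameron--Martin / Girsanov computation rewrites it as $\mu_0$ times a Radon--Nikodym factor. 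Together with the $(2\pi)$-factor from the zero mode normalization and the determinant of the associated Dirichlet-to-Neumann difference operator, this produces the explicit constant $C$ and yields the claimed formula.

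\emph{Half-circle case.} For the gluing along $\mc{B}_{1b_h^1}=\mc{B}_{2b_h^2}$, the same argument applies verbatim once we observe that the mixed GFF near a half-circle cut transverse to the Neumann boundary is canonically identified, via the Schwarz reflection $z\mapsto\bar z$, with a Dirichlet GFF trace that is \emph{even} with respect to this involution. This explains why the trace lives in $H^{-s}_{\rm even}(\T)$ with reference covariance $-\log(|e^{i\theta}-e^{i\theta'}||e^{i\theta}-e^{-i\theta'}|)$, and the integration measure becomes $\mu_0^+$ up to the constant $C^+$.

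\emph{Main obstacle.} The most delicate point is Step 3, namely the rigorous identification of the law of the trace $\tilde\varphi$ with $\mu_0$ (resp.\ $\mu_0^+$) and the precise computation of the constant $C$ (resp.\ $C^+$). This requires a careful Cameron--Martin analysis of the difference of two Gaussian measures on $H^{-s}(\T)$ and control of the regularity of harmonic extensions \emph{at the corners} where N-- and D--marked boundary components meet at right angles; the $\pi/2$ corner angle is precisely what guarantees that the Schwarz reflection above produces a smooth doubled surface, so that the requisite elliptic regularity of the harmonic extension and of the Dirichlet-to-Neumann operator holds uniformly up to the corners.
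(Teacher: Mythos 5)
Your plan follows the same route as the paper's proof of Proposition \ref{glueampli}: Markov decomposition of the mixed-boundary GFF along the cut, factorization of the Liouville weight, and identification of the law of the trace with $\mu_0$ (resp.\ $\mu_0^+$) via a Cameron--Martin shift and a Fredholm determinant of the Dirichlet-to-Neumann operator, with the half-circle case handled through the Neumann double and even symmetry. The delicate point you flag in Step 3 is exactly where the bulk of the paper's work lies (Lemma \ref{density1} together with the determinant gluing formula of Proposition \ref{detformula} for surfaces with corners, which supplies the ratio $Z_{\Sigma,m,g}/(Z_{\Sigma_1,m,g_1}Z_{\Sigma_2,m,g_2})$ and hence the explicit constant).
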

For a more precise statement (including the gluing of several circles or half-circles), we refer to Proposition \ref{glueampli}; in particular we ask that the metric has a special form, called admissible, near the glued circles/half-circles, which requires them to be geodesics of lengths $2\pi$ and $\pi$.
We can also glue together two circles or two-half-circles of the same   connected Riemann surface with corners and a gluing formula for the corresponding amplitude 
is   proved in Proposition \ref{selfglueampli}.

%
 
 \begin{figure}[h]   
 \begin{tikzpicture}
    \node[inner sep=0pt] (pant) at (0,0)
{ \includegraphics[scale=0.55]{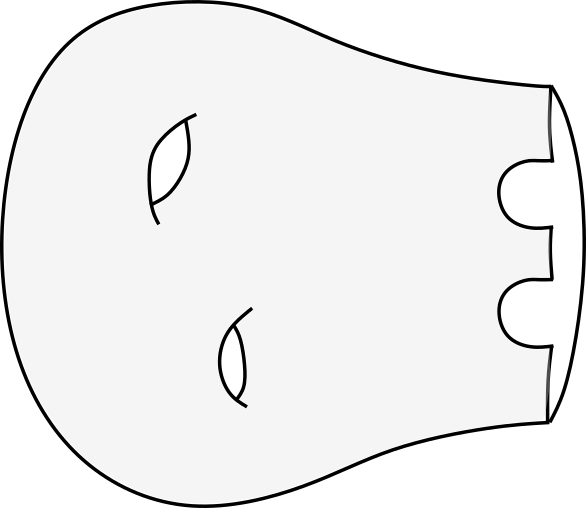}};
   \node[inner sep=0pt] (pant) at (8,0)
   { \includegraphics[scale=0.55]{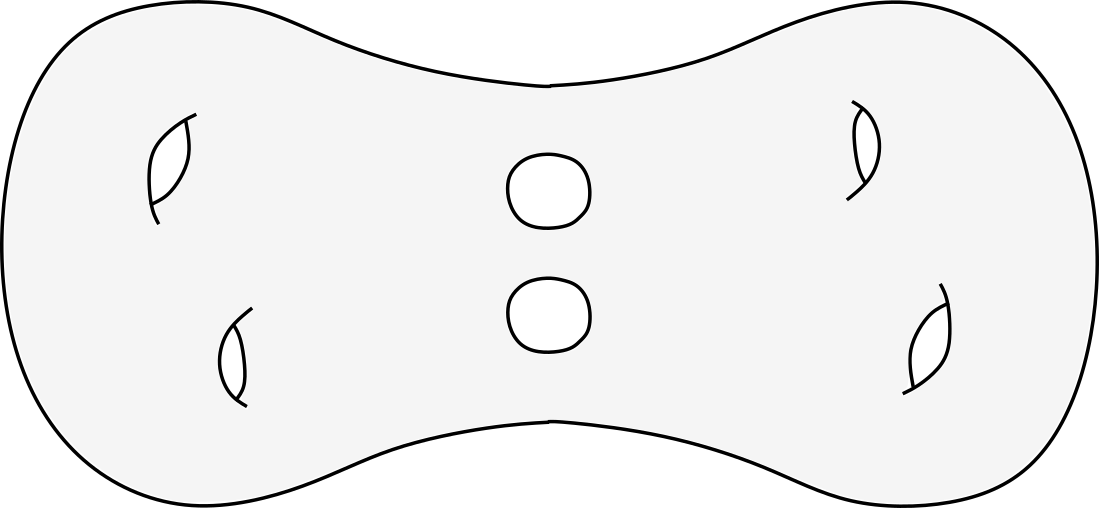}};
     \draw (2.1,0) node[right,black]{$N$} ;
      \draw (1.2,0) node[right,black]{$N$} ;
     \draw (1.3,0.9) node[right,black]{$D$} ;
       \draw (1.3,-0.9) node[right,black]{$D$} ;
        \draw (0,0) node[right,black]{$\Sigma$} ;
          \draw (8.2,0.45) node[right,black]{$D$} ;
       \draw (8.2,-0.45) node[right,black]{$D$} ;
         \draw (6,0) node[right,black]{$\Sigma^{\#2}$} ;
\end{tikzpicture}
\caption{On the left, a Riemann surface $\Sigma$ with corners and its boundary markings N and D. On the right, its Neumann double with its D-marked boundary circles.}\label{f:corner}
\end{figure}
 
\subsection{Segal semigroups} 
Next, we show that the Segal functor produces two semigroups: one corresponding to the family of round annuli 
\[\A_{e^{-t}}:=\{z\in \C\,|\, |z|\in [e^{-t},1]\} \quad \textrm{ for }t>0\] 
whose generator is an unbounded self-adjoint operator on $\mc{H}$ called the \emph{Hamiltonian (or bulk Hamiltonian)} ${\bf H}$ of the CFT, 
and another one corresponding to the gluing of half-annuli 
\[\A_{e^{-t}}^+:=\{z\in \C\,|\, |z|\in [e^{-t},1], {\rm Im}(z)\geq 0\} \quad \textrm{ for }t>0,\] 
which possesses a different generator  ${\bf H}_+$ and which we call the \emph{boundary Hamiltonian}, see Section \ref{sec:semigroup}. It is also an unbounded self-adjoint operator on $\mc{H}_+$.    
We emphasize that the Segal amplitudes of these annuli or half-annuli do not satisfy the Seiberg bounds, which means that they are not integral kernels of Hilbert-Schmidt operators, but they can still be given  sense   as bounded semi-groups on $\mc{H}$ and $\mc{H}_+$ respectively. 
The spectral resolution for the bulk Hamiltonian was established in \cite{GKRV20_bootstrap}. Here, we identify the quadratic form having the boundary Hamiltonian as Friedrichs extension in Section \ref{s:semigroup_half}.  This analysis  is more intricate than the bulk case and, to perform it,  we need to restrict  the range of parameters to $\gamma <\sqrt{2}$, or $\gamma\in (0,2)$ and $\mu_B=0$. This is related to the fact that it is complicated to extract a tractable  core for the quadratic form when $\gamma\geq \sqrt{2}$. 
In the subsequent work \cite{GRW2}, we shall provide a spectral resolution for the boundary Hamiltonian under the assumption $\gamma <\sqrt{2}$, or $\gamma\in (0,2)$ and $\mu_B=0$. 
We will also combine these results with the Ward identities to establish the conformal bootstrap in the case of boundary Liouville CFT. 

Let us finally mention here that the paper \cite{WuConfBoot} treats the conformal bootstrap in the very special case when the Riemann surface with boundary is an annulus with some marked points in a way that only the bulk Hamiltonian is involved in the conformal bootstrap. Our current paper together with the companion paper \cite{GRW2} bridge the gap towards the general bootstrap formula for boundary Liouville theory. 
 
 \subsection{Applications.}
The boundary Liouville theory is not only an extension of the Liouville CFT, but it turns out to have plenty of applications, including surprisingly for the Liouville theory itself. These applications are in part due to the link between boundary Liouville theory and the SLE  (see \cite{MatingOfTrees,Sheffield16_zipper, Nolin:2023vkn, ang2022fzz} etc.) but also to the relation of boundary Liouville CFT to quantum groups. In that respect, 
they give a strong motivation for developing its study.
We summarize below some of these applications.
\begin{enumerate}
\item \textbf{Conformal blocks.} 
A surprising application of the conformal bootstrap for boundary Liouville CFT is to provide a probabilistic representation of the spherical 4pt or toroidal 1pt conformal blocks; this was observed in \cite{remypreprint} inspired by the physics paper \cite{ponsot1999}. The work \cite{remypreprint}  shows in particular the analytic extension of the conformal block in terms of its parameters and its pointwise convergence (the convergence of the conformal block in \cite{GKRV20_bootstrap,GKRV21_Segal} was proved in $L^2$ sense as a function of the spectral parameter $Q+ip\in Q+i\R^+$,  but not pointwise in $p$).

\item \textbf{Representation of mapping class group and fusion kernels.} 
The present manuscript is an important step in a series of papers \cite{BGKR1,BGKR2} devoted to providing a unitary representation of the mapping class group in the space of conformal blocks, thus producing a geometric  
quantization of Teichm\"uller space as a consequence of a CFT construction. 
This representation was claimed in physics  \cite{Teschner2016} to coincide with the Chekhov-Fock-Goncharov-Kashaev representations of the mapping class group in their quantization of Teichm\"uller space \cite{Fock,kashaevQuantization,FockGoncharovIHES,FockGoncharovInventiones} based on hyperbolic geometry tools. 
The representation of each mapping class element can be represented in a basis of conformal blocks associated to a pair of pants decomposition of $\Sigma$, and can be expressed purely in terms of the operators, called \emph{fusion kernels}, corresponding to a change of basis by an elementary move in the pant decomposition. In the physics literature, it is established that for rational CFT the main fusion kernel involved in this representation is identical to the boundary three-point function \cite{Runkel:1998he, Behrend:1999bn, Fuchs:2004xi}. For Liouville theory, a similar relation/formula for the integral fusion kernels    is given in physics \cite{ponsot1999}  in terms of the structure constants of boundary Liouville CFT.  A recent preprint \cite{remypreprint} provides a mathematical proof of this fact, based in a fundamental way on the conformal bootstrap for boundary Liouville theory proved in the present work and its companion paper \cite{GRW2}.

\item \textbf{Law of random modulus.} Building on the special case of the conformal bootstrap in boundary Liouville proved in  \cite{WuConfBoot}, the law of the random modulus    for the Brownian Map   with the topology of an annulus was obtained in a recent work \cite{ARSmoduliRPM}.
\end{enumerate}

There are some other possible applications of the conformal bootstrap for boundary Liouville CFT. For example, it is conjectured in a recent physics preprint \cite{Chen:2024unp}  that the Liouville CFT partition function of a surface $\Sigma$ can be derived from quantities associated with a $3$-dimensional manifold $M$, where $\partial M=\Sigma$. We expect that the boundary bootstrap could lead to a rigorous derivation of such correspondence and that  
constructing $3$-dimensional invariants associated to Liouville CFT in the spirit of AdS/CFT holography  are promising directions for future research. We also expect that the conformal bootstrap for boundary Liouville CFT could give new results on the law of Gaussian Multiplicative Chaos on the circle.

 \subsection{Organisation of the paper and differences with Liouville CFT}
 
 The route we follow in this paper is inspired by the work \cite{GKRV20_bootstrap,GKRV21_Segal} for Liouville CFT on closed surfaces. However, there are several new difficulties and differences that need to be adressed. 

First, the (complex) geometric setting involves to work with Riemann surfaces with corners and requires some care about the type of Riemannian metrics that can be used to construct Liouville amplitudes -- this is done in Sections \ref{sec:surfaces} to \ref{sec:admissible_metrics}. In sections  \ref{sec:backclosed} and \ref{sec:open}, we recall the construction of Liouville path integral on  closed surfaces and surfaces with Neumann boundary.

Second, the definition of Liouville amplituds on surfaces with corners involves a Gaussian Free Field with mixed boundary conditions and to describe the properties of Dirichlet-to-Neumann maps (and their link to Green's functions) on such surfaces with corners, this is done in Section \ref{sec:decomp}. For the Segal gluing proof, we also need gluing formulas for determinants of Laplacians and Dirichlet-to-Neumann maps on such surfaces with corners, which has not been proved in the literature. This is done here in Appendix \ref{app:determinants}. We often rely on a doubling of the surface with corners argument to adress these issues. The proof of the gluing formula for Segal amplitudes of surfaces with corners is done in Section \ref{sec:gluing}. 

Third, the study in Section \ref{sec:semigroup} of the semi-group of half-annuli and in particular its generator ${\bf H}_+$ involves new singularities in the Gaussian Multiplicative Chaos measure and several potentials in the Hamiltonians.  
This require again some particular care and new estimates on the GMC (see Appendix \ref{app:GMC_estimates}).  
In terms of  the spectral theory of the boundary Hamiltonian, the study of the case $\gamma\geq \sqrt{2}$ involves more severe problems than for the bulk Hamiltonian dealt with in \cite{GKRV20_bootstrap}:  it is not clear at the moment if the generator ${\bf H}_+$ of the half-annuli semigroup can be related to the  Friedrichs extension of a quadratic form. This is a  technical issue that makes it hard to use the scattering approach from  
 \cite{GKRV20_bootstrap} when $\gamma\geq \sqrt{2}$.

\vskip 2mm
\noindent\textbf{Acknowledgements.} R. Rhodes acknowledges the support of the ANR-21-CE40-0003 and of the   Institut Universitaire de France (IUF).  This work was performed in part while C. Guillarmou was visiting Aspen Center for Physics, which is supported by National Science Foundation grant PHY-2210452.  B. Wu   was supported by National Key R\&D Program of China (No. 2023YFA1010700).

\section{Background and notations}\label{sec:back}

\subsection{Closed Riemann surfaces and Riemann surfaces with boundary/corners}\label{sec:surfaces}

We start with a few notations: we introduce for $\delta<1$
\[ \D := \{z\in \C\,|\, |z|\leq 1\}, \quad \H=\{z\in \C\,|\, {\rm Im}(z)\geq 0\},\quad  \D^+:=\D\cap \H, \quad  \D^{++}:=\D^+\cap \{{\rm Re}(z)\geq 0\}\]
\[ \T :=\{z\in \C\,|\, |z|=1\}, \quad \T^+:=\T\cap \H=\{ z\in \C\,|\, |z|=1, {\rm Im}(z)\geq 0\}.\]
\[\mathbb{A}_{\delta}:=\{z\in \C\,|\, |z|\in [\delta,1]\}, \quad \mathbb{A}_{\delta_1,\delta_2}:=\{z\in \C\,|\, |z|\in [\delta_1,\delta_2]\} , \quad \A^+_\delta:=\H\cap \A_\delta, \quad  \A^+_{\delta_1,\delta_2}:=\H\cap \A_{\delta_1,\delta_2}.\]

A closed Riemann surface $(\Sigma,J)$ is a smooth oriented compact surface $\Sigma$ with no boundary, equipped with a complex structure $J$ (i.e. $J \in {\rm End}(T\Sigma)$ with $J^2=-{\rm Id}$), or equivalently a set of charts $\omega_j:U_j\to \mathring{\D}\subset \C$ such that $\omega_j\circ \omega_{k}^{-1}$ is a biholomorphic map where it is defined. The complex structure $J$ is the canonical one (i.e. $J\pl_x=\pl_y, J\pl_y=-\pl_x$) when viewed in $\mathring{\D}$ via the charts. \\

Next, we consider compact Riemann surfaces with boundary.
\begin{definition}[\textbf{Riemann surfaces with boundary}]\label{d:surf_with_bdry}
A Riemann surface with boundary is a smooth compact surface $\Sigma$ with $b\geq 1$ boundary circles $\pl \Sigma=\sqcup_{j=1}^b\pl_j\Sigma$, that is 
equipped with an atlas made of smooth charts $\omega_i:V_i\to \D\subset \C$ 
such that $\omega_i(V_i\cap \pl \Sigma)\subset \R$ and for each $j,k$, the map 
$\omega_i\circ \omega_{k}^{-1}$ extends as a biholomorphic map on a neighborhood of 
$\omega_{k}(V_k\cap V_i)$ in $\C$. This set of charts equips $\Sigma$ with a complex structure and 
the boundary $\pl \Sigma$ is real analytic with respect to this complex structure. A map $f:U\subset \Sigma\to \C$  is 
said holomorphic if $f\circ \omega_i^{-1}:\omega_i(V_i\cap U)\to \C$ is holomorphic for all $i$ such that $V_i\cap U\not=\emptyset$.

It will be later convenient to further consider a marking of the boundary circles: each boundary circle is assigned with a label $D$ or $N$. We will write $\partial \Sigma_N$, resp. $\partial \Sigma_D$, for the union of boundary components with marking $N$, resp. $D$. These markings will later correspond to boundary conditions imposed to our fields on the boundary circles (Dirichlet or Neumann). If $\pl \Sigma_D=\emptyset$, we call $\Sigma$ a compact Riemann surface with Neumann boundary.
\end{definition}
We emphasize that our convention for Riemann surfaces with boundary is that the boundary has a real analytic structure, and that such Riemann surface can be extended to an open Riemann surface in a way that $\pl \Sigma$ is a union of analytic curves. 
If one only asks that the coordinate changes $\omega_k\circ \omega_j^{-1}$ are holomorphic on $\omega_k(V_k\cap V_i)$ but only smooth up to the boundary $\omega_k(V_k\cap V_i)\cap \R$, we will say that the surface is a Riemann surface with \textbf{smooth} boundary. Taking a small annular neighborhood $A_j$ of $\pl_j\Sigma$ and gluing a disk $\D$ to $A_j$ we produce another disk $D_j$ with a Riemann surface structure as above, and by the uniformisation theorem  there is a biholomorphic map $\omega_j:D_j\to \D$ which is analytic (resp. smooth) up to boundary if  $\pl_j\Sigma$ is analytic (resp. smooth). Therefore, there exist a biholomorphic map 
\begin{equation}\label{omega_j}
\omega_j:V_j\to \A_\delta
\end{equation} 
where $V_j$ are neighborhoods of the boundary circles $\pl_j\Sigma$ (with  $\delta<1$), $\omega_j(\pl_j\Sigma)=\T$ and 
$\omega_j$ is analytic up to $\pl_j\Sigma$ (resp. smooth up to $\pl_j\Sigma$). It will be convenient later to include these particular charts, called \textbf{annular holomorphic charts}, as part of the atlas when dealing with objects defined near the boundary circles.\\

We also need to consider more general surfaces which are surfaces with corners. We refer to \cite[Chapter 1]{Melrose} for the definition of a smooth surface with corner, and the notion of boundary hypersurfaces. We simply recall that boundary hypersurfaces are codimension $1$ manifolds with or without boundary, that are contained in the topological boundary; for example $[0,1]^2$ has $4$ boundary hypersurfaces given by the $4$ closed edges of the square. 
In our case, the boundary hypersurfaces are either circles or half-circles (i.e. diffeomorphic to $[0,1]$). The connected components of $\pl \Sigma$ are either smooth embedded circles or a piecewise smooth union of half-circles, with singularities at corner points.
\begin{definition}[\textbf{Riemann surfaces with corners}]\label{def:mfd_with_corners}
Let $\Sigma$ be a smooth compact oriented surface with corners, with topological boundary being denoted $\partial\Sigma$ and the connected components of $\pl \Sigma$ are denoted $\pl_1\Sigma,\dots,\pl_b\Sigma$. The corners, of codimension $2$, is a collection of points $\mc{K}:=\{p_1,\dots, p_c\}\subset \pl \Sigma$.
We say that $\Sigma$ is a Riemann surface with corners if it is
equipped with a set of smooth charts $\omega_j:V_j\to \omega_j(V_j)\subset \C$ satisfying the following properties:
\begin{enumerate}
\item $\omega_j(V_j)=\D$ if $V_j\cap \pl \Sigma=\emptyset$,
\item $\omega_j(V_j)=\D^+$ if $V_j\cap \pl \Sigma\not=\emptyset$ but $V_j\cap \mc{K}=\emptyset$,
\item $\omega_j(V_j)=\D^{++}$ if $V_j\cap \mc{K}\not=\emptyset$,
\end{enumerate}
such that for each $i,k$, the map $\omega_i\circ \omega_{k}^{-1}$ extends as a biholomorphic map on a neighborhood of 
$\omega_{k}(V_k\cap V_i)$ in $\C$.
 The charts induce a complex structure $J$ and a map $f:U\subset \Sigma\to \C$  is 
said holomorphic if $f\circ \omega_j^{-1}:\omega_i(V_i\cap U)\to \C$ is holomorphic for all $i$ such that $V_j\cap U\not=\emptyset$. The connected boundary hypersurfaces without corner points are analytic circles while the boundary hypersurfaces with corner points are diffeomorphic to $[0,1]$ and called \textbf{boundary half-circles}. Moreover $\pl_j\Sigma$ is either a boundary circle or a finite union of boundary half-circles containing corner points.

 We  further impose a marking of the boundary: each boundary hypersurface is assigned with a label $D$ or $N$ in such a way that each corner point is an endpoint of exactly two half-circles,  one with marking $N$ and one with marking $D$. 
 This marking imposes in particular that each boundary component of $\Sigma$ has an even number of corner points (see Figure \ref{f:corner}) and that the numbers of half-circles marked $D$ or $N$ are the same. A Riemann surface with corners with $b_\ell$ boundary circles, among which $b_\ell^D$ marked $D$ and $b_\ell^N$ marked $N$,   and $b_h$ half-circles, among which $b_h^D$ marked $D$ and $b_h^N$ marked $N$, is said to have $(b_\ell,b_h)$ boundary components. 
 
We will write $\partial \Sigma_N$, resp. $\partial \Sigma_D$, for the union of boundary components with marking $N$, resp. $D$. 
Note that both of $\partial \Sigma_N$ or $\partial \Sigma_D$ could be empty.  
\end{definition}

We notice that our convention for Riemann surfaces with corners is that the boundary has a piecewise real analytic structure, and the Riemann surface can be extended to an open Riemann surface in a way that $\pl \Sigma$ is a union of piecewise analytic curves. A manifold with boundary is also a particular case of manifold with corners (but actually without corners).

 There is an orientation on $\Sigma$ given by a non-vanishing $2$-form $\omega_\Sigma$ which has the property that for each holomorphic chart $\omega_j$, $(\omega_j^{-1})^*\omega_\Sigma=e^{\rho_j}dx\wedge dy$ where $z=x+iy$ is the complex coordinate on $\C$ and $\rho_j$ is a smooth function. Each boundary hypersurface (more precisely the interior of such boundary hypersurface) inherits an orientation by pulling back the $1$-form $-\iota_{\nu}\omega_\Sigma$ on the hypersurface if $\nu$ is a non-vanishing inward pointing vector field.
 
On a Riemann surface with boundary or with corners, we say that a smooth Riemannian metric $g$ is compatible with the complex structure if in each holomorphic chart $\omega_i:V_i \to \C$, one has $g|_{V_i}=\omega_i^*(e^\rho |dz|^2)$ for some smooth $\rho\in C^\infty(\omega_i(V_i))$.
Such a metric induces a volume form ${\rm dv}_g$ and a boundary measure ${\rm d\ell}_g$ on $\partial \Sigma$. We will denote $K_g$ the scalar curvature and $k_g$ the geodesic curvature along the interior of the boundary hypersurfaces. At the corner point, there is a $\pi/2$ angle (independent of the choice of compatible metric $g$) between two boundary hypersurfaces in view of our particular choices of holomorphic charts in Definition \ref{def:mfd_with_corners}.
  
We consider the standard non-negative Laplacian $\Delta_{g}=d^*d$ where $d^*$ is the $L^2$ adjoint of the exterior derivative $d$. We denote by $ \partial_\nu$   the (unit) normal inward derivative along the boundary with respect to the metric $g$. A fixed set of points $z_i\in\Sigma$, $i=1,\dots,m$ on $\Sigma$ are called \emph{marked points} on $\Sigma$; they could belong to $\partial \Sigma$.

\subsection{Parametrized boundaries}\label{sub:param}
On a  Riemann surface with corners, the boundary is decomposed as $\partial\Sigma=\partial\Sigma_N\cup \partial\Sigma_D$, which can be described as a collection of simple non overlapping curves made up of $b_\ell$ analytic closed curves $\mc{C}:=\sqcup_{j=1}^{b_\ell}\mathcal{C}_j$ (the index $\ell$ stands for loops), enumerating the boundary circles,  and $b_h$ analytic half circles $\mc{B}:=\sqcup_{j=1}^{b_h}\mathcal{B}_j$ (the index $h$ stands for half circles), enumerating the boundary half-circles between corner points. Also we can split these (half-)circles according to their markings: $\mc{C}:=(\sqcup_{j=1}^{b^D_\ell}\mathcal{C}^D_j)\sqcup (\sqcup_{j=1}^{b^N_\ell}\mathcal{C}^N_j )$ and $\mc{B}:=(\sqcup_{j=1}^{b^D_\ell}\mathcal{B}^D_j)\sqcup (\sqcup_{j=1}^{b^N_\ell}\mathcal{B}^N_j )$. In what follows, we will introduce parametrizations of the Dirichlet boundary components.

Any non-vanishing vector field $v$ on $\mc{C}_j^D$ (resp. $\mc{B}_j^D$) induces an orientation: we say it is outgoing if $(v,\nu)$ is a positively oriented basis of $T\Sigma$ on $\mc{C}_j^D$ (resp. on $\mc{B}_j^D$), where we recall that $\nu$ is any inward pointing non-vanishing vector field at $\mc{C}_j^D$ (resp. $\mc{B}_j^D$); conversely it is incoming if  $(v,\nu)$ is negatively oriented in $T\Sigma$. Note that we could introduce In/Out orientations of the Neumann boundary components as well but we will refrain from doing so because this will be useful only in the case of Dirichlet boundary components to define the Segal amplitudes.\\

%

For each $j\in [1,b^D_\ell]$, we call \textbf{parametrization} of $\mc{C}_j^D$ any real analytic diffeomorphism $\zeta_j^{\ell}:\T\to \mc{C}^D_j$, where the real analyticity means that for each open set $V_i$ such that $V_i\cap \mc{C}_j^D\not=\emptyset$, 
and $\omega_i:V_i\to \D^+$  a holomorphic chart (with  $\omega_i(V_i\cap \mc{C}_j)\subset (-1,1)$), the map $\omega_i \circ \zeta_j^{\ell}: I_{ij}\to \omega_i(V_i\cap \mc{C}_j^D)$ extends holomorphically near $I_{ij}=(\zeta_j^\ell)^{-1}(V_i\cap \mc{C}_j^D)\subset \T$ in $\C$. 
Such parametrization induces an orientation $\sigma^\ell_j\in\{-1,+1\}$:  if the vector field
$v=\pl_\theta \zeta_j^{\ell}(e^{i\theta})$ induces an outgoing orientation, 
we say that $\zeta_j^\ell$ is \textbf{outgoing} and we set $\sigma^\ell_j:=-1$, conversely we say that $\zeta_j^\ell$ is \textbf{incoming} if $v$ is incoming and we set $\sigma^\ell_j:=1$.
Note that the inverse of a  parametrization $(\zeta^\ell_j)^{-1}:\mc{C}^D_j\to \T$ induces a holomorphic chart near $\mc{C}_j^D$ by holomorphic extension to an annular neighborhood $V_j$ mapping to $\A_\delta$ for some $\delta<1$ if $\zeta_j^\ell$ is outgoing and to $\A_{1,\delta^{-1}}$ if $\zeta_j^\ell$ is incoming; these charts can also be used in the atlas. Conversely, the chart $\omega_j$ in \eqref{omega_j} induces a (incoming) parametrization of $\mc{C}_j^D$; to make it outgoing it suffices to consider $1/\omega_j$.\\ 

Similarly, for each $j\in [1,b_h^D]$, we call \textbf{parametrization} of $\mc{B}_j^D$ any real analytic diffeomorphism $\zeta_j^{h}:\T^+\to \mc{B}_j^D$ such  that for each open set $V_i$ with $V_i\cap \mc{B}^D_j\not=\emptyset$ and $V_i\cap \mc{K}=\emptyset$, for $\omega_i:V_i\to \D^+$ a holomorphic chart and $\omega_i(V_i\cap \mc{B}^D_j)\subset (-1,1)$, the map $\zeta_j^{h} :I_{ij}\to \omega_i(V_i\cap \mc{B}_j^D)$ extends holomorphically to a neighborhood of $I_{ij}=(\zeta_j^h)^{-1}(V_i\cap \mc{B}^D_j)\subset \T^+$ in $\C$, and for each open set $V_i$ such that $V_i\cap \mc{B}_j^D\not=\emptyset$ with $V_i\cap \mc{K}\not=\emptyset$, for each holomorphic chart $\omega_i:V_i\to \D^{++}$, the map $\omega_i\circ \zeta_j^h :I_{ij}\to \omega_i(V_i\cap \mc{B}_j^D)$ extends holomorphically to a neighborhood $\Omega$ of $I_{ij}$ 
in $\C$ and maps $\Omega \cap \R$ to an interval in $[0,1)$. Such parametrization induces an orientation $\sigma^h_j\in\{-1,+1\}$:  if the vector field
$v=\pl_\theta \zeta_j^{h}(e^{i\theta})$ induces an outgoing orientation, 
we say that $\zeta_j^h$ is \textbf{outgoing} and we set $\sigma^h_j:=-1$, conversely we say that $\zeta_j^h$ is \textbf{incoming} if $v$ is incoming and we set $\sigma^h_j:=1$.
The inverse of an analytic parametrization $(\zeta^h_j)^{-1}:\mc{B}^D_j\to \T^+$ produces a holomorphic chart 
$\omega_j^h$ by holomorphically extending $(\zeta^h_j)^{-1}$ to a half annular neighborhood of $\T^+\subset \C$  and mapping to 
$\A_\delta^+$ for some $\delta<1$ if $\zeta^h_j$ is outgoing and to $\A_{1,\delta^{-1}}^+$ if it is incoming. 
We will need to prove that parametrization of half-circles do exist, this will be obtained using the double of $\Sigma$ in the next section.

\begin{definition}[Riemann surface $\Sigma$ with corners and parametrized boundary]
A surface $\Sigma$ with corners and parametrized boundary is a surface with corners equipped with a parametrization of its boundary components as above.
\end{definition}

\subsection{Gluing of Riemann surfaces}\label{sub:gluing}
Consider a Riemann surface with corners as above (the marking plays no role for what follows beyond the fact that only Dirichlet boundary components carry a parametrization). First we explain the gluing of boundary circles. The inverse of the parametrizations of the boundary circles and half-circles induce 
holomorphic charts using holomorphic extension, as explained before. We call 
$\omega_j^\ell:V_j^\ell\to \A_{1,\delta^{-1}}$ the holomorphic extension of $(\zeta_j^{\ell})^{-1}$ near  $\mc{C}_j^D$ if $\zeta_j^\ell$ is incoming (resp. $(\zeta_j^{\ell})^{-1}:V_j^\ell\to \A_\delta$ if $\zeta_j^{\ell}$ is outgoing); note that 
$\omega_j^\ell$ is a half-annular neighborhood.
If $\mc{C}^D_j$ is outgoing and $\mc{C}^D_k$ is incoming, we can define a new Riemann surface $\Sigma_{j\#_\ell k}$ with $(b_\ell-2)$-boundary circles by gluing/identifying $\mc{C}^D_j$ with $\mc{C}^D_k$ as follows: 
we identify $\mc{C}^D_j$ with $\mc{C}^D_k$ by identifying the points $\zeta^\ell_j(e^{i\theta})$ with $\zeta^\ell_k(e^{i\theta})$. 
A neighborhood in $\Sigma_{j\#_\ell k}$ of the identified circle $\mc{C}_j\simeq \mc{C}_k$ is given by $(V^\ell_j\cup V^\ell_k)/\sim $ (where $\sim$ means the identification $\mc{C}_j\sim\mc{C}_k$) which identifies with the annulus $\mathbb{A}_{\delta,\delta^{-1}}=\{|z|\in [\delta,\delta^{-1}]\}$ by the chart
\[\omega_{jk}: z\in V^\ell_j \mapsto \omega^\ell_j(z)\in \mathbb{A}_{\delta}, \quad z\in V_k^\ell \mapsto \omega^\ell_k(z)\in \mathbb{A}_{1,\delta^{-1}}.\]
Note that on $\T$, the map is 
\begin{equation}\label{gluingonboundary}
\omega_{jk}(\zeta^\ell_j(e^{i\theta}))=\omega^\ell_j(\zeta^\ell_j(e^{i\theta}))=
e^{i\theta}=\omega^\ell_k(\zeta^\ell_k(e^{i\theta}))=\omega_{jk}(\zeta^\ell_k(e^{i\theta})).
\end{equation}
Since $\zeta_j^\ell,\zeta_k^\ell$ are analytic, this procedure produces a complex structure on the glued surface.
The resulting Riemann surface depends on the choice of parametrization $\zeta^\ell_j,\zeta^\ell_k$. We notice that if $\mc{C}_j^D$ and $\mc{C}_k^D$ belong to different connected components of $\Sigma$, $b_0(\Sigma_{j\#_\ell k})=b_0(\Sigma)-1$  if $b_0$ denotes the $0$-th Betti number; otherwise $b_0(\Sigma_{j\#_\ell k})=b_0(\Sigma)$.

 \begin{figure}[h] 
\begin{tikzpicture}[scale=1]
\draw[very thick] (-6,0) arc (0:180:1);
\draw[very thick] (-5.5,0) arc (0:180:1.5);
\draw[very thick] (-8.5,0) -- (-8,0);
\draw[very thick] (-5.5,0) -- (-6,0);
 \draw (-7.5,1.5) node[above]{$\textrm{ In }$} ;
  \draw (-7,0.5) node[above]{$\textrm{ Out }$} ;
 \draw (-3.3,-0.5) node[above]{$N$} ;
  \draw (-0.8,-0.5) node[above]{$N$} ;
  \draw (-3.8,-0.2) node[above]{$-\frac{1}{\delta}$} ;  
  \draw (-2.7,-0.2) node[above]{$-1$} ;
   \draw (-0.3,-0.2) node[above]{$\frac{1}{\delta}$} ;  
  \draw (-1.2,-0.2) node[above]{$1$} ;
    \draw (-2,2) node[above]{$\A_{1,\delta^{-1}}^+=\omega_k^h(V_k^h)$} ;
    \draw[very thick] (-1,0) arc (0:180:1);
\draw[very thick] (-0.5,0) arc (0:180:1.5);
\draw[very thick] (-3.5,0) -- (-3,0);
\draw[very thick] (-0.5,0) -- (-1,0);
 \draw (-2.5,1.5) node[above]{$\textrm{ Out }$} ;
  \draw (-2,0.5) node[above]{$\textrm{ In }$} ;
 \draw (-8.3,-0.5) node[above]{$N$} ;
  \draw (-5.8,-0.5) node[above]{$N$} ;
  \draw (-8.8,-0.2) node[above]{$-1$} ;  
  \draw (-7.7,-0.2) node[above]{$-\delta$} ;
   \draw (-5.3,-0.2) node[above]{$1$} ;  
  \draw (-6.2,-0.2) node[above]{$\delta$} ;
    \draw (-7,2) node[above]{$\A_\delta^+=\omega_j^h(V_j^h)$} ;
      \draw (-4.6,0.5) node[above]{$z\mapsto -1/z$} ;
\end{tikzpicture}
\caption{Gluing two neighborhoods of half-circles, each with D marking}\label{pic1}
\end{figure}
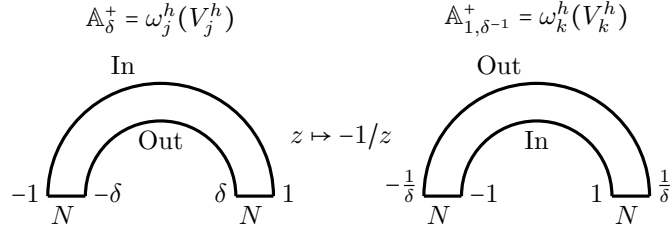
 \begin{figure}[h] 
\begin{tikzpicture}[scale=1]
\draw[very thick] (1,0) arc (0:180:1);
\draw[very thick] (2,0) arc (0:180:2);
\draw[dashed] (1.5,0) arc (0:180:1.5);
\draw[very thick] (1,0) -- (2,0);
\draw[very thick] (-1,0) -- (-2,0);
 \draw (0,2) node[above]{$\textrm{ In }$} ;
  \draw (0,0.5) node[above]{$\textrm{ Out }$} ;
    \draw (1.5,-0.5) node[above]{$N$} ;
    \draw (-1.5,-0.5) node[above]{$N$} ;
      \draw (0.8,-0.2) node[above]{$\delta$} ;  
      \draw (2.2,-0.2) node[above]{$\frac{1}{\delta}$} ;  
        \draw (-0.7,-0.2) node[above]{$-\delta$} ;  
      \draw (-2.3,-0.2) node[above]{$-\frac{1}{\delta}$} ;  
   \end{tikzpicture}
\end{figure}

Next we explain the gluing of boundary half-circles and we assume that the surface has $b_\ell$ boundary circles and $b_h$ boundary half-circles (half of them being marked $D$). If $\mc{B}_j$ is outgoing and $\mc{B}_k$ is incoming, we can define a new Riemann surface $\Sigma_{j\#_h k}$ by gluing/identifying $\mc{B}_j$ with $\mc{B}_k$ as follows: 
we identify $\mc{B}_j$ with $\mc{B}_k$ by identifying $\zeta^h_j(e^{i\theta})$ with $\zeta^h_k(e^{i(\pi-\theta)})$.  A neighborhood in $\Sigma_{j\#_h k}$ of the identified circle $\mc{B}_j\simeq \mc{B}_k$ is given by $(V^h_j\cup V^h_k)/\sim $ (where $\sim$ means the identification $\mc{B}_j\sim\mc{B}_k$ as above) which identifies with the half-annulus $\mathbb{A}^+_{\delta,\delta^{-1}}$ by the chart (see Figure \ref{pic1} and Figure \ref{glue1})
\[\omega_{jk}: z\in V^h_j \mapsto  \omega^h_j(z)  \in \mathbb{A}^+_{\delta}, \quad z\in V_k\mapsto -\frac{1}{\omega^h_k(z)}\in \mathbb{A}^+_{1,\delta^{-1}}.\]
 
Note that 
\begin{equation}\label{gluingonhalf}
\omega_{jk}(\zeta^h_j(e^{i\theta}))= \omega^h_j(\zeta^h_j(e^{i\theta})) =
e^{i\theta}=-\frac{1}{ e^{i(\pi-\theta)}}= -\frac{1}{\omega^h_k(\zeta^h_k(e^{i\theta}))}=\omega_{jk}(\zeta^h_k(e^{i\theta})).
\end{equation}
Also, we stress that, after the gluing procedure, the number of corner points has decreased by $4$ units as the former corner points involved in the gluing become regular (i.e. locally charted by the half disk). The resulting Riemann surface depends on $\zeta^h_j,\zeta^h_k$.  The marking plays no role in the gluing just described but in the following we will only glue half circles with the same marking $D$. 
 \begin{figure}[h]
  \begin{tikzpicture}
    \node[inner sep=0pt] (pant) at (0,0)
{ \includegraphics[scale=0.4]{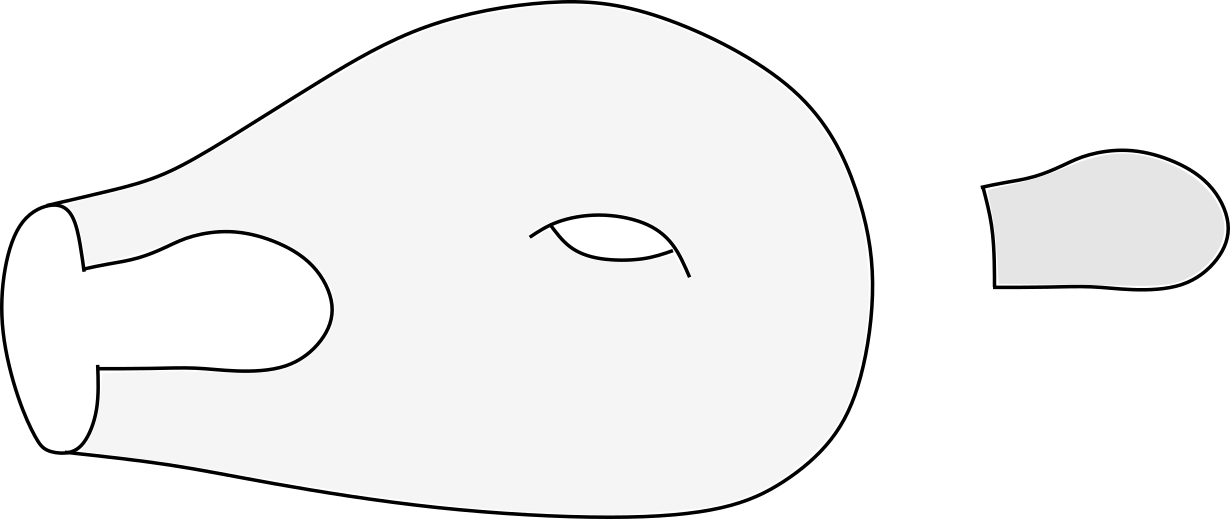}};
  \node[inner sep=0pt] (pant) at (8,0)
{ \includegraphics[scale=0.4]{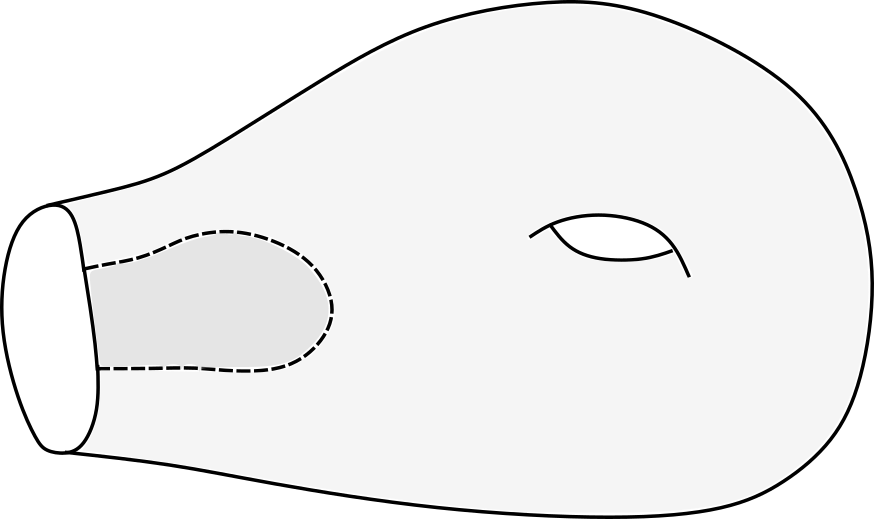}};
  \draw (7.5,0) node[right,black]{$\Sigma$} ;
    \draw (-1,1) node[right,black]{$\Sigma_1$} ; 
     \draw (2.5,0.2) node[right,black]{$\Sigma_2$} ; 
      \draw (2.6,-0.4) node[right,black]{$D$} ; 
       \draw (1.45,0) node[right,black]{$N$} ; 
        \draw (-3.7,-0.7) node[right,black]{$N$} ; 
         \draw (-2.1,-0.2) node[right,black]{$D$} ;
          \draw (5.2,-0.2) node[right,black]{$N$} ; 
\end{tikzpicture}
\caption{Two surfaces with corners $\Sigma_1,\Sigma_2$ glued along half-circles to produce the surface $\Sigma$ with N-marked boundary.}
\label{glue1}
\end{figure}

\subsection{Neumann double}\label{double_surface}
The \textbf{Neumann double} of a Riemann surface $\Sigma$ with smooth boundary is constructed by considering a copy $\bbar{\Sigma}$ of $\Sigma$ with opposite orientation and we denote by $\tau_\Sigma: \Sigma\to \bbar{\Sigma}$ the identity map. We equip $\bbar{\Sigma}$ with the complex structure given by the charts $(\bbar{\omega}_j\circ \tau^{-1}_\Sigma)_j$ if $\omega_j$ are the charts of $\Sigma$ and $\bbar{\omega}_j$ is the complex conjugate of $\omega_j$.
 The Neumann double surface is then the compact Riemann surface (with boundary if $\Sigma$ has Dirichlet boundary components, or closed otherwise)
$\Sigma^{\#2}:=\Sigma \sqcup\bbar{\Sigma}/\sim$, where $\sim$ means that boundary points in $\pl \Sigma_N$ are canonically identified, i.e. $\tau_\Sigma(x)=x$ if $x\in \pl \Sigma_N$.  The double  $\Sigma^{\#2}$ carries an involution, that we still denote by $\tau_\Sigma:\Sigma^{\#2}\to \Sigma^{\#2}$: a point $x\in \Sigma$ is  mapped to $\tau_\Sigma(x)\in \bbar{\Sigma}$ and $x\in \bbar{\Sigma}$ is mapped to $\tau_\Sigma^{-1}(x)\in \Sigma$. 
We consider the holomorphic charts $\omega_j$ on the interior of $\Sigma$ and $\bbar{\omega}_j\circ \tau_\Sigma^{-1}$ on the interior of $\bbar{\Sigma}=\tau_\Sigma(\Sigma)$ to induce charts on $\Sigma^{\#2}\setminus \pl \Sigma_N$.
Using the holomorphic charts $\omega_j:V_j\to \D^+$ covering boundary points in $\Sigma$ (on the Neumann boundary), and $\bbar{\omega}_j\circ \tau_\Sigma^{-1}$ for $\bbar{V}_j=\tau_\Sigma(V_j)\subset \bbar{\Sigma}$, one obtains a chart 
\[ \omega_j^{\#2}: (V_j\cup \bbar{V}_j)/\sim\,\,  \to \D, \quad \omega_j^{\#2}|_{V_j}:=\omega_j, \quad \omega_j^{\#2}|_{\bbar{V}_j}:=\bbar{\omega}_j\circ \tau_\Sigma^{-1}.\] 
The involution $\tau_\Sigma$ in this chart is given by 
\[ \tau_\Sigma(z)=\bar{z}\]
and $\tau_\Sigma$ is globally antiholomorphic. 

If $\Sigma$ equipped with parametrizations, then $\Sigma^{\#2}$ has induced parametrizations by just using the parametrization $\zeta_j^D$ of the Dirichlet boundary circles $\mc{C}_j^D$ in $\Sigma$ and their copy 
$\tau_\Sigma(\zeta_j^\ell)$ in $\bbar{\Sigma}$.\\
 
We can also double a Riemann surface  with corners along its Neumann boundary components $\pl \Sigma_N$ in the same way as above.
This can be done by setting $\Sigma^{\#2}:=(\Sigma \sqcup \bbar{\Sigma})/\sim$ where $\sim$ is an equivalence relation identifying a point $p\in \pl \Sigma_N$ to the same point $p\in \pl\bbar{\Sigma}_N$ on the second copy. 
The gluing of complex structure near the boundary circles $\mc{C}_j^N$ of $\pl_j\Sigma_N$ is done similarly as for the case with boundary. To define the complex structure near the glued half-circles $\mc{B}_j^N$ we can use the charts $\omega_j:V_j\to \D^+$ of $\omega_j:V_j\to \D^{++}$ covering a neighborhood of $\mc{B}_j^D$
 in $\Sigma$, with the Neumann boundary in $V_j$ corresponding to $\D^{++}\cap \R$ via the chart,
and their copy $\bbar{\omega}_j\circ \tau_\Sigma :\bbar{V}_j\to \D^{+-}$ (with $\D^{+-}:=\{z\in \D\,|\, {\rm Re}(z)\geq 0, {\rm Im}(z)\leq 0\}$) in $\bbar{\Sigma}$. These can be glued to a chart 
\[\omega_j^{\#2}: (V_j\cup \bbar{V}_j)/\sim\,\,  \to \D, \quad \omega_j^{\#2}|_{V_j}:=\omega_j, \quad \omega_j^{\#2}|_{\bbar{V}_j}:=\bbar{\omega}_j\circ \tau_\Sigma^{-1}\] 
and the involution in these charts is $z\mapsto \bar{z}$. 
We also see that the resulting surface is a Riemann surface with boundary (thus no corner), will all boundary components being marked with Dirichlet marking. The fact that the corners have angle $\pi/2$ is crucial for this property to hold. 
If $\zeta_j^\ell,\zeta_j^h$ denote the parametrization of the $j$-th Dirichlet loop and half-circles, 
the Neumann double has induced parametrization of its boundary by setting $\tau_\Sigma\circ \zeta_j^\ell$ for the reflected boundary $j$-th loop in $\bbar{\Sigma}$, and 
\[  e^{i\theta}\in \T^+ \mapsto \zeta_j^h(e^{i\theta}), \quad e^{i\theta}\in \T^- \mapsto \tau_\Sigma\circ \zeta_j^h(e^{-i\theta})\]
for the loop obtained as the gluing $\mc{B}_j\cup \tau_\Sigma(\mc{B}_j)$ of two half-circles (here $\T^-=\T\setminus \T^+$).


\subsection{Half-annular holomorphic charts}\label{sec:half_annular_holo_charts}

We have seen in \eqref{omega_j} that neighborhood of boundary circles $\mc{C}_j$ have holomorphic 
charts mapping to annuli $\A_\delta$ for some $\delta<1$. Here we claim that a similar result holds for boundary 
half-circles.
\begin{lemma}\label{model_half_annular}
Let $\Sigma$ be a Riemann surface with corners in the sense of Definition \ref{def:mfd_with_corners}. For each boundary half-circle $\mc{B}_j$ with D marking, 
there is a biholomorphism $\omega_j:V_j\to \A_\delta^+$ for some $\delta<1$ and $V_j$ a neighborhood of $\mc{B}_j$ in $\Sigma$, analytic up to the boundary in the sense that its extends holomorphically in each chart near $\mc{B}_j$ and near $\pl \Sigma_N\cap V_j$. We call $\omega_j$ a \textbf{half-annular holomorphic chart} near a boundary half-circle.
\end{lemma}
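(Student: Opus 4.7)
The natural strategy is to reduce to the known existence of annular holomorphic charts near boundary circles of a Riemann surface with boundary (i.e.\ the construction leading to \eqref{omega_j}), applied to the Neumann double $\Sigma^{\#2}$ constructed in Section \ref{double_surface}. First, I would form $\Sigma^{\#2}$ and observe that the two endpoints of $\mc{B}_j$ are corner points of $\Sigma$ with interior angle $\pi/2$, so after doubling the two analytic arcs $\mc{B}_j$ and $\tau_\Sigma(\mc{B}_j)$ join smoothly (in fact real-analytically, with respect to the complex structure of $\Sigma^{\#2}$) into a single analytic boundary circle
\[ \mc{B}_j^{\#2} := \mc{B}_j \cup \tau_\Sigma(\mc{B}_j) \subset \pl \Sigma^{\#2}. \]
This uses exactly the discussion of Section \ref{double_surface}: the corner angle condition is what guarantees that $\Sigma^{\#2}$ is a Riemann surface with boundary (no corners) in the sense of Definition \ref{d:surf_with_bdry}.

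Next, I would apply \eqref{omega_j} to obtain a biholomorphism $\tilde\omega: \tilde V \to \A_\delta$ for some $\delta<1$, from an annular neighborhood $\tilde V$ of $\mc{B}_j^{\#2}$ in $\Sigma^{\#2}$, analytic up to $\mc{B}_j^{\#2}$. Shrinking $\tilde V$ if necessary, I may assume $\tau_\Sigma(\tilde V)=\tilde V$. The main work is then to make $\tilde\omega$ equivariant with respect to $\tau_\Sigma$ and complex conjugation. The composition
\[ T := \tilde\omega \circ \tau_\Sigma \circ \tilde\omega^{-1}: \A_\delta \to \A_\delta \]
is an antiholomorphic involution preserving each boundary component (it sends $\T=\tilde\omega(\mc{B}_j^{\#2})$ to itself). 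A direct classification of antiholomorphic automorphisms of $\A_\delta$ preserving boundary components shows that $T(z)=e^{i\alpha}\bar z$ for some $\alpha\in\R$.

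Post-composing $\tilde\omega$ by the rotation $z\mapsto e^{-i\alpha/2}z$ yields a new biholomorphism $\omega':\tilde V\to \A_\delta$ satisfying $\omega'\circ\tau_\Sigma = \overline{\omega'}$. Its fixed-point set in $\A_\delta$ is $\A_\delta\cap\R$, which matches $\omega'(\tilde V\cap \pl\Sigma_N)$. The complement of this fixed set in $\A_\delta$ is $\A_\delta^+ \sqcup \A_\delta^-$ (where $\A_\delta^-:=\A_\delta\cap\{\mathrm{Im}\, z\leq 0\}$), and $\omega'$ sends $\tilde V\cap \Sigma$ to one of the two. If $\omega'(\tilde V\cap\Sigma)=\A_\delta^-$, I would post-compose further with $z\mapsto -z$, which preserves the equivariance ($\overline{-z}=-\bar z$) and maps $\A_\delta^-$ onto $\A_\delta^+$; set $\omega_j$ equal to the resulting map and $V_j:=\tilde V\cap\Sigma$.

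It then remains to check that $\omega_j$ is analytic up to the boundary in the sense of the statement. Near $\mc{B}_j$, this is clear because $\omega_j$ is the restriction of a map defined on $\tilde V$ which is analytic up to $\mc{B}_j^{\#2}$. Near a point of $\pl\Sigma_N\cap V_j$, the chart $\omega_j$ extends holomorphically across because, by construction, $\omega'$ is holomorphic on the full neighborhood $\tilde V$ in $\Sigma^{\#2}$ and the Neumann boundary is interior to $\Sigma^{\#2}$. The main technical point to be careful with is the classification of antiholomorphic involutions of $\A_\delta$ preserving boundary circles, but this is standard once one notes that their square is a holomorphic automorphism of $\A_\delta$ fixing each boundary component, hence a rotation.
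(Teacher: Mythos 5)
Your argument is correct and reaches the conclusion by a genuinely different (though related) reduction. The paper's proof caps the half-annular neighborhood of $\mc{B}_j$ with a half-disk, doubles the resulting half-disk along its Neumann boundary to get a topological disk $\mc{D}_j$, uniformizes $\mc{D}_j$ to $\D$ by the Riemann mapping theorem, and then normalizes the induced antiholomorphic involution of $\D$ by M\"obius transformations (classifying the maps $z\mapsto \bbar{B(z)}$ and sending the fixed-point circle, the isometric circle of $B$, to $[-1,1]$). You instead keep the annular topology: you pass to the doubled circle $\mc{B}_j^{\#2}\subset\pl\Sigma^{\#2}$, uniformize an annular neighborhood to $\A_\delta$ as in \eqref{omega_j}, and normalize the involution by a rotation. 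What your route buys is a much simpler normalization, since the boundary-component-preserving automorphisms of $\A_\delta$ are just rotations; what the paper's route buys is that one never has to worry about making the uniformized neighborhood symmetric under $\tau_\Sigma$. (The paper's own proof of Lemma \ref{boundarycut} in fact proceeds essentially as you do.) Two small points to tighten. First, ``shrinking $\tilde V$'' after obtaining $\tilde\omega$ would destroy the property that the image is a round annulus; you should instead fix a $\tau_\Sigma$-symmetric annular neighborhood of $\mc{B}_j^{\#2}$ first and then uniformize it to a round annulus with $\mc{B}_j^{\#2}$ going to $\T$ (this is exactly what the construction preceding \eqref{omega_j} provides), so that $T$ is a bona fide global antiholomorphic automorphism of $\A_\delta$; otherwise $T$ is only defined near $\T$ and the classification does not apply directly. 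Second, the justification ``their square is a rotation'' is not the right one (the square of your involution is the identity, and for a general antiholomorphic automorphism this would not pin down its form); the correct one-line argument is that $z\mapsto\overline{T(z)}$ is a holomorphic automorphism of $\A_\delta$ preserving each boundary component, hence a rotation, whence $T(z)=e^{i\alpha}\bar z$.
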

\begin{proof} Consider an annular neighborhood $V_j'$ of $\mc{B}_j$ and glue a half disk $D_j^+$ to $\mc{B}_j$ in a way that 
$\mc{D}_j^+:=\mc{B}_j\# D_j^+$ is equipped with a complex structure that extends smoothly that of $D_j^+$, and call the segment of $\pl \mc{D}_j^+$ extending $\pl \Sigma_N\cap V_j'$ the Neumann boundary of $\pl \mc{D}_j^+$. Next we double 
$\mc{D}_j^+$ along its Neumann boundary as in Section \ref{double_surface} and obtain a disk $\mc{D}_j$ with a complex structure and analytic boundary circle (the Dirichlet boundary). It comes equipped with an antiholomorphic involution $\tau_{\mc{D}_j}$ fixing pointwise the Neumann boundary $\pl \mc{D}_j^+\subset \mc{D}_j$. Consider a uniformizing map 
$\phi : \mc{D}_j\to \D$, which in turn is analytic up to $\pl \mc{D}_j$, then $\phi\circ \tau_{\mc{D}_j}\circ \phi^{-1}$ is an anti-holomorphic involution of $\D$, and its complex conjugate must then be a M\"obius transform preserving $\D$, i.e. $\phi\circ \tau_{\mc{D}_j}\circ \phi^{-1}(z)=\bbar{B(z)}$ for some M\"obius map $B(z)$ in $\D$ which satisfies $\bbar{B}(\bar{z})=B^{-1}(z)$. We check that $B$ must be of the form $B(z)=(az+b)/(-bz+a)$ with $\bar{b}=-b$ and $|a|^2-|b|^2=1$, and the set of fixed points of $\bbar{B}$ is given by the circle $|z+\bar{a}/\bar{b}|^2=|a/b|^2-1$, which is exactly the isometric circle $C_B$ of the M\"obius transform $B$. In particular, we can choose a M\"obius transform $H$ preserving $\D$ and mapping $C_B\cap \D$ to  $[-1,1]$ and we then have that $H\circ \phi \circ \tau_{\mc{D}_j}\circ \phi^{-1}\circ H^{-1}$ is the complex conjugation $z\mapsto \bar{z}$. The map $H\circ \phi$   thus extends  as a  biholomorphism $V_j\to \A_\delta^+$, with $V_j\subset V_j'$  an annular neighborhood   of $\mc{B}_j$    and for some $\delta<1$, and is analytic up to $\mc{B}_j\cup (V_j\cap \pl \Sigma_N)$ with respect to the complex structure of $\Sigma$. 
\end{proof}

\begin{remark}
Notice that by construction in the proof of Lemma \ref{model_half_annular}, the charts $\omega_j: V_j\to \A^+_\delta$ can be symmetrized to $\omega_j^{\#2}:V_j\cup \tau_\Sigma(V_j)\subset \Sigma^{\#2}\to \A_\delta$,  and the involution 
$\tau_\Sigma$ conjugated by $\omega_j^{\#2}$ becomes $z\mapsto \bar{z}$ in $\A_\delta$. 
\end{remark}

 \begin{figure}[h] 
\begin{tikzpicture}[scale=1]
\draw[very thick] (1,0) arc (0:180:1);
\draw[very thick] (2,0) arc (0:180:2);
\draw[very thick] (-2,0) -- (-1,0);
\draw[very thick] (1,0) -- (2,0);
\coordinate (Z1) at (1.5,0) ;
\coordinate (Z2) at (-1,0) ;
\coordinate (Z3) at (-0.8,1.3) ;
\coordinate (Z4) at (-0.7,-0.9) ;
 \draw (0,2) node[above]{$D$} ;
 \draw (-1.5,-0.5) node[above]{$N$} ;
 \draw (1.5,-0.5) node[above]{$N$} ;
  \draw (-2.3,-0.2) node[above]{$-1$} ;  
  \draw (-0.7,-0.2) node[above]{$-\delta$} ;
   \draw (2.2,-0.2) node[above]{$1$} ;  
  \draw (0.8,-0.2) node[above]{$\delta$} ;
\end{tikzpicture}
\caption{Half-annularholomoprhic chart $\mathbb{A}_\delta^+$ with its D/N boundary markings}\label{pic2}
\end{figure}
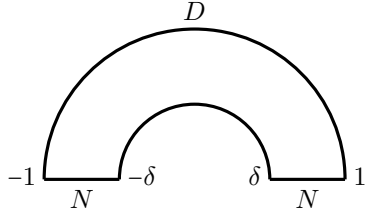

A variation of the proof above also shows the following existence of boundary half-circles (thus analytic) arbitrarily close to smooth curves with 
\begin{lemma}\label{boundarycut} 
Let $\zeta_0:\T^+ \to \mc{B}_0\subset  \Sigma$ be a smooth embedded curve with $\zeta_0(0)\in \pl \Sigma_N$ and 
$\zeta_0(\pi)\in \pl \Sigma_N$, which extends symmetrically with respect to $\tau_\Sigma$ on $\Sigma^{\#2}$ in a  smooth fashion. Then for each small neighbohood $V$ of $\mc{B}_0$ in $\Sigma$ there is an analytic embedded curve $\zeta:\T^+\to \mc{B}$ with $\zeta(0)\in \pl \Sigma_N$,
$\zeta(\pi)\in \pl \Sigma_N$ and $\mc{B}\subset V$, which extends symmetrically with respect to $\tau_\Sigma$ on $\Sigma^{\#2}$ in an analytic fashion.
\end{lemma}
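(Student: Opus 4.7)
The plan is to exploit the Neumann double $\Sigma^{\#2}$: on it, $\zeta_0$ extends by reflection to a smooth closed loop $\tilde\zeta_0:\T\to\Sigma^{\#2}$ satisfying the equivariance $\tau_\Sigma\circ\tilde\zeta_0(e^{i\theta})=\tilde\zeta_0(e^{-i\theta})$, and I would approximate $\tilde\zeta_0$ by a real-analytic $\tau_\Sigma$-equivariant loop. Restricting such an approximation to $\T^+$ automatically produces an analytic half-circle whose endpoints, being fixed by $\tau_\Sigma$, lie in $\pl\Sigma_N={\rm Fix}(\tau_\Sigma)$; the full loop on $\Sigma^{\#2}$ is then its $\tau_\Sigma$-symmetric analytic extension, which is what is required.

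To produce the approximation, I would first embed $\Sigma^{\#2}$ real-analytically into some $\R^N$ via a map $\iota$ (possible since $\Sigma^{\#2}$ is a compact real-analytic manifold), fix a tubular neighborhood $U\supset \iota(\Sigma^{\#2})$ together with its real-analytic nearest-point retraction $\pi_\iota:U\to\iota(\Sigma^{\#2})$, and approximate $\iota\circ\tilde\zeta_0:\T\to\R^N$ in the $C^\infty$ topology by trigonometric polynomials $F_n:\T\to\R^N$ obtained from Fourier truncation at order $n$. For $n$ large, $F_n(\T)\subset U$, so that $G_n:=\iota^{-1}\circ\pi_\iota\circ F_n:\T\to\Sigma^{\#2}$ is real-analytic and converges to $\tilde\zeta_0$ in $C^\infty$.

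To enforce equivariance I would symmetrize inside $\R^N$. Extending $T:=\iota\,\tau_\Sigma\,\iota^{-1}$ to a real-analytic map on a neighborhood of $\iota(\Sigma^{\#2})$ in $\R^N$ (which is possible since $\tau_\Sigma$ is real-analytic), set
\[\tilde\zeta_n(e^{i\theta}):=\iota^{-1}\circ\pi_\iota\Bigl(\tfrac12 F_n(e^{i\theta})+\tfrac12 T(F_n(e^{-i\theta}))\Bigr).\]
This is a composition of real-analytic maps, hence real-analytic on $\T$; by construction $\tau_\Sigma(\tilde\zeta_n(e^{i\theta}))=\tilde\zeta_n(e^{-i\theta})$, and the $C^\infty$-convergence $\tilde\zeta_n\to\tilde\zeta_0$ persists because both $F_n$ and $T\circ F_n\circ\sigma$ (with $\sigma(e^{i\theta})=e^{-i\theta}$) converge to $\iota\circ\tilde\zeta_0$ in $C^\infty$, using the equivariance of $\tilde\zeta_0$. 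For $n$ large enough $\tilde\zeta_n$ remains an embedding and $\tilde\zeta_n(\T^+)\subset V$, so I set $\zeta:=\tilde\zeta_n|_{\T^+}$ and $\mc{B}:=\zeta(\T^+)$.

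The main obstacle is reconciling real-analyticity with $\tau_\Sigma$-equivariance simultaneously: the Euclidean symmetrization combined with the analytic retraction $\pi_\iota$ accomplishes both at once and, as a bonus, automatically sends $\zeta(1)$ and $\zeta(-1)$ onto $\pl\Sigma_N$ without any separate boundary-condition argument. A secondary concern, namely that $\tilde\zeta_n$ remains embedded and stays inside the prescribed neighborhood $V$, follows from the fine $C^1$-convergence $\tilde\zeta_n\to\tilde\zeta_0$.
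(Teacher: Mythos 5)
Your overall strategy --- double the curve, approximate the resulting loop by a $\tau_\Sigma$-equivariant real-analytic loop, and restrict back to $\T^+$ --- is a legitimate alternative to the paper's argument, which instead conformally uniformizes a tubular annular neighborhood of the doubled curve onto a model annulus $\A_\delta$ in which $\tau_\Sigma$ becomes $z\mapsto \bar z$, and then simply takes an exact circle there. However, your symmetrization step does not deliver the equivariance you claim. Writing $H_n(e^{i\theta})=\tfrac12 F_n(e^{i\theta})+\tfrac12 T(F_n(e^{-i\theta}))$, the identity $\tau_\Sigma(\tilde\zeta_n(e^{i\theta}))=\tilde\zeta_n(e^{-i\theta})$ amounts to $T\bigl(\pi_\iota(\tfrac12 a+\tfrac12 Tb)\bigr)=\pi_\iota\bigl(\tfrac12 b+\tfrac12 Ta\bigr)$ for $a=F_n(e^{i\theta})$ and $b=F_n(e^{-i\theta})$, which lie only \emph{near} $\iota(\Sigma^{\#2})$, not on it. This fails for two independent reasons: the analytic extension $T$ of $\iota\,\tau_\Sigma\,\iota^{-1}$ is a nonlinear map, so $T(\tfrac12 a+\tfrac12 Tb)\neq \tfrac12 Ta+\tfrac12 b$ even where $T\circ T={\rm Id}$; and there is no reason for the nearest-point retraction $\pi_\iota$ to commute with $T$. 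Hence ``by construction $\tau_\Sigma(\tilde\zeta_n(e^{i\theta}))=\tilde\zeta_n(e^{-i\theta})$'' is unjustified, and with it the key conclusion that the endpoints land on ${\rm Fix}(\tau_\Sigma)=\pl\Sigma_N$.

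The gap is repairable, but it needs an extra idea: choose the analytic embedding equivariantly, e.g. $\iota'(x):=(\iota(x),\iota(\tau_\Sigma x))\in\R^{2N}$, so that $\iota'\circ\tau_\Sigma=L\circ\iota'$ with $L(a,b)=(b,a)$ a \emph{linear isometric} involution. Then $L$ commutes with Fourier truncation and with the nearest-point retraction onto the $L$-invariant submanifold $\iota'(\Sigma^{\#2})$, the truncations of the already equivariant loop $\iota'\circ\tilde\zeta_0$ are automatically equivariant, and your scheme goes through without any ad hoc symmetrization. Even so, two points remain to be addressed: (i) you are invoking the Grauert--Morrey real-analytic embedding theorem (for a surface with boundary, moreover), which is much heavier machinery than the elementary uniformization argument the paper recycles from Lemma \ref{model_half_annular}; and (ii) equivariance only guarantees $\tilde\zeta_n(\pm 1)\in\pl\Sigma_N$, not that $\tilde\zeta_n(\T^+)$ stays in $\Sigma$ rather than dipping into the reflected copy $\bbar{\Sigma}$ near the endpoints; this needs the $C^1$-closeness to be exploited at $\theta=0,\pi$ together with the transversality of $\mc{B}_0$ to $\pl\Sigma_N$ forced by the smooth symmetric extension of $\zeta_0$.
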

\begin{proof}
Let $p_1,p_2$ be two points in $\pl \Sigma_N\notin \mc{K}$, which are not corner points, and let $\zeta_0:\T^+ \to \mc{B}_0\subset  \Sigma$ be a smooth embedded curve with $\zeta_0(0)=p_1$ and $\zeta_0(\pi)=p_2$, which extends symmetrically with respect to $\tau_\Sigma$ on $\Sigma^{\#2}$ in a  smooth fashion; we denote by $\zeta_0^{\#2}:\T \to \mc{B}_0^{\#2}$ the symmetric smooth parametrization. In a holomorphic charts $\omega_i:V(p_i)\to \D^+$ for $V(p_i)$ a small neighborhood of $p_i$ in $\Sigma$, the involution becomes $z\mapsto \bar{z}$. We also choose some metric $g_0$ compatible with the complex structure and which extends in a symmetric way to $\Sigma^{\#2}$ with respect to $\tau_\Sigma$.
For $\eps>0$ small, let $V_\eps$ be an $\eps-$tubular neighborhood of $\mc{B}_0^{\#2}$ with respect to $g_0$ on $\Sigma^{\#2}$: it has smooth boundary and is symmetric with respect to $\tau_\Sigma$, thus producing a smooth half-annular neighborhood of $\mc{B}_0$. 
We can then use the  same argument as in the proof of Lemma \ref{model_half_annular}: consider $V_\eps$ and glue a disk $\D$ to one of the boundary circles of the annulus $V_\eps$, and we equip the disk $\mc{D}_\eps :=V_\eps\cup \D$ with a smooth complex structure (or conformal structure) extending the complex structure of $V_\eps$. The proof of Lemma \ref{model_half_annular} then applies: it shows that there is a biholomorphism $\omega: V_\eps\to \A_{\delta}$ for some $\delta=\delta(\eps)<1$ with $\delta(\eps)\to 1$ when $\eps\to 0$, and 
$\omega \circ \tau_\Sigma \circ \omega^{-1}(z)=\bar{z}$. The only difference with Lemma \ref{model_half_annular} is that $\omega$ is not analytic but only smooth up to $\pl \mc{D}_\eps$ (it is of course analytic in $\mathring{\mc{D}}_\eps$). 
Now consider the half circle $B \subset \A^+_\delta$ with endpoints $\omega(p_1)\in (-1,-1+\delta)$ and 
$\omega(p_2)\in (1-\delta,1)$, and $\xi:\T^+\to B$ be  its canonical parametrization. It can be reflected smoothly to $\D$ by $z\mapsto \bar{z}$, thus $\mc{B}:=\omega^{-1}(B)$ can be reflected in a smooth way on $\Sigma$ symmetrically with respect to $\tau_\Sigma$. The parametrization $\omega^{-1}\circ \xi$ of $\mc{B}$ is analytic and $\mc{B}$ 
lies in an $\delta(\eps)$ neighborhood of $\zeta_0(\T^+)$ in $\Sigma$.
\end{proof}

\subsection{Admissible metrics}\label{sec:admissible_metrics}

We need to consider metrics $g$ on Riemann surfaces with corners that behave well under gluing in the sense that such metrics will form a smooth metric on the glued surface.

\begin{definition}[\textbf{Admissible and Neumann extendible metrics}]\label{admissibleN}
A smooth metric $g$ on a Riemann surface $\Sigma$ with corners will be said \textbf{admissible} if it  is compatible with the complex structure and if 
\begin{enumerate}
\item each boundary circle $\mc{C}_i$ with Dirichlet  marking has a neighborhood isometric to the flat annulus $(\A_\delta,|dz|^2/|z|^2)$ for some $\delta>0$,
\item each  boundary half-circle $\mc{B}_j$ with Dirichlet marking  has a neighborhood isometric to the flat half-annulus $(\A_\delta^+,|dz|^2/|z|^2)$ for some $\delta>0$.
\item the metric $g$ extends smoothly on the Neumann double $\Sigma^{\#2}$ in a way that it is symmetric under the involution $\tau_\Sigma:\Sigma^{\#2}\to \Sigma^{\#2}$.
\end{enumerate}
If the compatible metric $g$ satisfies only 3), we say that it is \textbf{Neumann extendible}. Note in particular that $\pl\Sigma_N$ must be geodesic for $g$.
A pair $(\Sigma,g)$ where $\Sigma$ is a Riemann surface with corners and $g$ is an admissible metric $g$ on $\Sigma$, will be called an admissible surface with corners.
\end{definition}

\begin{remark}\label{remark_metric}
The existence of Neumann extendible metric is clear as it suffices to consider any smooth metric $g$ on the Neumann double and set $\tilde{g}:=g+(\tau_\Sigma)^*g$, and then consider the restriction of this metric to $\Sigma$. 
For each corner $p_j\in \mc{K}$, using the map $z \in \A_\delta^+ \mapsto -\log(\bar{z})\in \H$ we see that there is a neighborhood $V_j$ such that an admissible metric $g$ on $V_j$  is isometric to $(\eps \D^{++},|dz|^2)$ for some $\eps>0$.
\end{remark}
Let us give a proof of the existence of admissible metrics. We first set it to be $g_\A:=|dz|^2/|z|^2$ in the model annulus charts $\omega_j: V_j\to \A_\delta$ near Dirichlet boundary circles $\mc{C}_j^D$, to be $g_\A= |dz|^2/|z|^2$  in the model charts near Dirichlet half-circles $\mc{B}_j^D$. Since $g_\A$ is symmetric under the map  $z\mapsto \bar{z}$ and smooth in $\A_\delta$, it remains to extend $g$ in an open set $\mc{O}$ of $\Sigma$ which does not intersect $\pl\Sigma_D$, in a way that it doubles in $\mc{O}\cup \tau_\Sigma(\mc{O})$ smoothly and symmetrically with respect to $\tau_\Sigma$. The main thing to check is the behaviour near $\pl \Sigma_N\cap \mc{O}$ and by assumption there are charts $\omega_j:V_j\to \D^+$ covering this region, and the doubled charts $\omega_j^{\#2}$ maps to $\D$. Now using a partition of unity which in the charts $\D$ is smooth and symmetric  under $z\mapsto \bar{z}$, we see that this becomes a local problem and one can simply take $g=e^{\rho(z)}|dz|^2$ in $\D$ with $\rho$ smooth and symmetric under $z\mapsto \bar{z}$. 

\subsection{Laplacian and spectral function}
On a closed Riemannian surface $(\Sigma,g)$, we consider the Laplacian $\Delta_{\Sigma,g}$. On a Riemannian surface $(\Sigma,g)$ with boundary or corners, we impose Dirichlet boundary condition on $\partial \Sigma_D$  and Neumann boundary condition on $\partial \Sigma_N$.  We denote $\Delta_{\Sigma,g,D}$, $\Delta_{\Sigma, g,N}$, $\Delta_{\Sigma,g,m}$ the self-adjoint extensions of Laplacian  with respectively Dirichlet (if $\partial \Sigma_N=\emptyset$), Neumann (if $\partial \Sigma_D=\emptyset$) and mixed boundary condition (otherwise). Notice that  all these operators are self-adjoint with discrete spectrum.  When there is no possible confusion about the underlying surface $\Sigma$ or the metric $g$, we shall sometime use the shortcut notations  $\Delta_{g}$, $\Delta_{g,D}$, $\Delta_{g,N}$, $\Delta_{g,m}$ and $\Delta_{\Sigma}$, $\Delta_{\Sigma,D}$, $\Delta_{\Sigma,N}$, $\Delta_{\Sigma,m}$ instead of $\Delta_{\Sigma,g},\Delta_{\Sigma,g,D}$, $\Delta_{\Sigma, g,N}$, $\Delta_{\Sigma,g,m}$, and we shall proceed similarly for the spectral zeta functions, the Green functions, the resolvent operators...  
  
The spectrum for the Laplacian $\Delta_{\Sigma, g,m}$ (resp. $\Delta_{\Sigma, g,N}$) on a Riemann surface $\Sigma$  with corners (resp. with boundary) and with Neumann extendible metric $g$ is in one-to-one correspondence with the spectrum of the Dirichlet Laplacian $\Delta_{\Sigma^{\#2},g^{\#2},D}$ (resp. the Laplacian $\Delta_{\Sigma^{\#2},g^{\#2}}$)  on the Neumann double  restricted to even functions with respect to $\tau_\Sigma$. The normalized eigenfunctions $e_j$ of $\Delta_{\Sigma,g,m}$ are exactly $\sqrt{2}$ times the restriction to $\Sigma$ of the normalized eigenfunctions $e_j^{{\rm even}}$ of $\Delta_{\Sigma^{\#2},g^{\#2},D}$ to $\Sigma$ that are even with respect to $\tau_\Sigma$. 
In particular, the heat kernel of $\Delta_{\Sigma,g,m}$ satisfies for $x,x'\in \Sigma$
\begin{equation}\label{heatkernel} 
e^{-t\Delta_{\Sigma,g,m}}(x,x')=e^{-t\Delta_{\Sigma^{\#2},g^{\#2},D}}(x,x')+e^{-t\Delta_{\Sigma^{\#2},g^{\#2},D}}(\tau_\Sigma x,x').
\end{equation}
The same holds when $\Delta_{\Sigma,g,N}$ replaces $\Delta_{\Sigma,g,m}$ and $\Delta_{\Sigma^{\#2},g^{\#2}}$ replaces $\Delta_{\Sigma^{\#2},g^{\#2},D}$.
The trace is then given for $t>0$ by
\[ {\rm Tr}(e^{-t\Delta_{\Sigma,g,m}})=\int_{\Sigma}(e^{-t\Delta_{\Sigma^{\#2},g^{\#2},D}}(x,x)+e^{-t\Delta_{\Sigma^{\#2},g^{\#2},D}}(\tau_\Sigma(x),x)){\rm dv}_g(x)\]
and it has an asymptotic expansion as $t\to 0$ given by Lemma \ref{asympt_exp_heat}. In particular, by the method of \cite{Ray-Singer},  the spectral zeta function of $\Delta_{\Sigma,g,*}$ for $*\in \{\emptyset,D,N,m\}$ is defined by
\[\zeta_{\Sigma,g,*}(s):={\rm Tr}(\Delta_{\Sigma,g,*}^{-s}(1-\Pi_{\ker \Delta_{\Sigma,g,*}}))\]
for ${\rm Re}(s)\gg 1$ where $\Pi_{\ker \Delta_{\Sigma,g,*}}$ is the orthogonal projection the $L^2$-kernel of $ \Delta_{\Sigma,g,*}$. The zeta function then extends meromorphically to $s\in \C$ and is analytic near $s=0$.
%

 Then we can define the determinant of Laplacian corresponding to closed Riemann surfaces and Riemann surfaces with boundary or corners, with a Neumann extendible metrics:  
 \[\begin{gathered}
 {\det}'(\Delta_{\Sigma,g})=\exp(-\pl_s\zeta_{\Sigma,g}(s)|_{s=0}),\quad  {\det}(\Delta_{\Sigma,g,D})=\exp(-\pl_s\zeta_{\Sigma,D}(s)|_{s=0}),\\
 \quad  {\det} '(\Delta_{\Sigma,g,N})=\exp(-\pl_s\zeta_{\Sigma,N}(s)|_{s=0}), \quad {\det} (\Delta_{\Sigma,g,m})=\exp(-\pl_s\zeta_{\Sigma,m}(s)|_{s=0}).
 \end{gathered}\]  
The index $'$ here emphasizes that the cases where the Laplacian has non trivial kernel.

For a Riemann surface $\Sigma$ with corners (resp. with boundary) and Neumann admissible metric $g$, as for Riemannian surfaces with boundary, let us denote by $\Delta_{\Sigma,g,D}$ the Laplacian with Dirichlet boundary condition on the whole boundary $\pl \Sigma$.  The spectrum for the Laplacian $\Delta_{\Sigma,g,D}$ is in one-to-one correspondence with the spectrum of the Dirichlet Laplacian $\Delta_{\Sigma^{\#2},g^{\#2},D}$ (resp. $\Delta_{\Sigma^{\#2},g^{\#2}}$)  on the Neumann double $(\Sigma^{\#2},g^{\#2})$ restricted to odd functions with respect to $\tau_\Sigma$. This directly gives the identity 
\begin{equation}\label{identity_det}
\det(\Delta_{\Sigma,g,m})\det(\Delta_{\Sigma,g,D})=\det(\Delta_{\Sigma^{\#2},g^{\#2},D}), \quad (\textrm{resp.} {\det}'(\Delta_{\Sigma,g,N})\det(\Delta_{\Sigma,g,D})={\det}'(\Delta_{\Sigma^{\#2},g^{\#2}})).
\end{equation}

\subsection{Green functions}
The Green function $G_{\Sigma,g}$ on a closed Riemann surface $\Sigma$ equipped with a compatible metric $g$ is defined 
to be the integral kernel of the resolvent operator $R_{\Sigma,g}:L^2(\Sigma,{\rm dv}_g)\to L^2(\Sigma,{\rm dv}_g)$ satisfying $\Delta_{\Sigma,g}R_{\Sigma,g}=2\pi ({\rm Id}-\Pi_{\ker \Delta_{\Sigma,g}})$, $R_{\Sigma,g}^*=R_{\Sigma,g}$ and $R_{\Sigma,g}1=0$. By integral kernel, we mean that for each $f\in L^2(\Sigma,{\rm dv}_g)$
\[ R_{\Sigma,g}f(x)=\int_{\Sigma} G_{\Sigma,g}(x,x')f(x'){\rm dv}_g(x').\] 
The Laplacian $\Delta_{\Sigma,g}$ has an orthonormal basis of real valued eigenfunctions $(e_j)_{j\in \N_0}$ in $L^2(\Sigma,{\rm dv}_g)$ with associated eigenvalues $\la_j\geq 0$; we set $\la_0=0$ and $e_0=({\rm v}_g(\Sigma))^{-1/2}$.  The Green function then admits the following Mercer's representation in $L^2(\Sigma\times\Sigma, {\rm dv}_g \otimes {\rm dv}_g)$
\begin{equation}
G_{\Sigma,g}(x,x')=2\pi \sum_{j\geq 1}\frac{1}{\lambda_j}e_j(x)e_j(x').
\end{equation}
Similarly, on a  compact Riemann surface with boundary equipped with a Neumann extendible metric, we will consider the Green function with Dirichlet boundary conditions $G_{\Sigma,g,D}$ associated to the Laplacian $\Delta_{\Sigma,g,D}$ or 
the Neumann Green function $G_{\Sigma,g,N}$   defined 
to be the integral kernel of the resolvent operator $\Delta_{\Sigma,g,N}R_{\Sigma,g,N}=2\pi ({\rm Id}-\Pi_{\ker_{\Delta,g,N}})$, $\partial_\nu R_{\Sigma,g,N}=0$, $R_{\Sigma,g,N}^*=R_{\Sigma,g,N}$ and $R_{\Sigma,g,N}1=0$, where $\Pi_{\ker_{\Delta,g,N}}$ is the orthogonal projection in $L^2(\Sigma,{\rm dv}_g)$ on $\ker \Delta_{\Sigma,g,N}$ (the constants).

Finally, on a Riemann surface with corners with Neumann extendible metric, we consider the Green function $G_{\Sigma,g,m}$ with mixed boundary conditions, of Dirichlet type on $\partial\Sigma_D$ and of Neumann type on $\partial\Sigma_N$. 
  
Just as for the heat operator in \eqref{heatkernel}, we can compare the Green function with Neumann or mixed condition on a Riemann surface with corners and Neumann extendible metric in terms of the Green function on the Neumann double $\Sigma^{\#2}$ (with doubled metric $g^{\#2}$):  this gives for all $x\not= x'$ in $\Sigma$
\begin{equation}\label{Green_symmetric} 
G_{\Sigma, g,m}(x,x')=G_{\Sigma^{\#2},g^{\#2},D}(x,x')+G_{\Sigma^{\#2},g^{\#2},D}(\tau_\Sigma(x),x')=G_{\Sigma^{\#2},g^{\#2},D}(x,x')+G_{\Sigma^{\#2},g^{\#2},D}(x,\tau_\Sigma(x')),
\end{equation}
and the same holds when $G_{\Sigma,g,N}$ replaces $G_{\Sigma,g,m}$ and $G_{\Sigma^{\#2},g^{\#2}}$ replaces $G_{\Sigma^{\#2},g^{\#2},D}$.
As for the Laplacian, we shall regularly use the shortcut notations  $G_{g}$, $G_{g,D}$, $G_{g,N}$, $G_{g,m}$ or  $G_{\Sigma}$, $G_{\Sigma,D}$, $G_{\Sigma,N}$, $G_{\Sigma,m}$ instead of $G_{\Sigma,g},G_{\Sigma,g,D}$, $G_{\Sigma, g,N}$, $G_{\Sigma,g,m}$, and similarly for the resolvent operator. 

\subsection{Gaussian Free Fields} 
On a closed Riemann surface $\Sigma$ with compatible metric $g$, the Gaussian Free Field (GFF in short) is defined as follows. Let $(a_j)_j$ be a sequence of i.i.d. real Gaussians   $\mc{N}(0,1)$ with mean $0$ and variance $1$, defined on some probability space   $(\Omega,\mc{F},\mathbb{P})$,  and define  the Gaussian Free Field with vanishing mean in the metric $g$ by the random functions
\begin{equation}\label{GFFong}
X_g:= \sqrt{2\pi}\sum_{j\geq 1}a_j\frac{e_j}{\sqrt{\la_j}} 
\end{equation}
 where  the sum converges almost surely in the Sobolev space  $H^{s}(\Sigma)$ for $s<0$ defined by
\[ H^{s}(\Sigma):=\{f=\sum_{j\geq 0}f_je_j\,|\, \|f\|_{H^s}^2:=|f_0|^2+\sum_{j\geq 1}\lambda_j^{s}|f_j|^2<+\infty\}.\] 
This Hilbert space is independent of $g$, only its norm depends on a choice of $g$.
The covariance of $X_g$  is then the Green function when viewed as a distribution, which we will write with a slight abuse of notation
\[\mathbb{E}[X_g(x)X_g(x')]= \,G_g(x,x').\]
In this case, we will denote the \textbf{Liouville field} by 
\[\phi_g:=c+X_g\] 
where $c\in\R$ is a constant that stands for the constant mode of the field.\\

We define in the same manner, on a  Riemann surface with boundary/corners and equipped with a Neumann extendible metric,  the GFF  with Dirichlet   boundary condition $X_{g,D}$ (with covariance $G_{g,D}$), the GFF  with Neumann    boundary condition $X_{g,N}$ (with covariance $G_{g,N}$) and the GFF  with mixed   boundary condition $X_{g,m}$ (with covariance $G_{g,m}$) 
\begin{equation}\label{GFFond}
X_{g,D}:=  \sqrt{2\pi}\sum_{j\geq 0}a_j\frac{e_{D,j}}{\sqrt{\lambda_{D,j}}} ,\quad    X_{g,N}:= \sqrt{2\pi}\sum_{j\geq 1}a_j\frac{e_{N,j}}{\sqrt{\lambda_{N,j}}}\quad \text{ and }\quad  X_{g,m}:= \sqrt{2\pi}\sum_{j\geq 0}a_j\frac{e_{m,j}}{\sqrt{\lambda_{m,j}}}
\end{equation}
where $(a_j)_j$ is a sequence of i.i.d. real Gaussians   $\mc{N}(0,1)$ and $(e_{D,j})_{j\geq 0}$, $(e_{N,j})_{j\geq 0}$, $(e_{m,j})_{j\geq 0}$ are orthonormal basis of the Laplacian with corresponding boundary conditions  and associated respective eigenvalues $(\lambda_{D,j})_{j\geq 0}$, $(\lambda_{N,j})_{j\geq 0}$, $(\lambda_{m,j})_{j\geq 0}$ (with   $\lambda_{N,0}=0$).  They converge almost surely  respectively in the Sobolev spaces    defined by (for all $s\in (-1/2,0)$) 
\begin{align*}
H^{s}_D(\Sigma,g):=&\{f=\sum_{j\geq 0}f_je_{D,j}\,|\, \|f\|_{H^s_D}^2:=\sum_{j\geq 0}\lambda_{D,j}^{s}|f_j|^2<+\infty\}
\\
H^{s}_N(\Sigma,g):=&\{f=\sum_{j\geq 0}f_je_{N,j}\,|\, \|f\|_{H^s_N}^2:=|f_0|^2+\sum_{j\geq 1}\lambda_{N,j}^{s}|f_j|^2<+\infty\}
\\
H^{s}_m(\Sigma,g):=&\{f=\sum_{j\geq 0}f_je_{m,j}\,|\, \|f\|_{H^s_m}^2:=\sum_{j\geq 0}\lambda_{m,j}^{s}|f_j|^2<+\infty\}.
\end{align*} 
In the case of Neumann b.c. (i.e. $\partial\Sigma_D=\emptyset$),  we will denote the Liouville field by $\phi_g:=c+X_{g,N}$ where $c\in\R$ is a constant that stands again for the constant mode of the field.
When $\partial\Sigma_D$ is non empty, $P \boldsymbol{\tilde \varphi}$ will stand for the harmonic extension to $\Sigma$ of a collection of boundary fields $\boldsymbol{\tilde \varphi}$ imposed on $\partial\Sigma_D$ with the condition $\partial_nP \boldsymbol{\tilde \varphi}=0$ on $\partial\Sigma_N$. The precise definition of these boundary fields will be given later (see Definition \ref{def_harmonicext}). For now, we just mention that the Liouville field will then be defined in the case  of Dirichlet boundary condition (resp. in the case of mixed boundary condition) by  
\begin{equation}\label{phi_g_dirichlet}
\phi_g:= X_{g,D}+P \boldsymbol{\tilde \varphi} \qquad (\textrm{resp. }\phi_g:= X_{g,m}+P \boldsymbol{\tilde \varphi}).
\end{equation} 
Hence $\phi_g$ will depend on boundary data in these cases. \\

As Gaussian Free Fields  are not functions but distributions, we will need to regularize them. We will call {\it g-regularization} of the GFF   its regularization along the geodesic circle of radius $\epsilon$ in the metric $g$ (we refer the reader to \cite[Section 3.2]{Guillarmou2019} and \cite{Wu1} for details). Let $x\in \Sigma$ and  $C_g(x,\epsilon):=\{z\in\Sigma\mid d_g(z,x)=\epsilon\}$ (for $x\in\partial\Sigma_N$ and $\epsilon$ small enough, $C_g(x,\epsilon)$ is an arc, with its two ends lying on $\partial\Sigma_N$). Let $(f_{x,\epsilon}^n)_{n\in \N}\in C^{\infty}(\Sigma)$ be a sequence with $||f_{x,\epsilon}^n||_{L^1}=1$ which is given by $f_{x,\eps}^n=\theta^n(d_g(x,\cdot)/\eps)$ where $\theta^n(r)\in C_c^\infty((0,2))$ is non-negative, 
supported near $r=1$ and such that $f^n_{x,\eps}{\rm dv}_g$ 
converges in $\mc{D}'(\Sigma)$ to the uniform probability measure 
$\mu_{x,\eps}$
on $C_g(x,\eps)$ as $n\to \infty$.  If the pairing $\langle h, f_{x,\eps}^n\rangle$ converges almost surely towards a random variable $h_\epsilon(x)$ that has a modification which is continuous in the parameters $(x,\epsilon)$, we will say that $h$ admits a $g$-regularization $(h_\epsilon)_\epsilon$. This is the case for the GFF $h=X_g$, $h=X_{g,D}$, $h=X_{g,N}$ and $h=X_{g,m}$, the proof is similar  to \cite[Lemma 3.2]{Guillarmou2019}.  
We will denote $X_{g,N,\epsilon}$, $X_{g,D,\epsilon}$  and $X_{g,m,\epsilon}$ the  respective $g$-regularizations of the various GFF introduced above
and $\phi_{g,\epsilon}$ the $g$-regularization of the Liouville field. 
An alternative way to define $X_{g,m}$ and $X_{g,m,\eps}$ when $g$ is Neumann extendible is to consider the Dirichlet GFF (or its regularization) on the doubled surface $\Sigma^{\#2}$ and take its even part with respect to the involution $\tau_\Sigma$:
\begin{equation}
X_{\Sigma,g,m}\stackrel{{\rm law}}{=}\frac{X_{\Sigma^{\#2},g^{\#2},D}+X_{\Sigma^{\#2},g^{\#2},D}\circ\tau_\Sigma}{\sqrt{2}},\qquad X_{\Sigma,g,N}\stackrel{{\rm law}}{=}\frac{X_{\Sigma^{\#2},g^{\#2}}+X_{\Sigma^{\#2},g^{\#2}}\circ\tau_\Sigma}{\sqrt{2}}.
\end{equation}

\subsection{Gaussian Multiplicative Chaos and vertex operators} 
For $\gamma\in\R$ and $h$ a random distribution admitting a  $g$-regularization $(h_\epsilon)_\epsilon$, we define the measures (the second definition makes sense only in the case of non empty Neumann boundary $\partial\Sigma_N$)
\begin{equation}\label{GMCg}
M^{g,\eps}_{\gamma}(h,\dd x):= \eps^{\frac{\gamma^2}{2}}e^{\gamma h_{ \eps}(x)}{\rm dv}_g(x),\quad \text{and }\quad M^{g,\eps}_{\gamma,\partial}(h,\dd s):= \eps^{\frac{\gamma^2}{4}}e^{\frac{\gamma}{2} h_{ \eps}(s)}{\rm d}\ell_g(s).
\end{equation}
Of particular interest for us is the case when $h=X_g,X_{g,D},X_{g,N}$ or $X_{g,m}$ (and consequently $h=\phi_g$ too). 

For the case with no boundary, when $\gamma\in (0,2)$, the random measure $M^{g,\eps}_\gamma(X_g,\dd x)$ converges  as $\eps\to 0$ in probability and weakly in the space of Radon measures towards a  non trivial random measure on $\Sigma$, see \cite[Section 3.2]{Guillarmou2019} and \cite{rhodes2014_gmcReview}. 

When $\Sigma$ has a non-empty boundary with Dirichlet marking, the same proof as in the closed case works, for the boundary  does not create any problem since the Green function vanishes at $\pl \Sigma$; this yields the convergence of 
$M^{g,\eps,D}_{\gamma}(h,\dd x)$ to some non-trivial random measure $M^{g,D}_{\gamma}(h,\dd x)$.

When $\Sigma$ has a non-empty boundary with Neumann marking and $g$ is Neumann extendible, 
on each set $\{ d_g(\cdot,\pl \Sigma)>\delta\}$ with $\delta>0$,  
the measure $M^{g,\eps}_{\gamma}(h,\dd x)$ converges in probability and weakly in the space of Radon measures towards a  non trivial random measure $M^g_{\gamma}(X_{g,N},\dd x)$ by the same argument as in the closed case. Moreover,
by \cite[Proposition 2.3]{huang2018}, one has $  M^g_{\gamma}(X_{g,N},\Sigma)<+\infty $ almost surely. Yet note that the expectation is finite iff $\gamma^2<2$. For the boundary measure, the fact that the measures $M^{g,\eps}_{\gamma,\partial}(X_{g,N},\dd s)$ or $M^{g,\eps}_\gamma(X_{g,m,\eps},\dd x)$ converge   as $\eps\to 0$ to some non-trivial random measure $M^{g}_{\gamma,\partial}(X_{g,N},\dd s)$ or $M^{g}_{\gamma,\partial}(X_{g,m},\dd s)$  follows from \cite[Proposition 2.1 and 2.2]{huang2018}. These measures give finite mass to the boundary of $\Sigma$. Indeed, the part $\partial\Sigma_D$ does not raise any problem as both fields $X_{g,N}$ or $X_{g,m}$ vanish there, so the only thing to check is that they have finite mass almost surely on $\pl \Sigma_N$, and actually the mass could only blow up at the corner points.  But the associated Green function on the Neumann double  has Dirichlet condition at $\pl \Sigma_D$ (thus vanishes at the corner points $\pl \Sigma_N\cap \pl\Sigma_D$).


We will occasionally use the notation $m(f):=\int f \dd m$ when $m$ is a measure and $f$ is $m$-integrable (hence the use of $ M^g_\gamma(X_{g,D},f)$, $M^{g}_{\gamma,\partial}(X_{g,N},f) $, ...). 

Using the notation \eqref{phi_g_dirichlet} and \eqref{GFFond} we have the following relations when $\Sigma$ satisfies
\begin{description}
\item[$\partial\Sigma=\emptyset$] $M^g_\gamma(\phi_g,\dd x)=e^{\gamma c}M^g_\gamma(X_{g},\dd x)$,
\item[$\partial\Sigma\not=\emptyset$ and $\partial\Sigma_D=\emptyset$] $M^g_\gamma(\phi_g,\dd x)=e^{\gamma c}M^g_\gamma(X_{g,N},\dd x)$ and $M^{g}_{\gamma,\partial}(\phi_g,\dd s)=e^{\frac{\gamma}{2}c}M^{g}_{\gamma,\partial}(X_{g,N},\dd s)$,
\item[$\partial\Sigma\not=\emptyset$ and $\partial\Sigma_N=\emptyset$] $M^g_\gamma(\phi_g,\dd x)=e^{\gamma P  \tilde{\boldsymbol{\varphi}} (x)}M^g_\gamma(X_{g,D},\dd x)$ ,
\item[$\partial\Sigma_D\not=\emptyset$ and $\partial\Sigma_N\not=\emptyset$] $M^g_\gamma(\phi_g,\dd x)=e^{\gamma P  \tilde{\boldsymbol{\varphi}} (x)}M^g_\gamma(X_{g,m},\dd x)$ and $M^{g}_{\gamma,\partial}(\phi_g,\dd s)=e^{\frac{\gamma}{2}P  \tilde{\boldsymbol{\varphi}} (x)}M^{g}_{\gamma,\partial}(X_{g,m},\dd s)$
\end{description}
 
Also, from  \cite[Lemma 3.2]{Guillarmou2019}  we recall that there exist a function $W_g  \in C^\infty(\Sigma)$, called Robin constant, such that 
\begin{equation}\label{varXg}
\lim_{\eps \to 0}\E[X^2_{g,\eps}(x)]+\log\eps=W_g(x).
\end{equation}
For a Riemann surface with boundary (with Neumann marking and Neumann extendible metric $g$) or with corners (and then with admissible metric $g$), we have the uniform estimate as $\eps\to 0$ for $x$ in compact sets of $\mathring{\Sigma}$
\begin{align*} 
& \E[X_{g,N,\eps}(x)^2]=\int_{C_g(x,\eps)}\int_{C_g(x,\eps)} G_{g,N}(y,y')d\ell_g(y)d\ell_g(y')=-\log\eps+W_{g,N}(x)+o(1)\\
& \E[X_{g,m,\eps}(x)^2]=\int_{C_g(x,\eps)}\int_{C_g(x,\eps)} G_{g,m}(y,y')d\ell_g(y)d\ell_g(y')=-\log\eps+W_{g,m}(x)+o(1)
\end{align*}
where $W_{g,N}(x):=W_{\Sigma^{\#2},g^{\#2}}(x)+ G_{\Sigma^{\#2},g^{\#2}}(x,\tau_\Sigma(x))$ and $W_{g,m}(x):=W_{\Sigma^{\#2},g^{\#2},D}(x)+G_{\Sigma^{\#2},g^{\#2},D}(x,\tau_\Sigma(x))$ by using again \eqref{Green_symmetric}.
On the boundary, we can again use the Neumann double to deduce that 
\begin{align}\label{boundvarYg}
\lim_{\eps \to 0}\E[X^2_{g,N,\eps}(s)]-2\log\eps=W_{g,N,\pl}(s),& &\lim_{\eps \to 0}\E[X^2_{g,m,\eps}(s)]+2\log\eps=&W_{g,m,\pl}(s)
\end{align}
uniformly over, respectively,  $\partial\Sigma$ and the compact subsets of  the interior of $\partial\Sigma_N$ (seen as a submanifold of $\partial\Sigma$), with $W_{g,N,\pl}(s):=2W_{\Sigma^{\#2},g^{\#2}}(s)$ and $W_{g,m,\pl}(s):=2W_{\Sigma^{\#2},g^{\#2},D}(s)$.\\

In the following, the behavior of these constants under conformal change of metrics will be important. Some simple rules can be obtained in the case when $\partial\Sigma_D\not=\emptyset$: if $g'=e^{\omega}g$ for some $\omega \in C^\infty(\Sigma)$  and $g$ is admissible (observe that the Dirichlet GFF or the mixed b.c. GFF remain the same in law for $g$ and $g'$, for the associated Green function are the same): 
\begin{align}\label{varYgconformal1}
& \text{ in }\mathring{\Sigma}: W_{g',D}(x)= W_{g,D}(x)+\frac{\omega(x)}{2} \quad \text{ and }\quad W_{g',m}(x)= W_{g,m}(x)+\frac{\omega(x)}{2} \\
\label{varYgconformal2}
& \text{ on }\partial\Sigma_N \text{ if non empty:}\,\,  W_{g',m,\pl }(s)= W_{g,m,\pl}(s)+ \omega(s) 
\end{align}
which leads directly to the scaling relations
\begin{align}\label{scalingmeasure1}
& M^{g'}_\gamma(X_{g',D},\dd x)= e^{\frac{\gamma}{2}Q \omega(x)} M^g_\gamma(X_{g,D},\dd x) \quad \text{ and }\quad M^{g'}_\gamma(X_{g',m},\dd x)= e^{\frac{\gamma}{2}Q \omega(x)} M^g_\gamma(X_{g,m},\dd x)\\
\label{scalingmeasure2}
& M^{g'}_{\gamma,\partial}(X_{g',m},\dd s)= e^{\frac{\gamma}{4}Q \omega(s)} M^g_{\gamma,\partial}(X_{g,m},\dd s)\quad \text{ on }\partial\Sigma_N.
\end{align}
And if $\partial \Sigma_D=\emptyset$, (recall $g$ is assumed Neumann extendible) we have 
\begin{align}\label{nodirichletgmc}
M_{\gamma}^{g'}(X_{g'},\dd x){=} e^{\gamma(\frac{Q}{2}\omega-m_{g'}(X_{g}))}M_{\gamma}^g(X_g,\dd x)
\end{align}
\[
M_{\gamma}^{g'}(X_{g',N},\dd x){=} e^{\gamma(\frac{Q}{2}\omega-m_{g'}(X_{g,N}))}M_{\gamma}^g(X_{g,N},\dd x), \quad
M^{g'}_{\gamma, \partial}(X_{g',N},\dd x){=} e^{\frac{\gamma}{2}(\frac{Q}{2}\varphi-m_{g'}(X_{g,N}))}M^{g}_{\gamma, \partial}(X_{g,N},\dd x) 
\]
where we use the notation $m_g(f):=\frac{1}{\rm{v}_g(\Sigma)}\int_{\Sigma}f(x)\dd {\rm v}_g(x)$.\\

\textbf{Vertex operators.} On a Riemann surface $\Sigma$ (closed, with boundary or with corner), we introduce the regularized vertex operators, for fixed $\alpha\in\R$ (called {\it weight}) and $x\in \mathring{\Sigma}$ (recall \eqref{phi_g_dirichlet} and $ \phi_{g,\epsilon}$ the $\eps$-regularization)
\begin{equation*}
V_{\alpha,g,\eps}(x)=\eps^{\alpha^2/2}  e^{\alpha  \phi_{g,\epsilon}(x) } .
\end{equation*}
When $\Sigma$ has a boundary and $\partial\Sigma_N\not=\emptyset$, we also define the boundary vertex operators, for fixed $\beta\in\R$ (called {\it weight}) and $s\in \partial\Sigma_N$,  
\begin{equation*}
V_{\frac{\beta}{2},g,\eps}(s)=\eps^{\frac{\beta^2}{4}}  e^{\frac{\beta}{2}  \phi_{g,\epsilon}(s) } .
\end{equation*}
Notice that in the case of a surface with $\partial\Sigma_D\not=\emptyset$  and if $g'=e^{\omega}g$, then the relations \eqref{varYgconformal1} or \eqref{varYgconformal2} give 
\begin{equation}\label{scalingvertex}
V_{\alpha,g',\eps}(x)=(1+o(1)) e^{\frac{\alpha^2}{4} w(x)} V_{\alpha,g,\eps}(x)\quad \text{ and }\quad V_{\beta,g',\eps}(s)=(1+o(1)) e^{\frac{\beta^2}{8} \omega(s)} V_{\beta,g,\eps}(s)
\end{equation}
when $\eps$ goes to $0$. We will use this fact later at the level of correlation functions.

\section{Background on Liouville CFT for closed Riemann surfaces}\label{sec:backclosed}
  
 In this section, we review the needed result on  Liouville CFT on closed Riemann surfaces as was established in \cite{DKRV16, Guillarmou2019}.
 
 \subsubsection{LCFT path integral}
  Let  $\Sigma$ be a closed Riemann surface of genus $g(\Sigma)$, $g$ be a fixed metric on $\Sigma$ with scalar curvature denoted by $K_g$,  let us also consider parameters
 \[\gamma\in(0,2),\quad  \mu>0 ,\quad Q=\frac{\gamma}{2}+\frac{2}{\gamma}.\] 
 Recall that the Liouville field is $\phi_g=c+X_g$ with $c\in\R$. For $F:  H^{s}(\Sigma)\to\R$ (with $s<0$) a bounded continuous functional, we set 
\begin{align}\label{def:pathintegralF}
\cjg F(\phi)\cjd_{\Sigma, g}:=& \big(\frac{{\rm v}_{g}(\Sigma)}{{\det}'(\Delta_{g})}\big)^\hf  \int_\R  \E\Big[ F( \phi_g) \exp\Big( -\frac{Q}{4\pi}\int_{\Sigma}K_{g}\phi_g\,{\rm dv}_{g} - \mu   M^g_\gamma(\phi_g,\Sigma)  \Big) \Big]\,\dd c . 
\end{align}
 By \cite[Proposition 4.1]{Guillarmou2019}, this quantity  defines a measure on $H^s(\Sigma)$ and moreover the partition function defined as  the total mass of this measure, i.e $\cjg 1 \cjd_ {\Sigma, g}$, is finite iff the genus $g(\Sigma) \geq 2$.

 Next, if the surface $\Sigma$ is closed, the correlation functions are defined by the limit
\begin{equation}\label{defcorrelg}
\cjg \prod_i V_{\alpha_i,g}(x_i) \cjd_ {\Sigma, g}:=\lim_{\eps \to 0} \: \cjg \prod_i V_{\alpha_i,g,\eps}(x_i) \cjd_ {\Sigma, g}
\end{equation}
where we have fixed $m$ distinct points $x_1,\dots,x_m$  on $\Sigma$ with respective associated weights $\alpha_1,\dots,\alpha_m\in\R$. Non triviality of correlation functions are then summarized in the following proposition (see \cite[Prop 4.4]{Guillarmou2019}):

 \begin{proposition}\label{limitcorel} Let $ x_1,\dots,x_m$ be distinct points on a closed surface $\Sigma$ and $ (\alpha_1,\dots,\alpha_m)\in\R^m$.
The limit \eqref{defcorrelg} exists and is non zero if and only if the weights $(\alpha_1,\dots,\alpha_m)$ satisfy the \textbf{Seiberg bounds}
 \begin{align}\label{seiberg1}
 & \sum_{i}\alpha_i + 2 Q (g(\Sigma)-1)>0,\\ 
 &\forall i,\quad \alpha_i<Q\label{seiberg2}.
 \end{align}
 \end{proposition}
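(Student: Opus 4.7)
The plan is to reduce $\langle\prod_i V_{\alpha_i,g}(x_i)\rangle_{\Sigma,g}$ to an explicit formula by a Cameron-Martin shift (Girsanov), and then read off both Seiberg bounds from the two distinct sources of possible divergence in the resulting expression: the integral over the constant mode $c$, and the GMC integral near the insertions.

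First I would apply Girsanov to the regularized correlation \eqref{defcorrelg}, shifting $X_g \mapsto X_g + h_\eps$ with $h_\eps(y) = \sum_i \alpha_i \int G_g(y,y')\dd\mu_{x_i,\eps}(y')$, so that each factor $e^{\alpha_i X_{g,\eps}(x_i)}$ is absorbed into a deterministic density. The diagonal variances $\E[X_{g,\eps}(x_i)^2] = -\log\eps + W_g(x_i) + o(1)$ from \eqref{varXg} exactly cancel the prefactor $\eps^{\alpha_i^2/2}$ and leave $\prod_i e^{\alpha_i^2 W_g(x_i)/2}$, while the off-diagonal pairings produce the Coulomb factor $\prod_{i<j} e^{\alpha_i\alpha_j G_g(x_i,x_j)}$. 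The constant mode contributes $e^{sc}$ in the integrand with coefficient
\[
s := \sum_i \alpha_i + 2Q(g(\Sigma)-1)
\]
obtained by combining the linear term $\sum_i\alpha_i c$ from the vertex operators with the Gauss-Bonnet term $-\frac{Q}{4\pi}\int K_g\,\phi_g\,\dd\mathrm{v}_g$, which yields $-Q\chi(\Sigma)c = 2Q(g(\Sigma)-1)c$. After integrating out the Gaussian pairing of the linear-in-$X_g$ curvature piece and taking $\eps\to 0$, the limit formally takes the schematic form
\[
\langle\prod_i V_{\alpha_i,g}(x_i)\rangle_{\Sigma,g} = C(\Sigma,g,\boldsymbol{x},\boldsymbol{\alpha}) \int_\R e^{sc}\,\E\big[e^{-\mu e^{\gamma c} Z}\big]\,\dd c,
\]
where $C$ is a non-zero finite prefactor collecting $W_g$, $G_g$, the determinant, and a smooth ``curvature background'' and
\[
Z := \int_\Sigma e^{\gamma\sum_i \alpha_i G_g(x_i,x) - \frac{\gamma Q}{4\pi}\int G_g(x,y) K_g(y)\dd\mathrm{v}_g(y)}\,M^g_\gamma(X_g,\dd x).
\]

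Next I would analyze the $c$-integral. The double exponential $e^{-\mu e^{\gamma c} Z}$ forces fast decay at $c\to +\infty$ (using $Z>0$ a.s.), so integrability is dictated solely by the behavior at $c\to -\infty$, where the integrand is equivalent to $e^{sc}$; this converges iff $s>0$, which is exactly \eqref{seiberg1}. Assuming $s>0$, a Fubini-legitimized change of variables $u = \mu e^{\gamma c} Z$ gives
\[
\int_\R e^{sc}\E\big[e^{-\mu e^{\gamma c}Z}\big]\,\dd c = \frac{\mu^{-s/\gamma}\Gamma(s/\gamma)}{\gamma}\,\E[Z^{-s/\gamma}],
\]
which is finite and non-zero as soon as $Z\in(0,\infty)$ almost surely and has the required negative moment. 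Conversely, if $s\leq 0$ the inner $c$-integral is already infinite at every $\eps>0$, so the $\eps$-regularized correlation is infinite and the limit cannot be finite and non-zero.

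Finally I would pinpoint the second Seiberg bound inside $Z$. Because $G_g(x_i,x) = -\log d_g(x_i,x) + \text{smooth}$, the integrand of $Z$ behaves like $\prod_i d_g(x,x_i)^{-\gamma\alpha_i}$ near each $x_i$. The sharp GMC singular-integrability result (via the multifractal moment estimate $\E[M^g_\gamma(B(x_i,r))^q] \asymp r^{(2+\gamma^2/2)q - \gamma^2 q^2/2}$ for $q$ small) gives that $\int_{B_g(x_i,\delta)} d_g(x,x_i)^{-a}M^g_\gamma(X_g,\dd x) < \infty$ a.s.\ iff $a < \gamma Q = 2 + \gamma^2/2$, i.e.\ iff $\alpha_i < Q$, which is \eqref{seiberg2}. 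Under both bounds, standard negative-moment estimates for GMC with log-singular weights give $\E[Z^{-s/\gamma}]\in(0,\infty)$, so the correlation is non-zero. If some $\alpha_i\geq Q$, then the $\eps$-regularized random variable $Z_\eps$ diverges to $+\infty$ a.s.\ as $\eps\to 0$ (one loses the singular integrability at $x_i$), so $\E[e^{-\mu e^{\gamma c} Z_\eps}]\to 0$ pointwise and dominated convergence in the $c$-integral forces the limit to vanish.

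The main technical obstacle is the uniform-in-$\eps$ control needed to justify both the Girsanov manipulation at finite $\eps$ and the passage to the limit: one must commute $\lim_{\eps\to 0}$, $\int\dd c$ and $\E$, which requires (i) uniform two-sided bounds on $Z_\eps$ in the regime where it converges and (ii) negative-moment estimates for GMC against log-singular potentials, both of which rest on the fine multifractal structure of the GFF measure.
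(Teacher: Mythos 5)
Your proposal is correct and follows essentially the same route as the proof the paper relies on — the paper does not reprove this statement but cites \cite[Prop 4.4]{Guillarmou2019}, whose argument is exactly your Cameron--Martin/Girsanov reduction to an explicit $c$-integral times a negative moment of a GMC mass with log-singular weight. The first Seiberg bound \eqref{seiberg1} indeed controls the $c\to-\infty$ divergence of the zero-mode integral and the second \eqref{seiberg2} the local integrability $\gamma\alpha_i<\gamma Q$ of the singular GMC density near each insertion, with the limit vanishing (rather than diverging) when \eqref{seiberg2} fails.
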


The correlation function then obeys the following transformation laws\footnote{The reader may compare \eqref{confan} with the general axiomatic of CFTs exposed in \cite{Gawedzki96_CFT}.} (see \cite[Prop 4.6]{Guillarmou2019}):
 \begin{proposition}\label{covconf2}{\bf (Conformal anomaly and diffeomorphism invariance)}
Let $g,g'$ be two conformal metrics on a closed surface $\Sigma$ with $g'=e^{\omega}g$ for some $\omega\in C^\infty(\Sigma)$, and let $ x_1,\dots,x_m$ be distinct points on $\Sigma$ and $ (\alpha_1,\dots,\alpha_m)\in\R^m$ satisfying the Seiberg bounds \eqref{seiberg1} and \eqref{seiberg2}. Then the following holds true
\begin{equation}\label{confan} 
\log\frac{\cjg \prod_i V_{\alpha_i,g'}(x_i) \cjd_ {\Sigma, g'}}{\cjg \prod_i V_{\alpha_i,g}(x_i) \cjd_ {\Sigma, g}}= 
\frac{1+6Q^2}{96\pi}\int_{\Sigma}(|d\omega|_g^2+2K_g\omega) {\rm dv}_g-\sum_i\Delta_{\alpha_i}\omega(x_i)
\end{equation}
where the real numbers $\Delta_{\alpha_i}$, called {\it conformal weights}, are defined by the relation for $\alpha\in\R$
 \begin{align}\label{deltaalphadef}
\Delta_\alpha=\frac{\alpha}{2}(Q-\frac{\alpha}{2}).
\end{align} 
Let $\psi:\Sigma\to \Sigma$ be an orientation preserving diffeomorphism. Then  
\[ \cjg \prod_i V_{\alpha_i,\psi^*g}(x_i) \cjd_ {\Sigma, \psi^*g}=\cjg \prod_i V_{\alpha_i,g}(\psi(x_i)) \cjd_ {\Sigma, g}  .\]
\end{proposition}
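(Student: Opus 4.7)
The proof strategy is: (i) for the conformal anomaly, substitute $g'=e^\omega g$ in \eqref{def:pathintegralF} and transform each ingredient back to $g$ using the Polyakov--Ray--Singer determinant anomaly, the Gaussian Free Field transformation law $X_{g'}\stackrel{{\rm law}}{=}X_g-m_{g'}(X_g)$, the two-dimensional Weyl identity $K_{g'}e^\omega=K_g+\Delta_g\omega$ together with Gauss--Bonnet $\int K_g\,{\rm dv}_g=4\pi\chi(\Sigma)$, the GMC scaling \eqref{nodirichletgmc}, and the closed-surface analog of the vertex-operator scaling \eqref{scalingvertex}; (ii) for the diffeomorphism invariance, use the naturality of all geometric objects under the pullback by $\psi$ and change variables in the relevant integrals.

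For the conformal anomaly, I would set $Y:=m_{g'}(X_g)$, substitute $X_{g'}=X_g-Y$ in law, and perform the (Lebesgue-invariant) change of variable $c\mapsto c+Y$, after which $\phi_{g'}$ is replaced by $\phi_g$ throughout. Via $K_{g'}e^\omega=K_g+\Delta_g\omega$ and $\int\Delta_g\omega\,{\rm dv}_g=0$, the curvature coupling becomes the $g$-curvature coupling plus an extra centered Gaussian $-\tfrac{Q}{4\pi}\int(\Delta_g\omega)X_g\,{\rm dv}_g$, while the GMC scaling \eqref{nodirichletgmc} combined with the $c$-shift turns the potential into $-\mu\int e^{\gamma Q\omega(x)/2}M^g_\gamma(\phi_g,\dd x)$. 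Applying Girsanov's theorem to the extra Gaussian produces two effects: (a) using $\Delta_g G_g(x,\cdot)=2\pi(\delta_x-{\rm v}_g(\Sigma)^{-1})$ and integration by parts, a deterministic factor $\exp\bigl(\tfrac{Q^2}{16\pi}\int(|d\omega|_g^2+2K_g\omega)\,{\rm dv}_g\bigr)$, which combined with the Polyakov anomaly $\tfrac{1}{96\pi}\int(|d\omega|_g^2+2K_g\omega)\,{\rm dv}_g$ produced by the prefactor $({\rm v}_g/{\det}'\Delta_g)^{1/2}$ yields the central charge coefficient $\tfrac{1+6Q^2}{96\pi}$; and (b) a Cameron--Martin shift $X_g\mapsto X_g-\tfrac{Q}{2}(\omega-m_g(\omega))$ whose action on the $g$-GMC cancels the $e^{\gamma Q\omega(x)/2}$ weight in the potential, leaving only a global factor $e^{\gamma Qm_g(\omega)/2}$ that is reabsorbed by a further trivial shift $c\mapsto c-Qm_g(\omega)/2$, and whose action on the $g$-curvature coupling produces residual $m_g(\omega)$-terms that cancel exactly against those generated by this second $c$-shift. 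The vertex operators then contribute the Cameron--Martin factor $e^{-\alpha_iQ\omega(x_i)/2}$ per insertion, which combines with the intrinsic regularization-change factor $e^{\alpha_i^2\omega(x_i)/4}$ to produce the weight $e^{-\Delta_{\alpha_i}\omega(x_i)}$, since $\tfrac{\alpha_i^2}{4}-\tfrac{\alpha_iQ}{2}=-\Delta_{\alpha_i}$.

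For the diffeomorphism invariance, the naturality of the Laplacian under pullback gives $e_j^{\psi^*g}=e_j\circ\psi$ with unchanged eigenvalues, whence $X_{\psi^*g}\stackrel{{\rm law}}{=}X_g\circ\psi$, ${\det}'\Delta_{\psi^*g}={\det}'\Delta_g$, ${\rm v}_{\psi^*g}(\Sigma)={\rm v}_g(\Sigma)$, $K_{\psi^*g}=K_g\circ\psi$, and $\psi^*{\rm dv}_g={\rm dv}_{\psi^*g}$. The substitution $y=\psi(x)$ in every $\Sigma$-integral, together with $V_{\alpha_i,\psi^*g,\eps}(x_i)\stackrel{{\rm law}}{=}V_{\alpha_i,g,\eps}(\psi(x_i))$ at the regularized level (using also that $\psi$ is an isometry between $(\Sigma,\psi^*g)$ and $(\Sigma,g)$, so geodesic circles match), yields the identity after taking the $\eps\to 0$ limit.

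The main technical obstacle is controlling the $\eps\to 0$ limit of the regularized vertex operators under a metric change: the $g'$- and $g$-geodesic circles of radius $\eps$ differ by the factor $e^{-\omega(x_i)/2}$ at leading order, so the Wick normalization $\eps^{\alpha_i^2/2}$ must be traded for an $e^{\alpha_i^2\omega(x_i)/4}$ factor via the Robin-constant transformation $W_{g'}(x_i)-W_g(x_i)=\tfrac{\omega(x_i)}{2}+\text{global corrections}$. Passing the Girsanov manipulation, the two $c$-shifts, and the conformal anomaly identities to the $\eps\to 0$ limit requires uniform-in-$\eps$ $L^p$ estimates on the regularized correlation functions, in the spirit of the proof of Proposition \ref{limitcorel}.
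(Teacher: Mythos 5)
Your proof is correct and follows essentially the same route as the paper's source for this statement (the paper does not reprove it but cites \cite[Prop 4.6]{Guillarmou2019}): Polyakov's determinant anomaly for the $\frac{1}{96\pi}$ part of the central charge, Girsanov applied to the Gaussian $-\frac{Q}{4\pi}\int_\Sigma(\Delta_g\omega)X_g\,{\rm dv}_g$ for the $\frac{6Q^2}{96\pi}$ part and the $e^{-\alpha_iQ\omega(x_i)/2}$ factors, the Robin-constant/regularization change for the $e^{\alpha_i^2\omega(x_i)/4}$ factors, and naturality under pullback for diffeomorphism invariance. The only (harmless) bookkeeping slip is that the $\frac{Q^2}{16\pi}\int 2K_g\omega\,{\rm dv}_g$ contribution arises from the Cameron--Martin shift acting on the curvature coupling rather than from the Girsanov variance itself, which produces only $\frac{Q^2}{16\pi}\int|d\omega|_g^2\,{\rm dv}_g$.
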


\section{Liouville Conformal Field Theory on compact Riemann surfaces with Neumann boundary}\label{sec:open}
In this Section we recall the definition of correlation functions, and more generally the path integral, for Liouville conformal field theory on Riemann surfaces with Neumann boundary. We summarize here the relevant material, details can be found in 
\cite{huang2018,Wu1}. 

Let $\Sigma$ be a compact Riemann surface with boundary and $g$ be a compatible metric. Recall from last section that $g$ is  conformal to a Neumann extendible metric, but we do not need to assume that $g$ is Neumann extendible for this Section.
Correlation functions will always be considered in the case with $b$ boundary circles and Neumann boundary condition (i.e. $\partial\Sigma_D=\emptyset$). We will denote by $\partial_j\Sigma$  ($j=1,\dots,b$) the analytic boundary circles. We fix parameters $\gamma\in(0,2)$, $\mu\geq 0$ and $Q=\frac{\gamma}{2}+\frac{2}{\gamma}$. We consider a piecewise constant function $\mu_B:\partial\Sigma\to\R_+$.
We assume 
\begin{equation}\label{hyp:cosmo}
\text{either }\quad \mu>0\quad \text{ or }\quad \mu_B \quad \text{does not identically vanish.}
\end{equation}
 For $F:  H^{s}(\Sigma)\to\R$ (with $s<0$) a bounded continuous functional, we set (recall the definition of the Liouville field   $\phi_g=c+X_{g,N}$ in the case of Neumann b.c.)
\begin{align}\label{def:pathintegralFb}
&\cjg F(\phi)\cjd_{\Sigma, g}:=  \big(\frac{{\rm v}_{g}(\Sigma)}{{\det}'(\Delta_{g,N})}\big)^\hf
\\
&  \times \int_\R  \E\Big[ F( \phi_g) \exp\Big( -\frac{Q}{4\pi}\int_{\Sigma}K_{g}\phi_g\,{\rm dv}_{g}  -\frac{Q}{2\pi}\int_{\partial\Sigma}k_{g}\phi_g\,{\rm d}\ell_{g} - \mu   M^g_\gamma(\phi_g,\Sigma)-     M^g_{\gamma,\partial}(\phi_g,\mu_B\mathbf{1}_{\partial\Sigma})  \Big) \Big]\,\dd c . \nonumber
\end{align}
Here $k_g$ and ${\rm d}\ell_g$ are respectively the geodesic curvature and the Riemannian measure
  on $\pl \Sigma$. Again, this quantity  defines a measure on $H^{s}(\Sigma)$. Then we consider the limit
\begin{equation}\label{defcorrelgbound}
\cjg F(\phi_g)\prod_i V_{\alpha_i,g}(x_i)\prod_j V_{\frac{\beta_j}{2},g }(s_j)  \cjd_ {\Sigma, g}:=\lim_{\eps \to 0} \: \cjg F(\phi_g)\prod_i V_{\alpha_i,g,\eps}(x_i)\prod_j V_{\frac{\beta_j}{2},g,\eps}(s_j) \cjd_ {\Sigma, g}
\end{equation}
where we have fixed $m$   distinct bulk points $x_1,\dots,x_m$  on $\mathring{\Sigma}$ with respective associated weights $\alpha_1,\dots,\alpha_m\in\R$ and $m_B$ distinct boundary points $s_1,\dots,s_{m_B}$ on $\partial\Sigma$. Correlation functions are defined by taking $F=1$. Non triviality of correlation functions are then summarized in the following proposition (see \cite{Wu1}):
\begin{proposition}\label{limitcorelbound}
 Let $ x_1,\dots,x_m$ be distinct points  in $\mathring{\Sigma}$ and $s_1,\dots,s_{m_B}$ be distinct points  on $\partial\Sigma$ with $ (\alpha_1,\dots,\alpha_m,\beta_1\dots,\beta_{m_B})\in\R^{m+m_B}$.
The limit \eqref{defcorrelgbound} exists and is non zero if and only if the weights  obey the Seiberg bounds
 \begin{align}\label{seiberg1bound}
 & \sum_{i}\alpha_i + \sum_j \frac{\beta_j}{2}>Q\chi(\Sigma) ,\\ 
 &\forall i,\quad \alpha_i<Q\quad \quad \forall j,\quad \beta_j<Q\label{seiberg2bound}
 \end{align}
 where  $\chi(\Sigma)=2-2g(\Sigma)-b$ ($b$ is the number of boundary components).
 \end{proposition}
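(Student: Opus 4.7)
The plan is to extend the Girsanov-based argument used for the closed case (Proposition \ref{limitcorel}) to handle both bulk and boundary insertions and the boundary cosmological term.

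First, I would absorb the vertex operators into a Cameron--Martin shift of the GFF. Writing
\[ \prod_{i}V_{\alpha_i,g,\eps}(x_i)\prod_{j}V_{\frac{\beta_j}{2},g,\eps}(s_j)=\eps^{\sum_i \alpha_i^2/2+\sum_j \beta_j^2/4}e^{c\,s}\,e^{\sum_i \alpha_i X_{g,N,\eps}(x_i)+\sum_j \frac{\beta_j}{2}X_{g,N,\eps}(s_j)},\]
with $s:=\sum_i\alpha_i+\sum_j\beta_j/2$, Girsanov's theorem converts the Gaussian weight into an overall Gaussian prefactor and a deterministic shift $X_{g,N}\to X_{g,N}+H$ where $H:=\sum_i\alpha_i G_{g,N}(x_i,\cdot)+\sum_j \frac{\beta_j}{2}G_{g,N}(s_j,\cdot)$. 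The diagonal divergences of the quadratic form, $\tfrac{1}{2}\alpha_i^2\E[X_{g,N,\eps}(x_i)^2]\sim -\tfrac{\alpha_i^2}{2}\log\eps+\tfrac{\alpha_i^2}{2}W_{g,N}(x_i)$ and $\tfrac{1}{2}(\beta_j/2)^2\E[X_{g,N,\eps}(s_j)^2]\sim -\tfrac{\beta_j^2}{4}\log\eps+\tfrac{\beta_j^2}{8}W_{g,N,\pl}(s_j)$, cancel exactly the Wick factors $\eps^{\alpha_i^2/2}$ and $\eps^{\beta_j^2/4}$ using \eqref{varXg}--\eqref{boundvarYg}. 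In the limit $\eps\to 0$, one is left with a finite multiplicative constant $e^{\mc{C}(\boldsymbol\alpha,\boldsymbol\beta,{\bf x},{\bf s})}$ involving the values of $W_{g,N}, W_{g,N,\pl}$ and the off-diagonal Green function values $G_{g,N}(x_i,x_{i'})$, $G_{g,N}(x_i,s_j)$, $G_{g,N}(s_j,s_{j'})$ at distinct pairs.

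Second, I would use the Gauss--Bonnet identity $\frac{1}{4\pi}\int_\Sigma K_g{\rm dv}_g+\frac{1}{2\pi}\int_{\pl\Sigma}k_g d\ell_g=\chi(\Sigma)$ to reduce the linear-in-$c$ curvature contribution to $-cQ\chi(\Sigma)$, and absorb the remaining linear-in-$X_{g,N}$ curvature terms by a further deterministic shift of $X_{g,N}$. The regularized correlation function then takes the form
\[ e^{\mc{C}_0(\boldsymbol\alpha,\boldsymbol\beta,{\bf x},{\bf s})}\int_{\R}e^{c(s-Q\chi(\Sigma))}\,\E\Bigl[\exp\bigl(-\mu e^{\gamma c}Z_\eps -e^{\gamma c/2}Z_{\pl,\eps}\bigr)\Bigr]\,dc,\]
where $Z_\eps=\int_\Sigma e^{\gamma H}M^{g,\eps}_\gamma(X_{g,N},dx)$ and $Z_{\pl,\eps}=\int_{\pl\Sigma}\mu_B e^{\frac{\gamma}{2}H}M^{g,\eps}_{\gamma,\pl}(X_{g,N},ds)$.

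Third, I would analyze the two sources of (non-)convergence as $\eps\to 0$ and then in $c$. By standard GMC theory, $Z_\eps\to Z$ and $Z_{\pl,\eps}\to Z_\pl$ in probability, where $Z,Z_\pl$ are the $\eps=0$ expressions with $H$ having logarithmic singularities of strength $-\alpha_i\log|x-x_i|$ at bulk insertions and, using $G_{g,N}(s_j,\cdot)\sim -2\log|\cdot-s_j|$, of strength $-\beta_j\log|x-s_j|$ at boundary insertions in the bulk and $-\tfrac{\beta_j}{2}\log|s-s_j|$ on the boundary (which, since the boundary GMC is effectively a 1D chaos with parameter $\gamma/\sqrt{2}$, has threshold $1+\gamma^2/4$). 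Kahane-type criteria then give $Z<\infty$ almost surely iff $\alpha_i<Q$ for all bulk insertions and $\beta_j<Q$ for all boundary insertions (both bulk and boundary singular factors at $s_j$ share the same threshold $\beta_j<Q$, after computation with $\gamma Q=\gamma^2/2+2$), yielding \eqref{seiberg2bound}. Under \eqref{hyp:cosmo}, at least one of $Z,Z_\pl$ is a.s. strictly positive, providing super-exponential decay at $c\to+\infty$. Dominated convergence then reduces $c$-integrability at $c\to-\infty$ to \eqref{seiberg1bound}; failure of any of the three bounds yields either zero or infinity, showing necessity.

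The main obstacle is the behavior of $Z$ and $Z_\pl$ near boundary insertions, where the combined bulk and boundary chaos each carry a singular weight and where the geometry is that of a half-disk/half-interval rather than a disk. I would handle this by working on the Neumann double $\Sigma^{\#2}$ (Section \ref{double_surface}): under doubling, $X_{g,N}$ becomes the symmetric part of a Dirichlet (or unrestricted) GFF on $\Sigma^{\#2}$, and the boundary insertion at $s_j$ corresponds to a symmetric pair of bulk insertions of weight $\beta_j/2$ each on $\Sigma^{\#2}$. The bulk GMC threshold $\alpha<Q$ on $\Sigma^{\#2}$ then transfers to $\beta_j<Q$ in the half-ball, while the boundary GMC threshold follows from the classical 1D Kahane bound with parameter $\gamma/\sqrt{2}$ and produces the same constraint. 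This reflection trick also underlies the a.s.\ finiteness of the full boundary GMC $M^g_{\gamma,\pl}(X_{g,N},\pl\Sigma)$ needed for control at $c\to+\infty$.
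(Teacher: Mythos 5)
Your proposal is correct and follows essentially the same route as the paper's source: the paper itself defers the proof to \cite{Wu1} (building on \cite{huang2018}), where the argument is exactly this Cameron--Martin/Girsanov absorption of the insertions, Gauss--Bonnet reduction of the curvature term, GMC moment/localization analysis at bulk and boundary insertions (with the Neumann double converting boundary points into reflected pairs), and the $c$-integrability dichotomy giving the two Seiberg bounds. Your threshold computations ($\gamma\beta_j/2<1+\gamma^2/4 \Leftrightarrow \beta_j<Q$ for the boundary chaos, and the matching bulk threshold near $s_j$) are consistent with the cited results.
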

 Now, similarly to Proposition \ref{covconf2}, we have (\cite[Proposition 5.5 and 5.6]{Wu1}):
 \begin{proposition}\label{covconf3}{\bf (Conformal anomaly and diffeomorphism invariance)}
Let $g,g'$ be two compatible metrics on a compact Riemann surface $\Sigma$ with Neumann boundary, with $g'=e^{\omega}g$ for some $\omega\in C^\infty(\Sigma)$, and let $ x_1,\dots,x_m$ and $s_1,\dots,s_{m_B}$ be distinct points respectively on $\mathring{\Sigma}$ and $\partial\Sigma$ with respective weights  $ (\alpha_1,\dots,\alpha_m,\beta_1\dots,\beta_{m_B})\in\R^{m+m_B}$, and obeying the Seiberg bounds of Proposition \ref{limitcorelbound}. Then we have
\begin{align}\label{confan3} 
\log\left(\frac{\cjg \prod_i V_{\alpha_i,g'}(x_i)\prod_j V_{\frac{\beta_j}{2},g' }(s_j)  \cjd_ {\Sigma, g'}}{\cjg \prod_i V_{\alpha_i,g}(x_i)\prod_j V_{\frac{\beta_j}{2},g }(s_j)  \cjd_ {\Sigma, g}}\right)=& 
\frac{1+6Q^2}{96\pi}\Big(\int_{\Sigma}(|d\omega|_g^2+2K_g\omega) {\rm dv}_g+4\int_{\partial\Sigma}k_g\omega \dd\ell_g\Big)
\\
&-\sum_i\Delta_{\alpha_i}\omega(x_i)-\frac{1}{2}\sum_j\Delta_{\beta_j}\omega(s_j)
\end{align}
where the conformal weights $\Delta_{\alpha_i},\Delta_{\beta_j}$ are again given by \eqref{deltaalphadef}.\\
Let $\psi:\Sigma\to \Sigma'$ be an orientation preserving diffeomorphism. Then  
\[ \cjg \prod_i V_{\alpha_i,\psi^*g}(x_i) \prod_j V_{\frac{\beta_j}{2},\psi^*g }(s_j) \cjd_ {\Sigma, \psi^*g}=\cjg \prod_i V_{\alpha_i,g}(\psi(x_i))\prod_j V_{\frac{\beta_j}{2},g }(\psi(s_j))  \cjd_ {\Sigma', g}  .\]
\end{proposition}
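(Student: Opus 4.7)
The plan is to adapt the proof of the closed-surface conformal anomaly \cite[Prop.~4.6]{Guillarmou2019} (i.e.\ Proposition \ref{covconf2}) to the Neumann boundary setting, carefully tracking the extra boundary contributions. The four ingredients I will rely on are: (i) the Polyakov--Alvarez anomaly formula for the Neumann determinant on a surface with boundary,
\[
\log\frac{{\det}'(\Delta_{g',N})/{\rm v}_{g'}(\Sigma)}{{\det}'(\Delta_{g,N})/{\rm v}_{g}(\Sigma)} = -\frac{1}{48\pi}\Bigl(\int_\Sigma(|d\omega|_g^2 + 2K_g\omega)\,{\rm dv}_g + 4\int_{\partial\Sigma} k_g\omega\,{\rm d}\ell_g\Bigr);
\]
(ii) the law identity $X_{g',N}\stackrel{\rm law}{=}X_{g,N}-m_{g'}(X_{g,N})$, which holds since both sides are centered Gaussian fields with the same covariance up to a mean-zero shift; (iii) the scaling relations \eqref{nodirichletgmc} for bulk and boundary GMC measures; and (iv) Gauss--Bonnet $\int_\Sigma K_g\,{\rm dv}_g + 2\int_{\partial\Sigma}k_g\,{\rm d}\ell_g=4\pi\chi(\Sigma)$.

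I will first rewrite $\langle \cdot\rangle_{\Sigma,g'}$ in terms of quantities built from $g$ and $X_{g,N}$: substituting $X_{g',N}$ in distribution via (ii) and shifting $c\mapsto c+m_{g'}(X_{g,N})$ in the $c$-integral turns $\phi_{g'}$ into $\phi_g=c+X_{g,N}$ throughout. I then transform the curvature exponents using the pointwise identities $K_{g'}\,{\rm dv}_{g'}=(K_g+\Delta_g\omega)\,{\rm dv}_g$ and $k_{g'}\,{\rm d}\ell_{g'}=(k_g+\tfrac12\partial_n\omega)\,{\rm d}\ell_g$ and integrate by parts via Green's identity. The key observation is that, because $X_{g,N}$ satisfies $\partial_n X_{g,N}=0$ weakly, the boundary term from integration by parts in the bulk curvature coupling cancels exactly against the $\tfrac12\partial_n\omega$ contribution coming from $k_{g'}$, leaving only a clean bulk term $\tfrac{Q}{4\pi}\int \omega \Delta_g X_{g,N}\,{\rm dv}_g$. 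Completing the square for this Gaussian exponent together with the GMC reweightings from \eqref{nodirichletgmc} (the factors $e^{\gamma Q\omega/2}$ and $e^{\gamma Q\omega/4}$) produces the coefficient $\tfrac{6Q^2}{96\pi}$ of the Liouville action; combined with (i) this gives the full $\tfrac{1+6Q^2}{96\pi}$ prefactor.

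The vertex operator contributions are handled as in the closed case: using the conformal transformation of the Robin constants $W_{g,N}$ and $W_{g,N,\partial}$ on the Neumann double (the analogues of \eqref{varYgconformal1}--\eqref{varYgconformal2} via the doubling identity \eqref{Green_symmetric}), one has $V_{\alpha_i,g',\eps}(x_i)=(1+o(1))e^{\Delta_{\alpha_i}\omega(x_i)}V_{\alpha_i,g,\eps}(x_i)$ after absorbing the $c$-shift, and analogously $V_{\beta_j/2,g',\eps}(s_j)=(1+o(1))e^{\tfrac12\Delta_{\beta_j}\omega(s_j)}V_{\beta_j/2,g,\eps}(s_j)$, the factor $\tfrac12$ on the boundary reflecting the insertion exponent $\beta/2$. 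Assembling the determinant anomaly, the Gaussian/GMC exponents (with Gauss--Bonnet processing the residual $m_{g'}(X_{g,N})$ terms), and the vertex factors, one reads off \eqref{confan3}.

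The main technical obstacle I anticipate is the rigorous meaning of $\int\omega\Delta_g X_{g,N}\,{\rm dv}_g$ as a Gaussian random variable, and the commutation of the $\eps$-regularization with both the Gaussian expectation and with the boundary cancellation in Green's identity. The Neumann extendibility of $g$ and the doubling identity \eqref{Green_symmetric} reduce this to the already-established closed-surface estimates of \cite[Prop.~4.6]{Guillarmou2019} applied on $(\Sigma^{\#2},g^{\#2})$, via the decomposition $X_{g,N}\stackrel{\rm law}{=}(X_{\Sigma^{\#2},g^{\#2}}+X_{\Sigma^{\#2},g^{\#2}}\circ\tau_\Sigma)/\sqrt{2}$. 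The diffeomorphism invariance is then essentially automatic: under an orientation-preserving $\psi:\Sigma\to\Sigma'$, one has $X_{\psi^*g,N}\stackrel{\rm law}{=}X_{g,N}\circ\psi$, the bulk and boundary GMC measures transform by pushforward, and $K_g,k_g,{\rm dv}_g,{\rm d}\ell_g$ pull back accordingly, so the path integral is unchanged upon relabelling the marked points.
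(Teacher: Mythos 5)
The paper does not actually prove this proposition: it is quoted from \cite[Propositions 5.5 and 5.6]{Wu1}, and your overall strategy (determinant anomaly, Cameron--Martin shift of the Gaussian, GMC and vertex rescalings, Gauss--Bonnet for the zero--mode bookkeeping) is precisely the one used there and in the closed case \cite[Prop.~4.6]{Guillarmou2019}. Your treatment of the Gaussian part is correct and is worth recording: the extra couplings $-\tfrac{Q}{4\pi}\int_\Sigma\Delta_g\omega\,\phi_g\,{\rm dv}_g$ and $-\tfrac{Q}{4\pi}\int_{\partial\Sigma}\partial_n\omega\,\phi_g\,{\rm d}\ell_g$ produced by $K_{g'}{\rm dv}_{g'}$ and $k_{g'}{\rm d}\ell_{g'}$ combine, via Green's identity and the Neumann condition $\partial_nG_{g,N}=0$, into a Gaussian variable whose Cameron--Martin shift is exactly $-\tfrac{Q}{2}(\omega-m_{g'}(\omega))$ and whose variance is $\tfrac{Q^2}{8\pi}\int_\Sigma|d\omega|_g^2\,{\rm dv}_g$, with no leftover normal-derivative term; the shift acting on the curvature terms then yields $\tfrac{6Q^2}{96\pi}(2\int K_g\omega+4\int_{\partial\Sigma} k_g\omega)$, and acting on the insertions together with the change of Robin constants it yields $-\Delta_{\alpha_i}\omega(x_i)$ and $-\tfrac12\Delta_{\beta_j}\omega(s_j)$.

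The genuine gap is your input (i), which carries the entire ``$1$'' of $1+6Q^2$ and all remaining boundary terms. You assert the Polyakov--Alvarez formula for ${\det}'(\Delta_{g,N})/{\rm v}_g(\Sigma)$ with boundary contribution $-\tfrac{1}{12\pi}\int_{\partial\Sigma}k_g\omega\,{\rm d}\ell_g$ and nothing else; but the standard heat-coefficient derivation (Alvarez, Branson--Gilkey: the $t^0$ coefficient $a_2(f)$ for Neumann conditions contains a boundary term proportional to $\int_{\partial\Sigma}\partial_nf$, with coefficient opposite to the Dirichlet case) produces in addition a term proportional to $\int_{\partial\Sigma}\partial_n\omega\,{\rm d}\ell_g$ with a universal nonzero coefficient in $\log\bigl({\det}'(\Delta_{g',N})/{\det}'(\Delta_{g,N})\bigr)$. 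Since, as your own computation shows, no other piece of the argument produces a compensating $\int_{\partial\Sigma}\partial_n\omega\,{\rm d}\ell_g$ --- and unlike the amplitude normalization \eqref{znormal}, the definition \eqref{def:pathintegralFb} contains no counterterm of the form $e^{c\int_{\partial\Sigma}k_g\,{\rm d}\ell_g}$ whose Weyl variation could absorb it --- your argument as written outputs \eqref{confan3} plus a multiple of $\int_{\partial\Sigma}\partial_n\omega\,{\rm d}\ell_g$. You must either prove (i) exactly as stated (the doubling trick you invoke elsewhere only handles $\partial_n\omega=0$ on $\partial\Sigma$, which is too restrictive for a general $\omega\in C^\infty(\Sigma)$), or track this term and show where it is cancelled; this is the one step that cannot be cited as standard, and it is precisely where the boundary case departs from the closed-surface proof.
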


\section{Amplitudes}\label{sec:decomp}
The main goal of this section is to introduce amplitudes. Amplitudes are path integrals defined as \eqref{def:pathintegralFb} but with a twist that the integrated variable (called field) in the path integrals are conditioned to taking prescribed values on the Dirichlet boundary $\pl \Sigma_D$ of the Riemann surface $\Sigma$ they are defined on. They represent integral kernels of operators on a Hilbert space given by an $L^2$ space over a measured space of fields on the unit circle $\T$ or the half circle $\T^+$. The operation of gluing Riemann surfaces along Dirichlet boundary components $\pl \Sigma_D$ then becomes a composition of the corresponding amplitudes when viewed as operators on the Hilbert space.  This will be instrumental in decomposing correlation functions into elementary pieces, producing structure constants of the theory and 
the so-called conformal blocks, which are universal functions over the moduli space of Riemann surfaces. Similarly, in the case of Riemann surfaces with corners, there are two types of gluing/cutting: bulk cuts (gluing along Dirichlet circles) and boundary cuts (gluing along Dirichlet half-circles), each of them being associated to a different Hilbert space. We start by describing these Hilbert spaces. Then we will introduce some material related to Dirichlet-to-Neumann operators (DN map for short) in order to define the amplitudes. Finally we describe the gluing of amplitudes.  In the case of closed Riemann surfaces, amplitudes were defined in \cite{GKRV21_Segal}, where it was also proven how they glue along bulk cuts. The novelty here is to treat the case of Riemann surfaces with a boundary/corners and the gluing along half-circles cuts, which intersect the Neumann boundary.

\subsection{Hilbert spaces of LCFT and boundary fields}\label{sub:hilbert} 
 
 The construction of the (bulk) Hilbert space of LCFT relies on the real-valued  Fourier series defined on  the unit circle $\T=\{z\in \C\,|\, |z|=1\}$. Such a generic Fourier series $\tilde\varphi^\ell$ ($\ell$ stands for loops) will be decomposed into its constant mode $c$ and orthogonal part $\varphi^\ell$ as
\begin{equation}\label{GFFcircle0}
\tilde\varphi^\ell=c+\varphi^\ell,\quad \text{and }\quad \forall \theta\in\R, \quad\varphi^\ell(e^{i\theta})=\sum_{n\not=0}\varphi^\ell_ne^{in\theta}
\end{equation}
with $(\varphi^\ell_n)_{n\not=0}$  its non-constant Fourier coefficients, which will be themselves parametrized by $\varphi_n^\ell=\frac{x^\ell_n+iy^\ell_n}{2\sqrt{n}}$  and $\varphi^\ell_{-n}=\frac{x^\ell_n-iy^\ell_n}{2\sqrt{n}}$  for $n>0$ with $x^\ell_n,y^\ell_n\in\R$. Next we consider the Sobolev space  $H^{s}(\T)$ defined as the set of distributions s.t.
\begin{equation}\label{outline:ws}
\|\tilde\varphi^\ell\|_{H^s(\T)}^2:=c^2+\sum_{n\not=0}|\varphi^\ell_n|^2(|n|+1)^{2s} <\infty.
\end{equation}
We will view the  series $\tilde{\varphi}^\ell:=c+\varphi^\ell$ as the coordinate map of the space $\R\times    \Omega_\T$, where 
\begin{align}\label{omegat}
  \Omega_\T=(\R^{2})^{\N^*}
\end{align}
  is equipped with the cylinder sigma-algebra $
 \Sigma_\T=\mathcal{B}(\R^2)^{\otimes \N^*}$ ($\mathcal{B}(E)$  stands for the Borel sigma-algebra on the topological space $E$) and the Gaussian  product measure 
 \begin{align}\label{Pdefin}
 \P_\T:=\bigotimes_{n\geq 1}\frac{1}{2\pi}e^{-\frac{1}{2}((x_n^\ell)^2+(y^\ell_n)^2)}\dd x^\ell_n\dd y^\ell_n.
\end{align}
Here $ \P_\T$ is supported on $H^s(\T)$ for any $s<0$ in the sense that $ \P_\T(\varphi^\ell\in H^s(\T))=1$. The bulk Hilbert space, denoted by $\mc{H}$, is then  
\[\mc{H}:=L^2(\R\times \Omega_{\T},\mu_0), \textrm{ with  }
\mu_0:=\dd c\otimes  \P_{\T}\] 
and Hermitian product denoted by $\langle\cdot,\cdot\rangle_{\mc{H}}$. We notice that, under $\P_\T$, the   series \eqref{GFFcircle0} is random (actually Gaussian) and arises naturally when considering the restriction of the whole plane GFF to the unit circle: the covariance of the random variable $\varphi^{\ell}(\theta)$ is 
\[ \E[ \varphi^{\ell}(\theta)\varphi^{\ell}(\theta')]=-\log|e^{i\theta}-e^{i\theta'}|.\]

The boundary Hilbert space has a somewhat similar structure. It relies on the space of series defined on the half unit circle $\T^+=\{e^{i\theta}\,|\, \theta\in [0,\pi]\}$. It will be more convenient, for technical reasons related to the Neumann doubles, to consider them as even trigonometric series on the circle $\T$ (below $h$ stands for half-circle)
\begin{equation}\label{GFFcircle+}
\tilde\varphi^h=c+\varphi^h,\quad \text{and }\quad \forall \theta\in\R, \quad   \varphi^h(e^{i\theta})= \sum_{n\not=0}\varphi^h_ne^{in\theta} 
\end{equation}
with $\varphi^h_{n}=\varphi^h_{-n}\in\R $ for $n>0$, and the $\varphi^h_{n}$'s are themselves parametrized by $\varphi^h_{n}=\frac{\sqrt{2}x^h_n}{2\sqrt{|n|}}$ where $x_n^h\in\R$.
Even trigonometric  series form a subspace of $H^s(\T)$ denoted by  $H^s_{\rm even}(\T)$. That is to say, all fields supported on Dirichlet boundary half-circles will be later considered as even fields on the doubled half-circle, which will be viewed as 
a Dirichlet boundary circle on the Neumann double $\Sigma^{\#2}$  that is even with respect to the natural involutive symmetry $\tau_\Sigma$.

We will view the  series $\tilde{\varphi}^h:=c+\varphi^h$ as the coordinate map of the space $\R\times    \Omega_{\T^+}$, where 
\begin{align}\label{omegat+}
  \Omega_{\T^+}=\R^{\N^*}
\end{align}
  is equipped with the cylinder sigma-algebra $ \Sigma_{\T^+}=\mathcal{B}(\R)^{\otimes \N^*}$   and the Gaussian  product measure 
 $$
 \mathbb{P}_{\mathbb{T}^{+}}:=\bigotimes_{n\geq 1} \frac{1}{\sqrt{2\pi}}e^{-\frac{(x^h_n)^2}{2}}\dd x^h_n .
 $$
The boundary Hilbert space is then $\mc{H}_+:=L^2(\R\times \Omega_{\T^+})$ with underlying measure 
\[\mu_0^+:=\dd c\otimes  \P_{\T^+}\] 
and Hermitian product denoted by $\langle\cdot,\cdot\rangle_{\mc{H}_+}$.

 In what follows, we will consider a family of such fields $\boldsymbol{\tilde \varphi}=(\boldsymbol{\tilde \varphi}^\ell,\boldsymbol{\tilde \varphi}^h) =(\tilde\varphi^\ell_1,\dots,\tilde\varphi^\ell_{b_\ell},\tilde\varphi^h_1,\dots,\tilde\varphi^h_{b_h})\in (H^{s}(\T))^{b_\ell}\times  (H^{s}_{\rm even}(\T))^{b_h} $, in which case the previous notations referring to the $j$-th field will be augmented with an index $j$, namely $c_j$, $\varphi^\ell_{j,n}$ or $\varphi^h_{j,n}$,  $x^\ell_{j,n}$ or $x^h_{j,n}$, $y^\ell_{j,n}$.  By an abuse of notations, we will also denote by $(\cdot,\cdot)_{2,\ell}$ the pairing between $(H^{s}(\T))^{b}$ and   $(H^{-s}(\T))^{b}$, namely
\[
(\boldsymbol{\tilde \varphi}^\ell,\boldsymbol{\tilde \phi}^\ell)_{2,\ell}:=\frac{1}{2\pi}\sum_{j=1}^b\int_0^{2\pi}  \tilde\varphi^\ell_j(e^{i\theta})\tilde\phi^\ell_j (e^{i\theta}) \dd \theta,
\]
  and by $(\cdot,\cdot)_{2,h}$ the pairing between $(H_{\rm even}^{s}(\T))^{b}$ and $(H_{\rm even}^{-s}(\T))^{b}$, namely
\[
(\boldsymbol{\tilde \varphi}^h,\boldsymbol{\tilde \phi}^h)_{2,h}:=\frac{1}{\pi}\sum_{j=1}^b\int_0^\pi  \tilde\varphi^h_j(e^{i\theta})\tilde\phi^h_j (e^{i\theta}) \dd\theta.
\]
Finally,   we will denote by $(\cdot,\cdot)_{2}$ the total  pairing
\begin{equation}\label{unpairing}
(\boldsymbol{\tilde \varphi},\boldsymbol{\tilde \phi})_{2}=  (\boldsymbol{\tilde \varphi}^\ell,\boldsymbol{\tilde \phi}^\ell)_{2,\ell}+\tfrac{1}{2}(\boldsymbol{\tilde \varphi}^h,\boldsymbol{\tilde \phi}^h)_{2,h}.
\end{equation}
  Such families of fields will serve as boundary values in the forthcoming definition of the amplitudes.

\subsection{Dirichlet-to-Neumann map}\label{subDTN}
Let  $\Sigma$ be a compact  Riemann surface with corners,  with  parametrized boundary $\partial\Sigma=\partial\Sigma_N\cup \partial\Sigma_D$, the parametrization being $\zeta_j^\ell$ and $\zeta_j^h$ for the D-marked circles and half-circles, and with 
\begin{equation}\label{boundary_Sigma}
\begin{gathered}
\partial\Sigma_N=\mc{C}^N\bigcup \mc{B}^N, \quad \mc{C}^N=\bigcup_{j=1}^{b^N_\ell}\caC_j^N, \quad 
\mc{B}^N:=\bigcup_{j=1}^{b^N_h}\caB_j^N\\
\partial\Sigma_D=\mc{C}^D\bigcup \mc{B}^D, \quad \mc{C}^D=\bigcup_{j=1}^{b^D_\ell}\caC_j^D, \quad \mc{B}^D=\bigcup_{j=1}^{b^D_h}\caB^D_j
\end{gathered}
\end{equation}
consisting of $b_\ell^N+b_\ell^D$ analytic  circles  and $b^N_h+b^D_h$ analytic half circles (here   $b^N_\ell$, $b^D_\ell$, $b^N_h$ or $b^D_h$ could possibly be equal to $0$). Recall that the  subscript $\ell$ stands for "loop" whereas the subscript "h" stands for "half-circle".\\

\textbf{Assumption on the metric $g$.}  In this section, we always consider a metric $g$ on $\Sigma$ that is Neumann extendible, but except when mentioned, $g$ is not assumed to be admissible.   

\begin{definition}[\textbf{Harmonic extension}]\label{def_harmonicext}
Given  a boundary field  $\boldsymbol{\tilde \varphi}=(\tilde\varphi_1^\ell,\dots,\tilde\varphi^\ell_{b^D_\ell},\tilde\varphi_1^h,\dots,\tilde\varphi^h_{b^D_h})=:(\boldsymbol{\tilde \varphi}^\ell,\boldsymbol{\tilde \varphi}^h )\in (H^{s}(\T))^{b^D_\ell}\times  (H^{s}_{\rm even}(\T))^{b^D_h}$ with $s\in\R$, we will write   $P\tilde{\boldsymbol{\varphi}}$ for  the harmonic extension  of $\tilde{\boldsymbol{\varphi}}$, that is $\Delta_g P\tilde{\boldsymbol{\varphi}}=0$ on $\mathring{\Sigma}$ with boundary values  $P\tilde{\boldsymbol{\varphi}}_{|\mc{C}^D_j}=\tilde\varphi_j^\ell\circ (\zeta_j^\ell)^{-1} $  for $j=1,\dots,b^D_\ell$, $P\tilde{\boldsymbol{\varphi}}_{|\mc{B}^D_j}=\tilde\varphi_j^h\circ (\zeta_j^h)^{-1} $  for $j=1,\dots,b^D_h$ and Neumann boundary condition   $\partial_\nu^gP \boldsymbol{\tilde \varphi}=0$ on $\partial\Sigma_N$. 
\end{definition} 
The boundary value has to be understood in the following weak sense: for all $u\in C^\infty(\T)$ and $v\in C^\infty(\T^+)$, if $\zeta^\ell_j,\zeta^h_j$ are the (analytic extensions of) the parametrizations of $\mc{C}^D_j, \mc{B}^D_j$
\begin{equation}\label{limit_weak}
\lim_{r\to 1^-}\int_{0}^{2\pi}P\tilde{\boldsymbol{\varphi}}(\zeta_j^\ell(re^{i\theta}))\bbar{u(e^{i\theta})}d\theta =2\pi (\tilde\varphi_j^\ell,u)_{2,\ell},\qquad \lim_{r\to 1^-}\int_{0}^{\pi}P\tilde{\boldsymbol{\varphi}}(\zeta_j^h(re^{i\theta}))\bbar{v(e^{i\theta})}d\theta =\pi (\tilde\varphi_j^\ell,v)_{2,h}.\end{equation}
 We shall sometimes add the $\Sigma$ subscript $P_\Sigma \tilde{\boldsymbol{\varphi}}$ to emphasize the space on which the harmonic extension lives.

The definition of the amplitudes will involve   the Dirichlet-to-Neumann operator (DN map for short) on a Riemann surface $\Sigma$ with corners. The DN map for $\Sigma$
\[\mathbf{D}_\Sigma:C^\infty(\T)^{b^D_\ell}\times C^\infty(\T^+)^{b^D_h}\to  C^\infty(\T)^{b^D_\ell}\times C^\infty(\T^+)^{b^D_h}\] is defined as follows: for 
 $\boldsymbol{\tilde \varphi}=(\tilde\varphi_1^\ell,\dots,\tilde\varphi^\ell_{b^D_\ell},\tilde\varphi_1^h,\dots,\tilde\varphi^h_{b^D_h})\in C^\infty(\T)^{b^D_\ell}\times C^\infty(\T^+)^{b^D_h}$ 
 \begin{align}\label{DNmap_def}
&\mathbf{D}_{\Sigma}\tilde{\boldsymbol{\varphi}}=(\mathbf{D}^\ell_{\Sigma}\tilde{\boldsymbol{\varphi}},\mathbf{D}^h_{\Sigma}\tilde{\boldsymbol{\varphi}}),
\\
&\mathbf{D}^\ell_{\Sigma}\tilde{\boldsymbol{\varphi}}:=(-\partial_{\nu } P\tilde{\boldsymbol{\varphi}}_{|\mc{C}^D_j}\circ\zeta^\ell_j)_{j=1,\dots,b^D_\ell},\nonumber
\\
&\mathbf{D}^h_{\Sigma}\tilde{\boldsymbol{\varphi}}:=(-\partial_{\nu } P\tilde{\boldsymbol{\varphi}}_{|\mc{B}^D_j}\circ\zeta^h_j)_{j=1,\dots,b^D_h}\nonumber
\end{align}
where $\nu$ denotes the inward unit  normal vector to $\mc{C}_j^D$ and to $\mc{B}_j^D$, and $ P\tilde{\boldsymbol{\varphi}}$ is the harmonic extension defined in Definition \ref{def_harmonicext}.
We shall see below in Lemma \ref{prop:DNrestriction} that $\mathbf{D}_\Sigma$ can indeed be viewed as an operator ($C^\infty_{\rm even}(\T)$ means smooth and invariant by $e^{i\theta}\mapsto e^{-i\theta}$)
\begin{equation}\label{DNnew}
\mathbf{D}_\Sigma: C^\infty(\T)^{b^D_\ell}\times C^\infty_{\rm even}(\T)^{b^D_h}\to C^\infty(\T)^{b^D_\ell}\times C^\infty_{\rm even}(\T)^{b^D_h}
\end{equation} 
and we will then consider it that way.
Note that, by the Green formula 
\begin{equation}\label{Greenformula}
\int_{\Sigma} |dP\tilde{\boldsymbol{\varphi}}|_g^2{\rm dv}_g = 2 \pi (  \boldsymbol{\tilde \varphi} ,\mathbf{D}_\Sigma \boldsymbol{\tilde \varphi} )_2 .
\end{equation}
By formula \eqref{Greenformula}, $\mathbf{D}_\Sigma$ is a non-negative symmetric operator, with kernel $\ker {\bf D}_{\Sigma}=\R \tilde{1}$ where $\tilde{1}= (1, \dots, 1)$ if $ \partial\Sigma_D\not=\emptyset$.

We will also consider the following variant of the DN map on $\Sigma$, a compact  Riemann surface with corners and with  parametrized boundary $\partial\Sigma=\partial\Sigma_N\cup \partial\Sigma_D$.  Here  $\partial\Sigma$ could possibly be empty. Next we need to specify two ways of cutting the surface:
\begin{definition}{\bf (Interior cut)}
We will say that a curve $\mc{C}'\subset \mathring{\Sigma}$ is an interior (or bulk) cut if it is an analytic parametrized embedded circle: ${\zeta'}^\ell:\T \to \mc{C}'$. In particular, its inverse $({\zeta'}^\ell)^{-1}$ extends holomorphically as a holomorphic chart $\omega'^\ell:V'^\ell\to\A_{\delta,\delta^{-1}}$ mapping some annular neighborhood $V'^\ell$ of $\mathcal{C}'$ to  the annulus $\A_{\delta,\delta^{-1}}$ for some $\delta<1$. \end{definition}

\begin{definition}{\bf (Boundary cut)}
We will say that a curve $\mc{B}'\subset \Sigma$ is a  boundary cut if it is an analytic parametrized embedded 
curve: ${\zeta'}^{h}:\T^+ \to \mc{B}'$ with endpoint ${\zeta'}^{h}(0),{\zeta'}^{h}(\pi)\in \pl \Sigma_N$ on the Neumann boundary, and such that ${\zeta'}^h$ extends analytically to ${\zeta'}^h:\T\to \mc{B}'\cup \tau_\Sigma(\mc{B}')\subset \Sigma^{\#2}$ in a way that ${\zeta'}^h(e^{-i\theta})=\tau_\Sigma({\zeta'}^h(e^{i\theta}))$.
In particular $({\zeta'}^h)^{-1}$ extends a holomorphic chart $\omega'^h:V'^h \to\A^+_{\delta,\delta^{-1}}$ mapping some half-annular neighborhood $V'^h$ of $\mathcal{B}'$ to  the half-annulus $\A^+_{\delta,\delta^{-1}}$  for some $\delta<1$.
\end{definition}

The existence of boundary cuts is proved in Lemma \ref{boundarycut}.
Notice that, if for such a cut, $\Sigma\setminus \mc{C}'$ (resp. $\Sigma\setminus \mc{B}'$) has two connected components, each component is a Riemann surface with corners in the sense of  Definition \ref{def:mfd_with_corners}, and the cuts become new Dirichlet boundary components (circles or half-circles).\\

\textbf{Additional assumption on $g$ near the cuts.} 
In the remaining part of this Section \ref{subDTN} we make the following assumption on the metric near the  cuts. We assume that we are given a collection of  non overlapping cuts made up of $b'_\ell$ interior cuts $\mc{C}':=\sqcup_{j=1}^{b'_\ell}\mathcal{C}'_j\subset \mathring{\Sigma}$, with parametrizations ${\zeta'_j}^\ell$ and annular  neighborhoods ${V_j'}^\ell$,  and $b'_h$ boundary cuts $\mc{B}':=\sqcup_{j=1}^{b'_h}\mathcal{B}'_j$ with parametrizations ${\zeta'_j}^h$ and half-annular neighborhoods ${V_j'}^h$. We choose a metric $g$ such that there exists $f^\ell_j\in C^\infty(V^\ell_j)$, $f_j'^\ell\in C^\infty(V_j'^\ell)$,   $f_j^h\in C^\infty(V_j^h)$ and $f_j'^h\in C^\infty(V_j'^h)$ with $f^\ell_j|_{\mc{C}^D_j}=0$, $f_j'^\ell|_{\mc{C}_j'}=0$,  $f_j^h|_{\mc{B}^D_j}=0$ and $f_j'^h|_{\mc{B}'_j}=0$ such that  
\begin{equation} \label{metric_near_bdry} 
(\omega_j^\ell)^*\Big(\frac{|dz|^2}{|z|^2}\Big)=e^{f_j^\ell}g, \qquad ({\omega'_j}^\ell)^*\Big(\frac{|dz|^2}{|z|^2}\Big)=e^{f'^\ell_j}g, , \qquad (\omega_j^h)^*\Big(\frac{|dz|^2}{|z|^2}\Big)=e^{f^h_j}g,\qquad ({\omega'_j}^h)^*\Big(\frac{|dz|^2}{|z|^2}\Big)=e^{f'^h_j}g
\end{equation}
(recall that $g$ extends smoothly by $g+\tau_\Sigma^*g$ on the Neumann double $\Sigma^{\#2}_N$ by assumption). Notice that the curves $\mc{C}_j^{D},\mc{C}'_j$ and $\mc{B}_j^D,\mc{B}'_j$ are geodesics of respective lengths $2\pi$ and $\pi$. In the special case where $f^\ell_j=f_j'^\ell=f_j^h=f_j'^h=0$, we say that $g$ is admissible near the cuts.

\begin{definition}[\textbf{Harmonic extension}]\label{def_harmoniccut}
Given  a boundary field  
\[\boldsymbol{\tilde \varphi}=(\tilde\varphi_1^\ell,\dots,\tilde\varphi^\ell_{b'_\ell},\tilde\varphi_1^h,\dots,\tilde\varphi^h_{b'_h})=:(\boldsymbol{\tilde \varphi}^\ell,\boldsymbol{\tilde \varphi}^h)\in (H^{s}(\T))^{b'_\ell}\times  (H^{s}_{\rm even}(\T))^{b'_h}\]  
with $s\in\R$, we will write   $P_{\mc{C}',\mc{B}'}\tilde{\boldsymbol{\varphi}}$ for  the harmonic extension  of $\tilde{\boldsymbol{\varphi}}$, that is $\Delta_g P_{\mc{C}',\mc{B}'}\tilde{\boldsymbol{\varphi}}=0$ on the interior of
$ \Sigma\setminus( \mc{C}'\cup \mc{B}')$,  with boundary value $0$ on $\partial\Sigma_D$, Neumann boundary condition $\partial_\nu P_{\mc{C}',\mc{B}'}\tilde{\boldsymbol{\varphi}}=0$ on $\partial\Sigma_N$,  and equal to $\tilde\varphi_j^\ell\circ \omega'^\ell_j $ on $\mc{C}'_j$ for $j=1,\cdots,b'_\ell$ and equal to $\tilde\varphi_j^h\circ \omega'^h_j $ on $\mc{B}'_j$ for $j=1,\cdots,b'_h$.
\end{definition}

 The DN map $\mathbf{D}_{\Sigma,\mc{C}',\mc{B}'}:C^\infty(\T)^{b'_\ell}\times C^\infty_{\rm even}(\T)^{b'_h}\to C^\infty(\T)^{b'_\ell}\times C^\infty(\T)^{b'_h}$ associated to $\mathcal{C}',\mc{B}'$ is then defined as the jump at $\mc{C}'$ and $\mc{B}' $ of the harmonic extension outside of the cuts:    for $\tilde{\boldsymbol{\varphi}}\in C^\infty(\T)^{b'_\ell}\times C^\infty_{\rm even}(\T)^{b'_h}$ 
\begin{align}\label{defDSigmaC}
&\mathbf{D}_{\Sigma,{\mc{C}',\mc{B'}}}\tilde{\boldsymbol{\varphi}}=(\mathbf{D}^\ell_{\Sigma,{\mc{C}',\mc{B}'}}\tilde{\boldsymbol{\varphi}},\mathbf{D}^h_{\Sigma,{\mc{C}',\mc{B}'}}\tilde{\boldsymbol{\varphi}}),
\\
&\mathbf{D}^\ell_{\Sigma,{\mc{C}',\mc{B}'}}\tilde{\boldsymbol{\varphi}}:=-((\partial_{\nu_-} P_{\mc{C}',\mc{B}'}\tilde{\boldsymbol{\varphi}})|_{\mc{C}'_j}\circ {\zeta'_j}^{\ell}+(\partial_{\nu_+} P_{\mc{C}',\mc{B}'}\tilde{\boldsymbol{\varphi}})|_{\mc{C}'_j}\circ {\zeta'_j}^{\ell})_{j=1,\dots,b'_\ell},\nonumber
\\
&\mathbf{D}^h_{\Sigma,{\mc{C}',\mc{B}'}}\tilde{\boldsymbol{\varphi}}:=-((\partial_{\nu_-} P_{\mc{C}',\mc{B}'}\tilde{\boldsymbol{\varphi}})|_{\mc{B}'_j}\circ {\zeta'_j}^{h}+(\partial_{\nu_+} P_{\mc{C}',\mc{B}'}\tilde{\boldsymbol{\varphi}})|_{\mc{B}'_j}\circ {\zeta'_j}^{h})_{j=1,\dots,b'_h}.\nonumber
\end{align}
where $\nu_\pm$ are the opposite two unit normal vector field at $\mc{C}'_j$ and $\mc{B}'_j$, 
pointing inside when approaching the curves from one side of the annular neighorbhoods ${V'_j}^{\ell}$ and ${V'_j}^{h}$.

We consider the double $\Sigma^{\#2}$ of $\Sigma$ with the doubled metric $g^{\#2}$ (which is smooth by assumption).
Now each curve $\mc{C}_j'$ or $\mc{B}'_j$ is then doubled to $\Sigma^{\#2}$, the half-circle becoming full geodesic circles $\mc{B}_j'\cup \tau_\Sigma(\mc{B}_j')$, and we denote $\mc{C}''=\cup_{j} (\mc{C}_j'\cup \tau_\Sigma(\mc{C}'_j))$ and $\mc{B}''=\cup_{j} (\mc{B}_j'\cup \tau_\Sigma(\mc{B}_j'))$.
The symmetry $\tau_\Sigma$ on the Neumann double $\Sigma^{\#2}$ induces a symmetry on 
$\pl(\Sigma^{\#2})$, and using the parametrization of the Dirichlet boundary circles in $\Sigma^{\#2}$  induced by those 
of $\Sigma$, the symmetry $\tau_\Sigma$ induces a symmetry on $\T^{2b_{\ell}^D}\times \T^{b_h^D}$, where the 2 sets of $b_\ell^D$ circles are the circles $\mc{C}_j^D$ and $\tau_{\Sigma}(\mc{C}_j^D)$ and the last $b_h^D$ circles are the glued half-circles $\mc{B}_j^D\# \tau_\Sigma(\mc{B}_j^D)$. This symmetry will still be denoted by $\tau_\Sigma$, by an abuse of notation; notice that $\tau_\Sigma(z)=\bar{z}$ on the $b_h^D$ copies of $\T$. For $\boldsymbol{\tilde \varphi}\in (H^{s}(\T))^{b'_\ell}\times  (H^{s}_{\rm even}(\T))^{b'_h}$, we define $\tilde{\boldsymbol{\varphi}}+\tau_\Sigma^*\tilde{\boldsymbol{\varphi}}:=(\boldsymbol{\tilde \varphi}^\ell,\boldsymbol{\tilde \varphi}^{h,\#},\boldsymbol{\tilde \varphi}^\ell\circ\tau_\Sigma)\in H^{s}(\T)^{2b'_\ell+b'_h}$, where $\boldsymbol{\tilde \varphi}^{h,\#}=\boldsymbol{\tilde \varphi}^h$ on $\mc{B}^D$ and $\boldsymbol{\tilde \varphi}^{h,\#}=\boldsymbol{\tilde \varphi}^h\circ\tau_\Sigma$ on $\tau_\Sigma(\mc{B}^D)$.\\  

Then we have the following:
\begin{lemma}\label{prop:DNrestriction}
Let $\Sigma^{\#2}$ be the Neumann double of $\Sigma$. Then  

\noindent 1) In the space $ (H^{s}(\T))^{b'_\ell}\times  (H^{s}_{\rm even}(\T))^{b'_h}$, we have 
\begin{equation}\label{DN_int_bdouble}
    \mathbf{D}_{\Sigma, \mc{C}',\mc{B}'}\tilde{\boldsymbol{\varphi}} =\mathbf{D}_{\Sigma^{\#2}, \mc{C}'',\mc{B}''}(\tilde{\boldsymbol{\varphi}}+\tau_\Sigma^*\tilde{\boldsymbol{\varphi}})|_{\mc{C}',\mc{B}'} \in (H^{s-1}(\T))^{b'_\ell}\times  (H^{s-1}_{\rm even}(\T))^{b'_h}
\end{equation}
2) In the space $(H^{s}(\T))^{b_\ell^D}\times  (H^{s}_{\rm even}(\T))^{b_h^D}$, we have 
\begin{equation}\label{DN_double}
    \mathbf{D}_{\Sigma}\tilde{\boldsymbol{\varphi}} =\mathbf{D}_{\Sigma^{\#2}}(\tilde{\boldsymbol{\varphi}}+\tau_\Sigma^*\tilde{\boldsymbol{\varphi}})|_{\pl \Sigma_D} \in (H^{s-1}(\T))^{b_\ell^D}\times  (H^{s-1}_{\rm even}(\T))^{b_h^D}
\end{equation}
\end{lemma}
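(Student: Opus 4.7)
The plan is to reduce both parts to the standard properties of the DN map on the smooth Riemann surface $\Sigma^{\#2}$ via the Neumann doubling of Section~\ref{double_surface}. I describe the argument for part~(1); part~(2) is identical, with $\pl\Sigma_D$ and its double $\pl\Sigma_D^{\#2}$ playing the role of the interior cuts $\mc{C}' \cup \mc{B}'$ and their doubles $\mc{C}'' \cup \mc{B}''$; the analyticity of $\pl\Sigma_D^{\#2}$ at the glued half-circles $\mc{B}_j^D \# \tau_\Sigma(\mc{B}_j^D)$ is provided by Lemma~\ref{model_half_annular}.

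Fix $\tilde{\boldsymbol{\varphi}} \in (H^{s}(\T))^{b'_\ell} \times (H^{s}_{\rm even}(\T))^{b'_h}$ and let $u := P_{\mc{C}', \mc{B}'}\tilde{\boldsymbol{\varphi}}$ be the harmonic extension of Definition~\ref{def_harmoniccut}. I would reflect $u$ to the Neumann double by setting $\tilde u := u$ on $\Sigma$ and $\tilde u := u \circ \tau_\Sigma$ on $\tau_\Sigma(\Sigma)$. Since $g$ is Neumann extendible, $\Delta_{g^{\#2}}$ is a genuine elliptic operator on $\Sigma^{\#2}$, and the key technical step is to show that $\tilde u$ is weakly harmonic on $\Sigma^{\#2} \setminus (\mc{C}'' \cup \mc{B}'')$. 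This amounts to checking that the distributional Laplacian of $\tilde u$ carries no singular contribution along $\pl\Sigma_N$, which I would verify via Green's identity against smooth test functions supported in a tubular neighborhood of $\pl\Sigma_N$, using the weak form of $\pl_\nu u = 0$ and the $\tau_\Sigma$-invariance of the symmetrized test function. This is where Neumann extendibility of $g$ is essential, and I expect it to be the main technical point of the proof.

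Once harmonicity of $\tilde u$ on $\Sigma^{\#2} \setminus (\mc{C}'' \cup \mc{B}'')$ is established, I would identify it by uniqueness with the harmonic extension on $\Sigma^{\#2}$ with Dirichlet data $\tilde{\boldsymbol{\varphi}} + \tau_\Sigma^* \tilde{\boldsymbol{\varphi}}$ on $\mc{C}'' \cup \mc{B}''$ and zero on $\pl\Sigma_D^{\#2}$. The boundary values of $\tilde u$ on $\mc{C}'_j$ and $\tau_\Sigma(\mc{C}'_j)$ match those of $\tilde{\boldsymbol{\varphi}} + \tau_\Sigma^* \tilde{\boldsymbol{\varphi}}$ by construction. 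On the glued circle $\mc{B}'_j \cup \tau_\Sigma(\mc{B}'_j)$ the two halves $\tilde\varphi^h_j$ and $\tilde\varphi^h_j \circ \tau_\Sigma$ glue consistently, precisely because $\tilde\varphi^h_j \in H^{s}_{\rm even}(\T)$ and $\tau_\Sigma$ acts as $z \mapsto \bar z$ in the half-annular chart of Lemma~\ref{model_half_annular}. Uniqueness of the Dirichlet problem on $\Sigma^{\#2} \setminus (\mc{C}'' \cup \mc{B}'')$ then concludes the identification.

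Finally, I would compare normal derivatives: the annular and half-annular neighborhoods of $\mc{C}'_j$ and $\mc{B}'_j$ used in \eqref{defDSigmaC} on $\Sigma$ sit inside $\Sigma \subset \Sigma^{\#2}$, so the two opposite unit normals $\nu_\pm$ used to define $\mathbf{D}_{\Sigma, \mc{C}', \mc{B}'}$ coincide with those used for $\mathbf{D}_{\Sigma^{\#2}, \mc{C}'', \mc{B}''}$ at points of $\mc{C}'_j$ and $\mc{B}'_j$. Combined with $\tilde u|_\Sigma = u$, this gives \eqref{DN_int_bdouble} pointwise for smooth $\tilde{\boldsymbol{\varphi}}$, and then on all of $(H^s(\T))^{b'_\ell} \times (H^{s}_{\rm even}(\T))^{b'_h}$ by density and continuity. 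The $H^s \to H^{s-1}$ continuity is inherited from the classical fact that the DN map attached to a finite disjoint union of analytic embedded circles on a smooth compact Riemannian surface is a first-order elliptic pseudodifferential operator on those circles, with principal symbol $|\xi|$.
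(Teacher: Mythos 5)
Your proposal is correct and follows essentially the same route as the paper: reduce to smooth data, symmetrize the harmonic extension $P_{\mc{C}',\mc{B}'}\tilde{\boldsymbol{\varphi}}$ to the Neumann double (where the reflection principle across the geodesic Neumann boundary, guaranteed by Neumann extendibility of $g$, shows it coincides with the harmonic extension of $\tilde{\boldsymbol{\varphi}}+\tau_\Sigma^*\tilde{\boldsymbol{\varphi}}$), match the normal derivatives, and quote the $H^s\to H^{s-1}$ mapping property of the DN map on the smooth double. You spell out the weak-harmonicity check across $\pl\Sigma_N$ and the normal-derivative comparison in more detail than the paper, which states these steps in one line, but the argument is the same.
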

\begin{proof}
It suffices to consider the case of $\tilde{\boldsymbol{\varphi}}$ smooth with even Fourier modes on the half circles. In particular, 
$\tilde{\boldsymbol{\varphi}}$ can be viewed as a smooth $\tau_\Sigma$-symmetric function on each boundary circle. We then deduce that $P_{\Sigma}\tilde{\boldsymbol{\varphi}}+\tau_\Sigma^*P_{\Sigma}\tilde{\boldsymbol{\varphi}}$ is the harmonic extension to $\Sigma^{\#2}$ of $\tilde{\boldsymbol{\varphi}}+\tau_\Sigma^*\tilde{\boldsymbol{\varphi}}$ and this implies the claim. For the regularity property, we use the fact that the DN map of $\Sigma^{\#2}$ maps $H^s(\pl \Sigma^{\#2})$ to $H^{s-1}(\pl \Sigma^{\#2})$ (see \cite[Section 4.2]{GKRV21_Segal}).
\end{proof}
From now on, we shall then always consider the DN maps as acting on $(H^{s}(\T))^{b_\ell^D}\times  (H^{s}_{\rm even}(\T))^{b_h^D}$ and $ (H^{s}(\T))^{b'_\ell}\times  (H^{s}_{\rm even}(\T))^{b'_h}$ (or $H^s$ replaced by $C^\infty$ if $s=\infty$), i.e. 
we view the fields on the half-circle $\T^+$ fields on $\T$ that are symmetric by the involution $e^{i\theta}\mapsto e^{-i\theta}$ (corresponding to $\tau_\Sigma$ on the Neumann double).

Next we express the relation between DN map and the Green function restricted on the cutting circle, which plays an important rule in the gluing lemma.
Recall that $G_{g,m}$ denotes the Green function on $(\Sigma,g)$ with Dirichlet condition at $\pl \Sigma_D$ and Neumann condition on $\partial\Sigma_N$. The following proposition holds:
\begin{lemma}\label{rest_DNmap}:
 1) If $\partial\Sigma_D\neq \emptyset$, then $\mathbf{D}_{\Sigma,{\mc{C}'},\mc{B}'}$ is invertible, and its integral kernel is given by
 \begin{align}
     \mathbf{D}_{\Sigma,{\mc{C}',\mc{B}'}}^{-1}(y,y')=\frac{1}{2\pi}G_{g,m}(y,y'), \quad y\not=y' \in \mc{C}'\cup\mc{B}'.
 \end{align}
2) If $\partial\Sigma_D=\emptyset$, then on the space $C_0^{\infty}(\T)^{b_\ell'}\times C_0^{\infty}(\T^+)^{b_h'}$, we have
 \begin{align}
     \mathbf{D}_{\Sigma,{\mc{C}',\mc{B}'}}G_{g,m}=2\pi {\rm Id}
 \end{align}
 where $C_{0}^{\infty}(\T)=\left\{\tilde{\varphi} \in C^{\infty}(\T)| \int_{\T}\tilde{\varphi}(e^{i\theta}) d \theta=0\right\}$ and $C_{{\rm even},0}^{\infty}(\T)=C_{0}^{\infty}(\T)\cap C_{{\rm even}}^{\infty}(\T)$.
\end{lemma}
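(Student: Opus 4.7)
The plan is to encode the harmonic extension operator $P_{\mc{C}',\mc{B}'}$ distributionally and then use Green's second identity to invert. The key observation is that if $u := P_{\mc{C}',\mc{B}'} \tilde{\boldsymbol{\varphi}}$ for a smooth boundary field $\tilde{\boldsymbol{\varphi}}$, then $u$ is continuous across the cuts, harmonic on each connected component of $\Sigma \setminus (\mc{C}'\cup\mc{B}')$, vanishes on $\partial\Sigma_D$, and has vanishing normal derivative on $\partial\Sigma_N$. Integration by parts against a test function therefore picks up only the jumps of the normal derivative across $\mc{C}'\cup\mc{B}'$, yielding
\[
\Delta_g u = (\mathbf{D}_{\Sigma,\mc{C}',\mc{B}'}\tilde{\boldsymbol{\varphi}}) \cdot \delta_{\mc{C}'\cup\mc{B}'}
\]
as a distribution on $\Sigma$, with the sign convention matching \eqref{defDSigmaC}.

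For part 1), I would test the identity above against $G_{g,m}(x,\cdot)$ using Green's second identity. Since $\partial\Sigma_D \neq \emptyset$, both $u$ and $G_{g,m}(x,\cdot)$ satisfy matching Dirichlet/Neumann conditions on $\partial\Sigma$, so every contribution from $\partial\Sigma$ cancels and one obtains
\[
u(x) \;=\; \frac{1}{2\pi}\int_{\mc{C}'\cup\mc{B}'} G_{g,m}(x,y)\,\mathbf{D}_{\Sigma,\mc{C}',\mc{B}'}\tilde{\boldsymbol{\varphi}}(y)\,{\rm d}\ell_g(y) \qquad \text{for all } x\in\Sigma.
\]
Restricting $x$ to the cuts and using $u|_{\mc{C}'\cup\mc{B}'} = \tilde{\boldsymbol{\varphi}}$ shows that $(2\pi)^{-1}G_{g,m}|_{(\mc{C}'\cup\mc{B}')^2}$ is a left inverse of $\mathbf{D}_{\Sigma,\mc{C}',\mc{B}'}$. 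To upgrade this to a two-sided inverse, I would observe that $\mathbf{D}_{\Sigma,\mc{C}',\mc{B}'}$ has trivial kernel: any $\tilde{\boldsymbol{\varphi}}$ with $\mathbf{D}_{\Sigma,\mc{C}',\mc{B}'}\tilde{\boldsymbol{\varphi}} = 0$ yields a globally harmonic $u=P_{\mc{C}',\mc{B}'}\tilde{\boldsymbol{\varphi}}$ on $\Sigma$ with vanishing mixed boundary data, and the mixed Laplacian is invertible when $\partial\Sigma_D\neq\emptyset$, forcing $u\equiv 0$.

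For part 2), the roles reverse. For $f$ in the mean-zero space $C_0^\infty(\T)^{b_\ell'}\times C_{{\rm even},0}^\infty(\T)^{b_h'}$, I would set $u(x) := \int_{\mc{C}'\cup\mc{B}'} G_{g,m}(x,y)f(y)\,{\rm d}\ell_g(y)$. The identity $\Delta_x G_{g,m}(x,y) = 2\pi(\delta_y - {\rm v}_g(\Sigma)^{-1})$ together with the mean-zero hypothesis $\int f\,{\rm d}\ell_g = 0$ gives $\Delta_g u = 2\pi f\cdot\delta_{\mc{C}'\cup\mc{B}'}$. Classical single-layer potential theory then guarantees that $u$ is continuous across the cuts and still satisfies the Neumann condition on $\partial\Sigma_N$, so $u = P_{\mc{C}',\mc{B}'}(u|_{\mc{C}'\cup\mc{B}'}) = P_{\mc{C}',\mc{B}'}(G_{g,m}f)$. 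Comparing with the distributional formula of the first paragraph forces $\mathbf{D}_{\Sigma,\mc{C}',\mc{B}'}(G_{g,m}f) = 2\pi f$.

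The hard part will be the behavior of the single-layer potential near the endpoints of the boundary cuts $\mc{B}'_j$, which meet $\partial\Sigma_N$ at right angles and hence sit at corners of the components on either side. The cleanest workaround is to transport the problem to the Neumann double $\Sigma^{\#2}$: by Lemma \ref{prop:DNrestriction} the DN maps identify with their symmetric counterparts on $\Sigma^{\#2}$ with the doubled closed cuts $\mc{C}''\cup\mc{B}''$, and by \eqref{Green_symmetric} the Green functions double in a compatible way. This reduces both assertions to the analogous result for interior cuts on a Riemann surface \emph{without} corners, where continuity of the single-layer potential and the jump formula for its normal derivative are classical.
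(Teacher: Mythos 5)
Your proposal is correct, and its decisive step --- passing to the Neumann double so that the boundary cuts $\mc{B}'_j$ become closed interior analytic curves, then invoking the corner-free case and symmetrizing back via Lemma \ref{prop:DNrestriction} and \eqref{Green_symmetric} --- is exactly the paper's proof, which cites \cite[Eq.~(4.8)]{GKRV21_Segal} for the doubled surface and computes $\mathbf{D}_{\Sigma,\mc{C}',\mc{B}'}$ applied to the restricted doubled inverse. The difference is that your first three paragraphs additionally sketch a self-contained potential-theoretic derivation (distributional identity $\Delta_g P_{\mc{C}',\mc{B}'}\tilde{\boldsymbol{\varphi}}=(\mathbf{D}_{\Sigma,\mc{C}',\mc{B}'}\tilde{\boldsymbol{\varphi}})\,\delta_{\mc{C}'\cup\mc{B}'}$, Green's second identity, single-layer potentials), which is essentially a proof of the ingredient the paper imports as a citation; this buys self-containedness at the cost of having to justify the jump relations, which is precisely where the corners would bite if you did not double. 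Two small points to tighten: in part 1) your Green's-identity computation produces a \emph{left} inverse ($\tfrac{1}{2\pi}G_{g,m}\,\mathbf{D}=\mathrm{Id}$), whereas the trivial-kernel argument you give for ``two-sidedness'' only re-proves injectivity of $\mathbf{D}$; to close the loop you should either run the part-2)-style single-layer argument to get $\mathbf{D}\,\tfrac{1}{2\pi}G_{g,m}=\mathrm{Id}$ as well (this is what the paper's computation establishes), or argue that the single-layer operator is injective. Also, in part 2) the cancellation of the $\mathrm{v}_g(\Sigma)^{-1}$ term uses that $\int_{\mc{C}'_j}f\,\dd\ell_g$ equals the zero Fourier mode under the parametrization, which holds because the metric is admissible near the cuts (so that $\dd\ell_g=\dd\theta$ there); this is worth saying explicitly.
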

\begin{proof} We consider the Neumann double $\Sigma^{\#2}$. If $\pl\Sigma_D\neq\emptyset$, we have by \cite[Eq. (4.8)]{GKRV21_Segal} that $\mathbf{D}_{\Sigma^{\#2}, \mc{C}'',\mc{B}''}$ is invertible with inverse having integral kernel given by the Dirichlet Green function $(2\pi)^{-1}G_{\Sigma^{\#2},g,D}$ on the double surface.
We can then use Lemma \ref{prop:DNrestriction} to deduce 
\begin{align*}
&  \mathbf{D}_{\Sigma,{\mc{C}',\mc{B}'}}\Big( [\mathbf{D}^{-1}_{\Sigma^{\#2},\mc{C}'',\mc{B}''})(\tilde{\boldsymbol{\varphi}}+\tau_\Sigma^*\tilde{\boldsymbol{\varphi}})]|_{\mc{C'},\mc{B}'})\\
& = \mathbf{D}_{\Sigma^{\#2},\mc{C}'',\mc{B}''}\Big(( [\mathbf{D}^{-1}_{\Sigma^{\#2},\mc{C}'',\mc{B}''})(\tilde{\boldsymbol{\varphi}}+\tau_\Sigma^*\tilde{\boldsymbol{\varphi}})]|_{\mc{C'},\mc{B}'})+\tau_{\Sigma}^*[\mathbf{D}^{-1}_{\Sigma^{\#2},\mc{C}'',\mc{B}''})(\tilde{\boldsymbol{\varphi}}+\tau_\Sigma^*\tilde{\boldsymbol{\varphi}})]|_{\mc{C'},\mc{B}'}\Big)\Big|_{\mc{C}',\mc{B}'}\\
& = \mathbf{D}_{\Sigma^{\#2},\mc{C}'',\mc{B}''} \mathbf{D}^{-1}_{\Sigma^{\#2},\mc{C}'',\mc{B}''}(\tilde{\boldsymbol{\varphi}}+\tau_\Sigma^*\tilde{\boldsymbol{\varphi}})|_{\mc{C}',\mc{B}'}=\tilde{\boldsymbol{\varphi}}.
\end{align*}
Thus  $\mathbf{D}_{\Sigma,{\mc{C}',\mc{B}'}}$ is invertible and its integral kernel is for $y,y'\in \mc{C}'\cup \mc{B}'$
\[\mathbf{D}_{\Sigma,{\mc{C}',\mc{B}'}}^{-1}(y,y')=(2\pi)^{-1}(G_{\Sigma^{\#2},g,m}(y,y')+G_{\Sigma^{\#2},g,m}(y,\tau_\Sigma(y')))=(2\pi)^{-1}G_{g,m}(y,y')\]
where we used \eqref{Green_symmetric}.
The same argument applies for 2) as well.
\end{proof}

For $\boldsymbol{\tilde \varphi}=(\tilde\varphi_1^\ell,\dots,\tilde\varphi^\ell_{b^D_\ell},\tilde\varphi_1^h,\dots,\tilde\varphi^h_{b^D_h})=:(\boldsymbol{\tilde \varphi}^\ell,\boldsymbol{\tilde \varphi}^h )\in (H^{s}(\T))^{b^D_\ell}\times  (H^{s}_{\rm even}(\T))^{b^D_h}$ real valued with $s\in\R$, we will write 
\[ \tilde{\varphi}_j^{\ell}(\theta)=\sum_{n\in \Z} \varphi^\ell_{j,n}e^{in\theta}, \quad \tilde{\varphi}_j^{h}(\theta)=\sum_{n\in \Z} \varphi_{j,n}^he^{in\theta}.\]
  Let us further introduce  the  operator $\mathbf{D}=(\mathbf{D}^\ell,\mathbf{D}^h)$  by 
  \begin{align}
& \forall \tilde{\boldsymbol{\varphi}}^\ell \in C^\infty(\T;\R)^{b^D_\ell}, \quad
\mathbf{D}^\ell\tilde{\boldsymbol{\varphi}}^\ell=  \sum_{n\in \Z} |n| \varphi^\ell_{j,n}e^{in\theta},  \label{defmathbfD}\\
& \forall \tilde{\boldsymbol{\varphi}}^h\in C_{\rm even}^\infty(\T;\R)^{b^D_h}, \quad 
\mathbf{D}^h\tilde{\boldsymbol{\varphi}}^h= \sum_{n\in \Z} |n| \varphi_{j,n}^he^{in\theta}. 
\label{defmathbfDhalf}
\end{align}
Finally we consider the operators on $C^\infty(\T)^{b^D_\ell}\times C_{\rm even}^\infty(\T)^{b^D_h}$ 
\begin{align}
\label{tildeD} 
& \widetilde{\mathbf{D}}_{\Sigma} :=\mathbf{D}_{\Sigma}-\mathbf{D},
\\
&  \Pi_0(\tilde{\boldsymbol{\varphi}}):=  (( \tilde\varphi_1^\ell,1)_{2,\ell},\dots,( \tilde\varphi_{b^D_\ell}^\ell,1)_{2,\ell},( \tilde\varphi_1^h,1)_{2,h},\dots,( \tilde\varphi^h_{b^D_h},1)_{2,h}) \label{pi0}
\end{align}
and on $ C^\infty(\T)^{b'_\ell}\times C_{\rm even}^\infty(\T)^{b'_h}$ 
\begin{align}
& \widetilde{\mathbf{D}}_{\Sigma,{\mc{C}',\mc{B}'}} :=\mathbf{D}_{\Sigma,{\mc{C}',\mc{B}'}}-2\mathbf{D},\\
&  \Pi'_0(\tilde{\boldsymbol{\varphi}}):=  (( \tilde\varphi_1^\ell,1)_{2,\ell},\dots,( \tilde\varphi_{b'_\ell}^\ell,1)_{2,\ell},( \tilde\varphi_1^h,1)_{2,h},\dots,(\tilde\varphi^h_{b'_h},1 )_{2,h}).\label{Pi0'}
\end{align}
When $\mc{C}'=\emptyset$ (resp. $\mc{B}'=\emptyset$), we shall use the notation $\widetilde{\mathbf{D}}_{\Sigma,\mc{C}'}$ (resp.$\widetilde{\mathbf{D}}_{\Sigma,\mc{B}'}$) for $\widetilde{\mathbf{D}}_{\Sigma,\mc{C}',\mc{B}'}$.

By \eqref{DN_double} and using that the operator $\mathbf{D}$ acts diagonally on  $C^\infty(\T)^{b^D_\ell}\times C_{\rm even}^\infty(\T)^{b^D_h}$, we see that $\widetilde{\mathbf{D}}_{\Sigma}$ preserves $C^\infty(\T)^{b^D_\ell}\times C_{\rm even}^\infty(\T)^{b^D_h}$ and similarly for $\widetilde{\mathbf{D}}_{\Sigma,{\mc{C}',\mc{B}'}}$.

\begin{lemma}\label{lemmaDSigma-D}
The operators $\widetilde{\mathbf{D}}_{\Sigma}$, $\mathbf{D}_{\Sigma}(\Pi_0+{\bf D})^{-1}-{\rm Id}$ are smoothing operators in the sense that they are operators with smooth Schwartz kernel that are bounded for all $s,s'\in \R$ as maps 
 \begin{align*}
    &(H^{s}(\T))^{b_\ell^D}\times(H^{s}_{\rm even}(\T))^{b_h^D} \to (H^{s}(\T))^{b_\ell^D}\times(H^{s'}_{\rm even}(\T))^{b_h^D}.
 \end{align*}
The operators $\widetilde{\mathbf{D}}_{\Sigma,{\mc{C}'},\mc{B}'}$,  
 $\mathbf{D}_{\Sigma,\mc{C}',\mc{B}'}(2\Pi_0'+2{\bf D})^{-1}-{\rm Id}$,
 are smoothing operators in the sense that they are operators with smooth Schwartz kernel that are bounded for all $s,s'\in \R$ as maps 
 \begin{align*}
    &(H^{s}(\T))^{b'_\ell}\times(H^{s}_{\rm even}(\T))^{b'_h} \to (H^{s'}(\T))^{b'_\ell}\times(H^{s'}_{\rm even}(\T))^{b'_h}. 
 \end{align*}
In particular, all these operators are trace class on $L^2(\T)^{b_\ell}\times L^2_{\rm even}(\T)^{b_h}$, $L^2(\T)^{b_\ell'}\times L^2_{\rm even}(\T)^{b_h'}$ and the Fredholm determinant  $\det_{\rm Fr}(\mathbf{D}_{\Sigma,\mc{C}',\mc{B}'}(2\Pi_0'+2{\bf D})^{-1})$ is well-defined.
\end{lemma}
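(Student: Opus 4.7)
My plan is to reduce all four claims to the closed-surface case already treated in \cite{GKRV21_Segal} via the Neumann double construction of Section \ref{double_surface}. By Lemma \ref{prop:DNrestriction}, I can identify $\mathbf{D}_\Sigma$ with $\mathbf{D}_{\Sigma^{\#2}}$ acting on $\tau_\Sigma$-symmetric boundary data on $\partial\Sigma^{\#2}_D$; since $\Sigma^{\#2}$ is a Riemann surface with (smooth) Dirichlet boundary made of analytic circles, equipped with the smooth Neumann-extended metric $g^{\#2}$, the result established in \cite[Section 4]{GKRV21_Segal} directly yields that $\mathbf{D}_{\Sigma^{\#2}} - \mathbf{D}^{\#2}$ is smoothing between Sobolev spaces on the boundary. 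Restricting to the $\tau_\Sigma$-even subspace (which, on a doubled half-circle, is exactly $H^s_{\mathrm{even}}(\T)$) preserves this smoothing property, so $\widetilde{\mathbf{D}}_\Sigma$ is smoothing as claimed. A parallel reduction applies to $\widetilde{\mathbf{D}}_{\Sigma,\mc{C}',\mc{B}'}$: each half-circle cut, once doubled, becomes a full geodesic circle on $\Sigma^{\#2}$, and the two-sided DN map with respect to the cut decomposes as a sum of one-sided Dirichlet DN maps from each side, each of which matches $\mathbf{D}$ modulo a smoothing remainder. Summing the two contributions produces the $2\mathbf{D}$ leading order and leaves a smoothing error.

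For $\mathbf{D}_\Sigma(\Pi_0 + \mathbf{D})^{-1} - \mathrm{Id}$, I would use the algebraic identity
\[ \mathbf{D}_\Sigma(\Pi_0 + \mathbf{D})^{-1} - \mathrm{Id} = (\widetilde{\mathbf{D}}_\Sigma - \Pi_0)(\Pi_0 + \mathbf{D})^{-1}. \]
The finite-rank projection $\Pi_0$ is smoothing, $\widetilde{\mathbf{D}}_\Sigma$ is smoothing by the previous step, and $(\Pi_0 + \mathbf{D})^{-1}$ is the Fourier multiplier equal to $|n|^{-1}$ for $n \neq 0$ and $1$ for $n=0$, hence bounded $H^s \to H^{s+1}$ for every $s \in \R$. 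The composition of a smoothing operator with such a fixed-order bounded operator is again smoothing, giving the claim. An identical argument handles $\mathbf{D}_{\Sigma,\mc{C}',\mc{B}'}(2\Pi_0' + 2\mathbf{D})^{-1} - \mathrm{Id}$. Once smoothness of the Schwartz kernel is established, trace-class property on the relevant $L^2$ space is immediate from the finiteness of the integral of the kernel on the diagonal, and the Fredholm determinant of $\mathrm{Id}$ plus a trace-class operator is defined by the standard theory.

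The crux of the whole argument is ensuring that the difference $\mathbf{D}_{\Sigma^{\#2}} - \mathbf{D}^{\#2}$ is truly smoothing rather than merely of some finite negative order. This depends essentially on the analytic regularity of the boundary curves and parametrizations together with the normalization \eqref{metric_near_bdry} that makes $g$ conformally equivalent to the flat cylinder metric $|dz|^2/|z|^2$ in a collar neighborhood of each boundary circle or cut. With these ingredients the Poisson extension admits an analytic pseudodifferential parametrix whose full symbolic expansion matches that of the flat cylinder model to all orders, the error being smoothing. This is precisely the content of Section 4 of \cite{GKRV21_Segal}, so no new analytical ingredient is required beyond transferring the statement through the Neumann double.
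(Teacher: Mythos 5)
Your proposal is correct and follows essentially the same route as the paper: reduction to the Neumann double via Lemma \ref{prop:DNrestriction}, invoking the smoothing property of $\mathbf{D}_{\Sigma^{\#2}}-\mathbf{D}$ from \cite{GKRV21_Segal}, symmetrizing the kernel to descend to the $\tau_\Sigma$-even subspace, and the factorization $(\widetilde{\mathbf{D}}_{\Sigma}-\Pi_0)(\Pi_0+\mathbf{D})^{-1}$ for the second family of operators. The only minor imprecision is your justification of the trace-class property ("finiteness of the integral of the kernel on the diagonal" is not by itself the right criterion); the correct and standard reason is that an operator with smooth kernel on a compact manifold is smoothing, hence trace class.
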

\begin{proof} Using Proposition \ref{prop:DNrestriction}, we shall reduce to the case of a manifold without corners by considering the Neumann double $\Sigma^{\#2}$ by gluing the Neumann boundary circles and half-circles of $\Sigma$ with the corresponding circles/half-circles of $\bbar{\Sigma}$. In \cite[Lemma 4.1]{GKRV21_Segal}, it is proved that $\mathbf{D}_{\Sigma^{\#2}}-
\mathbf{D}$  is an operator with a smooth integral kernel $K(x,x')$ acting   on $L^2(\T)^{2b^D_\ell+b_h^D}$ where $\mathbf{D}$ is the Fourier multiplier by $\sqrt{\Delta_\T}$ acting on $L^2(\T)^{2b^D_\ell+b_h^D}$ as \eqref{defmathbfD}. By \eqref{DN_double} and using that the operator 
$\mathbf{D}$ acts diagonally on  $C^\infty(\T)^{2b^D_\ell+b_h^D}$  and that it preserves even functions with respect to $z\mapsto \bar{z}$ on $\T$, for $x\in\pl \Sigma_D$ and $\tilde{\boldsymbol{\varphi}}\in C^\infty(\T)^{b^D_\ell}\times C^\infty_{\rm even}(\T)^{b_h^D}$
\begin{equation}\label{DSigma-Ddouble}
\begin{split} 
(\mathbf{D}_{\Sigma}-\mathbf{D})\tilde{\boldsymbol{\varphi}}(x)& =((\mathbf{D}_{\Sigma^{\#2}}-\mathbf{D})(\tilde{\boldsymbol{\varphi}}+\tau_\Sigma^*\tilde{\boldsymbol{\varphi}}))(x)\\
& =\int_{\T^{2b^D_\ell}\times \T^{b_h^D}} K(x,x')(\tilde{\boldsymbol{\varphi}}(x')+\tilde{\boldsymbol{\varphi}}(\tau_{\Sigma}(x'))\dd\ell_g(x')\\
&= \int_{\T^{b_\ell^D}\times \T^{b^D_h}}(K(x,x')+K(\tau_\Sigma(x),x'))\tilde{\boldsymbol{\varphi}}(x')\dd\ell_g(x').\end{split}
\end{equation}
This shows that $\tilde{\mathbf{D}}_{\Sigma}$ is smoothing since $K(x,x')+K(\tau_\Sigma(x),x')$ is a smooth function of $x,x'$. The same argument applies to 
$\widetilde{\mathbf{D}}_{\Sigma,{\mc{C}'},\mc{B}'}$. To deal with 
$\mathbf{D}_{\Sigma}(\Pi_0+{\bf D})^{-1}-{\rm Id}$, we simply write it as 
\[ \mathbf{D}_{\Sigma}(\Pi_0+{\bf D})^{-1}-{\rm Id}=(\tilde{\mathbf{D}}_{\Sigma}-\Pi_0)(\Pi_0+{\bf D})^{-1}\]
with $(\Pi_0+{\bf D})^{-1}:C^{-\infty}(\T^{b_\ell^D+b_h^D})\to C^{-\infty}(\T^{b_\ell^D+b_h^D})$ continuous while 
$(\tilde{\mathbf{D}}_{\Sigma}-\Pi_0): C^{-\infty}(\T^{b_\ell^D+b_h^D})\to C^{\infty}(\T^{b_\ell^D+b_h^D})$ is continuous, thus 
the composition is smoothing ($C^{-\infty}$ denotes the space of distributions). The same argument applies to $\mathbf{D}_{\Sigma,\mc{C}',\mc{B}'}(2\Pi_0'+2{\bf D})^{-1}-{\rm Id}$.
\end{proof}
\subsection{Amplitudes}

Let  $\Sigma$ be a Riemann surface with corners, with non-empty topological boundary given by \eqref{boundary_Sigma}  (here  $b^N_\ell$ or $b^D_\ell$ could possibly be equal to $0$), and $b^N_h+b^D_h$ analytic half circles (recall  that $b^N_h=b^D_h$ but they could be both equal to $0$).  All of these  curves do not intersect each other (except of course that each curve of the type $\caB_j^N$ is tied at its ends to the end of a curve of the type  $\caB^D_j$ and vice versa).\\

\textbf{Assumption on the metric $g$.}  In this section, we always consider a metric $g$ on $\Sigma$ that is Neumann extendible. Notice that  $g$ is always conformally equivalent to an admissible metric $g_0$, namely of the form $g={e^{\omega}}g_0$ for some smooth function $\omega:\Sigma\to\R$, and we will require in addition  that $\omega|_{\pl \Sigma_D}=0$.\\

We consider a piecewise constant function $\mu_B:\partial\Sigma_N\to\R_+$ and the condition
\begin{equation}\label{hyp:cosmoamp}
\text{either }\quad \mu>0\quad \text{ or }\quad \mu_B \quad \text{does not identically vanish.}
\end{equation}
Given marked points ${\bf x}:=(x_1,\dots,x_m)$ in  the interior $\Sigma^\circ$ of the surface with weights $\boldsymbol{\alpha}:=(\alpha_1,\dots,\alpha_m)\in\R^m$, and marked points ${\bf s}=(s_1,\dots,s_{m_B})$ on $\partial\Sigma_N$ with  weights $\boldsymbol{\beta}:=(\beta_1,...,\beta_{m_B})\in\R^{m_B}$, the  amplitude $\caA_{\Sigma,g,{\bf x},\boldsymbol{\alpha},{\bf s},\boldsymbol{\beta}, \boldsymbol{\zeta}}$ is defined as follows: 

 \begin{definition}{\textbf{Amplitudes of surfaces with boundary/corners.}}\label{def:amp}
 
 We suppose that the second Seiberg bound \eqref{seiberg2bound} holds, i.e. $\alpha_i<Q$, $i=1,\dots,m$ and $\beta_i<Q$, $i=1,\dots,m_B$.   
 
\noindent {\bf (A)}
Let $\partial\Sigma_D=\emptyset$. Assume \eqref{hyp:cosmoamp} and \eqref{seiberg1bound}.  For $F$ continuous  nonnegative function on $H^{s}(\Sigma)$ for some $s<0$ we define 
\begin{equation}\label{defampzerobound} 
\caA_{\Sigma,g,{\bf x},\boldsymbol{\alpha},{\bf s},\boldsymbol{\beta}}(F):=\lim_{\epsilon\to 0}\langle F(\phi_g)\prod_{i=1}^m V_{\alpha_i,g,\epsilon}(x_i)\prod_{i=1}^{m_B} V_{\frac{\beta_i}{2},g,\epsilon}(s_i) \rangle_{\Sigma,g} .
\end{equation}
using \eqref{defcorrelgbound}.
 If $F=1$ then the amplitude is just the LCFT correlation function  for  surfaces with boundary and will be simply denoted by $\caA_{\Sigma,g,{\bf x},\boldsymbol{\alpha},{\bf s},\boldsymbol{\beta}}$.
 \vskip 3mm
 
\noindent {\bf (B)}  If  $\partial\Sigma_D$ has $b^D_\ell+b^D_h>0$ boundary  components, we assume $\mu,\mu_B\geq 0$. The amplitude $\caA_{\Sigma,g,{\bf x},\boldsymbol{\alpha},{\bf s},\boldsymbol{\beta},\boldsymbol{\zeta}}$ is
  a function $(F,\tilde{\boldsymbol{\varphi}})\mapsto \caA_{\Sigma,g,{\bf x},\boldsymbol{\alpha},{\bf s},\boldsymbol{\beta},\boldsymbol{\zeta} }(F,\tilde{\boldsymbol{\varphi}})$ of  the boundary fields $\tilde{\boldsymbol{\varphi}}:=(\boldsymbol{\tilde \varphi}^\ell,\boldsymbol{\tilde \varphi}^h)\in (H^{s}(\T))^{b_\ell^D}\times(H^{s}_{\rm even}(\T))^{b_h^D}$ with $s<0$  and of continuous nonnegative functions $F$ defined on  $H^{s}(\Sigma)$ for $s\in (-1/2,0)$.  
 It is  defined by (recall that   $\phi_g= X_{g,m}+P \boldsymbol{\tilde \varphi}$)
\begin{align}\label{amplitude}
 &\caA_{\Sigma,g,{\bf x},\boldsymbol{\alpha},{\bf s},\boldsymbol{\beta},\boldsymbol{\zeta}}(F,\tilde{\boldsymbol{\varphi}})= \lim_{\eps\to 0}Z_{\Sigma,g,m}\caA^0_{\Sigma,g}(\tilde{\boldsymbol{\varphi}})\nonumber
 \\
 &
\times \E \big[F(\phi_g)\prod_{i=1}^m V_{\alpha_i,g,\epsilon}(x_i)\prod_{i=1}^{m_B} V_{\frac{\beta_i}{2},g,\epsilon}(s_i)e^{-\frac{Q}{4\pi}\int_\Sigma K_g\phi_g\dd {\rm v}_g-\frac{Q}{2\pi}\int_{\partial \Sigma}k_g\phi_g\dd \ell_g -\mu M_\gamma^g (\phi_g,\Sigma)-  M^g_{\gamma,\partial}(\phi_g,\mu_B\mathbf{1}_{\partial\Sigma_N}) }\big]
\end{align}
where 
the expectation $\E$ is over the  GFF $X_{g,m}$, $M_\gamma^g $ and $M^g_{\gamma,\partial}$  are defined as in \eqref{GMCg} 
and $Z_{\Sigma,g,m}$ is  a normalization constant given by 
 \begin{align}\label{znormal}
 Z_{\Sigma,g,m}=\det (\Delta_{g,m})^{-\hf}\exp\big(\frac{1}{8\pi}\int_{\partial\Sigma}k_g\,\dd\ell_g\big)
 \end{align}
with $k_g$ the geodesic curvature of $\pl \Sigma$, and $\caA^0_{\Sigma,g}(\tilde{\boldsymbol{\varphi}})$ is the free field amplitude defined as  
\begin{align}\label{amplifree}
\caA^0_{\Sigma,g}(\tilde{\boldsymbol{\varphi}})=e^{-\frac{1}{2}( \tilde{\boldsymbol{\varphi}}, (\mathbf{D}_\Sigma-\mathbf{D})  \tilde{\boldsymbol{\varphi}})_2}.
\end{align}
When $F=1$, we will simply write $ \caA_{\Sigma,g,{\bf x},\boldsymbol{\alpha},{\bf s},\boldsymbol{\beta},\boldsymbol{\zeta}}( \tilde{\boldsymbol{\varphi}})$.
 \end{definition}
 
 The case (A) is already treated in \cite{Wu1}. In case (B), note that the existence of the limit  results from the Girsanov argument as in \cite[Section 3]{DKRV16}. The definitions above trivially extend to the situation when $F$ is no more assumed to be nonnegative but with the further requirement that $\caA_{\Sigma,g,{\bf x},\boldsymbol{\alpha},{\bf s},\boldsymbol{\beta},\boldsymbol{\zeta}}(|F|)<\infty$ in the case $\partial\Sigma_D=\emptyset$ and $ \caA_{\Sigma,g,{\bf x},\boldsymbol{\alpha},{\bf s},\boldsymbol{\beta},\boldsymbol{\zeta}}(|F|,\tilde{\boldsymbol{\varphi}})<\infty$ $(\mu_0)^{\otimes b_\ell^D}\otimes(\mu_0^+)^{\otimes b_h^D}$ almost everywhere in the case $\partial\Sigma_D\not=\emptyset$.

  We complete this section by describing the way the amplitudes change under conformal changes of metrics and under the action of diffeomorphisms:  
 \begin{proposition}\label{Weyl} The amplitudes obey the following transformation rules: 
 \\
{\bf 1) Weyl covariance:} Assume $\partial\Sigma_D\not=\emptyset$. Let $g$ be admissible on $\Sigma$ and $g'$ another metric on $\Sigma$, Neumann extendible and conformal to $g$, i.e. of the form $g'=e^{\omega}g$ for some smooth map $\omega: \Sigma\to\R$.  Assume $\omega(x)=0$ on $ \partial\Sigma_D$. Then for $F$  measurable  nonnegative function on $H^{-s}(\Sigma)$  
\begin{align*}
\caA_{\Sigma, g,{\bf x},\boldsymbol{\alpha},{\bf s},\boldsymbol{\beta},\boldsymbol{\zeta}}(F,\tilde{\boldsymbol{\varphi}})= \caA_{\Sigma,g,{\bf x},\boldsymbol{\alpha},{\bf s},\boldsymbol{\beta},\boldsymbol{\zeta}}\big(F(\cdot-\frac{Q}{2}\omega ),\tilde{\boldsymbol{\varphi}}\big)\exp\Big(\frac{1+6Q^2}{96\pi}S_{\rm L}^0(\Sigma,g,\omega)-\sum_{j=1}^m\Delta_{\alpha_j}\omega(x_j) -\sum_{j=1}^{m_B}\frac{1}{2}\Delta_{\beta_j}\omega(s_j) \Big)
\end{align*}
where 
$$S_{\rm L}^0(\Sigma,g,\omega)=\int_{\Sigma}(|d\omega|_{g}^2+2K_{g}\omega) {\rm dv}_{g}.$$

{\bf 2) Diffeomorphism invariance:} Let $\Sigma,\Sigma'$ be two Riemann surfaces with corners and  $\psi:\Sigma\to\Sigma'$ be an  orientation preserving diffeomorphism, obeying  $\psi(\partial\Sigma_D)=\partial\Sigma_D'$ and $\psi(\partial\Sigma_N)=\partial\Sigma_N'$. Let  $F$ be a  measurable  nonnegative function on $H^{-s}(\Sigma)$. Setting $F_\psi(\phi):=F(\phi\circ\psi)$,  we have
 \begin{align*} 
\caA_{\Sigma,  \psi^*g,{\bf x},\boldsymbol{\alpha},{\bf s},\boldsymbol{\beta},\boldsymbol{\zeta}}(F,\tilde{\boldsymbol{\varphi}})= \caA_{\Sigma', g,\psi({\bf x}),\boldsymbol{\alpha},{\bf s},\boldsymbol{\beta},\psi\circ\boldsymbol{\zeta}}\big(F_\psi ,\tilde{\boldsymbol{\varphi}}\big).
\end{align*}
\end{proposition}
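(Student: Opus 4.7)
Both statements will be established along the same lines as Propositions \ref{covconf2} and \ref{covconf3} (the closed and pure Neumann cases), with the extra input of tracking how the boundary-field-dependent pieces $\mc{A}^0_{\Sigma,g}(\tilde{\boldsymbol{\varphi}})$, $P\tilde{\boldsymbol{\varphi}}$, and $Z_{\Sigma,g,m}$ transform. The key simplifying assumption for part 1 is $\omega|_{\pl\Sigma_D}=0$, which will force the boundary-data--dependent objects to be conformally invariant, so that only the standard closed/Neumann Polyakov anomaly and vertex-scaling factors contribute.

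\textbf{Part 1 (Weyl covariance).} The plan is to compare the defining expressions \eqref{amplitude} for $g$ and $g'$ step by step. First, since $\omega|_{\pl\Sigma_D}=0$ and $\pl_\nu$ on $\pl\Sigma_D$ coincides for $g$ and $g'$ there, the mixed boundary value problem is conformally invariant in dimension two, so $P_{g'}\tilde{\boldsymbol{\varphi}} = P_g\tilde{\boldsymbol{\varphi}}$ and the Green function $G_{g',m} = G_{g,m}$; hence $X_{g',m}$ and $X_{g,m}$ have the same law. The same reasoning shows $\mathbf{D}_\Sigma$ is unchanged and therefore $\mc{A}^0_{\Sigma,g'}(\tilde{\boldsymbol{\varphi}})=\mc{A}^0_{\Sigma,g}(\tilde{\boldsymbol{\varphi}})$. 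Identifying the two fields as a single $X$, the only differences come from: the regularization-dependent scaling of vertex operators \eqref{scalingvertex} (yielding factors $e^{\tfrac{\alpha_i^2}{4}\omega(x_i)}$ and $e^{\tfrac{\beta_j^2}{8}\omega(s_j)}$); the GMC scaling \eqref{scalingmeasure1}--\eqref{scalingmeasure2} (yielding $e^{\tfrac{\gamma Q}{2}\omega}$ and $e^{\tfrac{\gamma Q}{4}\omega}$ inside the bulk/boundary measures); the curvature terms via the 2D conformal identities $K_{g'}\,\dd {\rm v}_{g'}=(K_g-\tfrac12\Delta_g\omega)\,\dd {\rm v}_g$ and $k_{g'}\,\dd\ell_{g'}=(k_g-\tfrac12\pl_\nu\omega)\,\dd\ell_g$, together with the vanishing of $\omega$ on $\pl\Sigma_D$ (which truncates the boundary integrals to $\pl\Sigma_N$); and finally the Polyakov anomaly for $\det\Delta_{g,m}$.

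\textbf{Handling the determinant and the shift.} For the determinant ratio, I would invoke the identity $\det\Delta_{g,m}\det\Delta_{g,D}=\det\Delta_{g^{\#2},g^{\#2},D}$ from \eqref{identity_det}. Since both $g$ and $g'$ are Neumann extendible and $\omega$ vanishes on $\pl\Sigma_D$, the conformal factor extends to a symmetric $\omega^{\#2}$ on the double, vanishing on $\pl\Sigma^{\#2}$. Applying Polyakov's anomaly formula for Dirichlet Laplacians on compact surfaces with boundary (which has no boundary contribution since $\omega$ vanishes there) to both $\Sigma^{\#2}$ and to the $\det\Delta_{g,D}$ factor on $\Sigma$, and taking the ratio, produces the bulk anomaly $\exp\bigl(\tfrac{1}{48\pi}S_L^0(\Sigma,g,\omega)\bigr)$ together with the correct boundary term on $\pl\Sigma_N$ matching $\int_{\pl\Sigma}k_g\omega\,\dd\ell_g$. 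Combining all factors and performing the standard Cameron--Martin shift $\phi_g\to \phi_g-\tfrac{Q}{2}\omega$ in the exponential (which is where the $F(\cdot-\tfrac{Q}{2}\omega)$ on the right-hand side originates), the collected $\omega$-terms reorganize into precisely $\exp\bigl(\tfrac{1+6Q^2}{96\pi}S_L^0(\Sigma,g,\omega)\bigr)$ times $\exp\bigl(-\sum_i\Delta_{\alpha_i}\omega(x_i)-\tfrac12\sum_j\Delta_{\beta_j}\omega(s_j)\bigr)$, as claimed; this is the same bookkeeping as in \cite[Prop. 5.5]{Wu1}, just with the additional fact that the $\tilde{\boldsymbol{\varphi}}$-dependent terms pass through unchanged.

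\textbf{Part 2 (Diffeomorphism invariance) and main obstacle.} Diffeomorphism invariance is essentially formal: under the pullback $\psi^*g$, every Riemannian object is the pullback of its counterpart on $(\Sigma',g)$, the GFF satisfies $X_{\psi^*g,m}\stackrel{\rm law}{=}X_{g,m}\circ\psi$, the harmonic extension with parametrizations $\boldsymbol{\zeta}$ equals $(P_{\Sigma',g}\tilde{\boldsymbol{\varphi}})\circ\psi$ with parametrizations $\psi\circ\boldsymbol{\zeta}$, and consequently $\phi_{\psi^*g}\stackrel{\rm law}{=}\phi_g\circ\psi$. Since vertex operators, GMC measures (bulk and boundary), curvature integrals, determinants, and $\mc{A}^0$ are all intrinsically defined, substituting $x_i\mapsto \psi(x_i)$, $s_j\mapsto\psi(s_j)$ and replacing $F$ by $F_\psi$ gives the identity. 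The main obstacle in the whole proposition is step 5 of part 1: namely, rigorously justifying the mixed-boundary Polyakov anomaly on a Riemann surface with corners via the doubling argument, which requires the smooth extension of $\omega$ across $\pl\Sigma_N$ with the correct symmetry (guaranteed by the Neumann extendibility of $g$ and $g'$) and the vanishing of corner contributions under $\omega|_{\pl\Sigma_D}=0$.
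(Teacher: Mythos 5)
Your proposal is correct and follows essentially the same route as the paper: the paper reduces everything to \cite[Proposition 4.7]{GKRV21_Segal} (the Cameron--Martin shift, vertex/GMC scalings, and conformal invariance of the boundary-data--dependent objects since $\omega|_{\pl\Sigma_D}=0$) and isolates as the only new ingredient the Weyl anomaly for $Z_{\Sigma,g,m}$, which it proves exactly as you describe — via the Neumann double, the identity \eqref{identity_det}, and the curvilinear-polygon Polyakov formula with corner contributions vanishing because $\omega=0$ at the corner points. The one place to phrase more carefully is the boundary term in the Dirichlet-Laplacian anomaly on $\pl\Sigma_N$: it disappears not because $\omega$ vanishes there but because $\pl\Sigma_N$ is geodesic and $\pl_\nu\omega=0$ there, both consequences of Neumann extendibility, which you do invoke at the end.
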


 The proof of this proposition is exactly the same way as \cite[Proposition 4.7]{GKRV21_Segal} up to the following lemma related to the  Weyl anomaly for $Z_{\Sigma,g,m}$: 
 \begin{proposition}
For $g',g$   two metrics as in item 1) of Proposition \ref{Weyl} above, we have 
\[
 Z_{\Sigma,g',m}=Z_{\Sigma,g,m}\exp(\frac{1}{96\pi}\int_{\Sigma}(|d\omega|^2_{g}+2K_{g}\omega) {\rm dv}_{g}).
\]
\end{proposition}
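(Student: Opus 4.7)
The plan is to reduce the Weyl anomaly for $\det(\Delta_{g,m})$ to two applications of the classical Polyakov--Alvarez conformal anomaly formula for the Dirichlet Laplacian on surfaces with smooth boundary, using the doubling factorization \eqref{identity_det}. Since $g$ is admissible and $\omega$ vanishes on $\partial\Sigma_D$ while $g'=e^\omega g$ is Neumann extendible, $\omega$ lifts to a smooth $\tau_\Sigma$-symmetric function $\omega^{\#2}$ on $\Sigma^{\#2}$ that vanishes on $\partial\Sigma^{\#2}$, and $g'^{\#2} = e^{\omega^{\#2}} g^{\#2}$ is a smooth metric on $\Sigma^{\#2}$. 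Taking logarithms in \eqref{identity_det} gives
\[
\log\det(\Delta_{\Sigma,g,m}) \,=\, \log\det(\Delta_{\Sigma^{\#2},g^{\#2},D}) \,-\, \log\det(\Delta_{\Sigma,g,D}),
\]
so the anomaly for $\log\det(\Delta_{g,m})$ is obtained as a difference of two Dirichlet anomalies.

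First I would apply Polyakov--Alvarez on $(\Sigma^{\#2}, g^{\#2}\to g'^{\#2})$. By $\tau_\Sigma$-symmetry, bulk integrals over $\Sigma^{\#2}$ are twice those over $\Sigma$, giving a bulk contribution $-\tfrac{1}{24\pi}\int_\Sigma(|d\omega|^2 + 2K_g\omega)\,{\rm dv}_g$; the $k_{g^{\#2}}\omega^{\#2}$ boundary term vanishes since $\omega^{\#2}|_{\partial\Sigma^{\#2}}=0$, while a residual $\partial_\nu\omega^{\#2}$ boundary contribution equals, by symmetry, twice the corresponding integral over $\partial\Sigma_D$. Next I would apply Polyakov--Alvarez on $(\Sigma, g\to g')$ with full Dirichlet boundary. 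Admissibility of $g$ forces $k_g=0$ on all of $\partial\Sigma$ (the Dirichlet boundary is flat with geodesic boundary circles/half-circles, and the Neumann boundary is totally geodesic as the fixed set of the isometric involution $\tau_\Sigma$), killing the $k_g\omega$ boundary term; moreover the $\partial_\nu\omega$ contribution over $\partial\Sigma_N$ vanishes by Neumann extendibility of $g'$. There remains the bulk piece $-\tfrac{1}{48\pi}\int_\Sigma(|d\omega|^2 + 2K_g\omega)\,{\rm dv}_g$ and a $\partial_\nu\omega$ term on $\partial\Sigma_D$.

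Subtracting, the bulk coefficients combine to $\tfrac{1}{48\pi}$, and one obtains
\[
\log\frac{\det(\Delta_{g',m})}{\det(\Delta_{g,m})} \,=\, -\tfrac{1}{48\pi}\int_\Sigma(|d\omega|^2 + 2K_g\omega)\,{\rm dv}_g \,+\, C\int_{\partial\Sigma_D}\partial_\nu\omega\,\dd\ell_g
\]
for an explicit constant $C$. Then I would analyze how the second factor $\exp(\tfrac{1}{8\pi}\int_{\partial\Sigma}k_g\,\dd\ell_g)$ transforms under $g\to g'$: the standard conformal change formula $k_{g'}\,\dd\ell_{g'} - k_g\,\dd\ell_g = \tfrac{1}{2}\partial_\nu\omega\,\dd\ell_g$, together with $\partial_\nu\omega=0$ on $\partial\Sigma_N$, produces a compensating boundary contribution of the form $\tfrac{1}{16\pi}\int_{\partial\Sigma_D}\partial_\nu\omega\,\dd\ell_g$. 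The coefficient $\tfrac{1}{8\pi}$ in the definition of $Z_{\Sigma,g,m}$ is chosen precisely so that this term cancels the residual boundary contribution coming from $-\tfrac{1}{2}\log(\det(\Delta_{g',m})/\det(\Delta_{g,m}))$, leaving the clean anomaly $\tfrac{1}{96\pi}\int_\Sigma(|d\omega|^2 + 2K_g\omega)\,{\rm dv}_g$.

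The hard part will be the careful bookkeeping of the coefficients of the $\partial_\nu\omega$ boundary terms in the two Polyakov--Alvarez formulas and verifying their exact cancellation with the variation of $\tfrac{1}{8\pi}\int_{\partial\Sigma}k_g\,\dd\ell_g$; the doubling trick is essential here, since it avoids having to derive a direct Polyakov--Alvarez formula for mixed boundary conditions on surfaces with corners, which would come with additional corner anomaly contributions at $\partial\Sigma_N\cap\partial\Sigma_D$. The doubling/gluing formulas for determinants established in Appendix~\ref{app:determinants} (in particular identity \eqref{identity_det}) provide the rigorous basis for this reduction.
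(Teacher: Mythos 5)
Your proposal is correct and follows essentially the same route as the paper: double the surface along $\partial\Sigma_N$, use the factorization \eqref{identity_det} to write the mixed-boundary determinant as a quotient of two Dirichlet determinants, apply the Polyakov--Alvarez anomaly formula on $\Sigma^{\#2}$ (halving bulk integrals by $\tau_\Sigma$-symmetry) and on $\Sigma$ with full Dirichlet condition, and observe that the potential corner corrections (the paper cites Aldana--Kirsten--Rowlett for curvilinear polygons) drop out because $\omega$ vanishes at the corner points. The only cosmetic difference is that the paper works directly with the normalized quantities $Z_{\Sigma,g,\ast}$, for which the cited anomaly formulas already absorb the $\partial_\nu\omega$ boundary terms, whereas you track those cancellations by hand.
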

\begin{proof}
If $b^D_h=0$, then the proof is the same as \cite{OsgoodPS88}. Now we focus on the case $b^D_h>0$. Since both $g,g'$ are Neumann extendible, we can consider the Neumann double $(\Sigma^{\#2},g^{\#2})$ and $(\Sigma^{\#2},{g'}^{\#2})$, which are surfaces with a boundary, marked with Dirichlet. 
By \eqref{identity_det} and the fact that $\pl \Sigma_N$ is geodesic, we have $Z_{\Sigma^{\#2},g^{\#2},D}=Z_{\Sigma,g,m}Z_{\Sigma,g,D}$ where 
\begin{align*}
& Z_{\Sigma,g,D}:=\det(\Delta_{\Sigma,g,D})^{-1/2}\exp(\frac{1}{8\pi}\int_{\partial\Sigma_D}k_g\,\dd\ell_g), \\ 
& Z_{\Sigma^{\#2},g^{\#2},D}:=\det(\Delta_{\Sigma^{\#2},g^{\#2},D})^{-1/2}\exp(\frac{1}{4\pi}\int_{\partial\Sigma_D}k_g\,\dd\ell_g).\end{align*} 
By  \cite[Proposition 4.7]{GKRV21_Segal} (see end of the proof) and the fact that $(\omega+\tau_\Sigma^*\omega)|_{\pl \Sigma^{\#2}}=0$, we have
\[ Z_{\Sigma^{\#2},{g'}^{\#2},D}=Z_{\Sigma^{\#2},g^{\#2},D}\exp(\frac{1}{48\pi}\int_{\Sigma}(|d\omega|^2_{g}+2K_{g}\omega) {\rm dv}_{g}).\]
Therefore, to derive the Polyakov formula for $Z_{\Sigma,g,m}$, we only need to deduce the Polyakov formula for $\det(\Delta_{\Sigma,g,D})$. In this case, the boundary becomes curvilinear, following the terminology in  \cite[Theorem 1.6]{Aldana-Kirsten-Rowlett},  and there are extra correction terms arising from the corner singularity. But in our case, $\omega$  vanishes on these singular points, so the ordinary Polyakov formula still holds
\[ Z_{\Sigma,g',D}=Z_{\Sigma,g,D}\exp(\frac{1}{96\pi}\int_{\Sigma}(|d\omega|^2_{g}+2K_{g}\omega) {\rm dv}_{g}).\qedhere\]
\end{proof}
 
\subsection{Regularity of amplitudes}
The goal of this subsection is to establish Proposition \ref{integrcf} below, which   will be useful to control the integrability properties of the amplitude. We start with the following auxiliary result:
\begin{proposition}\label{Euler_char} 
Consider  a Riemann surface $\Sigma$ with corners equipped with a smooth compatible metric $g$. Then
\[\frac{1}{4\pi}\int_\Sigma K_g\dd {\rm v}_g+\frac{1}{2\pi}\int_{\partial\Sigma}k_g\dd \ell_g=\chi(\Sigma)-\frac{b_h^D}{2}=\frac{1}{2}\chi(\Sigma^{\#2})\]
where $\Sigma^{\#2}$ is the Neumann double of $\Sigma$.
\end{proposition}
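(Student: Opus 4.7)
The plan is to treat the two equalities separately, since the first is analytic (Gauss--Bonnet) and the second is purely topological. Throughout, I use the LCFT convention that $K_g$ is the scalar curvature, so $K_g=2K$ where $K$ is the Gauss curvature; equivalently $\frac{1}{4\pi}\int_\Sigma K_g\,{\rm dv}_g=\frac{1}{2\pi}\int_\Sigma K\,dA$.

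\medskip

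\noindent\textbf{Step 1: Gauss--Bonnet for $\Sigma$.} I would apply the classical Gauss--Bonnet formula for compact surfaces with piecewise smooth boundary:
\[
\frac{1}{2\pi}\int_\Sigma K\,{\rm dv}_g+\frac{1}{2\pi}\int_{\partial\Sigma}k_g\,\dd\ell_g+\frac{1}{2\pi}\sum_{p\in\mc{K}}\theta_p=\chi(\Sigma),
\]
where $\theta_p$ denotes the \emph{exterior} angle at the corner $p$. By Definition \ref{def:mfd_with_corners}, every corner point is modelled on $\D^{++}$, so the interior angle is $\pi/2$ and hence $\theta_p=\pi/2$ for every $p\in\mc{K}$. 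The number of corners is counted by noting that every D-marked half-circle $\mc{B}_j^D$ contributes exactly two corner endpoints (and the same two corners are shared with an adjacent N-marked half-circle), giving $|\mc{K}|=2b_h^D$. Therefore $\frac{1}{2\pi}\sum_{p\in\mc{K}}\theta_p=b_h^D/2$, and after rewriting $K=\tfrac{1}{2}K_g$, we obtain
\[
\frac{1}{4\pi}\int_\Sigma K_g\,{\rm dv}_g+\frac{1}{2\pi}\int_{\partial\Sigma}k_g\,\dd\ell_g=\chi(\Sigma)-\frac{b_h^D}{2}.
\]

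\medskip

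\noindent\textbf{Step 2: Euler characteristic of the Neumann double.} The second equality $\chi(\Sigma)-\frac{b_h^D}{2}=\frac{1}{2}\chi(\Sigma^{\#2})$ is topological. I would use the Mayer--Vietoris inclusion--exclusion
\[
\chi(\Sigma^{\#2})=\chi(\Sigma)+\chi(\bar\Sigma)-\chi(\partial\Sigma_N)=2\chi(\Sigma)-\chi(\partial\Sigma_N),
\]
since $\Sigma^{\#2}=\Sigma\cup\bar\Sigma$ with $\Sigma\cap\bar\Sigma=\partial\Sigma_N$ (where the intersection is understood as a subspace of the double). The N-boundary decomposes as $\partial\Sigma_N=\bigsqcup_{j=1}^{b_\ell^N}\mc{C}_j^N\sqcup\bigsqcup_{j=1}^{b_h^N}\mc{B}_j^N$; each boundary circle $\mc{C}_j^N$ is a topological $S^1$ with $\chi=0$, while each boundary half-circle $\mc{B}_j^N$ is a closed interval with $\chi=1$. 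Hence $\chi(\partial\Sigma_N)=b_h^N=b_h^D$, so
\[
\chi(\Sigma^{\#2})=2\chi(\Sigma)-b_h^D,
\]
which rearranges to $\chi(\Sigma)-\frac{b_h^D}{2}=\frac{1}{2}\chi(\Sigma^{\#2})$.

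\medskip

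\noindent The main (minor) subtlety is bookkeeping: making sure the corner count $|\mc{K}|=2b_h^D$ is correct (each corner is shared by exactly one N and one D half-circle, and each D half-circle has two corner endpoints), and making sure the exterior angles---not interior angles---appear in the Gauss--Bonnet formula. No genuine analytic difficulty arises, and in particular the result does not require $g$ to be Neumann extendible or admissible.
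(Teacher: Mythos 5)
Your proof is correct. For the first equality you argue exactly as the paper does: Gauss--Bonnet with corner contributions, exterior angle $\pi/2$ at each of the $2b_h^D$ corners (the count $|\mc{K}|=2b_h^D$ being forced by the marking rule that each corner joins exactly one $N$ and one $D$ half-circle), together with the convention $K_g=2K$. For the second equality your route differs from the paper's: you compute $\chi(\Sigma^{\#2})$ purely topologically via inclusion--exclusion, $\chi(\Sigma^{\#2})=2\chi(\Sigma)-\chi(\partial\Sigma_N)$ with $\chi(\partial\Sigma_N)=b_h^N=b_h^D$ since circles contribute $0$ and arcs contribute $1$; the paper instead first invokes conformal invariance of the left-hand side to reduce to a Neumann extendible metric, and then applies Gauss--Bonnet a second time on the smooth-boundary double $\Sigma^{\#2}$, using that $\partial\Sigma_N$ is geodesic for such a metric so that the curvature integrals simply double. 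Your topological argument is slightly more economical in that it needs no choice of metric on the double and makes transparent your (correct) closing remark that neither Neumann extendibility nor admissibility is needed; the paper's argument is more in keeping with the doubling-and-symmetrization technique used throughout the rest of the text. Both are complete.
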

\begin{proof} Recall that a Riemann surface with corners obeys   $b^D_h=b^N_h$.  The Gauss-Bonnet formula entails that
\[ \frac{1}{2}\int_\Sigma K_g\dd {\rm v}_g+ \int_{\partial\Sigma}k_g\dd \ell_g+\sum_i (\pi-\alpha_i)=2\pi\chi(\Sigma) \]
where the $\alpha_i=\pi/2$ stand for the angles of the surface, labelled with $i$.  Also, we can assume that $g$ is Neumann extendible since the left hand side only depends on the conformal class of the metric, and the angles are fixed. Then we can connect the number of corners $\kappa$ to the number of Dirichlet boundary half-circles by the relation $\kappa=2b_h^D$. The first equality of our statement follows. On the other hand,  Gauss-Bonnet on $\Sigma^{\#2}$ (which has no corners and only Dirichlet boundary) yields 
\[2\pi \chi(\Sigma^{\#2})= \frac{1}{2}\int_{\Sigma^{\#2}} K_g\dd {\rm v}_g+ \int_{\partial\Sigma^{\#2}_D}k_g\dd \ell_g=
\int_\Sigma K_g\dd {\rm v}_g+ 2\int_{\partial\Sigma_D}k_g\dd \ell_g\]
so the second equality follows from this.
\end{proof}

For $\tilde{\boldsymbol{\varphi}}:=(\tilde\varphi^\ell_1,\dots,\tilde\varphi^\ell_{b_\ell^D},\tilde\varphi^h_1,\dots,\tilde\varphi^h_{b_h^D})\in (H^{s}(\T))^{b_\ell^D}\times(H^{s}_{\rm even}(\T))^{b_h^D}$, we denote by  $(c^\ell_j)_{j\leq b^D_\ell}$ and $(c^h_j)_{j\leq b^D_h}$    their constant modes, set $ \bar c=\tfrac{1}{2b^D_\ell+b^D_h}(2\sum_{j=1}^{b_\ell^D} c^\ell_j+\sum_{j=1}^{b_h^D} c^h_j)$ and $\boldsymbol{\varphi}$ the collection of the  centered fields, i.e. with all zero modes removed (recall that $\tilde\varphi_j^\ell =c^\ell_j+\varphi^\ell_j$ or $\tilde\varphi^h_j=c^h_j+\varphi^h_j$).  Also, we set   $c_+=c\mathbf{1}_{\{c>0\}}$ and $c_-=c\mathbf{1}_{\{c<0\}}$.   
  \begin{proposition}\label{integrcf}
Let $(\Sigma,g,{\bf x},\boldsymbol{\alpha},{\bf s},\boldsymbol{\beta})$ be a  surface  with corners with a compatible metric, with $\partial\Sigma_D\not=\emptyset$ having  $b^D=b_h^D+b_\ell^D$ boundary components. We assume  that \eqref{hyp:cosmoamp} holds.
Then, there exists some constant $a>0$ such that for any $R>0$, there exists $C_R>0$ such that  $\mu_0^{\otimes b_\ell^D}\otimes (\mu_0^+)^{\otimes b_h^D}$-almost everywhere in  $ \boldsymbol{\tilde \varphi}\in  (H^{s}(\T))^{b_\ell^D}\times(H^{s}(\T^+))^{b_h^D}$ 
\begin{align*}
 \caA_{\Sigma,g,{\bf x},\boldsymbol{\alpha}}( \tilde{\boldsymbol{\varphi}}) \leq C_R  e^{s \bar c_--R \bar c _+  -a(  \sum_{j=1}^{b^D_\ell}(\bar c-c^\ell_j)^2+ \sum_{j=1}^{b^D_h}(\bar c-c^\ell_j)^2)}
A(\boldsymbol{\varphi})
\end{align*}
  satisfying $\int A(\boldsymbol{\varphi}) ^2 \,(\P_\T)^{\otimes b_\ell^D}\times(\otimes \P_{\T^+})^{\otimes b_h^D}<\infty$, 
where 
\begin{align*}
s:=\sum_i\alpha_i+\sum_j\frac{\beta_j}{2}-\frac{Q}{2}\chi(\Sigma^{\#2})
\end{align*}
\end{proposition}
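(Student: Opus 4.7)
The plan is to unfold the definition of the amplitude in \eqref{amplitude}, apply Girsanov to absorb the vertex operators, and then extract the three asymptotic regimes in the constant modes of $\tilde{\boldsymbol{\varphi}}$ coming from three distinct sources: the curvature linear terms, the free field quadratic form, and the GMC exponential.

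First, I would apply the Girsanov shift to the $\epsilon$-regularized vertex operators: each $V_{\alpha_i,g,\epsilon}(x_i)$ and $V_{\beta_j/2,g,\epsilon}(s_j)$ is absorbed by translating the GFF $X_{g,m}$ by a sum of Green functions $\alpha_i G_{g,m}(x_i,\cdot)$, $(\beta_j/2)G_{g,m}(s_j,\cdot)$. After passing to the limit $\epsilon\to 0$, this produces the finite factors $\prod e^{\alpha_i P\tilde{\boldsymbol{\varphi}}(x_i)}\prod e^{(\beta_j/2) P\tilde{\boldsymbol{\varphi}}(s_j)}$ (linear in the harmonic extension) together with deterministic Green function self-interactions. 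Combining with the curvature linear term and using Proposition~\ref{Euler_char} applied to the constant part of $P\tilde{\boldsymbol{\varphi}}$ (namely $\bar c$, because any collection of constants that agree on all Dirichlet components of $\Sigma^{\#2}$ extends to the constant $\bar c$, which is harmonic and has zero normal derivative on $\pl\Sigma_N$), one obtains
\[
\exp\Bigl(-\tfrac{Q}{4\pi}\bar c \int_\Sigma K_g\dd {\rm v}_g-\tfrac{Q}{2\pi}\bar c \int_{\pl\Sigma}k_g\dd\ell_g+\bar c\bigl(\sum_i\alpha_i+\sum_j\tfrac{\beta_j}{2}\bigr)\Bigr)=e^{s\bar c}
\]
with $s=\sum_i\alpha_i+\sum_j\beta_j/2-\tfrac{Q}{2}\chi(\Sigma^{\#2})$, which accounts for the $e^{s\bar c_-}$ factor when $\bar c\leq 0$.

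Second, I would analyze the free amplitude factor $\caA^0_{\Sigma,g}(\tilde{\boldsymbol{\varphi}})=\exp(-\tfrac12(\tilde{\boldsymbol{\varphi}},(\mathbf{D}_\Sigma-\mathbf{D})\tilde{\boldsymbol{\varphi}})_2)$. Since $\mathbf{D}$ annihilates constants, the restriction of $\mathbf{D}_\Sigma-\mathbf{D}$ to the finite dimensional space of constant Fourier modes equals $\mathbf{D}_\Sigma|_{\mathrm{const}}$. By \eqref{Greenformula} and the doubling identity \eqref{DN_double}, this restriction equals (up to a factor) the Dirichlet energy of the harmonic extension on $\Sigma^{\#2}$ of the symmetrised constant data in $\R^{2b_\ell^D+b_h^D}$; this energy is a non-negative quadratic form whose kernel is exactly the diagonal $\R\tilde{1}$ (constants on all Dirichlet circles of $\Sigma^{\#2}$ take the value $\bar c$). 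On the orthogonal complement of $\R\tilde 1$, parametrised by $(c^\ell_j-\bar c,c^h_j-\bar c)$, the quadratic form is strictly positive, producing the quadratic decay $\exp\bigl(-a\sum_j(\bar c-c^\ell_j)^2-a\sum_j(\bar c-c^h_j)^2\bigr)$ for some $a>0$. Cross terms between the constants and the non-constant part $\boldsymbol{\varphi}$ are absorbed into $A(\boldsymbol{\varphi})$ using Cauchy--Schwarz and the smoothing property of $\mathbf{D}_\Sigma-\mathbf{D}$ (Lemma~\ref{lemmaDSigma-D}): any factor of the form $e^{|c^\ell_j-\bar c|\cdot\|K\boldsymbol{\varphi}\|}$ with $K$ smoothing is controlled by shrinking $a$ slightly and putting $e^{C\|K\boldsymbol{\varphi}\|^2}$ into $A$.

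Third, I would produce the $e^{-R\bar c_+}$ decay: under assumption \eqref{hyp:cosmoamp}, either $\mu>0$ in which case $M^g_\gamma(\phi_g,\Sigma)\geq e^{\gamma\bar c}\cdot M^g_\gamma(X_{g,m}+P(\tilde{\boldsymbol{\varphi}}-\bar c),U)$ for any bulk neighborhood $U$, or $\mu_B\not\equiv 0$ in which case the boundary GMC gives $e^{\gamma\bar c/2}$ growth. Using the elementary bound $e^{-\mu u}\leq C_R u^{-2R/\gamma}$ (or the boundary analogue), one deduces that for any $R>0$ the factor $\exp(-\mu M^g_\gamma(\phi_g,\Sigma))$ is bounded by $C_R e^{-R\bar c_+}\cdot A_1(\boldsymbol{\varphi})$ for some random variable $A_1(\boldsymbol{\varphi})$ with all moments finite (via negative moments of GMC, see \cite{huang2018,Wu1}). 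Collecting everything, the remaining expectation over $X_{g,m}$ and the non-constant boundary parts $\boldsymbol{\varphi}$ yields a function $A(\boldsymbol{\varphi})$, whose $L^2$ bound in $\boldsymbol{\varphi}$ reduces to standard positive and negative moment bounds for GMC with log singularities together with Gaussian moment estimates for the harmonic extension $P\boldsymbol{\varphi}$; this is routine since $s\in(-1/2,0)$ ensures $P\boldsymbol{\varphi}\in L^\infty$ on compact sets away from $\pl\Sigma_D$. The main obstacle is the simultaneous control of the cross terms in step two and the $\boldsymbol{\varphi}$-integrability of GMC moments in the presence of boundary logarithmic singularities near corner points; this is where the admissibility of $g$ near the cuts and the doubling argument of Section~\ref{subDTN} play an essential role.
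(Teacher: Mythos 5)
Your route is essentially the paper's: Girsanov to absorb the insertions, Proposition \ref{Euler_char} to convert the curvature and insertion linear terms into $e^{s\bar c}$, the free amplitude for the Gaussian decay in $(c_j-\bar c)$, and negative GMC moments via $e^{-x}\leq C_N x^{-N}$ for the $e^{-R\bar c_+}$ factor. The only methodological difference is that you propose to prove the free-amplitude estimate directly by diagonalising $\mathbf{D}_\Sigma$ on the constant modes, whereas the paper transfers it to the Neumann double and quotes \cite[Lemma 4.5]{GKRV21_Segal}; both are fine and amount to the same computation.

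There is, however, one under-specified step that happens to be the most delicate point of the whole proposition. You treat only the \emph{cross} terms between the constant modes and $\boldsymbol{\varphi}$ in $\caA^0_{\Sigma,g}(\tilde{\boldsymbol{\varphi}})=\exp(-\tfrac12(\tilde{\boldsymbol{\varphi}},(\mathbf{D}_\Sigma-\mathbf{D})\tilde{\boldsymbol{\varphi}})_2)$, but say nothing about the purely non-constant diagonal part $-\tfrac12(\boldsymbol{\varphi},(\mathbf{D}_\Sigma-\mathbf{D})\boldsymbol{\varphi})_2$, which you implicitly dump into $A(\boldsymbol{\varphi})$. With the normalisation $\varphi_n=(x_n+iy_n)/(2\sqrt n)$ one has $\tfrac12(\boldsymbol{\varphi},\mathbf{D}\boldsymbol{\varphi})_2=\tfrac14\sum_n(x_n^2+y_n^2)$, so dropping $\mathbf{D}_\Sigma$ entirely would give $A(\boldsymbol{\varphi})\geq e^{\frac14\sum(x_n^2+y_n^2)}$, whose \emph{square} is exactly borderline non-integrable against $\P_\T$ as in \eqref{Pdefin}. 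The smoothing property of $\mathbf{D}_\Sigma-\mathbf{D}$ (Lemma \ref{lemmaDSigma-D}) alone does not save this: what is needed is the quantitative statement that the growth coefficients $a_n$ in front of $x_n^2+y_n^2$ satisfy $a_n\leq \tfrac14-a$ for some $a>0$ uniformly in $n$ (and decay rapidly for large $n$), which requires a strictly positive lower bound $\mathbf{D}_\Sigma\geq \epsilon\,\mathbf{D}$ on non-constant modes, not just positivity of the Dirichlet energy. This is precisely the content of the estimate \eqref{est_A0} the paper imports from \cite[Lemma 4.5]{GKRV21_Segal}; without it your claim $\int A(\boldsymbol{\varphi})^2\,d\P<\infty$ does not follow. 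The rest of your outline (the $e^{s\bar c_-}$ and $e^{-R\bar c_+}$ factors, and the GMC moment bounds) matches the paper's argument.
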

\begin{proof}
First we estimate the free amplitude. Consider the Neumann double $\Sigma^{\#2}$, which has $2b_\ell^D+b_h^D$ Dirichlet boundary circles. The boundary field $ \tilde{\boldsymbol{\varphi}}$ induces the field 
$\tilde{\boldsymbol{\varphi}}+\tau_\Sigma^* \tilde{\boldsymbol{\varphi}}$ on $\pl \Sigma^{\#2}$ and we have by \eqref{DSigma-Ddouble} 
\[  \mathcal{A}_\Sigma^0(\tilde{\boldsymbol{\varphi}})=\mathcal{A}_{\Sigma^{\#2}}^0(\tilde{\boldsymbol{\varphi}}+\tau_\Sigma^* \tilde{\boldsymbol{\varphi}}).\]
By \cite[Lemma 4.5]{GKRV21_Segal}, there is $a>0$ such that for all $N>0$, there is $k_N>0$ such that 
\begin{equation}\label{est_A0} 
\mathcal{A}_\Sigma^0(\tilde{\boldsymbol{\varphi}})\leq e^{-a(  \sum_{j=1}^{b^D_\ell}(\bar c-c^\ell_j)^2+ \sum_{j=1}^{b^D_h}(\bar c-c^h_j)^2)}e^{\sum_{n=1}^\infty \sum_{j=1}^{b_\ell^D}a_n((x_{j,n}^\ell)^2+(y_{j,n}^\ell)^2)+\sum_{j=1}^{b_h^D}a_n((x_{j,n}^h)^2+(y_{j,n}^h)^2)}
\end{equation}
where $a_n\leq 1/4-a$ for $n\geq k_N$ and $a_n\leq \frac{1}{4N^n}$ for $n>k_N$. To obtain the estimate on the Liouville amplitude, it remains to bound the term involving the curvature and the Gaussian multiplicative chaos in \eqref{amplitude}, and this is done by the same method as in \cite[Lemma 4.6]{GKRV21_Segal}: using the Girsanov transform, the $X_{g,m}$ expectation in \eqref{amplitude} is bounded by 
\begin{equation}\label{upp_bdd_Amp} 
Ce^{-\frac{Q}{4\pi}\int_\Sigma K_gP\tilde{\boldsymbol{\varphi}}{\rm dv}_g-\frac{Q}{2\pi}\int_{\pl\Sigma} k_gP\tilde{\boldsymbol{\varphi}}{\rm d}\ell_g}\prod_{j=1}^m e^{\alpha_jP\tilde{\boldsymbol{\varphi}}(x_j)}\prod_{i=1}^{m_B} e^{\frac{\beta_i}{2}P\tilde{\boldsymbol{\varphi}}(s_j)}\E[e^{-\mu \int_\Sigma e^{\gamma u}M^g_\gamma(\phi_g,dx)-\int_{\pl \Sigma}e^{\gamma u}M_{\gamma,\pl}^g(\phi_g,\mu_B1_{\pl \Sigma_N})}]
\end{equation}
for some $C>0$ independent of $\tilde{\boldsymbol{\varphi}}$ and some function $u$ such that 
\[u(x)-\sum_{j=1}^m \alpha_jG_{g,m}(x,x_j)-\sum_{i=1}^m \frac{\beta_i}{2}G_{g,m}(x,s_i)\in C^\infty(\Sigma) .\] 
We write $P\tilde{\boldsymbol{\varphi}}=\bar{c}+P\boldsymbol{\varphi}+f$ with $f$ the harmonic extension of $\sum_{j=1}^{b_\ell^D}(c^\ell_j-\bar{c})1_{\mc{C}_j^D}+\sum_{j=1}^{b_h^D}(c^h_j-\bar{c})1_{\mc{B}_j^D}$ and note that 
$|f|\leq \sum_{j=1}^{b_\ell^D}(c^\ell_j-\bar{c})+\sum_{j=1}^{b_h^D}(c^h_j-\bar{c})$ by the maximum principle. We bound for some uniform $C>0$
\begin{align}
& \prod_{j=1}^m e^{\alpha_jP\tilde{\boldsymbol{\varphi}}(x_j)}\prod_{i=1}^{m_B} e^{\frac{\beta_i}{2}P\tilde{\boldsymbol{\varphi}}(s_j)}\leq e^{(\sum_j\alpha_j+\sum_ j\frac{\beta_j}{2})\bar{c}+(\sum_j |\alpha_j|+\sum_j\frac{|\beta_j|}{2})( \sum_j |c_j^\ell-\bar{c}|+\sum_j |c_j^h-\bar{c}|)}e^{C(\sum_{j}P\boldsymbol{\varphi}(x_j)+\sum_{i}P\boldsymbol{\varphi}(s_i))}\label{bound_mp}\\
& e^{-\frac{Q}{4\pi}\int_\Sigma K_gP\tilde{\boldsymbol{\varphi}}{\rm dv}_g-\frac{Q}{2\pi}\int_{\pl\Sigma} k_gP\tilde{\boldsymbol{\varphi}}{\rm d}\ell_g}\leq e^{-\frac{Q}{2}\chi(\Sigma^{\#2})\bar{c}+C(\sum_j |c_j^\ell-\bar{c}|+\sum_j |c_j^h-\bar{c}|)+ \sum_{n\geq 1}d_n(\sum_{j}|\varphi_{j,n}^\ell |+\sum_{j}|\varphi_{j,n}^h |) }\label{bound_curv}
\end{align}
using Proposition \ref{Euler_char}, with $d_n=\mc{O}(|n|^{-\infty})$. 
When $\bar{c}\leq 0$, we bound the expectation in \eqref{upp_bdd_Amp} by $1$ and obtain the desired result from \eqref{bound_mp}, \eqref{bound_curv}, \eqref{est_A0} and the fact that $e^{\sup_{x\in K} P\tilde{\boldsymbol{\varphi}}(x)}$ has finite positive moments of all order if $K\cap \pl\Sigma_D=\emptyset$ with $K$ a compact set. When $\bar{c}>0$, we can bound for all $N>0$ (using $e^{-x}\leq C_Nx^{-N}$ for $x\geq 0$) above the expectation in  \eqref{upp_bdd_Amp}  by 
\[ C_Ne^{-\gamma N\inf_{K}P\tilde{\boldsymbol{\varphi}}-\gamma N\bar{c}+\gamma N(\sum_j|c^\ell_j-\bar{c}|+\sum_j|c^h_j-\bar{c}|)}\E[ (\mu M^g_\gamma(X_{g,m},K)+M_{\gamma,\pl}^g(X_{g,m},\mu_B1_{K\cap \pl \Sigma_N})^{-N}]\]
where $K$ is a small open set not intersecting ${\bf x}\cup {\bf s}$. If $\mu>0$ we can choose $K$ to be a small ball contained in the interior of $\Sigma$. If $\mu_B$ does not identically vanish, we can choose $K$ to be a small ball such that $K\cap \partial \Sigma_N$ is a non empty open set contained in the support of $\mu_B$. The random variable $e^{-\inf_{x\in K} P\tilde{\boldsymbol{\varphi}}(x)}$ has finite positive moments of all orders and we use the fact that $M^g_\gamma(X_{g,m},K)$ or $M_{\gamma,\pl}^g(X_{g,m},\mu_B1_{K\cap \pl \Sigma_N})$ have negative moments of all orders (depending on the choice of $K$)  by \cite[Theorem 2.12]{rhodes2014_gmcReview} to deduce the result when $\bar{c}\geq 0$ (by combining again with \eqref{bound_mp}, \eqref{bound_curv}, \eqref{est_A0}).
\end{proof}

\begin{remark}
We can also get bounds in the case when both $\mu$ and $\mu_B$ vanish, in which case they become
\begin{align*}
 \caA_{\Sigma,g,{\bf x},\boldsymbol{\alpha}}( \tilde{\boldsymbol{\varphi}}) \leq C_R  e^{s \bar c   -a(  \sum_{j=1}^{b^D_\ell}(\bar c-c^\ell_j)^2+ \sum_{j=1}^{b^D_h}(\bar c-c^\ell_j)^2)}
A(\boldsymbol{\varphi})
\end{align*}
with $\int A(\boldsymbol{\varphi}) ^2 \,(\P_\T)^{\otimes b_\ell^D}\times(\otimes \P_{\T^+})^{\otimes b_h^D}<\infty$. The point is that we will use these estimates to prove that the amplitudes are Hilbert-Schmidt operators, but in the case $\mu=\mu_B=0$, the exponential term $e^{s \bar c  }$ is not bounded and prevents the amplitude from being Hilbert-Schmidt. The potential serves to control this divergence in other cases.
\end{remark}
 
\section{Gluing of amplitudes}\label{sec:gluing}
  
  In this section, we prove that the amplitudes behave nicely, namely they compose, under the operation of gluing the surfaces they are attached to along boundary circles or half-circles. We will distinguish two cases: the gluing of two different Riemann surfaces with corners or the self-gluing of a Riemann surface with corners, namely the gluing of two boundary components of the same surface. 
%
\subsection{Gluing two different Riemann surfaces}

Let  $(\Sigma_1,g_1,\boldsymbol{\zeta}_1)$ and  $(\Sigma_2,g_2,\boldsymbol{\zeta}_2)$ be two surfaces with corners, the metrics being admissible, and parametrized boundary, with 
\[\partial\Sigma_{D,1}=\bigcup_{j=1}^{b^D_{\ell,1}}\caC_{j,1}^D\cup\bigcup_{j=1}^{b^D_{h,1}}\caB_{j,1}^D, \quad \partial\Sigma_{D,2}=\bigcup_{j=1}^{b^D_{\ell,2}}\caC_{j,2}^D\cup\bigcup_{j=1}^{b^D_{h,2}}\caB_{j,2}^D\] 
(decomposition of the Dirichlet boundary into circles and half-circles).  Now we select some Dirichlet boundary or half-circles in each surface along which we want  to glue the two surfaces. Let  $\caC^D_{j,1}\subset\partial\Sigma_{D,1}$, $j\in\mathcal{J}_\ell :=  \{1,\dots,k_\ell\}$ with $k_\ell\leq \min\{b^D_{\ell,1}, b^D_{\ell,2}\}$,  be outgoing circle boundary components and $\caC^D_{j,2}\subset\partial\Sigma_{D,2}$, $j\in\mathcal{J}_\ell$,  be incoming circle boundary components. Also, let  $\caB^D_{j,1}\subset\partial\Sigma_{D,1}$, $j\in\mathcal{J}_h :=  \{1,\dots,k_h\}$ with $k_h\leq \min\{b^D_{h,1}, b^D_{h,2}\}$,  be outgoing half-circle boundary components and $\caB^D_{j,2}\subset\partial\Sigma_{D,2}$, $j\in\mathcal{J}_h$,  be incoming half-circle boundary components. 
Then we can  glue $k_\ell$ circular boundaries   $\caC_{j,1}^D,\caC_{j,2}^D$, with $j=1,\dots, k_\ell$, and $k_h$ half-circle boundaries   $\caB_{j,1}^D,\caB_{j,2}^D$, , with $j=1,\dots, k_h$, as prescribed in Subsection \ref{sub:gluing} to get a new admissible surface with corners and parametrized boundary  $(\Sigma,g,\boldsymbol{\zeta}):=(\Sigma_1\#\Sigma_2,g_1\#g_2, \boldsymbol{\zeta}_1\#\boldsymbol{\zeta}_2 )$ with  $b^D_{\ell,1}+b^D_{\ell,2}-2k_\ell$ Dirichlet boundary circles and 
$b^D_{h,1}+b^D_{h,2}-2k_h$ Dirichlet boundary half-circles,  where  $\boldsymbol{\zeta}:=\boldsymbol{\zeta}_1\#\boldsymbol{\zeta}_2$ denotes  the collection of parametrizations of the Dirichlet boundaries $\mc{C}^D_{j,1}$ and $\mc{C}^2_{j,2}$, where  the index $j$ runs over all possible indices with $j>k_\ell$, and the collection of parametrizations of the Dirichlet half-boundaries $\mc{B}^D_{j,1}$ and $\mc{B}^D_{j,2}$, where  the index $j$ runs over all possible indices with $j>k_h$. 

We consider marked interior points $\mathbf{x}_1=\{x_{11},\dots,x_{1m_1}\}$ on $\mathring{\Sigma}_1$ with respective weights $\boldsymbol{\alpha}_1=(\alpha_{11},\dots,\alpha_{1m_1})$ and   $\mathbf{x}_2=\{x_{21},\dots,x_{2m_2}\}$ on $\mathring{\Sigma}_2$ with respective weights $\boldsymbol{\alpha}_2=(\alpha_{21},\dots,\alpha_{2m_2})$.  Also, 
we consider marked boundary points $\mathbf{s}_1=\{s_{11},\dots,s_{1m^1_B}\}$ on $\partial\Sigma_{1,N}$ with respective weights $\boldsymbol{\beta}_1=(\beta_{11},\dots,\beta_{1m^1_B})\in \R^{m^1_B}$ and   $\mathbf{s}_2=\{s_{21},\dots,s_{2m_2}\}$ on $\partial\Sigma_{2,N}$ with respective weights $\boldsymbol{\beta}_2=(\beta_{21},\dots,\beta_{2m_2})\in \R^{m^2_B}$.  Set $\mathbf{x}:=(\mathbf{x}_1,\mathbf{x}_2)$, $\boldsymbol{\alpha}:=(\boldsymbol{\alpha}_1,\boldsymbol{\alpha}_2)$, $\mathbf{s}:=(\mathbf{s}_1,\mathbf{s}_2)$ and $\boldsymbol{\beta}:=(\boldsymbol{\beta}_1,\boldsymbol{\beta}_2)$.

The surface $\Sigma$ thus has an analytic Dirichlet boundary made up of the curves $\caC^D_{j,1} $ for $j=k_\ell+1,\dots,b^D_{\ell,1}$ and the curves $\caC^D_{j,2}\subset $ for $j=k_\ell+1,\dots,b^D_{\ell,2} $, and of the half-circles  $\caB^D_{j,1} $ for $j=k_h+1,\dots,b^D_{h,1}$ and the half-circles $\caB^D_{j,2}\subset $ for $j=k_h+1,\dots,b^D_{h,2} $.
We denote by $\mc{C}_j=\mc{C}^D_{j,1}=\mc{C}_{j,2}^D$ the glued curves on $\Sigma$ for $j=1,\dots,k_\ell$, and  by $\mc{B}_j=\mc{B}^D_{j,1}=\mc{B}_{j,2}^D$ the glued half-circles on $\Sigma$ for $j=1,\dots,k_h$.
 Boundary conditions on $\Sigma$ will thus be written as couples $(\tilde{\boldsymbol{\varphi}}_1,\tilde{\boldsymbol{\varphi}}_2)$ with $\tilde{\boldsymbol{\varphi}}_i:=(\boldsymbol{\tilde \varphi}_{i}^\ell,\boldsymbol{\tilde \varphi}_{i}^h)\in (H^{s}(\T))^{b_{\ell,i}^D-k_\ell}\times(H^{s}_{\rm even}(\T))^{b_{h,i}^D-k_h}$ for $i=1,2$.
Similarly, for $i=1,2$, the surface $\Sigma_i$ had analytic Dirichlet boundary made up of the analytic circles  $\caC^D_{j,i} $ for $j=1,\dots,k_\ell$ and  $\caC_{j,i}^D $ for $j=k_\ell+1,\dots,b^D_{\ell,i}$ and of the analytic half-circles $\caB^D_{j,i} $ for $j=1,\dots,k_h$ and  $\caB_{j,i}^D $ for $j=k_h+1,\dots,b^D_{h,i}$. Boundary conditions on $\Sigma_i$ will thus be written as couples $(\tilde{\boldsymbol{\varphi}},\tilde{\boldsymbol{\varphi}}_i)$ with $\tilde{\boldsymbol{\varphi}}:=(\boldsymbol{\tilde \varphi}^\ell,\boldsymbol{\tilde \varphi}^h)\in (H^{s}(\T))^{k_\ell}\times(H^{s}_{\rm even}(\T))^{k_h}$ and $\tilde{\boldsymbol{\varphi}}_i:=(\boldsymbol{\tilde \varphi}_{i}^\ell,\boldsymbol{\tilde \varphi}_{i}^h)\in (H^{s}(\T))^{b_{\ell,i}^D-k_\ell}\times(H^{s}_{\rm even}(\T))^{b_{h,i}^D-k_h}$ for $i=1,2$.

 \begin{figure}[h]   
 	\begin{tikzpicture}
 		\node[inner sep=0pt] (pant) at (0,0)
 		{ \includegraphics[scale=0.4]{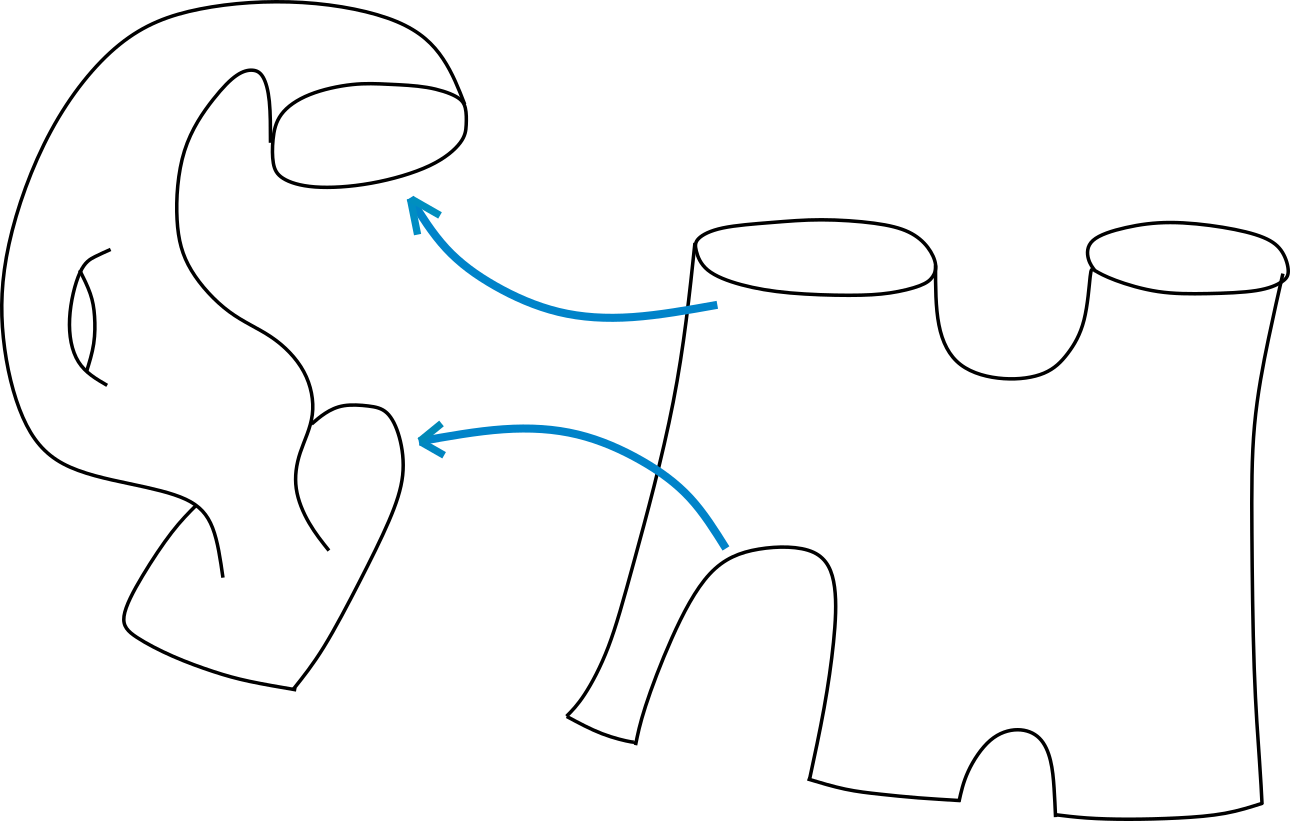}};
		\node[inner sep=0pt] (pant) at (8,0)
		 { \includegraphics[scale=0.4]{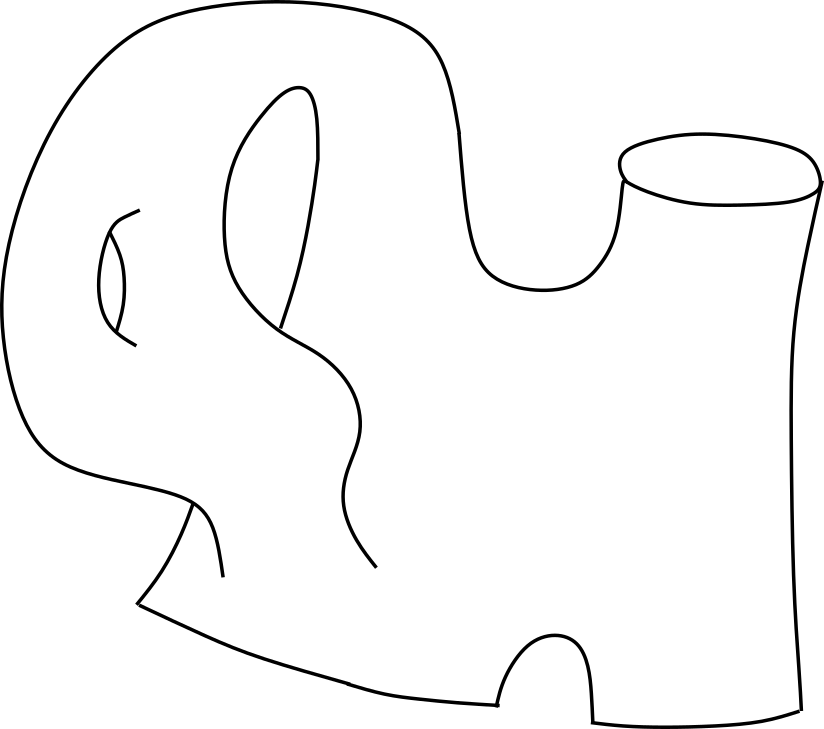}};
		 \draw (1.5,-0.5) node[right,black]{$\Sigma_2$} ; 
     \draw (-2.6,0.1) node[right,black]{$\Sigma_1$} ; 
      \draw (-1,1.6) node[right,black]{$\mc{C}_{11}$} ; 
      \draw (-1.6,-1) node[right,black]{$\mc{B}_{11}$} ; 
           \draw (0.2,-1.2) node[right,black]{$\mc{B}_{21}$} ; 
            \draw (1.5,-1.4) node[right,black]{$\mc{B}_{22}$} ; 
        \draw (0.2,1.3) node[right,black]{$\mc{C}_{21}$} ; 
            \draw (8,0) node[right,black]{$\Sigma$} ; 
 	\end{tikzpicture}
 	\caption{Gluing of two Riemann surfaces $\Sigma_1,\Sigma_2$ with corners along a circle and a half-circle, to produce the surface with corner $\Sigma$}\label{f:glue surfaces}
 \end{figure}
\begin{proposition}\label{glueampli}
Let $F_1,F_2$ be measurable  nonnegative functions respectively  on $H^{s}(\Sigma_1)$ and  $H^{s}(\Sigma_2)$ for $s\in(-1/2,0)$ and let us denote by $F_1\otimes F_2$ the functional on $H^{s}(\Sigma_1\#\Sigma_2)$ defined by  \[F_1\otimes F_2(\phi_{g_1\# g_2}):=F_1(\phi_{g_1\# g_2|\Sigma_1})F_2(\phi_{g_1\# g_2|\Sigma_2}).\] Then
 \begin{align*}
&\caA_{\Sigma_1\#\Sigma_2,g_1\# g_2, {\bf x},\boldsymbol{\alpha},{\bf s},\boldsymbol{\beta},\boldsymbol{\zeta}}
(F_1\otimes F_2,\tilde{\boldsymbol{\varphi}}_1,\tilde{\boldsymbol{\varphi}}_2)\\
&\quad =C\int \caA_{\Sigma_1,g_1,{\bf x}_1,\boldsymbol{\alpha}_1,{\bf s}_1,\boldsymbol{\beta}_1,\boldsymbol{\zeta}_1}(F_1,\tilde{\boldsymbol{\varphi}} ,\tilde{\boldsymbol{\varphi}}_1)\caA_{\Sigma_2,g_2,{\bf x}_2,\boldsymbol{\alpha}_2,{\bf s}_2,\boldsymbol{\beta}_2,\boldsymbol{\zeta}_2}(F_2,\tilde{\boldsymbol{\varphi}},\tilde{\boldsymbol{\varphi}}_2) \dd\mu_0^{\otimes_{k_\ell}} (\tilde{\boldsymbol{\varphi}}^\ell){\dd\mu_0^+}^{\otimes_{k_h}} (\tilde{\boldsymbol{\varphi}}^h).
\end{align*}
 where 
 \begin{align*}
 C= \frac{1}{2^{\frac{k_\ell}{2}+\frac{3 k_h}{4}} \pi^{k_\ell+\frac{3k_h}{4}}} \textrm{ if }\partial\Sigma_D \not =\emptyset,\qquad C= \frac{1}{2^{\frac{k_\ell}{2}+\frac{3 k_h}{4}-1} \pi^{k_\ell-1+\frac{3k_h}{4}}} \textrm{ if }\partial\Sigma_D =\emptyset.
 \end{align*}
  \end{proposition}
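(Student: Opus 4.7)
The strategy is to reduce the gluing formula to the already established gluing theorem for closed Riemann surfaces (or with Dirichlet boundary) in \cite{GKRV21_Segal} by using the Neumann double construction. Since the metrics $g_1,g_2$ are admissible, they are in particular Neumann extendible, so we can form the doubles $(\Sigma_i^{\#2},g_i^{\#2})$, each of which is a Riemann surface whose boundary is purely Dirichlet: each D-marked circle $\mc{C}^D_{j,i}$ gives two circles on the double, and each D-marked half-circle $\mc{B}^D_{j,i}$ doubles to a single (analytic) circle $\mc{B}^D_{j,i}\cup\tau_{\Sigma_i}(\mc{B}^D_{j,i})$. Moreover, gluing commutes with doubling in the sense that $(\Sigma_1\#\Sigma_2)^{\#2} = \Sigma_1^{\#2}\#\Sigma_2^{\#2}$, the latter being obtained by gluing along $2k_\ell+k_h$ Dirichlet circles. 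The parametrizations and markings also behave functorially.

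First, I would prove a \emph{doubling identity} for the amplitude:
\[
\caA_{\Sigma,g,{\bf x},\boldsymbol{\alpha},{\bf s},\boldsymbol{\beta},\boldsymbol{\zeta}}(\tilde{\boldsymbol{\varphi}})^2 = R(g)\, \caA_{\Sigma^{\#2},g^{\#2},{\bf x}\cup\tau_\Sigma({\bf x}),(\boldsymbol{\alpha},\boldsymbol{\alpha}),{\bf s},\boldsymbol{\beta},\boldsymbol{\zeta}^{\#2}}(\tilde{\boldsymbol{\varphi}}+\tau_\Sigma^*\tilde{\boldsymbol{\varphi}}),
\]
up to an explicit ratio $R(g)$ of determinants. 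Three inputs enter: (i) the identity \eqref{identity_det} comparing $\det(\Delta_{\Sigma,g,m})$ with the determinants on the double and the Dirichlet determinant; (ii) Lemma \ref{prop:DNrestriction}, which identifies the free amplitude factor $\caA^0_{\Sigma,g}(\tilde{\boldsymbol{\varphi}})$ with the symmetric part of $\caA^0_{\Sigma^{\#2},g^{\#2}}$ via $\mathbf{D}_\Sigma = \mathbf{D}_{\Sigma^{\#2}}(\cdot+\tau_\Sigma^*\cdot)|_{\pl\Sigma_D}$; and (iii) the relation $X_{g,m}\overset{\rm law}{=}(X_{g^{\#2},D}+X_{g^{\#2},D}\circ\tau_\Sigma)/\sqrt{2}$ which, after computing the Jacobian of the change of variables, converts the GMC bulk/boundary measures and the curvature and geodesic curvature terms on $\Sigma$ into the corresponding bulk integral on $\Sigma^{\#2}$ (the Neumann geodesic boundary of $\Sigma$ becomes an interior analytic curve with vanishing $k_{g^{\#2}}$ after doubling).

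Next I would invoke the gluing theorem for surfaces with purely Dirichlet boundary proved in \cite[Proposition 4.8]{GKRV21_Segal} (or rather its immediate extension to Dirichlet boundary, obtained by doubling once more to get a closed surface): this gives
\[
\caA_{\Sigma_1^{\#2}\#\Sigma_2^{\#2},g^{\#2}}(\Phi_1,\Phi_2) = C^{\#2}\!\int \caA_{\Sigma_1^{\#2},g_1^{\#2}}(\tilde\Phi,\Phi_1)\,\caA_{\Sigma_2^{\#2},g_2^{\#2}}(\tilde\Phi,\Phi_2)\,d\mu_0^{\otimes(2k_\ell+k_h)}(\tilde\Phi),
\]
for an explicit constant $C^{\#2}$ computed in loc.\ cit.\ in terms of Fredholm determinants of the boundary DN maps. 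Combining this with the doubling identity of the first step, specialized to $\tilde\Phi=\tilde{\boldsymbol{\varphi}}+\tau_\Sigma^*\tilde{\boldsymbol{\varphi}}$, yields an identity of the type claimed, but with the integral on the right-hand side taken over $\tau$-symmetric fields. Because each factor $\caA_{\Sigma_i^{\#2}}(\tilde\Phi,\Phi_i+\tau^*\Phi_i)$ depends only on the $\tau_\Sigma$-even part of $\tilde\Phi$, one can parametrize the integration by $\tilde{\boldsymbol{\varphi}}\in H^{-s}(\T)^{k_\ell}\times H^{-s}_{\rm even}(\T)^{k_h}$ and apply Fubini over the orthogonal odd modes: the odd modes of a standard circle variable contribute a Gaussian integral equal to $1$, while for each doubled half-circle the change of variable from a $\tau_\Sigma$-symmetric $\tilde\Phi\in H^{-s}(\T)$ to $\tilde{\varphi}^h\in H^{-s}_{\rm even}(\T)$ produces a computable Jacobian relating the law $\mu_0$ on $\T$ to the law $\mu_0^+$ on $\T^+$.

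The main technical obstacle is the bookkeeping of the explicit constants $C$: one has to track simultaneously the factors $R(g_1),R(g_2),R(g)$ coming from the doubling identity, the BFK-type determinant gluing formulas proved in Appendix \ref{app:determinants}, and the Gaussian normalizations arising from the reduction of the integral over $\mu_0^{\otimes(2k_\ell+k_h)}$ to $\mu_0^{\otimes k_\ell}\otimes(\mu_0^+)^{\otimes k_h}$. In particular the powers $2^{k_\ell/2+3k_h/4}$ and $\pi^{k_\ell+3k_h/4}$ in the constant $C$ originate respectively from the $\sqrt{2}$ factors in the decomposition $X_{g,m}\overset{\rm law}{=}(X_{g^{\#2},D}+X_{g^{\#2},D}\circ\tau_\Sigma)/\sqrt{2}$ and from the normalization of the half-circle measure $\mu_0^+$. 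The additional $2\pi$ factor in the case $\pl\Sigma_D=\emptyset$ arises because the resulting closed doubled surface $\Sigma^{\#2}$ possesses a global constant zero mode which must be integrated out separately, exactly as in the closed gluing formula of \cite{GKRV21_Segal}.
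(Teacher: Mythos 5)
Your plan hinges on a ``doubling identity'' of the form $\caA_{\Sigma,g,\ldots}(\tilde{\boldsymbol{\varphi}})^2 = R(g)\,\caA_{\Sigma^{\#2},g^{\#2},\ldots}(\tilde{\boldsymbol{\varphi}}+\tau_\Sigma^*\tilde{\boldsymbol{\varphi}})$, and this is where the argument breaks. The amplitude on the Neumann double is a \emph{single} expectation over the Dirichlet GFF $X_{\Sigma^{\#2},g^{\#2},D}$, whereas the square of the amplitude on $\Sigma$ is a product of two \emph{independent} expectations over the mixed-b.c.\ GFF. The double's field splits into independent $\tau_\Sigma$-even and $\tau_\Sigma$-odd parts; only the even part restricts to $X_{g,m}$ on $\Sigma$, and the odd part contributes nontrivially to the GMC potential $e^{-\mu M_\gamma(\phi,\Sigma^{\#2})}$ inside the expectation on the double. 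So the interacting amplitude on $\Sigma$ is not a square root of the doubled amplitude. Two further obstructions are decisive even before this: the boundary cosmological term $M^g_{\gamma,\partial}(\phi_g,\mu_B\mathbf{1}_{\pl\Sigma_N})$ and the boundary insertions $V_{\beta_j/2}(s_j)$ live on $\pl\Sigma_N$, which becomes an interior curve of $\Sigma^{\#2}$, and no such terms exist in the Dirichlet-boundary amplitudes of \cite{GKRV21_Segal} that you want to glue; and even for the free part the correct relation (used in the paper in the proof of Proposition \ref{integrcf}) is $\caA^0_{\Sigma,g}(\tilde{\boldsymbol{\varphi}})=\caA^0_{\Sigma^{\#2},g^{\#2}}(\tilde{\boldsymbol{\varphi}}+\tau_\Sigma^*\tilde{\boldsymbol{\varphi}})$ \emph{without} squaring, because of the factor $\tfrac12$ in the pairing \eqref{unpairing} on half-circle components. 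Your bookkeeping of constants therefore starts from an identity that is wrong in its exponent already at $\mu=\mu_B=0$.

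The paper's proof never transports the interacting amplitude to the double. It works directly on $\Sigma_1\#\Sigma_2$: the Markov property of the mixed-b.c.\ GFF (Proposition \ref{decompGFF}) gives $X_{g,m}\stackrel{\rm law}{=}X_1+X_2+P\mathbf{Y}$ with $\mathbf{Y}$ the trace on the cuts, which factorizes the expectation into the two sub-amplitudes integrated against the law of $\mathbf{Y}$; Cameron--Martin handles the shift by the harmonic extension $\mc{R}_{\mc{C}\cup\mc{B}}$, Lemma \ref{density1} identifies $\dd\P_{\mathbf{Y}}$ against $\dd\mu_0^{\otimes k_\ell}\otimes(\dd\mu_0^+)^{\otimes k_h}$ via the DN map on the cut (through a finite-dimensional Fourier truncation), and Proposition \ref{detformula} matches the determinant normalizations. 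Doubling enters only as a tool to prove the DN-map identities (Lemmas \ref{prop:DNrestriction}, \ref{rest_DNmap}) and the BFK-type determinant formula, not to reduce the gluing of interacting amplitudes to the closed/Dirichlet case. If you want to salvage your route, you would at minimum have to restrict to $\mu_B=0$, no boundary insertions, and replace the squaring by a genuine conditioning of the double's field on its odd part --- at which point you are essentially re-deriving the Markov decomposition the paper uses.
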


Next, we consider the case of self gluing. First we study the case of self-gluing of two Dirichlet boundary circles. 
 Let  $(\Sigma,g,\boldsymbol{\zeta})$  be a Riemann surface with corners the metric $g$ being admissible, with $b=b^D_\ell+b^D_h$ boundary components (here $b^D_\ell\geq 2$) such that the boundary contains an  outgoing boundary component  $\caC^D_j \subset\partial\Sigma_D$ and an incoming boundary component $\caC^D_k\subset\partial\Sigma_D$. Then we can glue the two boundary components as described in Subsection \ref{sub:gluing} to produce the surface $(\Sigma_{j\# k},g,\boldsymbol{\zeta}_\#)$. In this context, we will write the boundary component for $\Sigma$ as  $(\tilde\varphi^\ell_j,\tilde\varphi^\ell_k  ,\tilde{\boldsymbol{\varphi}}) \in   (H^{s}(\T)) \times (H^{s}(\T)) \times (H^{s}(\T))^{b_\ell^D-2}\times (H^{s}_{\rm even}(\T))^{b_h^D}$ where $\tilde\varphi_j$ corresponds to  $\caC^D_j$, $\tilde\varphi_k$ corresponds to  $\caC^D_k$ and $\tilde{\boldsymbol{\varphi}}'$ corresponds to the boundary components of $\Sigma_{j\# k}$. Now, we have

\begin{proposition}\label{selfglueampli}
Let  $F$ be a measurable  nonnegative function on $H^{s}(\Sigma)$ for $s<0$
 \[
\caA_{\Sigma_{j\# k},g,{\bf x}, \boldsymbol{\alpha},{\bf s},\boldsymbol{\beta},\boldsymbol{\zeta}}
(F,\tilde{\boldsymbol{\varphi}}')=C \int \caA_{\Sigma,g,{\bf x},\boldsymbol{\alpha},{\bf s},\boldsymbol{\beta},\boldsymbol{\zeta}}(F,\tilde\varphi , \tilde\varphi, \tilde{\boldsymbol{\varphi}}') \dd\mu_0(\tilde{\varphi}).
\]
where $C= \frac{1}{\sqrt{2} \pi }$ if $\partial\Sigma_D \not =\emptyset$ and $C= \sqrt{2}  $ if $\partial\Sigma_D =\emptyset$.
 \end{proposition}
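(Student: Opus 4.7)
The proof will parallel that of Proposition \ref{glueampli}, the difference being that the gluing identifies two Dirichlet circles of the same surface rather than boundary components of two different surfaces. Let $\mc{C}'$ denote the interior cut of $\Sigma_{j\#k}$ obtained from identifying $\mc{C}^D_j$ with $\mc{C}^D_k$. The plan is to decompose the mixed-b.c.~GFF on $\Sigma_{j\#k}$ along $\mc{C}'$ via the Markov property, and match the Liouville path integral on the two sides of the desired identity.

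The first step is the Markov-type decomposition: conditionally on its trace $\tilde\varphi$ on $\mc{C}'$, one has in law
\[
 X_{g,m}^{\Sigma_{j\#k}} \;\stackrel{\mathrm{law}}{=}\; X_{g,m}^{\Sigma} \;+\; P_{\Sigma_{j\#k},\mc{C}'}\tilde\varphi,
\]
where $X_{g,m}^{\Sigma}$ is the mixed-b.c.~GFF on $\Sigma$ (which already vanishes on $\mc{C}_j$ and $\mc{C}_k$), and $\tilde\varphi$ is an independent centred Gaussian valued in $H^{-s}(\T)$. By Lemma \ref{rest_DNmap}, its covariance is $\frac{1}{2\pi}G^{\Sigma_{j\#k}}_{g,m}|_{\mc{C}'\times\mc{C}'}=\mathbf{D}_{\Sigma_{j\#k},\mc{C}'}^{-1}$ when $\partial\Sigma_D\neq\emptyset$; when the glued surface has no Dirichlet component left, the DN operator $\mathbf{D}_{\Sigma_{j\#k},\mc{C}'}$ has a one-dimensional kernel spanned by constants on $\mc{C}'$, and $\tilde\varphi$ decomposes as a Gaussian on the mean-zero subspace plus an independent Lebesgue-distributed constant mode. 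Using Lemma \ref{lemmaDSigma-D} to control the trace-class Fredholm determinant, one rewrites the law of $\tilde\varphi$ as an explicit density with respect to $\mu_0$:
\[
 \frac{\dd\mathrm{Law}(\tilde\varphi)}{\dd\mu_0} \;=\; C_1\,{\det}_{\mathrm{Fr}}\bigl(\mathbf{D}_{\Sigma_{j\#k},\mc{C}'}(2\Pi_0'+2\mathbf{D})^{-1}\bigr)^{-1/2} e^{-\tfrac{1}{2}(\tilde\varphi,\widetilde{\mathbf{D}}_{\Sigma_{j\#k},\mc{C}'}\tilde\varphi)_2},
\]
with $C_1$ accounting for the treatment of the zero mode. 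Crucially, the harmonic extension $P_\Sigma(\tilde\varphi,\tilde\varphi,\tilde{\boldsymbol{\varphi}}')$ appearing in $\caA_{\Sigma,g,\ldots}(F,\tilde\varphi,\tilde\varphi,\tilde{\boldsymbol{\varphi}}')$ restricts on each connected component of $\Sigma\setminus(\mc{C}_j\cup\mc{C}_k)$ to $P_{\Sigma_{j\#k}}\tilde{\boldsymbol{\varphi}}'+P_{\Sigma_{j\#k},\mc{C}'}\tilde\varphi$, which makes the Markov substitution fit seamlessly with the definition of the amplitude.

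Second, one checks that all the other ingredients of the Liouville action match term by term after this substitution. The curvature integrals, the vertex insertions and the bulk and boundary GMC measures are supported in $\Sigma\setminus(\mc{C}_j\cup\mc{C}_k)=\Sigma_{j\#k}\setminus\mc{C}'$, so the only thing to verify is the identity between the quadratic forms $(\tilde\varphi,\tilde\varphi,\tilde{\boldsymbol{\varphi}}')\mapsto(\,\cdot\,,\widetilde{\mathbf{D}}_\Sigma\,\cdot\,)_2$ evaluated on the diagonal in the two $\mc{C}$-slots, and $(\tilde{\boldsymbol{\varphi}}',\widetilde{\mathbf{D}}_{\Sigma_{j\#k}}\tilde{\boldsymbol{\varphi}}')_2+(\tilde\varphi,\widetilde{\mathbf{D}}_{\Sigma_{j\#k},\mc{C}'}\tilde\varphi)_2$. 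This quadratic-form identity is a direct consequence of the Markov decomposition above together with Lemma \ref{prop:DNrestriction}. Integrating $\tilde\varphi$ against $\mu_0$ then yields the desired formula up to the ratio $Z_{\Sigma_{j\#k},g,m}/Z_{\Sigma,g,m}$, which must absorb the Fredholm determinant.

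The main obstacle is the determinant gluing formula for $\det(\Delta_{g,m})$ on surfaces with corners, namely an identity of the form
\[
 \frac{\det(\Delta_{\Sigma_{j\#k},g,m})}{\det(\Delta_{\Sigma,g,m})} \;=\; c\,\ell_g(\mc{C}')\,{\det}_{\mathrm{Fr}}\bigl(\mathbf{D}_{\Sigma_{j\#k},\mc{C}'}(2\Pi_0'+2\mathbf{D})^{-1}\bigr),
\]
for an explicit numerical constant $c$. I would prove it by doubling along the Neumann boundary: the self-gluing of $\Sigma$ along $\mc{C}_j\cup\mc{C}_k$ induces a self-gluing of the smooth Riemann surface $\Sigma^{\#2}$ along $\mc{C}_j\cup\tau_\Sigma(\mc{C}_j)$ and $\mc{C}_k\cup\tau_\Sigma(\mc{C}_k)$, to which the standard Burghelea--Friedlander--Kappeler gluing formula applies. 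Restricting to $\tau_\Sigma$-even functions via \eqref{identity_det} and combining with Lemma \ref{prop:DNrestriction} descends the identity to the mixed-b.c.~Laplacian on $\Sigma$. The dichotomy in the value of $C$ (namely $1/(\sqrt 2\pi)$ when Dirichlet components remain after gluing, versus $\sqrt 2$ when none remain) then reflects precisely whether $\mathbf{D}_{\Sigma_{j\#k},\mc{C}'}$ is invertible on $L^2(\T)$ or has a one-dimensional kernel of constants; in the latter case the constant mode of $\tilde\varphi$ on $\mc{C}'$ is unintegrated and contributes an extra factor $2\pi\sqrt{2}$ coming from the jacobian of the change between the zero-mode measure $\dd c$ and the Gaussian normalization, exactly as in the closed-surface self-gluing formula of \cite{GKRV21_Segal}.
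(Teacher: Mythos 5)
Your overall strategy is exactly the paper's: the paper proves Proposition \ref{glueampli} in detail and then disposes of Proposition \ref{selfglueampli} by declaring the proof identical up to cosmetic changes, and your proposal is precisely that adaptation — Markov decomposition of the mixed-b.c.\ GFF along the cut (Proposition \ref{decompGFF}, item 2), identification of the law of the trace as a density against $\mu_0$ via Lemmas \ref{rest_DNmap} and \ref{lemmaDSigma-D}, and absorption of the Fredholm determinant into the ratio $Z_{\Sigma_{j\#k},g,m}/Z_{\Sigma,g,m}$ through a BFK-type gluing formula proved by doubling (Proposition \ref{detformula}, Appendix \ref{app:determinants}).

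There is, however, one step that is false as written and needs the paper's Cameron--Martin argument to repair. Write $\mc{R}:=P_{\Sigma_{j\#k}}\tilde{\boldsymbol{\varphi}}'|_{\mc{C}'}$ for the restriction to the cut of the harmonic extension of the remaining boundary data. Your identity
\[P_\Sigma(\tilde\varphi,\tilde\varphi,\tilde{\boldsymbol{\varphi}}')=P_{\Sigma_{j\#k}}\tilde{\boldsymbol{\varphi}}'+P_{\Sigma_{j\#k},\mc{C}'}\tilde\varphi\]
is incorrect: the right-hand side takes the value $\tilde\varphi+\mc{R}$ on $\mc{C}'$, not $\tilde\varphi$, so it equals $P_\Sigma(\tilde\varphi+\mc{R},\tilde\varphi+\mc{R},\tilde{\boldsymbol{\varphi}}')$. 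For the same reason the ``clean'' quadratic-form identity you invoke in the second paragraph, equating $((\tilde\varphi,\tilde\varphi,\tilde{\boldsymbol{\varphi}}'),\widetilde{\mathbf{D}}_\Sigma(\tilde\varphi,\tilde\varphi,\tilde{\boldsymbol{\varphi}}'))_2$ with $(\tilde{\boldsymbol{\varphi}}',\widetilde{\mathbf{D}}_{\Sigma_{j\#k}}\tilde{\boldsymbol{\varphi}}')_2+(\tilde\varphi,\widetilde{\mathbf{D}}_{\Sigma_{j\#k},\mc{C}'}\tilde\varphi)_2$, fails: there are cross terms in $\mc{R}$. The correct bookkeeping, as in the proof of Proposition \ref{glueampli}, is to push forward the law of the trace $\boldsymbol{Y}$ by $T:\tilde\varphi\mapsto\tilde\varphi+\mc{R}$, compute the Radon--Nikodym factor by Cameron--Martin as in \eqref{CameronMartin}, and observe that this factor is exactly cancelled by the $\mc{R}$-dependent terms (the analogues of $A_1,A_2,A_4$ in Lemma \ref{density1}) in the ratio of free-field amplitudes $\caA^0_{\Sigma_{j\#k}}(\tilde{\boldsymbol{\varphi}}')/\caA^0_{\Sigma}(\tilde\varphi,\tilde\varphi,\tilde{\boldsymbol{\varphi}}')$. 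Once this is restored, the rest of your argument — including the dichotomy in $C$ according to whether $\mathbf{D}_{\Sigma_{j\#k},\mc{C}'}$ is invertible or has the constants as kernel — goes through. A minor correction: in the case $\partial\Sigma_D=\emptyset$ the prefactor in the determinant formula is the area ${\rm v}_g(\Sigma)$, not the length of the cut (see Proposition \ref{detformula}).
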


In the third case, we consider the case of self-gluing of two half-circles. 
 Let  $(\Sigma,g,\boldsymbol{\zeta})$  be a Riemann surface with corners with metric $g$ being admissible, with $b=b^D_\ell+b^D_h$ boundary components ($b^D_h\geq2$) such that the boundary contains an  outgoing boundary component  $\caB^D_j \subset\partial\Sigma_D$ and an incoming boundary component $\caB^D_k\subset\partial\Sigma_D$. Then we can glue the two boundary components as described in Subsection \ref{sub:gluing} to produce the surface $(\Sigma_{jk},g,{\bf x},\boldsymbol{\alpha},{\bf s},\boldsymbol{\beta},\boldsymbol{\zeta})$. In this context, we will write the boundary component for $\Sigma$ as  $(\tilde\varphi^h_j,\tilde\varphi^h_k  ,\tilde{\boldsymbol{\varphi}}) \in   (H^{s}_{\rm even}(\T)) \times (H^{s}_{\rm even}(\T)) \times(H^{s}(\T))^{b_\ell^D} \times(H^{s}_{\rm even}(\T))^{b_h^D-2}   $ where $\tilde\varphi^h_j$ corresponds to  $\caB^D_j$, $\tilde\varphi^h_k$ corresponds to  $\caB^D_k$ and $\tilde{\boldsymbol{\varphi}}'$ corresponds to the boundary components of $\Sigma_{jk}$. Now, we have

\begin{proposition}\label{selfglueamplih}
Let  $F$ be a measurable  nonnegative function on $H^{s}(\Sigma)$ for $s<0$
 \[
\caA_{\Sigma_{jk},g, \boldsymbol{\alpha},{\bf s},\boldsymbol{\beta},\boldsymbol{\zeta}}
(F,\tilde{\boldsymbol{\varphi}}')=C \int \caA_{\Sigma,g,{\bf x},\boldsymbol{\alpha},{\bf s},\boldsymbol{\beta},\boldsymbol{\zeta}}(F,\tilde\varphi , \tilde\varphi, \tilde{\boldsymbol{\varphi}}') \dd\mu_0(\tilde{\varphi}).
\]
where $C= \frac{1}{(2 \pi)^{3/4}}$ if $\partial\Sigma_D \not =\emptyset$ and $C=(2\pi)^{1/4}$ if $\partial\Sigma_D =\emptyset$.
 \end{proposition}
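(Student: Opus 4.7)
The plan is to follow the same scheme as the proof of Proposition \ref{selfglueampli} for the self-gluing of two boundary circles, with the main twist that the boundary fields now live in $H^{-s}_{\rm even}(\T)$ (so the integration variable is against $\mu_0^+$, which we read as the intended meaning of $d\mu_0$ in the statement) and that the glued half-circles meet the Neumann boundary at corner points. As in the circle case, a Girsanov shift by the harmonic extension $P\tilde{\boldsymbol{\varphi}}$ reduces the identity to the combination of (a) a gluing formula for the \emph{free field amplitude} $\caA^0_{\Sigma,g}$ together with the normalization $Z_{\Sigma,g,m}$, and (b) the Markov property of the mixed GFF on $\Sigma$ stating that, conditionally on its boundary value on $\mc{B}_j^D\cup \mc{B}_k^D$, the fields on the two sides of the (self-)glued half-circle are independent and have the law of the mixed GFF on $\Sigma$ with prescribed Dirichlet data there. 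Point (b) is the half-circle analogue of the Markov property used in Proposition \ref{selfglueampli} and is established by writing $X_{\Sigma_{jk},g,m}=X_{\Sigma,g,m}+P_{\mc{B}_j^D,\mc{B}_k^D}\tilde\varphi$ in law, after identifying $\mc{B}_j^D$ with $\mc{B}_k^D$ via the parametrizations.

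The heart of the argument is (a), which I would approach by passing to the Neumann double. The half-circles $\mc{B}_j^D,\mc{B}_k^D$ together with their $\tau_\Sigma$-images form two full analytic Dirichlet circles $\mc{B}_j^{D,\#2},\mc{B}_k^{D,\#2}$ in $\Sigma^{\#2}$, and a field $\tilde\varphi\in H^{-s}_{\rm even}(\T)$ identifies canonically with a $\tau_\Sigma$-invariant field on each such circle. By Lemma \ref{prop:DNrestriction}, applied to the cutting curves $\mc{B}_j^D,\mc{B}_k^D$, the operator $\widetilde{\mathbf{D}}_{\Sigma,\mc{B}_j^D,\mc{B}_k^D}$ agrees on even fields with the restriction of $\widetilde{\mathbf{D}}_{\Sigma^{\#2},\mc{B}_j^{D,\#2},\mc{B}_k^{D,\#2}}$ to symmetric fields, and self-gluing $\mc{B}_j^D$ with $\mc{B}_k^D$ on $\Sigma$ corresponds exactly to self-gluing $\mc{B}_j^{D,\#2}$ with $\mc{B}_k^{D,\#2}$ on $\Sigma^{\#2}$, producing $(\Sigma_{jk})^{\#2}$. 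Hence the desired free-field gluing formula on $\Sigma$ will be deduced from its analogue for the self-gluing of two circles on $\Sigma^{\#2}$, which is Proposition \ref{selfglueampli} (or its variant with two circles glued). The Gaussian integral over $\tilde\varphi\in H^{-s}_{\rm even}(\T)$ against $\mu_0^+$ produces a Fredholm determinant of $\mathbf{D}_{\Sigma,\mc{B}_j^D,\mc{B}_k^D}(2\Pi_0'+2\mathbf{D})^{-1}$ (restricted to even fields), which by Lemma \ref{lemmaDSigma-D} is well-defined and matches, via the BFK-type gluing formula for determinants on surfaces with corners in Appendix \ref{app:determinants}, the ratio $Z_{\Sigma_{jk},g,m}/Z_{\Sigma,g,m}$.

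The constant $C$ is determined by tracking three sources of normalization. First, the change of variable from the free field amplitude to its Gaussian realization contributes a factor coming from the standard evaluation
\[
\int e^{-\frac{1}{2}(\tilde\varphi, (2\mathbf{D}_{\Sigma,\mc{B}_j^D,\mc{B}_k^D}-2\mathbf{D})\tilde\varphi)_{2,h}}\,d\mu_0^+(\tilde\varphi)
={\det}_{\rm Fr}\bigl(\mathbf{D}_{\Sigma,\mc{B}_j^D,\mc{B}_k^D}(2\Pi_0'+2\mathbf{D})^{-1}\bigr)^{-1/2}
\]
up to a pre-factor that depends on whether $\partial\Sigma_D=\emptyset$ (zero mode of $\tilde\varphi$ is not absorbed by Dirichlet data elsewhere on $\Sigma$) or not. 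Second, the doubling identification contributes a factor of $\sqrt{2}$ per Fourier mode of the even field $\tilde\varphi$, reflecting the identity $\E[X_{\Sigma,g,m}^2]=\frac{1}{2}\E[(X_{\Sigma^{\#2},g^{\#2},D}+X_{\Sigma^{\#2},g^{\#2},D}\circ\tau_\Sigma)^2]$ quoted in Section~\ref{sec:back}. Third, the half-circle pairing $\frac12(\cdot,\cdot)_{2,h}$ in \eqref{unpairing} differs by a factor of $\tfrac12$ from the circle pairing. Combining these three ingredients gives $C=(2\pi)^{1/4}$ when $\partial\Sigma_D=\emptyset$ and $C=(2\pi)^{-3/4}$ otherwise, consistent with the $\sqrt{2}$ and $1/(\sqrt{2}\pi)$ appearing in Proposition \ref{selfglueampli} up to the square-root arising from passing from a full circle to a half circle.

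The main obstacle will be the bookkeeping of normalization factors in the third step, in particular the interaction between the BFK-type gluing formula for $\det(\Delta_{\Sigma,g,m})$ on surfaces with corners (the determinant of the Laplacian with mixed boundary conditions and corner singularities where $\mc{B}_j^D,\mc{B}_k^D$ meet $\partial\Sigma_N$), the change from $\mu_0$ to $\mu_0^+$, and the Gaussian integration of the free-field amplitude. Once this counting is pinned down on the Neumann double, where no corners are present, the identity on $\Sigma$ follows by symmetrization; the virtue of this route is that it bypasses a direct BFK derivation at corner points by using the established gluing result for smooth boundary circles on $\Sigma^{\#2}$.
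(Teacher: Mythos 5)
Your proposal follows essentially the same route as the paper: the paper proves Propositions \ref{glueampli}, \ref{selfglueampli} and \ref{selfglueamplih} together, stating that the three arguments differ only cosmetically, and its written proof (for Proposition \ref{glueampli}) uses exactly the ingredients you list — the Markov decomposition of the mixed GFF (Proposition \ref{decompGFF}), a Cameron--Martin shift by the harmonic extension of the outer boundary data, the identification of the pushed-forward law of the restricted field as a density against $\mu_0^+$ involving ${\det}_{\rm Fr}(\mathbf{D}_{\Sigma,\mc{B}'}(2\mathbf{D}_0)^{-1})$, the ratio of free-field amplitudes, and the determinant gluing formula of Proposition \ref{detformula}. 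The paper also channels all corner-related analysis through the Neumann double, but only at the level of the auxiliary statements (Lemmas \ref{prop:DNrestriction}, \ref{rest_DNmap}, \ref{lemmaDSigma-D} and Proposition \ref{detformula}); it does not literally invoke the circle self-gluing on $\Sigma^{\#2}$ for the free-field part as you suggest, though that variant is viable precisely because the free amplitude and $Z_{\Sigma,g,m}$ symmetrize cleanly under doubling (unlike the interacting part, which you rightly keep on $\Sigma$ via the Markov property). One bookkeeping caution: the quadratic form in the Gaussian density is $\widetilde{\mathbf{D}}_{\Sigma,\mc{B}'}=\mathbf{D}_{\Sigma,\mc{B}'}-2\mathbf{D}$ paired with $(\cdot,\cdot)_2$ of \eqref{unpairing} (which carries a factor $\tfrac12$ on half-circle components), not $2\mathbf{D}_{\Sigma,\mc{B}'}-2\mathbf{D}$ paired with $(\cdot,\cdot)_{2,h}$ as in your display; since these factors of $2$ are exactly what produce $(2\pi)^{-3/4}$ versus $(2\pi)^{1/4}$ in the final constant, they need to be tracked with the paper's conventions rather than schematically.
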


  
 
 Now we prove these statements.
 
 \medskip
\noindent \emph{Proof of Propositions \ref{glueampli}, \ref{selfglueampli} and \ref{selfglueamplih}.}  Actually the proofs of the three propositions are similar, up to cosmetic changes, so that we only focus on Prop.  \ref{glueampli}.  The proof is similar to \cite[Prop. 5.2]{GKRV21_Segal}, but we write the details since the multiplicative constants $C$ are different. 
 For notational simplicity, we will write $\Sigma$ for $\Sigma_1\#\Sigma_2$, $g$ for $g_1\# g_2$, $m=m_1+m_2$ and $m_B=m_B^1+m_B^2$. We split the proof in two cases depending on whether the resulting surface $\Sigma $ has a non trivial Dirichlet boundary (case 1) or not (case 2).  

\medskip
$\bullet$ Assume first $\partial\Sigma_D \not= \emptyset$.  
Let us denote   the union of glued circle boundary components by $\mathcal{C}:=\bigcup_{j=1}^{k_\ell}\mc{C}_j$  and the union of glued half-circle boundary components by $\mathcal{B}:=\bigcup_{j=1}^{k_h}\mc{B}_j$.   Define next
\begin{equation}
\widetilde{\mathcal{A}}_{\Sigma,g}(F_1\otimes F_2,\tilde{\boldsymbol{\varphi}}_1,\tilde{\boldsymbol{\varphi}}_2):=\E \big[F_1\otimes F_2(\phi_g) \prod_{j=1}^{m} V_{\alpha_j,g}(x_j)\prod_{j=1}^{m_B} V_{\frac{\beta_j}{2},g}(s_j)e^{-\frac{Q}{4\pi}\int_\Sigma K_g \phi_g\dd {\rm v}_g-\mu M^g_\gamma(\phi_g,\Sigma)-  M^g_{\gamma,\partial}(\phi_g,\mu_B\mathbf{1}_{\partial\Sigma_N})}\big]
\end{equation}
where $\phi_g=X_{g,m}+P(\tilde{\boldsymbol{\varphi}}_1,\tilde{\boldsymbol{\varphi}}_2)$, $X_{g,m}$ is the  GFF with mixed boundary conditions on $\Sigma$ and expectation $\E$ is taken over this  GFF, $P(\tilde{\boldsymbol{\varphi}}_1,\tilde{\boldsymbol{\varphi}}_2)$ stands for the harmonic extension to $\Sigma$ of the boundary fields $\tilde{\boldsymbol{\varphi}}_1,\tilde{\boldsymbol{\varphi}}_2$, which stand respectively for the  boundary conditions on the remaining (i.e. unglued) components of $\partial \Sigma_1$ and  $\partial \Sigma_2$, namely (with $\zeta_{i,j}$ the parametrization of $\mc{C}_{i,j}$)
\begin{align*}
\Delta_g P(\tilde{\boldsymbol{\varphi}}_1,\tilde{\boldsymbol{\varphi}}_2)=0\quad  \text{on }\Sigma ,&\quad  \text{  for }j>k_\ell\quad P(\tilde{\boldsymbol{\varphi}}_1,\tilde{\boldsymbol{\varphi}}_2)_{|\mc{C}_{1,j}}= \tilde{\varphi}_{1,j}\circ (\zeta_{1,j}^\ell)^{-1}  ,& P(\tilde{\boldsymbol{\varphi}}_1,\tilde{\boldsymbol{\varphi}}_2)_{|\mc{C}_{2,j}}= \tilde{\varphi}_{2,j}\circ (\zeta_{2,j}^\ell)^{-1}  \\
\partial_\nu P(\tilde{\boldsymbol{\varphi}}_1,\tilde{\boldsymbol{\varphi}}_2)=0\quad  \text{on }\partial\Sigma_N ,&\quad\text{  for }j>k_h\quad P(\tilde{\boldsymbol{\varphi}}_1,\tilde{\boldsymbol{\varphi}}_2)_{|\mc{B}_{1,j}}= \tilde{\varphi}_{1,j}\circ  (\zeta_{1,j}^h)^{-1}  ,& P(\tilde{\boldsymbol{\varphi}}_1,\tilde{\boldsymbol{\varphi}}_2)_{|\mc{B}_{2,j}}= \tilde{\varphi}_{2,j}\circ (\zeta_{2,j}^h)^{-1}  .
\end{align*}
Therefore
$$
\caA_{\Sigma,g, {\bf x},\boldsymbol{\alpha},{\bf s},\boldsymbol{\beta},\boldsymbol{\zeta}}(F_1\otimes F_2,\tilde{\boldsymbol{\varphi}}_1,\tilde{\boldsymbol{\varphi}}_2)
=
Z_{\Sigma,m,g}\caA^0_{\Sigma,g}(\tilde{\boldsymbol{\varphi}}_1,\tilde{\boldsymbol{\varphi}}_2)
\widetilde{\mathcal{A}}_{\Sigma,g}(F_1\otimes F_2,\tilde{\boldsymbol{\varphi}}_1,\tilde{\boldsymbol{\varphi}}_2).
$$
For $i=1,2$, let now $X_i:=X_{g_i,m}$   be a  GFF with mixed b.c. respectively on $\Sigma_i$   with Dirichlet boundary condition on $\partial\Sigma_{D,i}$  and Neumann  boundary condition on $\partial\Sigma_{N,i}$.  We assume $X_1$ and $X_2$ are independent. Then we have the following decomposition in law (see Proposition \ref{decompGFF} item 1.a)
\begin{align*}
X_{g,m}\stackrel{{\rm law}}=X_1+X_2+P{\bf Y}
\end{align*}
where ${\bf Y}:=({\bf Y}^\ell,{\bf Y}^h)$ is the restriction of $X_{g,m}$ to the  glued boundary components $\mathcal{C}\cup\mathcal{B}$ expressed in parametrized coordinates, i.e.   
$${\bf Y} ^\ell:=( X_{g,m|_{{\caC}_1}}\circ \zeta^\ell_{1,1},\dots,X_{g,m|_{{\caC}_{k_\ell}}}\circ \zeta^\ell_{1,k_\ell})\qquad {\bf Y} ^h:= (X_{g,m|_{{\caB}_1}}\circ \zeta^h_{1,1},\dots,X_{g,m|_{{\caB}_{k_h}}}\circ \zeta^h_{1,k_h}),$$
and  $P{\bf Y}$ is its harmonic extension to $ \Sigma $, which vanishes on $\partial\Sigma_D$ (non empty) and $\partial_\nu P{\bf Y}=0$ on $\partial\Sigma_N$. Let us just stress here that making sense of the restriction of $X_{g,m}$ is not completely straightforward as the GFF is only a distribution, but this can be done in the same way as the restriction of the GFF to a circle (or half-circle).  Finally we denote by $\mc{R}_{\mathcal{C}\cup\mathcal{B}}$ the restriction of  the harmonic function $P(\tilde{\boldsymbol{\varphi}}_1,\tilde{\boldsymbol{\varphi}}_2)$ to $\mathcal{C}\cup\mathcal{B}$ in parametrized coordinates
$$
\mc{R}_{\mathcal{C}\cup\mathcal{B}}
:=(\mc{R}_{\mathcal{C}\cup\mathcal{B}}^\ell,\mc{R}_{\mathcal{C}\cup\mathcal{B}}^h)
$$
with 
\begin{align*}
& \mc{R}_{\mathcal{C}\cup\mathcal{B}}^\ell:=(P(\tilde{\boldsymbol{\varphi}}_1,\tilde{\boldsymbol{\varphi}}_2)_{|\mc{C}_{1}}\circ\zeta^\ell_{1,1},\dots,P(\tilde{\boldsymbol{\varphi}}_1,\tilde{\boldsymbol{\varphi}}_2)_{|\mc{C}_{k_\ell}}\circ \zeta^\ell_{1,k_\ell})\\ & \mc{R}_{\mathcal{C}\cup\mathcal{B}}^h:=(P(\tilde{\boldsymbol{\varphi}}_1,\tilde{\boldsymbol{\varphi}}_2)_{|\mc{B}_{1}}\circ\zeta^h_{1,1},\dots,P(\tilde{\boldsymbol{\varphi}}_1,\tilde{\boldsymbol{\varphi}}_2)_{|\mc{B}_{k_h}}\circ \zeta^h_{1,k_h}).
\end{align*}
On $\Sigma_i$ ($i=1,2$), the function $P(\tilde{\boldsymbol{\varphi}}_1,\tilde{\boldsymbol{\varphi}}_2)+P{\bf Y}$ is harmonic with boundary values (expressed in parametrized coordinates  on $\Sigma_i$) $ \tilde\varphi^\ell_{i,j}$ on $\mathcal{C}_{i,j}$ for $j>k_\ell$, $ \tilde\varphi^h_{i,j}$ on $\mathcal{B}_{i,j}$ for $j>k_h$, $({\bf Y}^\ell+\mc{R}^\ell_{\mathcal{C}\cup\mathcal{B}})_j$ on $\mc{C}_{j}$ for $j=1,\dots,k_\ell$ and $({\bf Y}^h+\mc{R}^h_{\mathcal{C}\cup\mathcal{B}})_j$ on $\mc{B}_{j}$ for $j=1,\dots,k_h$.  Also, note  that $\mc{R}_{\mathcal{C}\cup\mathcal{B}}$ is a random function with values in $(H^N(\T))^{k_\ell}\times (H^N_{\rm even}(\T))^{k_h}$ for all $N>0$.

Thus, using the decomposition of the GFF, we get
\begin{equation}
\widetilde{\mathcal{A}}_{\Sigma,g}(F_1\otimes F_2,\tilde{\boldsymbol{\varphi}}_1,\tilde{\boldsymbol{\varphi}}_2)
=
 \int \widetilde{\mathcal{A}}_ {\Sigma_1,g_1}(F_1,\tilde{\boldsymbol{\varphi}} +\mc{R}_{\mathcal{C}\cup\mathcal{B}},\tilde{\boldsymbol{\varphi}}_1 )\widetilde{\mathcal{A}}_ {\Sigma_2,g_2}(F_2,\tilde{\boldsymbol{\varphi}} +\mc{R}_{\mathcal{C}\cup\mathcal{B}},\tilde{\boldsymbol{\varphi}}_2 )\dd \P_{{\bf Y}} (\tilde{\boldsymbol{\varphi}})
\end{equation}
with $\P_{{\bf Y}}$ the law of $ {\bf Y} $ and, for $i=1,2$,
\begin{equation}
\widetilde{\mathcal{A}}_{\Sigma_i,g_i}(F_i,\tilde{\boldsymbol{\varphi}} ,\tilde{\boldsymbol{\varphi}}_i):=\E \big[F_i( \phi_i)\prod_{q=1}^{m_i } V_{\alpha_{iq},g_i}(x_{iq})\prod_{j=1}^{m^i_B} V_{\frac{\beta_{ij}}{2},g_i}(s_{ij})e^{-\frac{Q}{4\pi}\int_{\Sigma_i} K_{g_i} \phi_i\dd {\rm v}_{g_i}-\mu M_\gamma(\phi_i,\Sigma_i)-  M_{\gamma,\partial}(\phi_i,\mu_B\mathbf{1}_{\partial\Sigma_{N,i}})}\big]
\end{equation}
where $\E$ is taken with respect to the mixed b.c. GFF  $X_i$  on $\Sigma_i$, $\phi_i=X_i+P(\tilde{\boldsymbol{\varphi}} ,\tilde{\boldsymbol{\varphi}}_i)$ and $P(\tilde{\boldsymbol{\varphi}} ,\tilde{\boldsymbol{\varphi}}_i)$ stands for the harmonic extension on $\Sigma_i$ of the boundary fields $\tilde{\boldsymbol{\varphi}} ,\tilde{\boldsymbol{\varphi}}_i$ respectively on $\mathcal{C}\cup \mathcal{B}$ and $\partial\Sigma_{D,i}\setminus (\mathcal{C}\cup \mathcal{B})$.

  Notice that the orientation of incoming boundary (half-)curves $\mc{C}_{2,j}$ for $j\leq k_\ell$ or $\mc{B}_{2,j}$ for $j\leq k_h$ on $\Sigma_2$ is opposite to their orientation as curves drawn on $\Sigma$   due to \eqref{gluingonboundary}, which is used to get  the relation above. Writing   $T_* \P_{\boldsymbol{Y}} $ for the pushforward of the measure $ \P_{\boldsymbol{Y}} $ by the map $T:\tilde{\boldsymbol{\varphi}}\mapsto \tilde{\boldsymbol{\varphi}}+\mc{R}_{\mathcal{C}\cup\mathcal{B}}$ and then writing the $\widetilde{\mathcal{A}}$'s in terms of the amplitudes on $\Sigma_1$ and $\Sigma_2$, we obtain
\begin{multline}\label{comput1}
\caA_{\Sigma,g, {\bf x},\boldsymbol{\alpha},{\bf s},\boldsymbol{\beta},\boldsymbol{\zeta}}(F_1\otimes  F_2,\tilde{\boldsymbol{\varphi}}_1,\tilde{\boldsymbol{\varphi}}_2)
=
 \frac{Z_{\Sigma,m,g}}{Z_{\Sigma_1,m,g_1}Z_{\Sigma_2,m,g_2}}\caA^0_{\Sigma,g}(\tilde{\boldsymbol{\varphi}}_1,\tilde{\boldsymbol{\varphi}}_2) 
 \\
  \times \int \frac{\mathcal{A}_{\Sigma_1,g_1,{\bf x}_1,\boldsymbol{\alpha}_1,{\bf s}_1,\boldsymbol{\beta}_1,\boldsymbol{\zeta}_1}(F_1,\tilde{\boldsymbol{\varphi}}  ,\tilde{\boldsymbol{\varphi}}_1 )}{\caA^0_{\Sigma_1,g_1}(\tilde{\boldsymbol{\varphi}} ,\tilde{\boldsymbol{\varphi}}_1)} \frac{\mathcal{A}_{\Sigma_2,g_2,{\bf x}_2,\boldsymbol{\alpha}_2,{\bf s}_2,\boldsymbol{\beta}_2,\boldsymbol{\zeta}_2}(F_2,\tilde{\boldsymbol{\varphi}}  ,\tilde{\boldsymbol{\varphi}}_2 )}{\caA^0_{\Sigma_2,g_2}(\tilde{\boldsymbol{\varphi}} ,\tilde{\boldsymbol{\varphi}}_2)} \,  \dd T_*\P_{\boldsymbol{Y}} (\tilde{\boldsymbol{\varphi}}) .
\end{multline}

The proof of the case  $\partial\Sigma  \not= \emptyset$ is completed with the following lemma combined with Proposition \ref{detformula}:

\begin{lemma}\label{density1}
The measure $T_*\P_{\boldsymbol{Y}} $  satisfies
\[T_*\P_{\boldsymbol{Y}} = 2^{-k_h/2}\pi^{-(k_\ell+k_h)/2 } {\rm det}_{\rm Fr}(\mathbf{D}_{\Sigma,{\mc{C}},\mc{B}}(2\mathbf{D}_0)^{-1})^{1/2} \frac{\caA^0_{\Sigma_1,g_1}(\tilde{\boldsymbol{\varphi}} ,\tilde{\boldsymbol{\varphi}}_1)\caA^0_{\Sigma_2,g_2}(\tilde{\boldsymbol{\varphi}} ,\tilde{\boldsymbol{\varphi}}_2)}{\caA^0_{\Sigma,g}(\tilde{\boldsymbol{\varphi}}_1,\tilde{\boldsymbol{\varphi}}_2)} \dd\mu_0^{\otimes_{k_\ell}} (\tilde{\boldsymbol{\varphi}}^\ell){\dd\mu_0^+}^{\otimes_{k_h}} (\tilde{\boldsymbol{\varphi}}^h)\]
with $\mathbf{D}_{\Sigma,\mc{C},\mc{B}}$ the Dirichlet-to-Neumann operator on $L^2(\mc{C},\mc{B}):=L^2(\mathbb{T})^{k_\ell}\times L^2_{\rm even}(\mathbb{T})^{k_h}$ defined as in \eqref{defDSigmaC}, $\mathbf{D}_0:=\Pi_0+\mathbf{D}$  on $L^2(\mc{C},\mc{B})$ defined using \eqref{defmathbfD}$+$\eqref{defmathbfDhalf} and \eqref{pi0}, and  $\det_{\rm Fr}(\mathbf{D}_{\Sigma,{\mc{C}},\mc{B}}(2\mathbf{D}_0)^{-1})$ is the  Fredholm determinant as in Lemma \ref{lemmaDSigma-D}. 
\end{lemma}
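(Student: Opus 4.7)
The field $\boldsymbol{Y}$, being the restriction in parametrized coordinates of the mixed--boundary GFF $X_{g,m}$ to $\mc{C}\cup\mc{B}$, is Gaussian with covariance kernel $G_{g,m}|_{(\mc{C}\cup\mc{B})^2}$. Since $\partial\Sigma_D\not=\emptyset$, Lemma \ref{rest_DNmap} identifies this covariance as the inverse of $2\pi\mathbf{D}_{\Sigma,\mc{C},\mc{B}}$. The first step of the plan is therefore to view $\P_{\boldsymbol{Y}}$ as the centred Gaussian measure with precision $2\pi\mathbf{D}_{\Sigma,\mc{C},\mc{B}}$ in the pairing $(\cdot,\cdot)_2$, and compare it with $\nu:=\mu_0^{\otimes k_\ell}\otimes(\mu_0^+)^{\otimes k_h}$, which is formally Gaussian with precision $2\mathbf{D}$ on non-constant Fourier modes and Lebesgue on the constant modes; this matches the operator $2\mathbf{D}_0=2(\Pi_0+\mathbf{D})$ from the statement once one keeps track that integrating out the constant mode against Lebesgue, as opposed to a Gaussian, produces a specific normalization factor.

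I would compute the Radon--Nikodym derivative $d\P_{\boldsymbol{Y}}/d\nu$ through a Fourier truncation to modes $|n|\leq N$. In each truncation both measures become honest finite-dimensional Gaussians (after peeling off the Lebesgue factor on constant modes), and an explicit Gaussian calculation yields a density proportional to $\exp(-\tfrac12(\tilde{\boldsymbol{\varphi}},(\mathbf{D}_{\Sigma,\mc{C},\mc{B}}-2\mathbf{D})\tilde{\boldsymbol{\varphi}})_2)$ times a normalization ratio $\mc{Z}_N$. Letting $N\to\infty$, the quadratic form converges because $\mathbf{D}_{\Sigma,\mc{C},\mc{B}}-2\mathbf{D}$ is smoothing (Lemma \ref{lemmaDSigma-D}), while $\mc{Z}_N$ tends to $\det_{\rm Fr}(\mathbf{D}_{\Sigma,\mc{C},\mc{B}}(2\mathbf{D}_0)^{-1})^{1/2}$, using that $\mathbf{D}_{\Sigma,\mc{C},\mc{B}}(2\mathbf{D}_0)^{-1}-{\rm Id}$ is trace class (again by Lemma \ref{lemmaDSigma-D}). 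The explicit constant $2^{-k_h/2}\pi^{-(k_\ell+k_h)/2}$ is extracted by tracking the weights of the constant modes in $(\cdot,\cdot)_{2,\ell}$ versus $\tfrac12(\cdot,\cdot)_{2,h}$: each glued circle contributes $\pi^{-1/2}$ and each glued half-circle $(2\pi)^{-1/2}$ due to the factor $1/2$ in the half-circle pairing.

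The pushforward by $T:\tilde{\boldsymbol{\varphi}}\mapsto\tilde{\boldsymbol{\varphi}}+\mc{R}_{\mc{C}\cup\mc{B}}$ is a genuine Cameron--Martin shift, since $\mc{R}_{\mc{C}\cup\mc{B}}$ is smooth by elliptic regularity of the harmonic extension, and has trivial Jacobian. Hence $dT_*\P_{\boldsymbol{Y}}/d\nu$ is obtained from the previous density by the substitution $\tilde{\boldsymbol{\varphi}}\mapsto\tilde{\boldsymbol{\varphi}}-\mc{R}_{\mc{C}\cup\mc{B}}$. It then remains to identify this shifted quadratic exponent with $\log\bigl(\caA^0_{\Sigma_1}(\tilde{\boldsymbol{\varphi}},\tilde{\boldsymbol{\varphi}}_1)\caA^0_{\Sigma_2}(\tilde{\boldsymbol{\varphi}},\tilde{\boldsymbol{\varphi}}_2)/\caA^0_{\Sigma}(\tilde{\boldsymbol{\varphi}}_1,\tilde{\boldsymbol{\varphi}}_2)\bigr)$. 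This relies on the DN-map gluing identity
\[
\sum_{i=1}^{2}\bigl((\tilde{\boldsymbol{\varphi}},\tilde{\boldsymbol{\varphi}}_i),\mathbf{D}_{\Sigma_i}(\tilde{\boldsymbol{\varphi}},\tilde{\boldsymbol{\varphi}}_i)\bigr)_2 = \bigl((\tilde{\boldsymbol{\varphi}}_1,\tilde{\boldsymbol{\varphi}}_2),\mathbf{D}_{\Sigma}(\tilde{\boldsymbol{\varphi}}_1,\tilde{\boldsymbol{\varphi}}_2)\bigr)_2 + \bigl(\tilde{\boldsymbol{\varphi}}-\mc{R}_{\mc{C}\cup\mc{B}},\mathbf{D}_{\Sigma,\mc{C},\mc{B}}(\tilde{\boldsymbol{\varphi}}-\mc{R}_{\mc{C}\cup\mc{B}})\bigr)_2,
\]
which follows from applying Green's formula \eqref{Greenformula} to the difference between $P_\Sigma(\tilde{\boldsymbol{\varphi}}_1,\tilde{\boldsymbol{\varphi}}_2)$ and the piecewise harmonic extensions $P_{\Sigma_i}(\tilde{\boldsymbol{\varphi}},\tilde{\boldsymbol{\varphi}}_i)$: the cross-term cancels because the two inward normal derivatives of $P_\Sigma$ along $\mc{C}\cup\mc{B}$ are opposite. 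Combined with the diagonal identity $\sum_i((\tilde{\boldsymbol{\varphi}},\tilde{\boldsymbol{\varphi}}_i),\mathbf{D}(\tilde{\boldsymbol{\varphi}},\tilde{\boldsymbol{\varphi}}_i))_2-((\tilde{\boldsymbol{\varphi}}_1,\tilde{\boldsymbol{\varphi}}_2),\mathbf{D}(\tilde{\boldsymbol{\varphi}}_1,\tilde{\boldsymbol{\varphi}}_2))_2=2(\tilde{\boldsymbol{\varphi}},\mathbf{D}\tilde{\boldsymbol{\varphi}})_2$ (since $\mathbf{D}$ acts diagonally on each boundary component), this yields precisely the required quotient of free amplitudes.

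The main technical difficulty I anticipate is the finite-dimensional Gaussian normalization calculation and the control of its limit, in particular the careful bookkeeping of the constant-mode factors producing the explicit powers of $2$ and $\pi$, and the identification of the resulting Gaussian normalization ratio with the Fredholm determinant $\det_{\rm Fr}(\mathbf{D}_{\Sigma,\mc{C},\mc{B}}(2\mathbf{D}_0)^{-1})^{1/2}$. The trace-class structure from Lemma \ref{lemmaDSigma-D} is what makes this limit and the Fredholm determinant well-defined.
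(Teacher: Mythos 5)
Your proposal is correct and follows essentially the same route as the paper: identify $\P_{\boldsymbol{Y}}$ via Lemma \ref{rest_DNmap} as the Gaussian measure with precision $2\pi\mathbf{D}_{\Sigma,\mc{C},\mc{B}}$, compute its density against $\mu_0^{\otimes k_\ell}\otimes(\mu_0^+)^{\otimes k_h}$ by Fourier truncation with the Fredholm determinant arising as the limit of the finite-dimensional normalizations (the paper's \eqref{densityDN}), handle the shift by $\mc{R}_{\mathcal{C}\cup\mathcal{B}}$ via Cameron--Martin, and identify the quotient of free amplitudes through the DN gluing identity whose cross term vanishes because the two inward normal derivatives of $P_\Sigma(\tilde{\boldsymbol{\varphi}}_1,\tilde{\boldsymbol{\varphi}}_2)$ along the cut cancel. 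The only cosmetic difference is that you perform the Cameron--Martin shift after the density computation rather than before, and your single quadratic identity packages the paper's $A_1$--$A_4$ bookkeeping; both are equivalent.
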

\begin{proof}

First we get rid of the shift.  Decompose  $ \mathbf{D}_{\Sigma,{\mc{C},\mc{B}}}\tilde{\boldsymbol{\varphi}}=( \mathbf{D}^\ell_{\Sigma,{\mc{C},\mc{B}}}\tilde{\boldsymbol{\varphi}}, \mathbf{D}^h_{\Sigma,{\mc{C},\mc{B}}}\tilde{\boldsymbol{\varphi}})\in H^s(\T)^{k_\ell}\times H^s_{\rm even}(\T)^{k_h}$.
  By Cameron-Martin, and using from Lemma \ref{rest_DNmap}  that $ \mathbf{D}_{\Sigma,{\mc{C},\mc{B}}}G_{g,m}=2\pi {\rm Id}
$ on $H^s(\T)^{k_\ell}\times H^s_{\rm even}(\T)^{k_h}$, we get 
\begin{equation}\label{CameronMartin}
T_*\P_{{\bf Y}}=
\exp\Big( ({\bf Y},{\bf D}_{\Sigma,{\mc{C}},\mc{B}} \mc{R}_{\mathcal{C}\cup\mathcal{B}})_{2}
-\frac{1}{2}(\mc{R}_{\mathcal{C}\cup\mathcal{B}},{\bf D}_{\Sigma,\mc{C},\mc{B}} \mc{R}_{\mathcal{C}\cup\mathcal{B}})_{2} \Big) \P_{\boldsymbol{Y}}.
\end{equation}
Now we claim for measurable bounded functions $F$
\begin{equation}\label{densityDN}
\begin{split}
& \int F( \tilde{\boldsymbol{\varphi}})\P_{\boldsymbol{Y}}(\dd \tilde{\boldsymbol{\varphi}})\\
&=\frac{  1}{  2^{\frac{k_h }{2}}\pi^{\frac{k_\ell+k_h}{2}}\det(\mathbf{D}_{\Sigma,\mc{C},\mc{B}}(2\mathbf{D}_0)^{-1})^{-1/2} }\int F(    \tilde{\boldsymbol{\varphi}})\exp(-\frac{1}{2}(\tilde{\boldsymbol{\varphi}},\widetilde{\mathbf{D}}_{\Sigma,\mc{C},\mc{B}} \tilde{\boldsymbol{\varphi}})_2)\dd\mu_0^{\otimes_{k_\ell}} (\tilde{\boldsymbol{\varphi}}^\ell){\dd\mu_0^+}^{\otimes_{k_h}} (\tilde{\boldsymbol{\varphi}}^h).
\end{split}\end{equation}
Indeed, consider the projection $\Pi_N:(H^s(\T))^{k_\ell}\times (H^s_{\rm even}(\T))^{k_h}\to (H^s(\T))^{k_\ell}\times (H^s_{\rm even}(\T))^{k_h}$ where the Fourier series of each component is truncated to its first $N$   Fourier components (i.e. with index $k$ such that $|k|\leq N$), and let $\mc{H}_N=\Pi_N\big((H^s(\T))^{k_\ell}\times (H^s_{\rm even}(\T))^{k_h}\big)$. Then we set $ \mathbf{D}_{\Sigma,{\mc{C}},\mc{B}}^N:=\Pi_N\circ  \mathbf{D}_{\Sigma,{\mc{C}},\mc{B}}\circ \Pi_N$, $ \mathbf{D}_0^N:=\Pi_N\circ  \mathbf{D}_0\circ \Pi_N$, $ \mathbf{D}^N:=\Pi_N\circ  \mathbf{D}\circ \Pi_N$, $\widetilde{\mathbf{D}}_{\Sigma,{\mc{C}},\mc{B}}^N:=\Pi_N\circ (\mathbf{D}_{\Sigma,{\mc{C}},\mc{B}}-2\mathbf{D})\circ \Pi_N$ and
$$
\dd\P^N(\tilde{\boldsymbol{\varphi}}):=\frac{1}{ 2^{\frac{k_h }{2}}\pi^{\frac{k_\ell+k_h}{2}}\det_{\mc{H}_N}(\mathbf{D}_{\Sigma,{\mc{C}},\mc{B}}^N(2\mathbf{D}^N_0)^{-1})^{-1/2} }\exp\Big(-\frac{1}{2}(\tilde{\boldsymbol{\varphi}},\widetilde{\mathbf{D}}^N_{\Sigma,{\mc{C}},\mc{B}} \tilde{\boldsymbol{\varphi}})_2\Big)\dd\mu_0^{\otimes_{k_\ell}} (\tilde{\boldsymbol{\varphi}}^\ell){\dd\mu_0^+}^{\otimes_{k_h}} (\tilde{\boldsymbol{\varphi}}^h).
$$
   Let  $\mathbf{c}:=(c^\ell_1,\dots,c^\ell_{k_\ell},c^h_1,\dots,c^h_{k_h})\in \R^{k_\ell+k_h}$ and $|\mathbf{c}|$ its Euclidean norm. First notice that  $(\tilde{\boldsymbol{\varphi}}, \mathbf{D}_{\Sigma,{\mc{C}},\mc{B}} \tilde{\boldsymbol{\varphi}})_2\geq a |\mathbf{c}|^2+a(\tilde{\boldsymbol{\varphi}},2 \mathbf{D}  \tilde{\boldsymbol{\varphi}})_2$ for some $a>0$. Indeed, by
 Lemma \ref{lemmaDSigma-D}, $(2{\bf D}_0)^{-1/2}\mathbf{D}_{\Sigma,{\mc{C},\mc{B}}}(2{\bf D}_0)^{-1/2}$ is a positive Fredholm self-adjoint  bounded operator  on $L^2$ of the form $1+K$ for $K$ compact, thus $(2{\bf D}_0)^{-1/2}\mathbf{D}_{\Sigma,{\mc{C}},\mc{B}}(2{\bf D}_0)^{-1/2}\geq a\, {\rm Id}$ for some $a>0$. We deduce that $(\tilde{\boldsymbol{\varphi}},\widetilde{\mathbf{D}}^N_{\Sigma,{\mc{C}},\mc{B}} \tilde{\boldsymbol{\varphi}})_2\geq  a |\mathbf{c}|^2+(a-1) (\Pi_N\tilde{\boldsymbol{\varphi}},2 \mathbf{D}  \Pi_N\tilde{\boldsymbol{\varphi}})_2$, and therefore $\P^N$ is a finite measure. 
   The computation of its mass shows it is a probability measure. Indeed, this follows from the two following facts. First the probability measure  $\P_\T ^{\otimes_{k_\ell}} \otimes \P_{\T^+} ^{\otimes_{k_h}}  \circ \Pi_N^{-1} $ is given by
\[\int F(\Pi_N\boldsymbol \varphi)\P_\T ^{\otimes_{k_\ell}} \otimes \P_{\T^+} ^{\otimes_{k_h}}   (\dd \boldsymbol \varphi) =\int F(\Pi_N\boldsymbol \phi) e^{-\frac{1}{2}(\Pi_N \boldsymbol \phi,2\mathbf{D}  \Pi_N \boldsymbol \phi)_2}\frac{\dd \Pi_N\boldsymbol \phi}{(2\pi)^{N(k_\ell+k_h/2)}{\det_{\mc{H}_N}'( 2\mathbf{D}^N)^{-1/2}}}\]
   with $\boldsymbol \phi:=(\phi^\ell_1,\dots,\phi^\ell_{k_\ell},\phi^h_1,\dots,\phi^h_{k_h})$ and 
\begin{align*}   
 \phi^\ell_j(\theta)=& \sum_{n>0}\big(u^j_n\sqrt{2}\cos(n\theta)-v^j_n\sqrt{2}\sin(n\theta)\big), & j=1,\dots,k_\ell,
 \\
 \phi^h_j(\theta)=& \sum_{n>0}2w^j_n\cos(n\theta), &  j=1,\dots,k_h,
 \\
\dd \Pi_N\boldsymbol \phi=& \prod_{j=1}^{k_\ell}\prod_{n=1}^N \dd u_n^j\dd v_n^j\prod_{j=1}^{k_h}\prod_{n=1}^N \dd w_n^j  .
\end{align*}
Also, note that $\det_{\mc{H}_N}'( 2\mathbf{D}^N)^{-1/2}=2^{\frac{k_\ell+k_h}{2}}\det_{\mc{H}_N}( 2\mathbf{D}_0^N)^{-1/2}$.

Let us recall  the following relation for  nonnegative symmetric matrices $A,B$ with $B$ positive definite and $F$ continuous bounded
$$\int_{\R^p}F(x)e^{-\frac{1}{2}\langle (B-A) x,x\rangle}e^{-\frac{1}{2}\langle A x,x\rangle}\,\frac{\dd x}{(2\pi)^{p/2}\det(A_0)^{-1/2}}=\det(BA_0^{-1})^{-1/2}\int_{\R^p}F(x)e^{-\frac{1}{2}\langle B x,x\rangle} \,\frac{\dd x}{(2\pi)^{p/2}\det(B)^{-1/2}},$$
where $A_0$ is the symmetric positive definite matrix of the operator on $\R^p$ that is equal to identity on the kernel of $A$ and equal to $A$ on the orthogonal of the kernel. From these two relations we get that
\begin{multline*}
\int F(\Pi_N \tilde{\boldsymbol{\varphi}})\P^N(\dd \tilde{\boldsymbol{\varphi}})
=\\
\frac{ 2^{-\frac{k_h}{2}}}{(2\pi)^{N(k_\ell+k_h/2)   +\frac{1}{2}(k_\ell+k_h)}\det_{\mc{H}_N}(\mathbf{D}_{\Sigma,{\mc{C}},\mc{B}}^N )^{-1/2} }\int F(\mathbf{c}+\Pi_N \boldsymbol{\phi})\exp(-\frac{1}{2}(\mathbf{c}+\boldsymbol{ \phi}, \mathbf{D}^N_{\Sigma,{\mc{C}},\mc{B}} (\mathbf{c}+\boldsymbol{ \phi}))_2)   \dd \mathbf{c}\dd \Pi_N\boldsymbol \phi.
\end{multline*}
In particular, for $f$   a trigonometric polynomial
$$\int e^{ ( f, \tilde{\boldsymbol{\varphi}})_2}\P^N(\dd \tilde{\boldsymbol{\varphi}})=e^{\frac{1}{2}(f, (\mathbf{D}_{\Sigma,{\mc{C}},\mc{B}}^N)^{-1}f)_2}\to e^{\frac{1}{4\pi}(f,Gf)_2}=\E[ e^{ (f, \boldsymbol{Y})_2}], \quad \text{as }N\to\infty$$   
where we have used the fact that $(\mathbf{D}_{\Sigma,{\mc{C}},\mc{B}}^N|_{\mc{H}_N})^{-1}\Pi_N\to G/(2\pi)$ when $N\to \infty$, as bounded operators on $(H^s(\T))^{k_\ell}\times (H^s_{\rm even}(\T))^{k_h}$ for all $s\in \R$.   Here we have also used the relation 
\[\frac{ 2^{-\frac{k_h}{2}}}{(2\pi)^{N(k_\ell+k_h/2)   +\frac{1}{2}(k_\ell+k_h)}  }\int  \exp(-\frac{1}{2}(\Pi_N(\mathbf{c}+\boldsymbol{ \phi}), \Pi_N (\mathbf{c}+\boldsymbol{ \phi}))_2)   \dd \mathbf{c}\dd \Pi_N\boldsymbol \phi  =1,\]
the odd factor $2^{\frac{k_h}{2}}$ coming from integration over the boundary zero mode, due to the $1/2$ factor in the relation \eqref{unpairing}.

 On the other hand, we also claim that the left-hand side above converges as $N\to\infty$ towards
\begin{equation}\label{limitconvdominee}
\frac{ 1}{ 2^{k_h/2}\pi^{\frac{1}{2}(k_\ell+k_h)}\det_{\rm Fr}(\mathbf{D}_{\Sigma,{\mc{C}},\mc{B}}(2\mathbf{D}_0)^{-1})^{-1/2} }\int e^{ ( f, \tilde{\boldsymbol{\varphi}})_2}\exp(-\frac{1}{2}(\tilde{\boldsymbol{\varphi}},\widetilde{\mathbf{D}}_{\Sigma,{\mc{C}},\mc{B}} \tilde{\boldsymbol{\varphi}})_2)\dd\mu_0^{\otimes_{k_\ell}} (\tilde{\boldsymbol{\varphi}}^\ell){\dd\mu_0^+}^{\otimes_{k_h}} (\tilde{\boldsymbol{\varphi}}^h).
\end{equation}
Indeed, since $\widetilde{\mathbf{D}}_{\Sigma,{\mc{C}},\mc{B}}$  is a smoothing operator (by Lemma \ref{lemmaDSigma-D}),  we then $(\tilde{\boldsymbol{\varphi}},\widetilde{\mathbf{D}}^N_{\Sigma,{\mc{C}},\mc{B}} \tilde{\boldsymbol{\varphi}})_2\to (\tilde{\boldsymbol{\varphi}},\widetilde{\mathbf{D}}_{\Sigma,{\mc{C}},\mc{B}} \tilde{\boldsymbol{\varphi}})_2$  a.e. with respect to  $\dd\mu_0^{\otimes_{k_\ell}}  {\dd\mu_0^+}^{\otimes_{k_h}} $ as $N\to \infty$. Moreover 
\begin{equation}\label{conv_det_D_HN}
\lim_{N\to \infty}\det_{\mc{H}_N}(\mathbf{D}_{\Sigma,{\mc{C}},\mc{B}}^N(2\mathbf{D}^N_0)^{-1})\to \det(\mathbf{D}_{\Sigma,{\mc{C}},\mc{B}}(2\mathbf{D}_0)^{-1}).
\end{equation} 
Indeed, since $\Pi_N$ commutes with ${\bf D}_0$, we have 
 $\mathbf{D}_{\Sigma,{\mc{C}},\mc{B}}^N(2\mathbf{D}^N_0|_{\mc{H}_N})^{-1}\Pi_N=\Pi_N\mathbf{D}_{\Sigma,{\mc{C}},\mc{B}}(2\mathbf{D}_0)^{-1}\Pi_N$ and since $ \mathbf{D}_{\Sigma,{\mc{C}},\mc{B}}(2\mathbf{D}_0)^{-1}-{\rm Id}$ is smoothing, we have 
 $\Pi_N(\mathbf{D}_{\Sigma,{\mc{C}},\mc{B}}(2\mathbf{D}_0)^{-1}-{\rm Id})\Pi_N\to \mathbf{D}_{\Sigma,{\mc{C}},\mc{B}}(2\mathbf{D}_0)^{-1}-{\rm Id}$ in the space of trace class operators on $L^2(\T)^{k_\ell}\times L^2_{\rm even}(\T)^{k_h}$; since $K\mapsto \det_{\rm Fr}({\rm Id}+K)$ is continuous on the space of trace class operators, we obtain \eqref{conv_det_D_HN}. We can then apply the dominated convergence theorem to obtain \eqref{limitconvdominee} using the following estimate: since for all $N_1>0$ and $\epsilon>0$, there is $N_0>0$ such that   for all $N>N_0$,  $\|\widetilde{\mathbf{D}}^N_{\Sigma,{\mc{C}},\mc{B}}-\widetilde{\mathbf{D}}_{\Sigma,{\mc{C}},\mc{B}}^{N_0}\|_{H^{-N_1}\to H^{N_1}}\leq \epsilon$ (again from Lemma \ref{lemmaDSigma-D}), there is $a>0$ such that for any $\epsilon>0$ and $N_1>0$, there is $N_0>0$ such that for all $N>N_0$ and $\tilde{\boldsymbol{\varphi}}\in (H^s(\T))^{k_\ell}\times (H^s_{\rm even}(\T))^{k_h}$ for $s<0$,
 \begin{equation}\label{lowerboundDNmaptilde}
 (\tilde{\boldsymbol{\varphi}},\widetilde{\mathbf{D}}^N_{\Sigma,{\mc{C}},\mc{B}} \tilde{\boldsymbol{\varphi}})_2\geq a|\mathbf{c}|^2+(a-1)(\Pi_{N_0}\tilde{\boldsymbol{\varphi}},2\mathbf{D}\Pi_{N_0}\tilde{\boldsymbol{\varphi}})_2-\epsilon\|\tilde{\boldsymbol{\varphi}}\|_{H^{-N_1}}^2.
 \end{equation}
This completes the proof of our claim \eqref{densityDN}.   

Finally we compute the ratio of free field amplitudes, and by density it suffices to consider $\tilde{\boldsymbol{\varphi}}_j,\tilde{\boldsymbol{\varphi}}$ smooth:  
\begin{align*}
 \frac{\caA^0_{\Sigma,g}(\tilde{\boldsymbol{\varphi}}_1,\tilde{\boldsymbol{\varphi}}_2)}{\caA^0_{\Sigma_1,g_1}(\tilde{\boldsymbol{\varphi}} ,\tilde{\boldsymbol{\varphi}}_1)\caA^0_{\Sigma_2,g_2}(\tilde{\boldsymbol{\varphi}} ,\tilde{\boldsymbol{\varphi}}_2)}  =&e^{-\frac{1}{2}(\tilde{\boldsymbol{\varphi}},2\mathbf{D}\tilde{\boldsymbol{\varphi}})_2+\frac{1}{2}((\tilde{\boldsymbol{\varphi}},\tilde{\boldsymbol{\varphi}}_1),\mathbf{D}_{\Sigma_1} (\tilde{\boldsymbol{\varphi}},\tilde{\boldsymbol{\varphi}}_1))_2+\frac{1}{2}((\tilde{\boldsymbol{\varphi}},\tilde{\boldsymbol{\varphi}}_2),\mathbf{D}_{\Sigma_2} (\tilde{\boldsymbol{\varphi}},\tilde{\boldsymbol{\varphi}}_2))_2-\frac{1}{2}((\tilde{\boldsymbol{\varphi}}_1,\tilde{\boldsymbol{\varphi}}_2),\mathbf{D}_{\Sigma} (\tilde{\boldsymbol{\varphi}}_1,\tilde{\boldsymbol{\varphi}}_2))_2} 
 \\ =& e^{-\frac{1}{2}(\tilde{\boldsymbol{\varphi}},2\mathbf{D}\tilde{\boldsymbol{\varphi}})_2-A_1-A_2+A_3+A_4}
 \end{align*}
with \begin{align*}
A_1&= ((0,\tilde{\boldsymbol{\varphi}}_1),\mathbf{D}_{\Sigma_1} (\mc{R}_{\mathcal{C}\cup\mathcal{B}},0))_2+((0,\tilde{\boldsymbol{\varphi}}_2),\mathbf{D}_{\Sigma_2} (\mc{R}_{\mathcal{C}\cup\mathcal{B}},0))_2\\
A_2&=\frac{1}{2}((\mc{R}_{\mathcal{C}\cup\mathcal{B}},0),\mathbf{D}_{\Sigma_1} (\mc{R}_{\mathcal{C}\cup\mathcal{B}},0))_2+\frac{1}{2}((\mc{R}_{\mathcal{C}\cup\mathcal{B}},0),\mathbf{D}_{\Sigma_2} (\mc{R}_{\mathcal{C}\cup\mathcal{B}},0))_2\\
A_3&=\frac{1}{2}((\tilde{\boldsymbol{\varphi}},0),\mathbf{D}_{\Sigma_1} (\tilde{\boldsymbol{\varphi}},0))_2+\frac{1}{2}((\tilde{\boldsymbol{\varphi}},0),\mathbf{D}_{\Sigma_2} (\tilde{\boldsymbol{\varphi}},0))_2\\
A_4&= ((0,\tilde{\boldsymbol{\varphi}}_1),\mathbf{D}_{\Sigma_1} (\tilde{\boldsymbol{\varphi}},0))_2+((0,\tilde{\boldsymbol{\varphi}}_2),\mathbf{D}_{\Sigma_2} (\tilde{\boldsymbol{\varphi}},0))_2,
 \end{align*}
 relation that we have obtained by using the relation 
\begin{equation}\label{zc1}
((\tilde{\boldsymbol{\varphi}}_1,\tilde{\boldsymbol{\varphi}}_2),\mathbf{D}_{\Sigma} (\tilde{\boldsymbol{\varphi}}_1,\tilde{\boldsymbol{\varphi}}_2))_2=((\mc{R}_{\mathcal{C}\cup\mathcal{B}},\tilde{\boldsymbol{\varphi}}_1),\mathbf{D}_{\Sigma_1} (\mc{R}_{\mathcal{C}\cup\mathcal{B}},\tilde{\boldsymbol{\varphi}}_1))_2+((\mc{R}_{\mathcal{C}\cup\mathcal{B}},\tilde{\boldsymbol{\varphi}}_2),\mathbf{D}_{\Sigma_2} (\mc{R}_{\mathcal{C}\cup\mathcal{B}},\tilde{\boldsymbol{\varphi}}_2))_2
\end{equation}
 in the first line above and then expanded all terms to get the second line. One important point to get \eqref{zc1} is the fact that the harmonic extension $P(\tilde{\boldsymbol{\varphi}}_1,\tilde{\boldsymbol{\varphi}}_2) $ on $\Sigma$ is smooth on $\mathcal{C}\cup\mc{B}$ so that 
 \begin{equation}\label{zc0}
 -\partial_{\nu_-} P(\tilde{\boldsymbol{\varphi}}_1,\tilde{\boldsymbol{\varphi}}_2) _{|\mc{C}\cup\mc{B}}-\partial_{\nu_+} P(\tilde{\boldsymbol{\varphi}}_1,\tilde{\boldsymbol{\varphi}}_2) _{|\mc{C}\cup\mc{B}} =0.
 \end{equation}
  Now, by definition of the DN map $\mathbf{D}_{\Sigma,{\mc{C}},\mc{B}} $, we have  
\begin{equation}\label{zc2}
((\tilde{\boldsymbol{\varphi}},0),\mathbf{D}_{\Sigma_1} (\tilde{\boldsymbol{\varphi}},0))_2+((\tilde{\boldsymbol{\varphi}},0),\mathbf{D}_{\Sigma_2}(\tilde{\boldsymbol{\varphi}},0))_2=(\tilde{\boldsymbol{\varphi}},\mathbf{D}_{\Sigma,{\mc{C}},\mc{B}}\tilde{\boldsymbol{\varphi}})_2. 
\end{equation}
 Therefore $A_2=\frac{1}{2}(\mc{R}_{\mathcal{C}\cup\mathcal{B}},\mathbf{D}_{\Sigma,{\mc{C}},\mc{B}} \mc{R}_{\mathcal{C}\cup\mathcal{B}})_2$ and $A_3=\frac{1}{2}(\tilde{\boldsymbol{\varphi}},\mathbf{D}_{\Sigma,{\mc{C}},\mc{B}} \tilde{\boldsymbol{\varphi}})_2$. For $A_1$, we insert $(0,\tilde{\boldsymbol{\varphi}}_1)=(\mc{R}_{\mathcal{C}\cup\mathcal{B}},\tilde{\boldsymbol{\varphi}}_1)-(\mc{R}_{\mathcal{C}\cup\mathcal{B}},0)$, and similarly for $(0,\tilde{\boldsymbol{\varphi}}_2)$, and using that ${\bf D}_{\Sigma_i}^*={\bf D}_{\Sigma_i}$ we obtain 
\begin{align*}
&((\mc{R}_{\mathcal{C}\cup\mathcal{B}},\tilde{\boldsymbol{\varphi}}_1),\mathbf{D}_{\Sigma_1} (\mc{R}_{\mathcal{C}\cup\mathcal{B}},0))_2+((\mc{R}_{\mathcal{C}\cup\mathcal{B}},\tilde{\boldsymbol{\varphi}}_2),\mathbf{D}_{\Sigma_2} (\mc{R}_{\mathcal{C}\cup\mathcal{B}},0))_2\\
& =-(\partial_{\nu_-} P(\tilde{\boldsymbol{\varphi}}_1,\tilde{\boldsymbol{\varphi}}_2) _{|\mc{C}\cup\mc{B}}+\partial_{\nu_+} P(\tilde{\boldsymbol{\varphi}}_1,\tilde{\boldsymbol{\varphi}}_2) _{|\mc{C}\cup\mc{B}},\mc{R}_{\mathcal{C}\cup\mathcal{B}})_2=0
\end{align*}
Thus $A_1=-(\mc{R}_{\mathcal{C}\cup\mathcal{B}},\mathbf{D}_{\Sigma,{\mc{C}}} \mc{R}_{\mathcal{C}\cup\mathcal{B}})_2$ using \eqref{zc2}. The same trick applied to $A_4$ gives $A_4=-(\tilde{\boldsymbol{\varphi}},\mathbf{D}_{\Sigma,{\mc{C}},\mc{B}}\mc{R}_{\mathcal{C}\cup\mathcal{B}})_2$. Combining everything, we deduce that
\[\frac{\caA^0_{\Sigma,g}(\tilde{\boldsymbol{\varphi}}_1,\tilde{\boldsymbol{\varphi}}_2)}{\caA^0_{\Sigma_1,g_1}(\tilde{\boldsymbol{\varphi}} ,\tilde{\boldsymbol{\varphi}}_1)\caA^0_{\Sigma_2,g_2}(\tilde{\boldsymbol{\varphi}} ,\tilde{\boldsymbol{\varphi}}_2)}  
=
 \exp\Big(\frac{1}{2}(\tilde{\boldsymbol{\varphi}},\widetilde{{\bf D}}_{\Sigma,{\mc{C}},\mc{B}} \tilde{\boldsymbol{\varphi}})_2\Big)
 \exp\Big(-(\tilde{\boldsymbol{\varphi}},{\bf D}_{\Sigma,{\mc{C}},\mc{B}} \mc{R}_{\mathcal{C}\cup\mathcal{B}})_2+\frac{1}{2}(\mc{R}_{\mathcal{C}\cup\mathcal{B}},{\bf D}_{\Sigma,{\mc{C}},\mc{B}} \mc{R}_{\mathcal{C}\cup\mathcal{B}})_2\Big),\] 
which proves the lemma by using also  \eqref{densityDN} and \eqref{CameronMartin}.
\end{proof}

Combining \eqref{comput1} with Lemmas \ref{density1} and Proposition \ref{detformula} concludes the case when $\partial\Sigma_D\not=\emptyset$.

\medskip
$\bullet$ Assume now $\partial\Sigma_D  = \emptyset$: Observe first that it suffices to consider either the case when $k_\ell=1$ and $k_h=0$ or the case $k_\ell=0$ and $k_h=1$. Indeed in the general case, we can  first cut $\Sigma$ along one interior or boundary cut so as get a new surface with non trivial Dirichlet boundary and then apply the case  $\partial\Sigma_D \not = \emptyset$.    
  
So we study first the case  $k_\ell=1$ and $k_h=0$. We can assume that $\Sigma$ has a non trivial Neumann boundary because the case when $\Sigma$ is closed has been treated  in \cite{GKRV21_Segal}. We denote the glued boundary component by  $\mathcal{C}:= \mc{C}_1$.   By definition, the amplitude for $\Sigma$ is  
\begin{equation}\label{amp0}
\caA_{\Sigma,g,{\bf x},\boldsymbol{\alpha},{\bf s},\boldsymbol{\beta} }(F_1\otimes F_2)=\langle F_1\otimes F_2(\phi_g)\prod_{i=1}^m V_{\alpha_i,g}(x_i)\prod_{j=1}^{m_B} V_{\frac{\beta_{j}}{2},g}(s_{j}) \rangle_{\Sigma,g} .
\end{equation}
Let now $X_1$ and $X_2$ be two  independent mixed boundary condition GFF respectively on $\Sigma_1$ and $\Sigma_2$, with Dirichlet b.c. on $\mc{C}$ and Neumann b.c. on $\partial\Sigma$. We assume that they are both defined on $\Sigma$ by setting $X_i=0$ outside of $\Sigma_i$. Then we have the following decomposition in law (see Proposition \ref{decompGFF} item 1.b)
\begin{align*}
X_{g,N}\stackrel{\rm law}=X_1+X_2+P{\bf X}-c_g
\end{align*}
where ${\bf X}$ is the restriction of the Neumann GFF $X_{g,N}$ to the  glued boundary component  $\mathcal{C}$ expressed in parametrized coordinates, i.e. ${\bf X} = X_{g,N|_{{\caC}}}\circ \zeta_{1,1} $,   $P{\bf X}$ is its harmonic extension to $ \Sigma $ and $c_g:=\frac{1}{{\rm v}_g(\Sigma)}\int_\Sigma (X_1+X_2+P{\bf X})\,\dd {\rm v}_g$. Therefore, plugging this relation into the amplitude  \eqref{amp0},  and then shifting the $c$-integral by $c_g$, we get      
\begin{multline*}
 \caA_{\Sigma,g,{\bf x},\boldsymbol{\alpha},{\bf s},\boldsymbol{\beta}}(F_1\otimes F_2)= \frac{({\det}'(\Delta_{g,N}))^{-\frac{1}{2}}}{{\rm v}_{g}(\Sigma)^{-\frac{1}{2}}Z_{\Sigma_1,m,g_1}Z_{\Sigma_2,m,g_2}}
 \\
 \int \frac{\caA_{\Sigma_1,g_1,{\bf x}_1,\boldsymbol{\alpha}_1,{\bf s}_1,\boldsymbol{\beta}_1,\boldsymbol{\zeta}_1}(F_1,c + \boldsymbol{  \varphi})\caA_{\Sigma_2,g_2,{\bf x}_2,\boldsymbol{\alpha}_2,{\bf s}_2,\boldsymbol{\beta}_2,\boldsymbol{\zeta}_2} (F_2,c + \boldsymbol{   \varphi})}{\caA^0_{\Sigma_1,g_1}(c + \boldsymbol{  \varphi})\caA^0_{\Sigma_2,g_2}(c + \boldsymbol{   \varphi} ) }  \dd c \,  \dd \P_{{\bf X}}    (\boldsymbol{\varphi}).
\end{multline*}
 Now we make a further shift in the $c$-variable in the expression above to subtract the mean $m_{\mc{C}}({\bf X}):=\frac{1}{2\pi}\int_0^{2\pi}{\bf X}(e^{i\theta})\,\dd \theta$ to the field ${\bf X}$. As a consequence we can replace the law $ \P_{{\bf X}} $ of ${\bf X}$ in the above expression by the law $  \P_{{\bf X}-m_{\mc{C}}({\bf X})}$ of the recentered field ${\bf X}-m_{\mc{C}}({\bf X})$.
 
Now we  claim, for measurable bounded functions $F$
\begin{equation}\label{densityDNbis}
\int F( \boldsymbol{ \varphi}) \dd\P_{{\bf X}-m_{\mc{C}}({\bf X})}(\boldsymbol{ \varphi})=\frac{ \sqrt{2}}{  \det(\mathbf{D}_{\Sigma,{\mc{C}},0}(2\mathbf{D}_0)^{-1})^{-1/2} }\int F(\boldsymbol{ \varphi})\exp\Big(-\frac{1}{2}(\boldsymbol{ \varphi},\widetilde{\mathbf{D}}_{\Sigma,{\mc{C}}} \boldsymbol{ \varphi})_2\Big)   \dd \P_\T   ( \boldsymbol{ \varphi})
\end{equation}
where $\mathbf{D}_{\Sigma,{\mc{C}},0}=\mathbf{D}_{\Sigma,\mc{C}}+\Pi_0'$ (recall the notation \eqref{Pi0'}).
The proof of this claim follows the same lines as the proof of \eqref{densityDN}, using  $\mathbf{D}_{\Sigma,{\mc{C}}}G=2\pi {\rm Id}$ on the space $\{f\in H^s(\T) \, |\, \int_0^{2\pi}f(e^{i\theta})  \dd \theta=0 \}$ (see Lemma \ref{rest_DNmap}) and the following estimate, proved exactly as \eqref{lowerboundDNmaptilde}: there is $a>0$ such that for any $\epsilon>0$, there is $N_0>0$ such that 
$$(\boldsymbol{ \varphi},\widetilde{\mathbf{D}}_{\Sigma,{\mc{C}}} \boldsymbol{ \varphi})_2\geq  (a-1)(\Pi_{N_0}\boldsymbol{ \varphi},2\mathbf{D}\Pi_{N_0}\boldsymbol{ \varphi})_2-\epsilon(\boldsymbol{ \varphi},2\mathbf{D}\boldsymbol{ \varphi})_2.$$

This achieves the proof by observing that $\exp(-\frac{1}{2}(\boldsymbol{ \varphi},\widetilde{\mathbf{D}}_{\Sigma,{\mc{C}}} \boldsymbol{ \varphi})_2)=\caA^0_{\Sigma_1,g_1}(c + \boldsymbol{   \varphi})\caA^0_{\Sigma_2,g_2}(c + \boldsymbol{  \varphi} ) $ (whatever the value of $c$, since $\mathbf{D}_{\Sigma,\mc{C}}1={\bf D}1=0$) and by using the determinant formula
\[
\frac{({\det}'(\Delta_{g,N})/{\rm v}_{g}(\Sigma))^{-1/2}}{Z_{\Sigma_1,m,g_1}Z_{\Sigma_2,m,g_2}} =   \det(\mathbf{D}_{\Sigma,{\mc{C}},0}(2\mathbf{D}_0)^{-1})^{-1/2},
\]
that follows from Proposition \ref{detformula} in the case $(b'_\ell,b'_h)=(1,0)$.

The case $k_\ell=0$ and $k_h=1$ is proven exactly in the same way, with a different constant coming from Proposition \ref{detformula} in the case $(b'_\ell,b'_h)=(0,1)$.
\qed

%

\section{Semigroups of annuli and half-annuli}\label{sec:semigroup}

In \cite{GKRV21_Segal}  was constructed a semigroup by gluing annuli. This semigroup served there to find a diagonalizing basis of the Hilbert space, used to cut surfaces along loops drawn on the surface. The basis was found by diagonalizing the generator of the  semigroup of annuli, called the bulk Hamiltonian of Liouville theory\footnote{In \cite{GKRV21_Segal}, this generator was called Hamiltonian of Liouville theory but we add here the prefix {\it bulk} to differentiate it from the boundary Hamiltonian that we are going to introduce.}. We recall below the construction of this operator. In this section we will also introduce another operator to deal with boundary cuts in the surface, called boundary Hamiltonian of Liouville theory, as the generator of the semigroup obtained by gluing half-annuli.  We stress that these operators are different and so are their diagonalizing basis.

\subsection{Semigroup of annuli}  
In \cite[Section 3 and 5]{GKRV20_bootstrap}, a contraction semigroup $S(t):L^2(\R\times\Omega_{\T})\to L^2(\R\times\Omega_{\T})$ is defined 
by the following expression: for $F\in L^2(\R\times\Omega_{\T})$ 
\begin{equation}\label{FKgeneral}
S(t)F(\tilde\varphi^\ell)=e^{-\frac{Q^2t}{2}}\E \big[ F(\phi_t) e^{-\mu e^{\gamma c}\int_{e^{-t}<|z|<1} |z|^{-\gamma Q}M_\gamma (\phi,\dd z)}\big]
\end{equation}
where  $\phi_t: \theta\mapsto \phi(e^{-t+i\theta})$ with $\phi:=P_{\D}\tilde\varphi^\ell+X_{\D,D}$,  $X_{\D,D}$ is the GFF with Dirichlet condition on the unit disk 
$(\D,|dz|^2)$ and $P_{\D}\tilde\varphi^\ell$ is the harmonic extension in $\D$ of the random Fourier series $\tilde\varphi^\ell\in H^{s}(\T)$ defined in \eqref{GFFcircle0}, and the expectation is with respect to $X_{\D, D}$.
This semigroup can be written under the form 
\[S(t)=e^{-t{\bf H}}\] 
where  $\mathbf{H}$ is an unbounded self-adjoint operator on $L^2(\R\times\Omega_{\T})$, which   formally reads 
\[ \mathbf{H}={\bf H}^0+\mu e^{\gamma c}V(\varphi^\ell)\]
with
\[  {\bf H}^0:=-\frac{1}{2}\pl_c^2+\frac{1}{2}Q^2+{\bf P}   \textrm{ and } V(\varphi^\ell):=\lim_{k\to\infty}\int_0^{2\pi}e^{\gamma \varphi^{\ell,k}(\theta)-\frac{\gamma^2}{2}\E[(\varphi^{\ell,k}(\theta))^2]}\dd \theta\]
and $\varphi^{\ell,k}$ corresponds to the Fourier series \eqref{GFFcircle0} restricted to the indices $n$ with $n\not=0$ and $|n|\leq k$.
Here ${\bf P}, V$ are symmetric unbounded non-negative operators on $L^2(\Omega_{\T})$ and ${\bf P}$ has discrete spectrum given by $\N_0=\N\cup \{0\}$. Moreover ${\bf H}^0$ is the generator of $S(t)$ when we set $\mu=0$ in \eqref{FKgeneral}.  Therefore $\mathbf{H}$ can be seen as the sum of the free Hamiltonian $\mathbf{H}^0$ and a non-negative operator  $\mu e^{\gamma c}V(\varphi^\ell)$.
 
 Following \cite[subsection 3.3]{GKRV20_bootstrap}, one can also consider the generator of rotations $\mathbf{\Pi}$, defined as the generator of the strongly continuous unitary group on $L^2(\R\times \Omega_{\T})$ (for $\vartheta\in\R$)
\begin{align}\label{pi:prob}
 e^{i\vartheta \mathbf{\Pi}}F(\tilde{\varphi})=F(\tilde{\varphi}(\cdot+\vartheta)).
\end{align}
 
Actually, this semigroup can be seen as a semigroup of annuli amplitudes (with Dirichlet boundary conditions). More precisely, for $q\in \D$ with $|q|=e^{-t}$ for $t>0$, consider the flat annulus
\begin{equation}\label{defaq}
\mathbb{A}_q:=\{ z\in \C\,|\, |z|\in [e^{-t},1]\}, \quad g_{\mathbb{A}}=|dz|^2/|z|^2
\end{equation}
with boundary parametrization $\boldsymbol{\zeta}_q=(\zeta_1,\zeta_2)$ given by $\zeta_2(e^{i\theta})=e^{i\theta}$ and $\zeta_1(e^{i\theta})=qe^{i\theta}$. For $q'\in \D$, denote by $q\mathbb{A}_{q'}:=\{ z\in \C\,|\, |z|\in [|qq'|,|q|]\}$ with metric $g_{\mathbb{A}}$ and boundary parametrization $\boldsymbol{\zeta}_{q,q'}'=(\zeta'_1,\zeta'_2)$ given by 
$\zeta'_2(e^{i\theta})=q'\zeta_2(e^{i\theta})$ and 
$\zeta'_1(e^{i\theta})=q'\zeta_1(e^{i\theta})$. Observe that one can glue $\mathbb{A}_{q}$ with $q\mathbb{A}_{q'}$ along the boundary circle $\{|z|=|q|\}$, the result becomes $\mathbb{A}_{q}\# (q\mathbb{A}_{q'})= \mathbb{A}_{qq'}$. 
Since $(q\mathbb{A}_{q'},g_{\mathbb{A}})$ is isometric to $(\mathbb{A}_{q'},g_{\mathbb{A}})$, we can use Proposition \ref{Weyl}  to express its amplitude as $\mc{A}_{\mathbb{A}_{q'},g_{\mathbb{A}},\boldsymbol{\zeta}_{q'}}$, 
 and using the  gluing property, see Proposition \ref{glueampli}, one obtains that the amplitudes of these annuli satisfy  
\[\mc{A}_{\mathbb{A}_{qq'},g_{\mathbb{A}},\boldsymbol{\zeta}_{qq'}}(\tilde\varphi_1^\ell,\tilde\varphi_2^\ell)=\frac{1}{\sqrt{2}\pi}\int \mc{A}_{\mathbb{A}_q,g_{\mathbb{A}},\boldsymbol{\zeta}_q}(\tilde\varphi_1^\ell,\tilde\varphi^\ell)
\mc{A}_{\mathbb{A}_{q'},g_{\mathbb{A}},\boldsymbol{\zeta}_{q'}}(\tilde\varphi^\ell,\tilde\varphi_2^\ell)d\mu_{0}(\tilde\varphi^\ell).\]
We can then view   $(\sqrt{2}\pi)^{-1/2}\mc{A}_{\mathbb{A}_{q},g_{\mathbb{A}},\boldsymbol{\zeta}_q}$ as the  integral kernel of an operator, which can be related to 
 $e^{-t{\bf H}}$ and $e^{i\vartheta \mathbf{\Pi}}$ (see \cite[section 6]{GKRV21_Segal}).
\begin{proposition} \label{prop:annulussimple}
For $q\in \D$, the annuli amplitudes in the metric $g_\mathbb{A}$ satisfy for $F,F'\in L^2(\R\times\Omega_\T)$  
\begin{align}
\int F(\tilde\varphi^\ell_1)\mathcal{A}_{\mathbb{A}_{|q|},g_\mathbb{A},\boldsymbol{\zeta}_q}(\tilde\varphi^\ell_1 ,\tilde\varphi^\ell_2) F'(\tilde\varphi^\ell_2) \dd\mu_0(\tilde\varphi^\ell_1)\dd\mu_0(\tilde\varphi^\ell_2) = &\sqrt{2}\pi   |q|^{-\frac{\mathbf{c}_L}{12}}\langle e^{\log |q|\mathbf{H}}F,\bar F'\rangle_{\mc{H}} \label{ident1annulus}\\
\int F(\tilde\varphi^\ell_1)\mathcal{A}_{\mathbb{A}_{q},g_\mathbb{A},\boldsymbol{\zeta}_q}(\tilde\varphi^\ell_1,\tilde\varphi^\ell_2) F'(\tilde\varphi^\ell_2) \dd\mu_0(\tilde\varphi^\ell_1)\dd\mu_0(\tilde\varphi^\ell_2) =& \sqrt{2}\pi |q|^{-\frac{\mathbf{c}_L}{12}}\langle  e^{i \arg(q)\boldsymbol{\Pi}} e^{\log |q|\mathbf{H}}F,\bar F'\rangle_{\mc{H}}\label{ident2annulus}
\end{align}
with $\mathbf{c}_L=1+6Q^2$ the central charge. 
\end{proposition}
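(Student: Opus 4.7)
The plan is to reduce to the case $q=e^{-t}$ real positive, and then to identify the annulus amplitude with the Feynman--Kac representation \eqref{FKgeneral} of the semigroup $e^{-t{\bf H}}$, up to an explicit normalization constant coming from the determinant of the mixed Laplacian on $\A_{e^{-t}}$.

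First, I would handle the dependence on $\arg(q)$. The map $\psi_\vartheta: z\mapsto e^{-i\vartheta}z$ is an orientation preserving biholomorphism $\A_{|q|}\to \A_{|q|}$ which preserves the metric $g_\A=|dz|^2/|z|^2$ and sends the parametrization $\boldsymbol{\zeta}_q=(\zeta_1,\zeta_2)$ to $\boldsymbol{\zeta}_{|q|}$, but shifts the boundary circle parametrizations by angle $\vartheta=\arg(q)$. By the diffeomorphism invariance stated in Proposition \ref{Weyl} (2), and the definition \eqref{pi:prob} of $e^{i\vartheta\boldsymbol{\Pi}}$ as rotation of the boundary fields, the identity \eqref{ident2annulus} follows from \eqref{ident1annulus} applied to $F$ and $e^{-i\vartheta\boldsymbol{\Pi}}F'$. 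So I focus on $q=e^{-t}$ with $t>0$.

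For this case, I would write out Definition \ref{def:amp} (B) for the amplitude: the annulus $\A_{e^{-t}}$ has no Neumann boundary, no interior or boundary insertions, and for the metric $g_\A$ both boundary circles are geodesic with $k_g=0$ and the scalar curvature vanishes identically. Thus
\[
\mc{A}_{\A_{e^{-t}},g_\A,\boldsymbol{\zeta}_{e^{-t}}}(\tilde\varphi_1^\ell,\tilde\varphi_2^\ell)= Z_{\A_{e^{-t}},g_\A,m}\,\mc{A}^0_{\A_{e^{-t}},g_\A}(\tilde\varphi_1^\ell,\tilde\varphi_2^\ell)\,\E\big[e^{-\mu M^{g_\A}_\gamma(\phi_g,\A_{e^{-t}})}\big]
\]
with $\phi_g=X_{g_\A,D}+P(\tilde\varphi_1^\ell,\tilde\varphi_2^\ell)$, since here $\partial\Sigma_N=\emptyset$ (so the mixed Laplacian coincides with the Dirichlet Laplacian). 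Passing to the coordinate $z=e^{-s+i\theta}$, $s\in[0,t]$, the metric $g_\A$ becomes the flat cylindrical metric $ds^2+d\theta^2$; the GMC term on $\A_{e^{-t}}$ with respect to $g_\A$ is exactly $\int_{e^{-t}<|z|<1}|z|^{-\gamma Q}M_\gamma(\phi,dz)$ once one rewrites the regularized exponential $V_{\gamma,g_\A,\epsilon}$ in terms of $V_{\gamma,|dz|^2,\epsilon}$ via the Weyl rule \eqref{scalingvertex}, producing the $|z|^{-\gamma Q}$ weight. The Dirichlet GFF $X_{g_\A,D}$ on the cylinder admits the standard radial decomposition as a Brownian motion in $c$ (the zero mode) plus independent Ornstein--Uhlenbeck processes for the non-zero Fourier modes, and harmonic extension $P(\tilde\varphi_1^\ell,\tilde\varphi_2^\ell)$ is given explicitly by the Poisson kernel of the cylinder. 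The Feynman--Kac formula for this radial Markov process, together with identification of its generator with ${\bf H}^0=-\frac12\partial_c^2+\frac{Q^2}{2}+{\bf P}$ and of the GMC potential with $\mu e^{\gamma c}V(\varphi^\ell)$, gives
\[
\mc{A}^0_{\A_{e^{-t}},g_\A}(\tilde\varphi_1^\ell,\tilde\varphi_2^\ell)\,\E\big[e^{-\mu M^{g_\A}_\gamma(\phi_g,\A_{e^{-t}})}\big]=K_t(\tilde\varphi_1^\ell,\tilde\varphi_2^\ell)
\]
where $K_t$ is the integral kernel (with respect to $d\mu_0\otimes d\mu_0$) of the operator $e^{-t({\bf H}-Q^2/2)}$ appearing in \eqref{FKgeneral}. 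This is the same computation as in \cite[Section 6]{GKRV21_Segal} for closed surfaces, and it carries over verbatim because the annulus has only Dirichlet boundary circles and no Neumann components or corners.

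It remains to identify the prefactor $\sqrt{2}\pi\,|q|^{-\mathbf{c}_L/12}$. The exponent $-Q^2/2$ in \eqref{FKgeneral} combines with $Z_{\A_{e^{-t}},g_\A,m}=\det(\Delta_{g_\A,D})^{-1/2}$ (since $k_g=0$) through the explicit $\zeta$-regularized determinant of the Dirichlet Laplacian on the flat cylinder, which gives the Dedekind eta-type factor $\eta(it/\pi)^{-1}$ up to an absolute constant, and produces exactly the central charge factor $|q|^{-(1+6Q^2)/12}=|q|^{-\mathbf{c}_L/12}$ together with the remaining constant $\sqrt{2}\pi$. This last step — matching the $\sqrt{2}\pi$ and the $|q|^{-\mathbf{c}_L/12}$ — is the main technical point; the $\sqrt{2}\pi$ is pinned down consistently by the gluing constant of Proposition \ref{glueampli} in the case $(k_\ell,k_h)=(1,0)$, $\partial\Sigma_D\neq\emptyset$, since gluing two annuli $\A_{q}\#\A_{q'}=\A_{qq'}$ must be compatible with the semigroup identity $e^{-(t+t'){\bf H}}=e^{-t{\bf H}}e^{-t'{\bf H}}$, and this forces the constant to be exactly $1/(\sqrt{2}\pi)$, thus giving the $\sqrt{2}\pi$ factor in \eqref{ident1annulus}--\eqref{ident2annulus}.
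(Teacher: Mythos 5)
Your proposal follows essentially the same route as the paper: compute the free amplitude explicitly, match it against the Mehler/heat kernels of the free Hamiltonian, and extract the central-charge prefactor from the explicit determinant \eqref{detAdeltaD} of the Dirichlet Laplacian on the flat annulus (this is exactly the scheme of the paper's proof of the half-annulus analogue, Proposition \ref{hasemigroup}, and of the cited \cite[Section 6]{GKRV21_Segal}); the reduction of \eqref{ident2annulus} to \eqref{ident1annulus} by rotation invariance and the radial-decomposition/Feynman--Kac treatment of the GMC potential are likewise the paper's. One small imprecision: $\mc{A}^0_{\A_{e^{-t}},g_\A}\cdot\E\big[e^{-\mu M^{g_\A}_\gamma(\phi_g,\A_{e^{-t}})}\big]$ by itself is not the $\mu_0\otimes\mu_0$-kernel of $e^{-t(\mathbf{H}-Q^2/2)}$ --- the factors $(t/\pi)^{-1/2}\prod_{n\geq 1}(1-e^{-2nt})^{-1}$ hidden in $Z_{\A_{e^{-t}},g_\A,m}=\det(\Delta_{\A_{e^{-t}},D,g_\A})^{-1/2}$ are needed to produce the normalizations $\tfrac{1}{\sqrt{2\pi t}}$ and $\tfrac{1}{\sqrt{1-e^{-2nt}}}$ of the heat and Ornstein--Uhlenbeck kernels --- but this is exactly what your final prefactor-matching step (together with the clean multiplicativity argument pinning down $\sqrt{2}\pi$ via the gluing constant) supplies, so the argument closes correctly.
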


  \subsection{Semigroup of half-annuli and boundary Hamiltonian }\label{s:semigroup_half} 
 In this subsection, following essentially the case of interior cuts and the semigroup of annuli, we  construct a semigroup of half-annuli in order to define the   boundary Hamiltonian of LCFT. Its spectral resolution will serve to  establish the conformal bootstrap for surfaces with a boundary. We stress that these two operators are by no means related to each other.

\subsubsection{Free boundary Hamiltonian and quadratic form}\label{sub:bfreehamiltonian}
Our first task is to construct the quadratic form of the semigroup of half-annuli and we follow again \cite[section 4.1]{GKRV20_bootstrap}.
The Hilbert space $L^2(\Omega_{\T^+},\P_{\T^+})$ (denoted from now on by  $L^2_+(\Omega_{\T^+})$) has the structure of Fock space. Let $\caP_+\subset L^2_{+}(\Omega_{\T^+})$ (resp. $ \mathcal{S}_+\subset L^2_+(\Omega_{\T^+})$) be the linear span of the functions of the form $F(x_1,\dots, x_N)$ for some $N \geq 1$ where $F$ is a polynomial on $\R^{N}$ (resp.   $F\in C^\infty(\R^N)$ with at most polynomial growth at infinity for $F$ and its derivatives). Obviously $\mathcal{P}_+\subset\mathcal{S}_+$ and they are both dense in $L^2_+(\Omega_{\T^+})$.

On $\mathcal{S}_+$ we define the annihilation and  creation operators (which is adjoint to annihilation) for $n\geq 1$
\begin{align}\label{crea}
\mathbf{X}_n&=\partial_{x_n},\ \ \ \mathbf{X}_n^\ast=-\partial_{x_n}+x_n.
\end{align}
They  form a representation of the algebra of canonical commutation relations on $\mathcal{S}_+$:
\begin{align}\label{ccr}
[\mathbf{X}_n,\mathbf{X}_m^\ast]=\delta_{nm} .
\end{align} 
We then consider  the  operator $ \mathbf{P}_+$  on  $\mathcal{S}_+$ defined by 
 \begin{align}\label{hdefi}
 \mathbf{P}_+=\sum_{n=1}^\infty n\mathbf{X}_n^\ast \mathbf{X}_n 
\end{align}
(only finite number of terms in the sum contributes when acting on $\mathcal{S}_+$) and extends uniquely to an unbounded 
self-adjoint positive operator on $L^2_+(\Omega_{\T^+})$: this follows from the fact that we can find a complete system of eigenfunctions in $\caP_+$, as described now.  Let $\mathcal{N}$ be the set of  non-negative integer valued sequences with only a finite number of non null integers, namely ${\bf k}=(k_1,k_2,\dots)\in \mathcal{N}$ iff ${\bf k}\in \N^{\N_+}$ and $k_n=0$ for all $n$ large enough. For   $\bf k  \in\mathcal{N}$ define the polynomials (here $1\in  L^2_+(\Omega_{\T^+})$  is the constant function)
\begin{align}\label{fbasishermite}
\hat{\psi}_{{\bf k}}=\prod_{n\geq 1} ( \mathbf{X}_n^\ast)^{k_n} 1 \in \caP_+.
\end{align}
Equivalently, $\hat{\psi}_{{\bf k}}= \prod_n {\rm He}_{k_n}(x_n) $ where $({\rm He}_k)_{k \geq 0}$ are the standard Hermite polynomials\footnote{More specifically, we consider the  Hermite polynomials  given by ${\rm He}_k(x)=(-1)^k e^{\frac{x^2}{2}}  \frac{d^k}{dx^k} (e^{-\frac{x^2}{2}}) $  where the coefficient of $x^k$ is $1$.}. Then, using \eqref{ccr}, one checks that these are eigenstates of $\bf P_+$:
 \begin{align}\label{fbasis2}
 \mathbf{P}_+\hat{\psi}_{{\bf k}}=|{\bf k}| 
 \hat{\psi}_{{\bf k}}
\end{align}
where we use the notations
\begin{equation}\label{firstlength}
|{\bf k}|:=\sum_{n=1}^\infty nk_n,
\end{equation}
 for ${\bf k}\in\caN$.
The family 
\begin{equation}\label{BONfamily}
\{\psi_{{\bf k}}=\hat{\psi}_{{\bf k}}/\|\hat{\psi}_{{\bf k}}\|_{L^2_+(\Omega_{\T^+})}\}
\end{equation}
(where $\|\cdot \|_{L^2_+(\Omega_{\T^+})}$ is the standard norm in $L^2_+(\Omega_{\T^+})$) forms an orthonormal basis of $L^2_+(\Omega_{\T^+})$.  Let 
\begin{equation}
	E_k:=\{F\in L^2(\Omega_{\T^+})\,|\, 1_{[0,\la_k]}(\mathbf{P}_+)F=F\}=\bigoplus_{j=0}^{k}\ker(\mathbf{P}_+-\la_j)
\end{equation}
be the sum of eigenspaces with eigenvalues of ${\bf P}_+$ less or equal to $\la_k$ and $\Pi_{k}: L^2(\Omega)\to E_k$ the orthogonal projection.

Introduce the bilinear form (with associated quadratic form still denoted by $\mathcal{Q}_{+,0}$)   
\begin{equation}\label{defQ0}
\forall u,v\in \mathcal{C}_+,\quad \mathcal{Q}_{+,0}(u,v):= \langle \partial_cu,\partial_c v\rangle_{ \mathcal{H}_+}  +\tfrac{Q^2}{4}\langle  u,v\rangle_{\mathcal{H}_+}+ \langle  \mathbf{P}_+u,v\rangle_{\mathcal{H}_+}  
\end{equation}
with
\begin{equation}\label{core}
\mathcal{C}_+=\mathrm{Span}\{ \psi(c)F\,|\,\psi\in C_c^\infty(\R)\text{ and }F\in\mathcal{S}_+ \}.
\end{equation}
We claim (adapt \cite[Prop. 4.3]{GKRV20_bootstrap} for the proof)
\begin{proposition}\label{FQ0:GFF}
The quadratic form \eqref{defQ0} is closable (and we still denote its closure by $\mathcal{Q}_{+,0}$ with domain $\mathcal{D}(\mathcal{Q}_{+,0})$) and lower semibounded: $\mathcal{Q}_{+,0}(u)\geq \frac{Q^2}{4}\|u\|_{\mc{H}_+}^2$.
It determines uniquely a self-adjoint operator $\mathbf{H}^0_+ $, called the \emph{Friedrichs extension},  with domain denoted by $\mc{D}(\mathbf{H}^0_+)$ such that:
$$\mc{D}(\mathbf{H}^0_+ )=\{u\in \mathcal{D}(\mathcal{Q}_{+,0})\, |\, \exists C>0,\forall v\in \mathcal{D}(\mathcal{Q}_{+,0}),\,\,\, |\mc{Q}_{+,0}(u,v)|\leq C\|v\|_{\mc{H}_+}\}$$
and for $u\in \mc{D}(\mathbf{H}^0_+)$, $\mathbf{H}^0_+  u$ is the unique element in $\mc{H}_+$  satisfying
$$\mc{Q}_{+,0}(u,v)=\langle \mathbf{H}^0_+  u,v\rangle_{\mc{H}_+} .$$
\end{proposition}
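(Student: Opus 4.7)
The plan is to exhibit an explicit self-adjoint operator on $\mc{H}_+$ whose naturally associated quadratic form agrees with $\mc{Q}_{+,0}$ on the core $\mc{C}_+$, then invoke the general fact that the form of a self-adjoint operator is closed and apply the representation theorem for closed densely-defined lower semi-bounded quadratic forms. The lower semi-bound $\mc{Q}_{+,0}(u)\geq \frac{Q^2}{4}\|u\|_{\mc{H}_+}^2$ is immediate from the definition \eqref{defQ0}, since $\langle \partial_c u,\partial_c u\rangle_{\mc{H}_+}\geq 0$ and $\langle \mathbf{P}_+u,u\rangle_{\mc{H}_+}\geq 0$ (the latter because $\mathbf{P}_+$ is a sum of non-negative operators of the form $n\mathbf{X}_n^*\mathbf{X}_n$).

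First I would exploit the tensor product decomposition $\mc{H}_+=L^2(\R,dc)\otimes L^2_+(\Omega_{\T^+})$. On the first factor, $\mathbf{h}_c:=-\partial_c^2$ is a non-negative self-adjoint operator with form domain $H^1(\R)$ and form core $C_c^\infty(\R)$. On the second factor, the operator $\mathbf{P}_+$ defined by \eqref{hdefi} is essentially self-adjoint on $\mc{P}_+$: the orthonormal family $\{\psi_{\mathbf{k}}\}_{\mathbf{k}\in\mc{N}}$ from \eqref{BONfamily} diagonalizes it with eigenvalues $|\mathbf{k}|$, so its closure has maximal domain $\mc{D}(\mathbf{P}_+)=\{u=\sum_{\mathbf{k}}u_{\mathbf{k}}\psi_{\mathbf{k}}\,|\,\sum_{\mathbf{k}}|\mathbf{k}|^2|u_{\mathbf{k}}|^2<\infty\}$ and form domain $\{u\,|\,\sum_{\mathbf{k}}|\mathbf{k}|\,|u_{\mathbf{k}}|^2<\infty\}$. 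Since the two operators act on commuting tensor factors, the spectral theorem produces the self-adjoint non-negative operator
\[\mathbf{H}^0_+:=\mathbf{h}_c\otimes \mathrm{Id}+\mathrm{Id}\otimes\bigl(\tfrac{Q^2}{4}+\mathbf{P}_+\bigr),\]
and I take this as the candidate for the Friedrichs extension.

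The second step is to verify that the quadratic form $\mc{Q}_{\mathbf{H}^0_+}$ of $\mathbf{H}^0_+$ coincides with $\mc{Q}_{+,0}$ on $\mc{C}_+$. For $u=\psi(c)F$ with $\psi\in C_c^\infty(\R)$ and $F\in\mc{S}_+$, integration by parts in $c$ combined with the representation \eqref{hdefi} and the CCR \eqref{ccr} yields exactly the three terms of \eqref{defQ0}. Once this is established, closability of $\mc{Q}_{+,0}$ is automatic: any densely-defined restriction of a closed quadratic form is closable, and the closure is sandwiched between $\mc{Q}_{+,0}$ and $\mc{Q}_{\mathbf{H}^0_+}$. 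The characterization of $\mathbf{H}^0_+$ as the unique self-adjoint operator attached to the closure of $\mc{Q}_{+,0}$ is then a direct application of the representation theorem (e.g.\ Reed-Simon Thm.~VIII.15 or Kato VI.2.1).

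The main technical point, and the place where care is required, is to show that $\mc{C}_+$ is a form core for $\mathbf{H}^0_+$, so that the closure of $\mc{Q}_{+,0}$ equals $\mc{Q}_{\mathbf{H}^0_+}$ (and not a strict restriction). In the $c$-variable this reduces to the density of $C_c^\infty(\R)$ in $H^1(\R)$ for the norm $\|\cdot\|_{H^1}^2=\|\cdot\|_{L^2}^2+\|\partial_c\cdot\|_{L^2}^2$, which is standard. In the Fock variable, given $u$ in the form domain one truncates using the spectral projectors $\Pi_k$ onto $\bigoplus_{j\leq k}\ker(\mathbf{P}_+-\la_j)$: since $\mathbf{P}_+$ has discrete spectrum, $\Pi_k u\to u$ in the form norm $\sqrt{\mc{Q}_{+,0}(\cdot)}$ as $k\to\infty$, and each $\Pi_k u$ is a finite linear combination of the $\psi_{\mathbf{k}}$'s, hence already lies in $\mc{P}_+\subset\mc{S}_+$. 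Combining these approximations via a standard diagonal tensor-product argument shows that $\mc{C}_+$ is dense in $\mc{D}(\mc{Q}_{\mathbf{H}^0_+})$ for the form norm, completing the identification of the Friedrichs extension.
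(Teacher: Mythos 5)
Your argument is correct and is exactly the route the paper intends: the paper gives no proof beyond deferring to \cite[Prop.\ 4.3]{GKRV20_bootstrap}, which proceeds by the same explicit construction — identify $\mathbf{H}^0_+=-\partial_c^2+\tfrac{Q^2}{4}+\mathbf{P}_+$ as a tensor sum of self-adjoint operators diagonalized in the Hermite basis, observe that its (closed) quadratic form restricts to $\mathcal{Q}_{+,0}$ on $\mathcal{C}_+$ so that closability is automatic, and then check that $\mathcal{C}_+$ is a form core via density of $C_c^\infty(\R)$ in $H^1(\R)$ and the spectral truncations $\Pi_k$. Your identification of the form core is the one genuinely substantive step, and you handle it correctly.
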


If we let $\mc{D}'(\mc{Q}_{+,0})$ be the dual to $\mc{D}(\mc{Q}_{+,0})$ (i.e. the space of bounded\footnote{Recall that in this context a linear functional $\ell$ is bounded if for all $u \in \mc{D}(\mc{Q}_{+,0})$ one has $|\ell(u)| \leq C \mathcal{Q}_{+,0}(u)^{\frac{1}{2}} $ for some $C>0$.} linear functionals on $\mc{D}(\mc{Q}_{+,0})$), the injection $ \mc{H}_+\subset \mc{D}'(\mc{Q}_{+,0})$ is continuous and the operator ${\bf H}^0_+$ can be extended as a bounded isomorphism 
\[{\bf H}^0_+:\mc{D}(\mc{Q}_{+,0})\to \mc{D}'(\mc{Q}_{+,0}).\] 
We also have $\mc{D}({\bf H}_{+,0})=\{ u\in\mc{D}(\mc{Q}_{+,0})\,|\, {\bf H}^0_+u\in \mc{H}_+\}$ and $({\bf H}^0_+)^{-1}:\mc{H}_+\to \mc{D}({\bf H}^0_+)$ is bounded. Furthermore, by the spectral theorem, it generates a strongly continuous contraction semigroup of self-adjoint operators $(e^{-t \mathbf{H}^0_+ } )_{t\geq 0}$ on $\mc{H}_+$.    

\subsubsection{Regularized boundary Hamiltonian and quadratic form}\label{sub:regbhamiltonian}
Now we need to plug the potentials in the quadratic form $\mc{Q}_{+,0}$. For this, since the potentials are highly singular, we need a regularization. Write $\varphi^{h,k}$ for the truncation of the series  \eqref{GFFcircle+} to the modes of the field $\varphi^h$ (recall that $\tilde{\varphi}^{h}=c+\varphi^h$) with indices $|n|\leq k$,  still with $n\not=0$. Then  we introduce the regularized potentials
  $V^{(k)}_+,L^{(k)},R^{(k)} :H^s_{\rm even}(\T)\to \R^+$ given by 
\begin{align}\label{defVk}
V^{(k)}_+(\varphi)=& \int_{\T^+} e^{\gamma  \varphi^{h,k}(\theta)-\frac{\gamma^2}{2}\E[  \varphi^{h,k}(\theta)^2]}\Big(\frac{1}{2|\sin(\theta)|+\frac{1}{k}}\Big)^{\frac{\gamma^2}{2}}\dd\theta
\\
R^{(k)}(\varphi)= &  e^{\frac{\gamma}{2} \varphi^{h,k}(0)-\frac{\gamma^2}{8} \mathbb{E}[( \varphi^{h,k}(0))^2]}
\\
L^{(k)}(\varphi)= & e^{\frac{\gamma}{2} \varphi^{h,k}(\pi)-\frac{\gamma^2}{8}\mathbb{E}[ \varphi^{h,k}(\pi)^2]}.
\end{align}
  For $k\geq 1$ and $\boldsymbol{\mu}=(\mu,\mu_L,\mu_R)$ with $\mu,\mu_L,\mu_R\geq 0$, we introduce the bilinear form (with associated quadratic form still denoted by $\mathcal{Q}^{(k)}_+$)    
\begin{equation}\label{defQn}
\mathcal{Q}^{(k)}_+(u,v):= \mathcal{Q}_{+,0}(u,v)+ \mu\langle    e^{\gamma c}V^{(k)}_+u,v\rangle_{\mc{H}_+}+\mu_L\langle    e^{\frac{\gamma}{2} c}L^{(k)} u,v\rangle_{\mc{H}_+}+\mu_R\langle    e^{\frac{\gamma}{2} c}R^{(k)} u,v\rangle_{\mc{H}_+}.
\end{equation}
Here   $u,v$ belong to the domain $\mathcal{D}(\mathcal{Q}^{(k)}_+)$ of the quadratic form, namely the completion  for the $\mathcal{Q}^{(k)}_+$-norm in $\mc{H}_+$  of the space $\mathcal{C}_+$ defined by  \eqref{core}. Also,  $ \mathcal{D}(\mathcal{Q}^{(k)}_+)$ embeds continuously and injectively in $\mc{H}_+$. Moreover, since $V^{(k)}_+,L^{(k)},R^{(k)}\geq 0$ it is clearly lower semi-bounded $\mathcal{Q}^{(k)}_+(u)\geq Q^2\|u\|_{\mc{H}_+}^2/4$ so that the construction of the Friedrichs extension then follows from \cite[Theorem 8.15]{Reed-Simon}. It   determines uniquely a self-adjoint operator $\mathbf{H}^{(k)}_+$, called the \emph{Friedrichs extension},  with domain denoted by $\mc{D}(\mathbf{H}^{(k)}_+)$ such that:
\[
\mc{D}(\mathbf{H}^{(k)}_+ )=\{u\in \mathcal{D}(\mathcal{Q}^{(k)}_+)\, |\,  \exists C>0,\forall v\in \mathcal{D}(\mathcal{Q}^{(k)}_+),\,\,\, \mc{Q}^{(k)}_+(u,v)\leq C\|v\|_{\mc{H}_+}\}
\]
and for $u\in \mc{D}(\mathbf{H}^{(k)}_+)$, $\mathbf{H}^{(k)}_+  u$ is the unique element in $L^2(\R\times \Omega_{\T^+})$  satisfying
\[
\mc{Q}^{(k)}_+(u,v)=\langle \mathbf{H}^{(k)}_+  u\,, \, v\rangle_{\mc{H}_+} .
\]
We also denote by  $\mathbf{R}_{\lambda}^{(k),+}$ the associated resolvent family.  The smoothness of the regularized potential allows us to identify the semigroup associated with this quadratic form as follows. 

Let $\D^+$ be the half disk equipped with the Euclidean metric $g=|dz|^2$.  We impose  Dirichlet boundary condition on  the half circle $\T^+$ and Neumann boundary condition on the real interval $[-1,1]$, namely $\pl \D^+_N=[-1,1]$ and  $\partial \D^+_D:=\T^+$. Consider the field $\phi=X_{\D^+,m}+P\tilde \varphi^h$, where $X_{\D^+,m}$ is the GFF\footnote{We omit the $g$ index  on the GFF in this section to simplify notations, i.e. we write $X_{\D^+,m}$ instead of $X_{\D^+,g,m}$.} with mixed boundary conditions on $\partial \D^+$ and  $P  \tilde \varphi^h$ is the harmonic extension to $\D^+$ of the boundary field $ \tilde \varphi^h\in H^s_{\rm even}(\T)$ prescribed on $\partial\D^+_D=\T^+$ with the condition $\partial_\nu^gP \tilde \varphi^h =0$ on $\partial\D^+_N=[-1,1]$. Recall that $X_{\D^+,m}$ and $\tilde{\varphi}^h=c+\varphi^h$ are assumed to be independent. 

Arguing exactly as in \cite[Prop. 5.3]{GKRV20_bootstrap}, we obtain:
\begin{proposition}\label{prop:fkhk}
The strongly continuous contraction semigroup $(e^{-t\mathbf{H}^{(k)}_+})_{t\geq 0}$ of self-adjoint operators on $\mc{H}_+$   obeys the Feynman-Kac formula $\forall F\in \mc{H}_+$
\begin{equation}\label{fkformulaforhk}
  e^{-t\mathbf{H}^{(k)}_+}F(\tilde{\varphi}^h)=e^{-\frac{Q^2t}{4}}\E_{\tilde{\varphi}^h}\big[ F(\tilde\phi_t)e^{-\mu\int_0^t e^{\gamma  \phi_{0,s} }V^{(k)}_+(\phi_s)\dd s-\mu_L\int_0^t  e^{\frac{\gamma}{2}  \phi_{0,s} }L^{(k)}(\phi_s)\,\dd s-\mu_R\int_0^t e^{\frac{\gamma}{2}  \phi_{0,s} }R^{(k)}(\phi_s)\,\dd s}\big],
\end{equation}
where we have decomposed the field $\tilde\phi_s:=X_{\D^+,m}(e^{-s}\cdot)+P\tilde{\varphi}^h(e^{-s}\cdot) $ along its harmonics i.e. $\phi_s(\theta):=\phi_{0,s}+ \phi_s(\theta)$
with $\phi_s(\theta):=\sum_{n\not=0}\phi_{s,n}e^{in\theta}$, in such a way that $\phi_{0,s}$ stands for the zero mode. 
\end{proposition}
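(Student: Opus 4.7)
The plan is to follow the strategy of \cite[Prop. 5.3]{GKRV20_bootstrap} (bulk case), adapting it to the half-disk setting. I would define operators $T^{(k)}_+(t)$ on $\mc{H}_+$ by the right-hand side of \eqref{fkformulaforhk}, prove that they form a strongly continuous self-adjoint contraction semigroup, and then identify the generator with $\mathbf{H}^{(k)}_+$ via the quadratic form.

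First I would check boundedness and the contraction property: the potentials $V^{(k)}_+,L^{(k)},R^{(k)}$ are nonnegative, so the stochastic exponential is bounded by $1$, and the prefactor $e^{-Q^2t/4}$ yields a contraction on $\mc{H}_+$. Next comes the semigroup property $T^{(k)}_+(t+s) = T^{(k)}_+(t)T^{(k)}_+(s)$, which is the main geometric step. It would follow from a Markov-type decomposition of the mixed-boundary GFF $X_{\D^+,m}$ when restricted to the half-circle $e^{-s}\T^+$: writing $X_{\D^+,m}\stackrel{\text{law}}{=}X_1+X_2+P\boldsymbol{Y}$ where $X_1$ is a mixed GFF on $\A^+_{e^{-s}}$ (Neumann on $[-1,-e^{-s}]\cup[e^{-s},1]$, Dirichlet on $\T^+\cup e^{-s}\T^+$), $X_2$ is a mixed GFF on $e^{-s}\D^+$ (Neumann on $[-e^{-s},e^{-s}]$, Dirichlet on $e^{-s}\T^+$), and $\boldsymbol{Y}$ is the restriction of $X_{\D^+,m}$ to $e^{-s}\T^+$ harmonically extended with $\pl_\nu = 0$ on the Neumann segment. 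Inserting this decomposition in \eqref{fkformulaforhk} and splitting the time integrals in the exponential at $t=s$ then yields composition of operators. Self-adjointness would follow from the radial time-reversal symmetry of the free field, as in the bulk case.

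The identification of the generator with $\mathbf{H}^{(k)}_+$ is the last step. Denoting by $\mathbf{A}$ the self-adjoint generator of $(T^{(k)}_+(t))_{t\geq 0}$ and by $\mc{Q}_{\mathbf{A}}$ its associated closed quadratic form, I would compute
\[
\mc{Q}_{\mathbf{A}}(u)=\lim_{t\to 0^+}\tfrac{1}{t}\langle (I-T^{(k)}_+(t))u,u\rangle_{\mc{H}_+}
\]
for $u\in \mc{C}_+$ by Taylor-expanding the expectation at small $t$. The free contribution (Feynman--Kac without potentials) reproduces $\mc{Q}_{+,0}$ by the same computation as in \cite[Prop. 5.3]{GKRV20_bootstrap}; the $k$-regularity of $V^{(k)}_+,L^{(k)},R^{(k)}$ (polynomial in the first $k$ Fourier modes) allows differentiating the exponential factor at $t=0$, producing exactly the three potential terms in $\mc{Q}^{(k)}_+$. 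Since $\mc{C}_+$ is a form core for $\mc{Q}^{(k)}_+$ and both forms agree there, uniqueness of the Friedrichs extension gives $\mathbf{A}=\mathbf{H}^{(k)}_+$.

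The main obstacle is the Markov decomposition of $X_{\D^+,m}$: the Neumann condition on $[-1,1]$ meets the Dirichlet condition on $\T^+$ at the corner points $\pm 1$, so one cannot directly invoke the standard cylinder Markov argument used in the closed-surface/bulk case. The natural remedy is the Neumann doubling of Section \ref{double_surface}: doubling $\D^+$ across $[-1,1]$ gives the disk $\D$ with Dirichlet condition on $\T$, and $X_{\D^+,m}$ corresponds to the $\tau_\Sigma$-even part of the Dirichlet GFF on $\D$. The already-established decomposition on $\D$ in Proposition \ref{decompGFF} then restricts to even fields (which is exactly why $\mc{H}_+$ uses $H^s_{\rm even}(\T)$), providing the required Markov property and completing the argument.
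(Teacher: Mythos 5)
Your proposal follows the same route the paper takes: the paper's proof is literally a one-line reference to \cite[Prop. 5.3]{GKRV20_bootstrap}, and your sketch reproduces that argument (contraction from nonnegativity of the potentials, semigroup property from the Markov decomposition of the GFF, generator identification on the core $\mc{C}_+$ via the regularity of the truncated potentials). You also correctly identify the only genuinely new point in the half-disk setting — the Markov property across a boundary cut meeting the corners — and resolve it exactly as the paper does, by Neumann doubling and restriction to $\tau_\Sigma$-even fields via Proposition \ref{decompGFF}.
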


Now our objective is to construct  the quadratic form corresponding to the limit $k\to\infty$ of the quadratic forms $(\mathcal{Q}^{(k)}_+,\mathcal{D}(\mathcal{Q}^{(k)}_+))$ and identify its semigroup. The first (and easiest) step concerns the identification of the limiting semigroup, which we describe now.

\subsubsection{Semigroup of half-annuli}\label{sub:bsemigroup}

Let $\phi=X_{g,m}+P\tilde{\varphi}^h$  be the GFF in the half-disk $(\D^+,dz^2)$ with conditional value $\tilde{\varphi}^h=c+\varphi^h$ on $\T^+$ and Neumann condition on $[-1,1]=\pl \D^+_N$ where $\varphi^h\in H^{-s}_{\rm even}(\T)$. 
Write   $\phi_t: \theta \in [0,\pi]\mapsto \phi(e^{-t+i\theta})$  and, similarly to \eqref{FKgeneral}, consider for $t>0$ the operator 
\begin{equation}\label{thehalfannulisemigroup}
	S_{\boldsymbol{\mu}}(t) F(\tilde{\varphi}^h)=e^{-\frac{Q^2t}{4}}   \E_{\varphi^h} \big[ F(\phi_t) e^{-\mu  \int_{\A^+_{e^{-t}}} |x|^{-\gamma Q}M^g_\gamma (\phi,\dd x) -\mu_L  \int_{\mathbb{I}^L_t} |x|^{-\gamma Q/2}M^g_{\gamma,\partial} (\phi,\dd x) -\mu_R \int_{\mathbb{I}^R_t} |x|^{-\gamma Q/2}M^g_{\gamma,\partial} (\phi,\dd x)}  \big ]
\end{equation}
where  the vector $\boldsymbol{\mu}$ collects three cosmological constants $\boldsymbol{\mu}=(\mu,\mu_L,\mu_R)$ with $\mu,\mu_L,\mu_R\geq 0$, $\A^+_q$ is the half annulus $\A^+_q:=\{z\in\C\,|\, q<|z|<1\,,\, {\rm Im}(z)>0\}$ for $0<q<1$, $\mathbb{I}^R_t,\mathbb{I}^L_t$ are respectively the intervals on the real line $[1-e^{-t},1]$ and $[-1,-1+e^{-t}]$ (see Figure \ref{demianneau}), and $F$ belongs to $\mc{H}_+$ (recall the setup introduced in Subsection \ref{sub:hilbert}). Here $ \E_{\varphi^h} $ stands for expectation conditionally on the field $\varphi^h$. The field $X_{\D^+,m}$ has covariance    given by
\[\E[X_{\D^+,m}(z)X_{\D^+,m}(z')]=\log\frac{|1-z\bar{z}'||1-zz'|}{|z-z'||z-\bar{z}'|}=  G_{\D^+,m}(z,z')=G_{\D,D}(z,z')+G_{\D,D}(z,\bar{z}')\]
so that the bulk measure  in  \eqref{thehalfannulisemigroup} reads
\[M^g_\gamma (\phi,\dd x)=  \lim_{\eps\to 0}e^{\gamma P  \tilde \varphi^h(x)}\Big(\frac{|1-x^2||1-|x|^2|}{|x-\bar x|}\Big)^{\frac{\gamma^2}{2}} e^{\gamma X_{\D^+,m,\eps}(x)-\frac{\gamma^2}{2}\E[ X_{\D^+,m,\eps}(x)^2]}\,\dd x\]
and the boundary measure  reads
$$ M^g_{\gamma,\partial} (\phi,\dd s)= \lim_{\eps\to 0} e^ {\frac{\gamma}{2} P  \tilde \varphi^h(s)} (1-s^2)^{\frac{\gamma^2}{4}} e^{\frac{\gamma}{2} X_{\D^+,m,\eps}(s)-\frac{\gamma^2}{8}\E[ X_{\D^+,m,\eps}(s)^2]}\,\dd \ell(s).$$
Proceeding exactly as in \cite[subsections 3.3 and 5.1]{GKRV20_bootstrap}, we prove that $(S_{\boldsymbol{\mu}}(t))_{t\geq 0}$ is a strongly continuous contraction semigroup of self-adjoint operators on $\mc{H}_+$.  By the Hille-Yosida theorem, this semigroup can be written under the form
$$S_{\boldsymbol{\mu}}(t)=e^{-t\mathbf{H}_{\boldsymbol{\mu}}} $$
where $\mathbf{H}_{\boldsymbol{\mu}}$ is a nonnegative self-adjoint operator with domain $\mc{D}(\mathbf{H}_{\boldsymbol{\mu}})\subset \mc{H}_+$, consisting of those $F$ such that the limit $\lim_{t\to 0}\tfrac{1}{t}(e^{-t\mathbf{H}_{\boldsymbol{\mu}}} -{\rm I})F$ exists in $ \mc{H}_+$.

Now we explain how this semigroup relates to    the half-annulus amplitude. The half-annulus amplitude is defined as $\caA_{\A_t^+,g_\A,\boldsymbol{\zeta}_t^+}(\tilde{\varphi}_1^{h},\tilde{\varphi}_2^{h})$,  with boundary parametrization $\boldsymbol{\zeta}^+_t=(\zeta^+_1,\zeta^+_2)$ given by $\zeta^+_2(e^{i\theta})=e^{i\theta}$ and $\zeta^+_1(e^{i\theta})=e^{-t+i\theta}$, both for $\theta\in [0,\pi]$, see Figure \ref{demianneau}.

 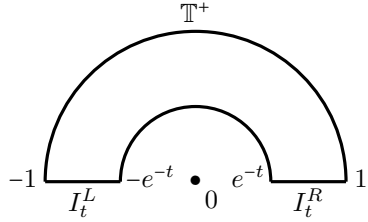
\begin{figure}[h] 
\begin{tikzpicture}[scale=1]
\draw[very thick] (1,0) arc (0:180:1);
\draw[very thick] (2,0) arc (0:180:2);
\draw[very thick] (-2,0) -- (-1,0);
\draw[very thick] (1,0) -- (2,0);
 \draw (-1.5,-0.6) node[above]{$I_t^L$} ;
 \draw (1.5,-0.6) node[above]{$I_t^R$} ;
  \draw (-2.3,-0.2) node[above]{$-1$} ;  
  \draw (-0.6,-0.2) node[above]{$-e^{-t}$} ;
   \draw (2.2,-0.2) node[above]{$1$} ;  
  \draw (0.7,-0.2) node[above]{$e^{-t}$} ;
\draw (0,0) node[below right] {$0$} node{$\bullet$};
 \draw (0,2) node[above]{$\T^+$} ;
\end{tikzpicture}
\caption{Notation for half-annulus $\A^+_{e^{-t}}$.}\label{demianneau}
\end{figure}

\begin{proposition}\label{hasemigroup}
	For  $0<q<1$, the half-annulus amplitude, viewed as operator on $\mc{H}$, satisfies 
	\begin{align}
		\caA_{\A^+_{q},g_\A,\boldsymbol{\zeta}_q^+} = (8\pi^3)^{\frac{1}{4}} q^{-\frac{\mathbf{c}_L}{24}} e^{(\log q)\mathbf{H}_{\boldsymbol{\mu}}}
	\end{align}
	with $\mathbf{c}_L=1+6Q^2$ the central charge. 
\end{proposition}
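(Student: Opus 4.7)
The plan is to establish the operator identity by unfolding the Feynman--Kac representation \eqref{thehalfannulisemigroup} of $e^{-t\mathbf{H}_{\boldsymbol{\mu}}}$ (with $t=-\log q$) and matching it, kernel by kernel, with the amplitude computed from Definition \ref{def:amp}(B) applied to $(\A_q^+,g_\A)$. Since $g_\A=|dz|^2/|z|^2$ is flat and every boundary piece of $\A_q^+$ is geodesic for $g_\A$ (after $z\mapsto -\log z$ the region becomes a flat rectangle with straight sides), the curvatures $K_{g_\A}$ and $k_{g_\A}$ vanish identically. With no marked points, Definition \ref{def:amp}(B) reduces to
\[
\caA_{\A_q^+,g_\A,\boldsymbol{\zeta}_q^+}(\tilde\varphi_1^h,\tilde\varphi_2^h)
= Z_{\A_q^+,g_\A,m}\,\caA^0_{\A_q^+,g_\A}(\tilde\varphi_1^h,\tilde\varphi_2^h)\,
\E\bigl[e^{-\mu M^{g_\A}_\gamma(\phi,\A_q^+)-\mu_L M^{g_\A}_{\gamma,\pl}(\phi,\mathbb{I}_t^L)-\mu_R M^{g_\A}_{\gamma,\pl}(\phi,\mathbb{I}_t^R)}\bigr],
\]
with $\phi=X_{\A_q^+,g_\A,m}+P(\tilde\varphi_1^h,\tilde\varphi_2^h)$ the Liouville field with prescribed Dirichlet values on the two half-circles.

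To pass from this kernel to the Hilbert space operator $e^{-t\mathbf{H}_{\boldsymbol{\mu}}}$ on $\mc{H}_+$, I would pair against test functions $F,F'\in\mc{H}_+$ and integrate against $\mu_0^+\otimes\mu_0^+$. The Feynman--Kac formula \eqref{thehalfannulisemigroup} uses instead the mixed-b.c.\ GFF on the full half-disk $\D^+$; the two descriptions coincide via a Markov/Gaussian conditioning argument: conditioning the $\D^+$ GFF on its trace at radius $q$ decomposes it into a harmonic extension plus an independent mixed-b.c.\ GFF on $\A_q^+$, and the conditional law of this inner trace, combined with a Cameron--Martin/Girsanov identity analogous to Lemma \ref{density1}, exactly absorbs the free amplitude $\caA^0_{\A_q^+,g_\A}=e^{-\frac12(\tilde{\boldsymbol{\varphi}},(\mathbf{D}_{\A_q^+}-\mathbf{D})\tilde{\boldsymbol{\varphi}})_2}$ into the density of $\phi_t$ against $\mu_0^+$. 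Under the conformal map $z=e^{-s+i\theta}$, the GMC integrals over $\A_q^+$, $\mathbb{I}_t^L$, $\mathbb{I}_t^R$ appearing in the amplitude match exactly those in \eqref{thehalfannulisemigroup} --- the weights $|x|^{-\gamma Q}$ and $|x|^{-\gamma Q/2}$ being precisely the Weyl factors converting bulk and boundary GMC from $|dz|^2$ to $g_\A$. The potential terms then generate $\mathbf{H}_{\boldsymbol{\mu}}$ via the $k\to\infty$ limit of the regularized Feynman--Kac formula \eqref{fkformulaforhk}.

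It remains to identify the prefactor $(8\pi^3)^{1/4}q^{-\mathbf{c}_L/24}$. The power $q^{-\mathbf{c}_L/24}$ is extracted from $Z_{\A_q^+,g_\A,m}=\det(\Delta_{\A_q^+,g_\A,m})^{-1/2}$ via the identity $\det(\Delta_{\A_q^+,g_\A,m})\det(\Delta_{\A_q^+,g_\A,D})=\det(\Delta_{\A_q,g_\A,D})$ from \eqref{identity_det} together with Proposition \ref{prop:annulussimple} for the full annulus and the $\tau$-even/odd spectral decomposition on the Neumann double $\A_q$: the full-annulus central-charge factor $q^{-\mathbf{c}_L/12}$ splits evenly between the two factors. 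The numerical constant $(8\pi^3)^{1/4}=(2\pi)^{3/4}$ is then uniquely pinned down by a consistency check using Proposition \ref{glueampli} with $k_h=1,k_\ell=0,\pl\Sigma_D\neq\emptyset$: the half-circle gluing constant equals $(2\pi)^{-3/4}$, so the semigroup relation $\caA_{\A_{qq'}^+}=(2\pi)^{-3/4}\,\caA_{\A_q^+}\circ\caA_{\A_{q'}^+}$ (as operators on $\mc{H}_+$) forces $C=(2\pi)^{-3/4}C^2$, whence $C=(2\pi)^{3/4}$. The main obstacle will be the careful bookkeeping of these constants --- in particular verifying that the four corner points contribute no additional Weyl anomaly (which holds since $\omega\equiv 0$ on $\pl\Sigma_D$ in Proposition \ref{Weyl}) and that the cylinder GMC regularization implicit in the amplitude matches the truncation scheme $k\to\infty$ defining $\mathbf{H}_{\boldsymbol{\mu}}$.
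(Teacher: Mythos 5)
Your proposal is correct in outline and shares the paper's overall skeleton (reduce to the free amplitude, then handle the potentials by the Feynman--Kac/Markov-property argument as in the closed case), but you determine the prefactor by a genuinely different route. The paper proceeds by brute force: it writes the harmonic extension on $\A_q^+$ explicitly in Fourier modes, computes $\caA^0_{\A_q^+,g_\A}$ in closed form, and matches it mode by mode against the Mehler kernel of $e^{-tn\partial_{x_n}^*\partial_{x_n}}$ and the heat kernel of $-\pl_c^2$; the constant and the power of $q$ then drop out of the explicit zeta-regularized determinant of $\Delta_{\A_q^+,m}$ computed in Proposition \ref{det of half annulus}. Your semigroup-consistency argument via Proposition \ref{glueampli} (the relation $C=(2\pi)^{-3/4}C^2$) is a clean and legitimate way to pin down the multiplicative constant $(2\pi)^{3/4}=(8\pi^3)^{1/4}$ without evaluating $\det(\Delta_{\A_q^+,m})$ at a specific $q$, and it is not circular since the gluing proposition is proved independently. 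Two caveats, though. First, the consistency relation only yields $C_q=(2\pi)^{3/4}q^{-a}$ for some undetermined $a$; the exponent $a=\mathbf{c}_L/24$ must come from elsewhere, and your ``the central-charge factor splits evenly'' step is doing real work there: it amounts to observing that the spectra of $\Delta_{\A_q^+,m}$ and $\Delta_{\A_q^+,D}$ partition that of $\Delta_{\A_q,D}$ with the angular-zero modes $\{(n/2\ell)^2\}_{n\geq 1}$ going entirely to the mixed side, and that these contribute only a factor $\propto\log q$ (not a power of $q$) to the determinant --- i.e.\ essentially the computation of Proposition \ref{det of half annulus}, which you cannot fully bypass. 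Second, the consistency argument presupposes that the amplitude is already known to be a scalar multiple of $e^{(\log q)\mathbf{H}_{\boldsymbol\mu}}$; this is exactly the content of the free-kernel matching plus the conditioning/Cameron--Martin step you sketch, so that step must be carried out first (either by your Gaussian-conditioning route, as in the annulus case of \cite{GKRV21_Segal}, or by the paper's explicit computation) before the constants can be chased.
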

\begin{proof}
	The main thing to check is that  $Z_{\A^+_q,m,g_\A}\caA^0_{\A_q^+,g_\A,\boldsymbol{\zeta}_q^+}= (8\pi^3)^{\frac{1}{4}}q^{-\frac{\mathbf{c}_L}{24}} e^{(\log q)\mathbf{H}^0_+}$, since then the proof follows the same lines as \cite[Proposition 6.1]{GKRV21_Segal}.  Let $\tilde\varphi^h_1(\theta)=c_1+\sum_{n\neq 0}^\infty \frac{\sqrt{2}x_{1,|n|}}{2\sqrt{|n|}}e^{in\theta}$ and  $\tilde\varphi^h_2(\theta)=c_2+\sum_{n\neq 0}^\infty \frac{\sqrt{2}x_{2,|n|}}{2\sqrt{|n|}}e^{in\theta}$. The harmonic extension in $\A_q^+$ with boundary values $\tilde\varphi^h_1$ and $\tilde\varphi^h_2$ on $|z|=1$ and $|z|=q$ respectively is
\[ P\tilde{\boldsymbol{\varphi}}^h(z)= c_1+\frac{c_2-c_1}{\log q}\log|z|+\sum_{n\geq 1}\frac{(z^n+\bar{z}^n)}{q^n-q^{-n}}
\frac{(x_{2,n}-x_{1,n}q^{-n})}{\sqrt{2n}}+\frac{(z^{-n}+\bar{z}^{-n})}{q^n-q^{-n}}\frac{(x_{1,n}q^n-x_{2,n})}{\sqrt{2n}}.\]
We thus compute  (recall  \eqref{amplifree} and Definition~\ref{unpairing})
\[\caA^0_{\A_q^+,g_\A,\boldsymbol{\zeta}_q^+}(\tilde{\varphi}_1^{h},\tilde{\varphi}_2^{h})=\exp\Big(\frac{(c_1-c_2)^2}{4\log q}-\sum_{n=1}^\infty \Big(\frac{(x_{2,n}-q^{n} x_{1,n})^2}{2(1-q^{2n})}-\frac{x_{2,n}^2}{2}\Big)\Big).\]
By definition, $\mathbf{H}^0_+=-\partial_c^2+\frac{Q^2}{4}+\sum_{n=1}^\infty n\partial_{x_n}^*\partial_{x_n}$. The semigroup $e^{-t\partial_{x_n}^*\partial_{x_n}}$ associated to the one dimensional Ornstein-Uhlenbeck operator with respect to standard Gaussian measure has integral kernel 
\[ e^{-t\partial_{x_n}^* \partial_{x_n}}(x_n, x_n^{\prime})=\frac{1}{\sqrt{1-e^{-2 t}}} \exp \Big(\frac{(x_n^{\prime}-e^{-t} x_n)^2}{2(1-e^{-2 t})}-\frac{x_n^2}{2}\Big)\]
	and operator $e^{-t\partial_c^2}$ has the kernel $\frac{1}{\sqrt{4\pi t}}e^{-\frac{(c-c')^2}{4t}}$. 
	Combining with Prop~\ref{det of half annulus}, the claim follows.
\end{proof}

A straightforward observation is then the convergence of the semigroup $e^{-t\mathbf{H}^{(k)}_+}$ towards $S_{\boldsymbol{\mu}}(t) $.  This follows from the Feynman-Kac representations \eqref{fkformulaforhk} and \eqref{thehalfannulisemigroup} of the semigroups, as well as the fact that the 3 potentials in \eqref{fkformulaforhk} are regularisations respectively of the 3 GMC measures appearing in \eqref{thehalfannulisemigroup}, and thus the three regularized GMC converge almost surely towards the corresponding limiting GMC. We state this remark as a lemma:

\begin{lemma}\label{prop:semigrouplimit}
For $F\in\mc{C}_+$, we have the convergence  in $\mc{H}_+$ of the sequence $( e^{-t\mathbf{H}^{(k)}_+}F)_k$ towards   $  S_{\boldsymbol{\mu}}(t) F$. 
\end{lemma}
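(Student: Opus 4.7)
The plan is to pass to the limit $k\to\infty$ inside the Feynman--Kac representation of Proposition \ref{prop:fkhk} and identify it with the formula \eqref{thehalfannulisemigroup} for $S_{\boldsymbol{\mu}}(t)$. Both representations are expectations taken with respect to the same mixed-BC GFF $X_{\D^+,m}$, they share the common Gaussian prefactor $e^{-Q^2 t/4}$, and they evaluate the same terminal observable $F(\tilde\phi_t)$; consequently it suffices to show that, for $\mu_0^+$-a.e.\ boundary data $\tilde\varphi^h$, each of the three time integrals appearing in the exponent of \eqref{fkformulaforhk} converges in probability (with respect to the GFF) to the corresponding GMC integral in the exponent of \eqref{thehalfannulisemigroup}.

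For the bulk term I would perform the radial substitution $s=-\log r$, which converts $\int_0^t e^{\gamma\phi_{0,s}}V^{(k)}_+(\phi_s)\,ds$ into an integral on the half-annulus $\A^+_{e^{-t}}$. By the definition \eqref{defVk}, the angular Fourier truncation $\varphi^{h,k}$ together with its Wick renormalization $e^{-\gamma^2 \E[\varphi^{h,k}(\theta)^2]/2}$ furnishes a standard Gaussian approximation of the underlying field on each circle of radius $r$, while the weight $(2|\sin\theta|+1/k)^{-\gamma^2/2}$ reproduces, as $k\to\infty$, the diagonal singularity $|z-\bar z|^{-\gamma^2/2}$ of the mixed-BC Green function $G_{\D^+,m}(z,z')=G_{\D,D}(z,z')+G_{\D,D}(z,\bar z')$ at $z=re^{i\theta}$ approaching $[-1,1]$, i.e.\ the reflected-pole contribution from the Neumann double. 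The Jacobian of the substitution, combined with this weight and the scaling relation \eqref{scalingmeasure1} (comparing the Euclidean and cylindrical metrics), assembles the conformal factor $|x|^{-\gamma Q}$, so that standard GMC convergence delivers the limit $\int_{\A^+_{e^{-t}}}|x|^{-\gamma Q}M^g_\gamma(\phi,dx)$. The terms with $L^{(k)}$ and $R^{(k)}$ are handled analogously: they become one-dimensional boundary GMC integrals on the segments $\mathbb{I}^L_t$ and $\mathbb{I}^R_t$ with weight $|x|^{-\gamma Q/2}$, the requisite boundary GMC convergence being provided by \cite{huang2018} together with the estimates collected in Appendix \ref{app:GMC_estimates}.

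Given this in-probability convergence of the exponents, the fact that $\mu,\mu_L,\mu_R\geq 0$ and all three potentials in \eqref{fkformulaforhk} are nonnegative yields the universal domination of the integrand by $|F(\tilde\phi_t)|$, so dominated convergence in the GFF gives $e^{-t\mathbf{H}^{(k)}_+}F(\tilde\varphi^h)\to S_{\boldsymbol{\mu}}(t)F(\tilde\varphi^h)$ pointwise in $\tilde\varphi^h$. To upgrade this to $\mc{H}_+$-convergence I would use that $\tilde\varphi^h\mapsto e^{-Q^2t/4}\E_{\tilde\varphi^h}[|F(\tilde\phi_t)|]=e^{-t\mathbf{H}^0_+}|F|(\tilde\varphi^h)$ belongs to $\mc{H}_+$ (since $e^{-t\mathbf{H}^0_+}$ is a contraction by Proposition \ref{FQ0:GFF} and $|F|\in\mc{H}_+$ whenever $F\in\mc{C}_+$), so a second application of dominated convergence in $\mu_0^+$ concludes. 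The main obstacle is the bulk step: one must verify that the specific regularization \eqref{defVk} produces \emph{exactly} the Neumann-mixed GMC measure with no spurious multiplicative constant, which is carried out by the same variance comparison as in \cite[Sec.~5.1]{GKRV20_bootstrap} but now crucially using the doubling identity $G_{\D^+,m}(z,z')=G_{\D,D}(z,z')+G_{\D,D}(z,\bar z')$ in order to absorb the reflected-pole weight $(2|\sin\theta|+1/k)^{-\gamma^2/2}$ into the Wick renormalization.
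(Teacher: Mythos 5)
Your proposal is correct and follows essentially the same route as the paper, which disposes of this lemma in one sentence by invoking the two Feynman--Kac representations and the almost sure convergence of the three regularized potentials to the corresponding GMC measures. Your additional details --- the nonnegativity of the potentials giving domination by $|F(\tilde\phi_t)|$, and the second domination by $e^{-t\mathbf{H}^0_+}|F|\in\mc{H}_+$ to upgrade pointwise convergence to convergence in $\mc{H}_+$ --- are exactly the steps the paper leaves implicit.
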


\subsubsection{Quadratic form of the Liouville boundary Hamiltonian}\label{sub:quadbsemigroup}

Here we have to stress a technical issue compared to \cite{GKRV20_bootstrap}. The term $V^{(k)}_+$ has a singularity coming from the $\sin$ when $k\to \infty$. This singularity is not integrable when $\gamma\in[\sqrt{2},2[$ so that the limiting quadratic form, if exists, does not contain smooth functions, namely $\mc{C}_+$. This is a   huge technical issue in order to construct later the eigenstates of the boundary Hamiltonian. There are two ways of getting around this issue: 
\begin{equation}\label{restriction}
\text{we assume } \quad \gamma\in (0,\sqrt{2}) \qquad \text{or }\qquad  \mu=0 \text{ and }\gamma\in (0,2) .
\end{equation}
We will work under these assumptions in the following. Of course, in the case $\mu=0$, the interesting situation is $\mu_L+\mu_R>0$, otherwise it is just the free field theory.
For $u,v\in \mathcal{C}_+$, we define 
\begin{equation}\label{defQ2}
\mathcal{Q}_+(u,v):=\lim_{k\to\infty}\mathcal{Q}^{(k)}_+(u,v).
\end{equation}
The existence of the limit is a non trivial statement, which relies on the analysis of each term appearing in \eqref{defQn}. The term $\mathcal{Q}_{+,0}$
is straightforward to deal with. What is not obvious is the convergence of the three other terms, corresponding to the potentials. Concerning the first one, we prove in Appendix  \ref{GMC} that when $\gamma\in (0,\sqrt{2})$,  $(V^{(k)}_+)_k$ converges towards 
\begin{equation}\label{defV+}
 V_{+}(\varphi^h)=  \int_{\T^+} e^{\gamma \varphi^{h}(\theta)-\frac{\gamma^2}{2}\E[ \varphi^{h}(\theta)^2]}\Big(\frac{1}{2|\sin(\theta)|}\Big)^{\frac{\gamma^2}{2}}\dd\theta
\end{equation} 
 in $L^p(\Omega_{\T^+})$ for $p<\frac{1}{4}+\sqrt{\frac{1}{16}+\frac{1}{\gamma^2}}$ (note that the rhs of this inequality is larger than 1 when $\gamma^2<2$). Here we have made a slight abuse of notation by writing $e^{\gamma \varphi^{h}(\theta)-\frac{\gamma^2}{2}\E[ \varphi^{h}(\theta)^2]}d\theta$ for the GMC measure on $\T^+$ associated to $\varphi^h$. 
 This entails that the term $\langle    e^{\gamma c}V^{(k)}_+u,v\rangle_{\mc{H}_+}$ in  \eqref{defQn} converges towards $\langle    e^{\gamma c}V_+u,v\rangle_{\mc{H}_+}$
as long as $u\bar{v}\in C^\infty_c(\R;L^q(\Omega_{\T^+}))$ with  $1/q+1/p=1$, which holds when $u,v\in\mc{C}_+$. For the remaining two boundary terms, we follow \cite[section 5.2]{GKRV20_bootstrap} and argue that existence of the limit   is guaranteed by the Cameron-Martin transform. Indeed, fix $u=u(x_1, \dots,x_n)$ and $v=v(x_1,\dots,x_n)$ in $\mc{C}_+$. Then for $k\geq n$ the terms involving $L^{(k)}$, $R^{(k)}$ in $\mathcal{Q}^{(k)}_+$ can be rewritten, using Cameron-Martin,  as 
\begin{equation}\label{girsuk}
\langle e^{\frac{\gamma}{2} c}R^{(k)}u,v \rangle_{\mc{H}_+}= \langle e^{\frac{\gamma}{2} c} u_{\rm right},v_{\rm right} \rangle_{\mc{H}_+} ,\qquad \langle e^{\frac{\gamma}{2} c}L^{(k)}u,v \rangle_{\mc{H}_+}= \langle e^{\frac{\gamma}{2} c} u_{\rm left},v_{\rm left} \rangle_{\mc{H}_+} 
\end{equation}
where the functions $u_{\rm right}$, $u_{\rm left}$ (and similarly for $v$) are defined by
\begin{align*}
& u_{\rm right}( c, x_1, \dots,x_n ):=u \Big(c,x_1+\gamma  \sqrt{2}, \dots,x_n+\gamma  \sqrt{2/n}  \Big),\\ 
& u_{\rm left}( c, x_1, \dots,x_n ):=u \Big(c,x_1-\gamma  \sqrt{2}, \dots,x_n+\gamma (-1)^n \sqrt{2/n}  \Big).
\end{align*}
Hence the terms $\langle e^{\frac{\gamma}{2} c}R^{(k)}u,v \rangle_{\mc{H}_+}$, $\langle e^{\frac{\gamma}{2} c}L^{(k)}u,v \rangle_{\mc{H}_+}$ do not depend on $k\geq n$, and we get
\begin{equation}
\mathcal{Q}_+(u,v)=\mathcal{Q}_{+,0}(u,v)+ \mu\langle    e^{\gamma c}V_+u,v\rangle_{\mc{H}_+}+\mu_L\langle e^{\frac{\gamma}{2} c} u_{\rm left},v_{\rm left} \rangle_{\mc{H}_+} +\mu_R \langle e^{\frac{\gamma}{2} c} u_{\rm right},v_{\rm right} \rangle_{\mc{H}_+} .
\end{equation}
We stress that our argument crashes from here in the case  $ \gamma\geq \sqrt{2}$ because $\lim_{k\to\infty}\mathcal{Q}_+(u,u)=+\infty$ then.

Let us denote by $\mathbf{R}_{\boldsymbol{\mu}}(\lambda)$ the resolvent family   associated with  the Feynman-Kac semigroup $S_{\boldsymbol{\mu}}(t)$. Let   $\mathcal{Q}_{\boldsymbol{\mu}}$ denote the associated quadratic form  with domain $\{u\in \mc{H}_+\, |\,  \lim_{t\to0}\langle u, \frac{u-S_{\boldsymbol{\mu}}(t)u}{t}\rangle_{\mc{H}_+}<\infty\}$, and
 $\mathbf{H}_{\boldsymbol{\mu}}$ its generator. Using \cite[Section 1.4]{sznitman}, $\mathcal{Q}_{\boldsymbol{\mu}}$ is closed. We claim:

\begin{lemma}
Under the condition \eqref{restriction}, for $u,v\in\mathcal{C}_+$, we have  $ \mathcal{Q}_{\boldsymbol{\mu}}(u,v)= \mathcal{Q}_+(u,v)$. In particular, $ \mathcal{Q}_+$ is closable.
\end{lemma}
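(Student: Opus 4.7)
The strategy is to identify $\mathcal{Q}_{\boldsymbol{\mu}}(u,v)$ on $\mc{C}_+\times\mc{C}_+$ via a Duhamel expansion based on the regularized semigroups of Section~\ref{sub:regbhamiltonian}, and to show it coincides with the explicit expression for $\mc{Q}_+(u,v)$ obtained by passing to the limit $k\to\infty$ in $\mc{Q}^{(k)}_+(u,v)$. Once the identity $\mc{Q}_{\boldsymbol{\mu}}=\mc{Q}_+$ is established on $\mc{C}_+$, closability of $\mc{Q}_+$ is automatic: if $(u_n)\subset\mc{C}_+$ is Cauchy for $\mc{Q}_+$ and tends to $0$ in $\mc{H}_+$, it is Cauchy for the closed form $\mc{Q}_{\boldsymbol{\mu}}$, so $\mc{Q}_+(u_n)=\mc{Q}_{\boldsymbol{\mu}}(u_n)\to 0$.

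First, for each finite $k$ and each $u\in\mc{C}_+$, the regularized potential $\mathbf{V}^{(k)}_{\boldsymbol{\mu}}:=\mu e^{\gamma c}V^{(k)}_++\mu_L e^{\frac{\gamma}{2}c}L^{(k)}+\mu_R e^{\frac{\gamma}{2}c}R^{(k)}$ is a smooth function of finitely many Gaussian modes with exponential moments of all orders, so $\mathbf{V}^{(k)}_{\boldsymbol{\mu}}u\in\mc{H}_+$ and hence $\mc{C}_+\subset\mathcal{D}(\mathbf{H}^{(k)}_+)$ with $\mathbf{H}^{(k)}_+u=\mathbf{H}^0_+u+\mathbf{V}^{(k)}_{\boldsymbol{\mu}}u$. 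Duhamel's identity therefore yields
\begin{equation*}
e^{-t\mathbf{H}^{(k)}_+}u-e^{-t\mathbf{H}^0_+}u=-\int_0^t e^{-(t-s)\mathbf{H}^0_+}\mathbf{V}^{(k)}_{\boldsymbol{\mu}}\,e^{-s\mathbf{H}^{(k)}_+}u\,\dd s.
\end{equation*}
Pairing with $v\in\mc{C}_+$ and using Lemma~\ref{prop:semigrouplimit} on the left, together with the two-step convergence analysis already performed between \eqref{defQ2} and \eqref{defV+} -- the bulk term $\langle e^{\gamma c}V^{(k)}_+F,G\rangle$ converges to $\langle e^{\gamma c}V_+F,G\rangle$ by H\"older duality and the $L^p$ convergence of $V^{(k)}_+\to V_+$ proved in Appendix~\ref{GMC} (this is where the restriction \eqref{restriction} enters crucially), while the boundary terms involving $L^{(k)}$ and $R^{(k)}$ are stationary in $k$ for $k\geq n$ thanks to the Cameron--Martin identities \eqref{girsuk} -- one passes to the limit $k\to\infty$ in the identity above, obtaining
\begin{equation*}
\langle u-S_{\boldsymbol{\mu}}(t)u,v\rangle_{\mc{H}_+}=\langle u-e^{-t\mathbf{H}^0_+}u,v\rangle_{\mc{H}_+}+\int_0^t\langle\mathbf{V}_{\boldsymbol{\mu}}S_{\boldsymbol{\mu}}(s)u,e^{-(t-s)\mathbf{H}^0_+}v\rangle_{\mc{H}_+}\,\dd s,
\end{equation*}
where $\langle\mathbf{V}_{\boldsymbol{\mu}}F,G\rangle_{\mc{H}_+}$ is short for $\mu\langle e^{\gamma c}V_+F,G\rangle_{\mc{H}_+}+\mu_L\langle e^{\frac{\gamma}{2}c}F_{\rm left},G_{\rm left}\rangle_{\mc{H}_+}+\mu_R\langle e^{\frac{\gamma}{2}c}F_{\rm right},G_{\rm right}\rangle_{\mc{H}_+}$.

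Dividing by $t>0$ and letting $t\to 0^+$, the free contribution converges to $\langle\mathbf{H}^0_+u,v\rangle_{\mc{H}_+}=\mc{Q}_{+,0}(u,v)$ because $\mathbf{H}^0_+$ is an explicit differential operator mapping $\mc{C}_+$ into $\mc{C}_+$. The integral term is the Ces\`aro mean of the function $s\mapsto \langle\mathbf{V}_{\boldsymbol{\mu}}S_{\boldsymbol{\mu}}(s)u,e^{-(t-s)\mathbf{H}^0_+}v\rangle_{\mc{H}_+}$, which is continuous at $s=t=0$ by strong continuity of $(S_{\boldsymbol{\mu}}(s))$ and $(e^{-r\mathbf{H}^0_+})$ on $\mc{H}_+$ and by continuity of the three bilinear forms entering $\mathbf{V}_{\boldsymbol{\mu}}$ on the relevant test class; its limit is therefore $\langle\mathbf{V}_{\boldsymbol{\mu}}u,v\rangle_{\mc{H}_+}$. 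Summing the two contributions and comparing with \eqref{defQn}--\eqref{defQ2} gives
$$\mc{Q}_{\boldsymbol{\mu}}(u,v)=\mc{Q}_{+,0}(u,v)+\langle\mathbf{V}_{\boldsymbol{\mu}}u,v\rangle_{\mc{H}_+}=\mc{Q}_+(u,v),$$
as desired.

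\textbf{Main obstacle.} The delicate step is the passage $k\to\infty$ inside the Duhamel integral, in the presence of the highly singular boundary potentials $L^{(k)},R^{(k)}$, which individually diverge as $k\to\infty$. The Cameron--Martin rewriting \eqref{girsuk} is essential here: it replaces the apparent divergence by a finite shift of the smooth arguments of $u$ and $v$, so that the contribution is actually independent of $k$ for $k\geq n$ and passes trivially to the limit. For the bulk potential, the restriction $\gamma<\sqrt{2}$ (or $\mu=0$) in \eqref{restriction} is what allows $V^{(k)}_+$ to have an $L^p$ limit with $p>1$; without it the pairing $\langle e^{\gamma c}V_+u,v\rangle_{\mc{H}_+}$ would not make sense on the natural core $\mc{C}_+$, and this is precisely the technical reason for restricting to this parameter regime in the remainder of the paper.
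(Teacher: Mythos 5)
Your overall strategy -- computing $\lim_{t\to 0^+}t^{-1}\langle u-S_{\boldsymbol{\mu}}(t)u,v\rangle_{\mc{H}_+}$ on the core and matching it with the explicit expression for $\mc{Q}_+$ -- is the same as the paper's, which simply differentiates the Feynman--Kac formula \eqref{thehalfannulisemigroup} in $t$ (adapting \cite[Lemma 5.4]{GKRV20_bootstrap}) and leaves the details to the reader. The difference is that you route the computation through the regularized generators $\mathbf{H}^{(k)}_+$ and an operator-theoretic Duhamel formula, then take $k\to\infty$ followed by $t\to 0^+$, whereas the paper stays inside the probabilistic representation, where positivity of the additive functionals and monotone/dominated convergence do all the work.

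This operator-theoretic detour creates a genuine gap at the last step. After passing to the limit in $k$ you write the Duhamel remainder as $\int_0^t\langle\mathbf{V}_{\boldsymbol{\mu}}S_{\boldsymbol{\mu}}(s)u,e^{-(t-s)\mathbf{H}^0_+}v\rangle_{\mc{H}_+}\,\dd s$ and claim the integrand is continuous at $s=0$ "by continuity of the three bilinear forms on the relevant test class." But for $s>0$ the vector $S_{\boldsymbol{\mu}}(s)u$ is \emph{not} in $\mc{C}_+$: it is a general element of $\mc{H}_+$ depending on all modes, so the Cameron--Martin rewriting \eqref{girsuk} of the $L$ and $R$ terms (which requires a cylinder function of finitely many coordinates) does not apply to it, and the pairing $\langle e^{\gamma c}V_+F,G\rangle_{\mc{H}_+}$ is not a continuous bilinear form on $\mc{H}_+\times\mc{H}_+$ since the potential is unbounded. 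So neither the meaning of $\mathbf{V}_{\boldsymbol{\mu}}S_{\boldsymbol{\mu}}(s)u$ nor the continuity of the integrand at $s=0$ is justified as stated; the same issue already affects the interchange of $k\to\infty$ with the $\dd s$ integral, where the potential hits $e^{-s\mathbf{H}^{(k)}_+}u$. The clean fix is the paper's: expand $1-e^{-A_t}$ inside the Feynman--Kac expectation (with $A_t$ the sum of the three additive functionals), where everything is nonnegative and the $t\to0^+$ limit of $t^{-1}\E_{\tilde\varphi^h}[F(\tilde\phi_t)(1-e^{-A_t})]$ paired against $v$ is controlled by monotone convergence together with the $L^p$ bound on $V_+$ from Appendix \ref{GMC} and the Cameron--Martin identity applied to $u,v\in\mc{C}_+$ themselves. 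Your identification of where the restriction \eqref{restriction} enters, and the closability argument at the end, are correct.
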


\begin{proof} The proof is  a straightforward adaptation of  \cite[Lemma 5.4]{GKRV20_bootstrap} and consists in identifying $\mathcal{Q}_{\boldsymbol{\mu}}$ on $\mc{C}_+$ by differentiating in $t$ the expression \eqref{thehalfannulisemigroup} for  $S_{\boldsymbol{\mu}}(t)$. Details are left to the reader.
\end{proof}

So, the quadratic form $(\mathcal{C}_+,\mathcal{Q}_+ )$  is closable.  Let $ \mathcal{D}(\mathcal{Q}_+)$ be the completion of $ \mathcal{C}_+$ for the $\mathcal{Q}_+$-norm. Then, following \cite[Prop. 5.5]{GKRV20_bootstrap}, we can identify   $ \mathcal{D}(\mathcal{Q}_+)$ as being the  quadratic form associated to the semigroup of half-annuli:
\begin{proposition}
Under the condition \eqref{restriction}, the quadratic form $( \mathcal{D}(\mathcal{Q}_+),\mathcal{Q}_+)$  defines an operator $\mathbf{H}_+$ (the Friedrichs extension), which satisfies  $\mathbf{H}_{\boldsymbol{\mu}}=\mathbf{H}_+$.
\end{proposition}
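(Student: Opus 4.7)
The aim is to show that the Friedrichs extension $\mathbf{H}_+$ of the closed form $(\mathcal{D}(\mathcal{Q}_+), \mathcal{Q}_+)$ coincides with the generator $\mathbf{H}_{\boldsymbol{\mu}}$ of the half-annuli semigroup $S_{\boldsymbol{\mu}}(t)$. The preceding lemma provides the pointwise identity $\mathcal{Q}_+(u,v) = \mathcal{Q}_{\boldsymbol{\mu}}(u,v)$ on the dense subspace $\mathcal{C}_+ \subset \mc{H}_+$. Since the Dirichlet form $\mathcal{Q}_{\boldsymbol{\mu}}$ associated to the strongly continuous symmetric contraction semigroup $S_{\boldsymbol{\mu}}$ is automatically closed, the representation theorem for closed quadratic forms reduces the equality of operators to showing that $\mathcal{C}_+$ is a form core for $\mathcal{Q}_{\boldsymbol{\mu}}$, meaning that $\mathcal{C}_+$ is dense in $\mathcal{D}(\mathcal{Q}_{\boldsymbol{\mu}})$ with respect to the form norm $\|u\|_{\mathcal{Q}_{\boldsymbol{\mu}}} := (\mathcal{Q}_{\boldsymbol{\mu}}(u) + \|u\|_{\mc{H}_+}^2)^{1/2}$.

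The plan to establish this core property proceeds through the regularized Hamiltonians $\mathbf{H}^{(k)}_+$. On the one hand, $\mathcal{C}_+$ is a form core for each $\mathbf{H}^{(k)}_+$ by construction of $\mathcal{Q}^{(k)}_+$ as the closure of this form from $\mathcal{C}_+$. On the other hand, Lemma~\ref{prop:semigrouplimit} yields strong convergence $e^{-t \mathbf{H}^{(k)}_+} F \to S_{\boldsymbol{\mu}}(t) F$ for $F \in \mathcal{C}_+$, and uniform contractivity on $\mc{H}_+$ combined with the density of $\mathcal{C}_+$ extends this convergence to all of $\mc{H}_+$. By the Trotter--Kato theorem this is equivalent to strong resolvent convergence $\mathbf{R}_\lambda^{(k),+} \to \mathbf{R}_{\boldsymbol{\mu}}(\lambda)$. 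The standard equivalence between strong resolvent convergence of nonnegative self-adjoint operators and Mosco convergence of the associated closed quadratic forms then produces, for any $u \in \mathcal{D}(\mathcal{Q}_{\boldsymbol{\mu}})$, a sequence $u_k \in \mathcal{D}(\mathcal{Q}^{(k)}_+)$ with $u_k \to u$ in $\mc{H}_+$ and $\mathcal{Q}^{(k)}_+(u_k) \to \mathcal{Q}_{\boldsymbol{\mu}}(u)$. A diagonal extraction, in which each $u_k$ is itself approximated by $\mathcal{C}_+$-elements in the $\mathcal{Q}^{(k)}_+$-norm, then yields a sequence in $\mathcal{C}_+$ converging to $u$ in the $\mathcal{Q}_{\boldsymbol{\mu}}$-norm.

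The main obstacle is the uniform-in-$k$ control along the diagonal, which at root amounts to identifying $\lim_{k\to\infty}\mathcal{Q}^{(k)}_+(w)$ with $\mathcal{Q}_+(w) = \mathcal{Q}_{\boldsymbol{\mu}}(w)$ for $w \in \mathcal{C}_+$. This is exactly where the restriction \eqref{restriction} enters. The two boundary terms $\langle e^{\frac{\gamma}{2} c} L^{(k)} w, w\rangle_{\mc{H}_+}$ and $\langle e^{\frac{\gamma}{2} c} R^{(k)} w, w\rangle_{\mc{H}_+}$ stabilize as soon as $k$ exceeds the number of variables on which $w$ depends, by the Cameron--Martin identity \eqref{girsuk}, so they cause no difficulty. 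The delicate term is the bulk contribution $\langle e^{\gamma c} V^{(k)}_+ w, w\rangle_{\mc{H}_+}$: its convergence to $\langle e^{\gamma c} V_+ w, w\rangle_{\mc{H}_+}$ uses the $L^p$-convergence $V^{(k)}_+ \to V_+$ for some $p > 1$ proved in Appendix~\ref{GMC}, which holds precisely when $\gamma < \sqrt{2}$, combined with H\"older's inequality and the fact that $c \mapsto e^{\gamma c}|w(c,\cdot)|^2$ lies in $L^q$ for the conjugate exponent thanks to the compact support in $c$ and polynomial growth in the Gaussian variables built into $\mathcal{C}_+$. In the alternative case $\mu = 0$ the bulk term is absent and the argument simplifies. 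Once the core property is established, one concludes $\overline{\mathcal{Q}_+} = \mathcal{Q}_{\boldsymbol{\mu}}$ and hence $\mathbf{H}_+ = \mathbf{H}_{\boldsymbol{\mu}}$ by the representation theorem.
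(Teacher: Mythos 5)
The paper gives no written proof of this proposition; it defers entirely to \cite{GKRV20_bootstrap}[Prop.~5.5], so your attempt is being measured against that scheme rather than against an in-text argument. Your reduction is the correct one: since $\mathcal{Q}_+=\mathcal{Q}_{\boldsymbol{\mu}}$ on $\mathcal{C}_+$ and $\mathcal{Q}_{\boldsymbol{\mu}}$ is closed, the identity $\mathbf{H}_+=\mathbf{H}_{\boldsymbol{\mu}}$ is exactly the statement that $\mathcal{C}_+$ is a form core for $\mathcal{Q}_{\boldsymbol{\mu}}$, and you also locate correctly where \eqref{restriction} enters (the $L^p$, $p>1$, convergence of $V_+^{(k)}$ to $V_+$). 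The steps up to and including Mosco convergence are sound: strong convergence of the contraction semigroups (extended from $\mathcal{C}_+$ to $\mc{H}_+$ by equicontinuity, starting from Lemma \ref{prop:semigrouplimit}) gives strong resolvent convergence, hence Mosco convergence of $\mathcal{Q}_+^{(k)}$ to $\mathcal{Q}_{\boldsymbol{\mu}}$, hence for each $u\in\mc{D}(\mathcal{Q}_{\boldsymbol{\mu}})$ a recovery sequence $u_k\to u$ in $\mc{H}_+$ with $\mathcal{Q}_+^{(k)}(u_k)\to\mathcal{Q}_{\boldsymbol{\mu}}(u)$.

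The gap is in the final diagonal extraction. Choosing $w_k\in\mathcal{C}_+$ with $\mathcal{Q}_+^{(k)}(w_k-u_k)$ small controls $w_k$ only in the $k$-th form norm; to conclude you need $\mathcal{Q}_{\boldsymbol{\mu}}(w_k-u)\to 0$, and for that (via weak convergence plus norm convergence in the form Hilbert space) you need $\mathcal{Q}_{\boldsymbol{\mu}}(w_k)=\mathcal{Q}_+(w_k)\to\mathcal{Q}_{\boldsymbol{\mu}}(u)$. This does not follow from $\mathcal{Q}_+^{(k)}(w_k)\to\mathcal{Q}_{\boldsymbol{\mu}}(u)$, because there is no uniform-in-$k$ comparison between $\mathcal{Q}_+^{(k)}$ and $\mathcal{Q}_+$ on $\mathcal{C}_+$: the convergence $\mathcal{Q}_+^{(k)}(w)\to\mathcal{Q}_+(w)$ is only pointwise in $w$, with a rate that degenerates as $w$ involves more Gaussian coordinates and as $\|e^{\gamma c}|w|^2\|_{L^q}$ grows, whereas the approximants $w_k$ are forced to change with $k$. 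Concretely, the regularized density $(2|\sin\theta|+1/k)^{-\gamma^2/2}$ is capped at $k^{\gamma^2/2}$ while the limiting one is unbounded near $\theta=0,\pi$, so $\mathcal{Q}_+^{(k)}(w_k)$ bounded is compatible with $\mathcal{Q}_+(w_k)$ blowing up. Your own diagnosis of "the main obstacle" as the identification of $\lim_k\mathcal{Q}_+^{(k)}(w)$ with $\mathcal{Q}_+(w)$ for fixed $w\in\mathcal{C}_+$ misses this point: that pointwise identification is already the content of \eqref{defQ2} and the preceding lemma, and it is not what the diagonal requires. The argument of \cite{GKRV20_bootstrap}[Prop.~5.5] avoids this by working at the level of resolvents, using the a priori bound $\mathcal{Q}_+^{(k)}(\mathbf{R}_\lambda^{(k),+}u)+\lambda\|\mathbf{R}_\lambda^{(k),+}u\|^2\leq\lambda^{-1}\|u\|^2$ to show directly that $\mathbf{R}_{\boldsymbol{\mu}}(\lambda)u$ lies in $\mc{D}(\mathcal{Q}_+)$ and satisfies the weak resolvent identity there; some such quantitative uniform control is what your diagonal step is missing.
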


Again, we stress that $\mc{D}({\bf H}_+)=\{ u\in\mc{D}(\mc{Q}^+)\,|\, {\bf H}_+u\in \mc{H}_+\}$ and $({\bf H}_+)^{-1}:\mc{H}_+\to \mc{D}({\bf H}_+)$ is bounded. Furthermore, by the spectral theorem, $\mathbf{H}_+$ generates a strongly continuous contraction semigroup of self-adjoint operators $(e^{-t \mathbf{H}_+ } )_{t\geq 0}$ on $\mc{H}_+$, which coincides with    $S_{\boldsymbol{\mu}}(t)$.

If we let $\mc{D}'(\mc{Q}_+)$ be the dual to $\mc{D}(\mc{Q}^+)$ (i.e. the space of bounded conjugate linear functionals on $\mc{D}(\mc{Q}_+)$), the injection $\mc{H}_+\subset \mc{D}'(\mc{Q}_+)$ is continuous and the operator ${\bf H}_+$ can be extended as a bounded isomorphism 
\[{\bf H}_+:\mc{D}(\mc{Q}_+)\to \mc{D}'(\mc{Q}_+).\] 

Finally we conclude this section with a lemma that will be useful for the spectral analysis of the quadratic form
\begin{lemma}
Let $\mc{Q}_{V_+},\mc{Q}_{L},\mc{Q}_{R} $ be the bilinear forms respectively defined on $E_k$ by 
\begin{equation*}
\mc{Q}_{V_+}(u,v):=(V_{+}(\varphi^h)u,v)_2,\quad  \mc{Q}_{L}( u,v):=\lim_{k'\to\infty}(L^{(k')}( \varphi^h)u,v)_2 , \quad  \mc{Q}_{R}(u,v):=\lim_{k'\to\infty}(R^{(k')}(\varphi^h)u,v)_2
\end{equation*}
where $(\cdot,\cdot)_2$ stands here for the inner product in $L^2(\Omega_{\T^+})$. Then there exists a constant $C_k>0$ such that for all $u\in E_k$
$$\mc{Q}_{V_+}(u,u)+ \mc{Q}_{L}( u,u)+\mc{Q}_{R}(u,u)\leq C_k\|u\|_2^2. $$
\end{lemma}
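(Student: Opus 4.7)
The plan is to exploit the fact that $E_k$ is a finite-dimensional subspace of polynomials in finitely many Gaussian variables, and then to bound each of the three terms separately using (i) integrability of $V_+$ for $\mc{Q}_{V_+}$, and (ii) the Cameron--Martin trick of \eqref{girsuk} for $\mc{Q}_L$ and $\mc{Q}_R$.

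First I would observe that by \eqref{fbasis2} each eigenspace $\ker(\mathbf{P}_+-j)$ is spanned by the Hermite polynomials $\hat\psi_{\bf k}$ with $|{\bf k}|=j$, whose number equals the number of partitions of $j$; thus $E_k$ is finite-dimensional. Moreover every $u\in E_k$ is a polynomial of total degree at most $\lambda_k$ in a fixed finite set of Gaussian variables $x_1,\dots,x_N$ with $N\leq \lambda_k$. By Nelson hypercontractivity (or equivalently, by equivalence of norms on the finite-dimensional space of polynomials of bounded degree in independent Gaussians), there exists a constant $C_{k,p}>0$ such that
\begin{equation*}
\|u\|_{L^p(\Omega_{\T^+})}\leq C_{k,p}\|u\|_{L^2(\Omega_{\T^+})},\qquad \forall u\in E_k,\ \forall p\in[1,\infty).
\end{equation*}

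Next, to bound $\mc{Q}_{V_+}(u,u)$ under the assumption $\gamma<\sqrt{2}$ in \eqref{restriction}, I would invoke Appendix~\ref{GMC} (cited just after \eqref{defV+}), which gives $V_+\in L^p(\Omega_{\T^+})$ for some $p>1$. Applying H\"older with $1/p+1/q=1$ together with the hypercontractive bound above yields
\begin{equation*}
\mc{Q}_{V_+}(u,u)=\E[V_+(\varphi^h)|u|^2]\leq \|V_+\|_{L^p}\,\|u\|_{L^{2q}}^2\leq C_k\|u\|_{L^2}^2.
\end{equation*}
In the alternative regime $\mu=0$, $\gamma\in(0,2)$ of \eqref{restriction} the $V_+$-term is absent in the quadratic form and this step can be skipped.

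For $\mc{Q}_L$ (and symmetrically $\mc{Q}_R$), I would fix $u\in E_k$ depending only on $x_1,\dots,x_N$, and apply the Cameron--Martin identity already established in \eqref{girsuk}: for every $k'\geq N$ the pairing $(L^{(k')}(\varphi^h)u,u)_2$ is independent of $k'$ and equals $\|u_{\rm left}\|_{L^2}^2$, where $u_{\rm left}(x_1,\dots,x_N)=u(x_1-\gamma\sqrt{2},\dots,x_N+(-1)^N\gamma\sqrt{2/N})$. Hence the limit $\mc{Q}_L(u,u)=\|u_{\rm left}\|_{L^2}^2$ exists and is finite, and since translation by a fixed vector is a bounded operation on the finite-dimensional space of polynomials of degree $\leq\lambda_k$ in $x_1,\dots,x_N$, we obtain $\|u_{\rm left}\|_{L^2}\leq C'_k\|u\|_{L^2}$. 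The same argument with the opposite sign shift gives the bound for $\mc{Q}_R$.

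The main (and really only) technical point is the $L^p$-integrability of $V_+$ with $p>1$, which is the reason for the restriction $\gamma<\sqrt{2}$ (in the contrary regime $\gamma\in[\sqrt{2},2)$ covered by the second alternative of \eqref{restriction}, $\mu=0$ and the $V_+$ term drops out entirely, so the conclusion is immediate from the Cameron--Martin step alone). Summing the three estimates gives the desired constant $C_k$.
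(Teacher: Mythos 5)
Your proof is correct and follows essentially the same route as the paper's: $L^p$-integrability for some $p>1$ of the potential established in Appendix \ref{app:GMC_estimates}, H\"older's inequality, and equivalence of norms on the finite-dimensional space $E_k$, with the Cameron--Martin computation \eqref{girsuk} disposing of the boundary terms $\mc{Q}_L,\mc{Q}_R$. The only cosmetic difference is that the paper first replaces $V_+$ by its conditional expectation $W^k(\varphi^{h,k})$ with respect to the $\sigma$-algebra generated by $\varphi^{h,k}$ (using that every $u\in E_k$ is measurable with respect to it) before applying H\"older, whereas you apply H\"older to $V_+$ directly; both are valid under the standing assumption \eqref{restriction}.
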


\begin{proof}
A fonction $u\in E_k$ is measurable  with respect to the field $ \varphi^{h,k}$ (recall that this is the truncation of the field $ \varphi^{h}$ to its Fourier modes $|n|\leq k$). Also, from Appendix \ref{app:GMC_estimates}, the conditional expectation of $V_+$ with respect to the filtration generated by $ \varphi^{h,k}$ is the random variable $W^k( \varphi^{h,k}):=  \int_{\T^+} e^{\gamma  \varphi^{h,k}(\theta)-\frac{\gamma^2}{2}\E[  \varphi^{h,k}(\theta)^2]}\Big(\frac{1}{2|\sin(\theta)|}\Big)^{\frac{\gamma^2}{2}}\dd\theta$. Therefore
\begin{align*}
\mc{Q}_{V_+}(u,u)=&(W^k( \varphi^{h,k})u,u)_2=\|W^k(\varphi^{h,k})\|_{L^p(\Omega_{\T^+})}\|u^2\|_{L^q(\Omega_{\T^+})}<+\infty
\end{align*}
for some $p>1$ \footnote{Actually it is plain to see that $W^k(\tilde{\varphi}^{h,k})$ is in  $L^p(\Omega_{\T^+})$   for any $p>1$ but we don't need it here. } (given by  Appendix \ref{app:GMC_estimates}) and the corresponding conjugate exponent $q$ . Indeed, Hermite polynomials are in every $L^q(\Omega_{\T^+})$ spaces. By equivalence of norms on $E_k$, which is finite dimensional, our claim follows. The same argument holds for the two other potentials.
\end{proof}

\appendix
 
\section{Markov property of the GFF}
 
\begin{proposition}\label{decompGFF}
Let $(\Sigma,g)$ be a Riemann surface with corners as in Definition \eqref{def:mfd_with_corners}. Let $\mathcal{C}$ be a  union of  interior cuts in $\mathring{\Sigma}$ and  $\mathcal{B}$ be a  union of  boundary cuts, all of these curves being non overlapping.
We consider the following alternative possibilities:
\begin{enumerate}
\item Assume $\mathcal{C}\cup \mathcal{B}$ splits $\Sigma$ into two connected components $\Sigma_1$ and $\Sigma_2$. Equip the trace of the cuts on $\Sigma_1$ and $\Sigma_2$ with the marking $D$ (other markings on the other boundary components remain unchanged). Then:
\begin{enumerate}
\item  
either $\partial\Sigma_D\not=\emptyset$, in which case   the mixed b.c. GFF $Y_g$ on $\Sigma$ admits the following decomposition in law as a sum of independent processes
$$Y_g\stackrel{\rm law}{=}Y_1+Y_2+P$$
with $Y_q$ a mixed b.c. GFF on $\Sigma_q$ for $q=1,2$ and $P$ the harmonic extension on $\Sigma\setminus(\mathcal{C}\cup \mathcal{B})$ of the restriction of $Y_g$ to $\mathcal{C}\cup \mathcal{B}$ with Dirichlet boundary condition on $\partial \Sigma_D$ and Neumann boundary condition on $\partial \Sigma_N$.
\item 
 or $\partial\Sigma_D=\emptyset$, in which case   the Neumann GFF $Y_g$ on $\Sigma$ admits the following decomposition in law as a sum of independent processes
$$Y_g\stackrel{\rm law}{=}Y_1+Y_2+P-c_g$$
with $Y_q$ a mixed b.c. GFF on $\Sigma_q$ for $q=1,2$ and $P$ the harmonic extension on $\Sigma\setminus(\mathcal{C}\cup \mathcal{B})$ of the restriction of $Y_g$ to $\mathcal{C}\cup \mathcal{B}$ with  Neumann boundary condition on $\partial \Sigma_N$, and $c_g:=\frac{1}{{\rm v}_g(\Sigma)}\int_\Sigma(Y_1+Y_2+P)\,\dd {\rm v}_g $.
\end{enumerate}
\item  Assume $\mathcal{C}\cup \mathcal{B}$ does not disconnect $\Sigma$ and write $\Sigma'=\Sigma\setminus(\mathcal{C}\cup \mathcal{B})$. Equip the trace of the cuts on $\Sigma'$ with the marking $D$. Then 
\begin{enumerate}
\item either $\partial\Sigma_D\not=\emptyset$, in which case   the mixed b.c. GFF $Y_g$ on $\Sigma$ admits the following decomposition in law as a sum of independent processes
$$Y_g\stackrel{\rm law}{=}Y'+P$$
with $Y'$ a mixed b.c. GFF on $\Sigma'$   and $P$ the harmonic extension on $\Sigma'$ of the restriction of $Y_g$ to $\mathcal{C}\cup \mathcal{B}$ with Dirichlet boundary condition on $\partial \Sigma_D$ and Neumann boundary condition on $\partial \Sigma_N$.
\item or $\partial\Sigma_D=\emptyset$, in which case   the Neumann GFF $Y_g$ on $\Sigma$ admits the following decomposition in law as a sum of independent processes
$$Y_g\stackrel{\rm law}{=}Y'+P-c_g$$
with $Y'$ a mixed b.c. GFF on $\Sigma'$, $P$ the harmonic extension on $\Sigma'$ of the restriction of $Y_g$ to $\mathcal{C}\cup \mathcal{B}$ with  Neumann boundary condition on $\partial \Sigma_N$, and $c_g:=\frac{1}{{\rm v}_g(\Sigma)}\int_\Sigma(Y'+P)\,\dd {\rm v}_g $.
\end{enumerate}
\end{enumerate}
 \end{proposition}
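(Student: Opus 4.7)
The plan is to reduce everything to the standard Markov property of the GFF on a Riemann surface without corners, by means of the Neumann doubling procedure of Section \ref{double_surface}. The key ingredient is the identification, recalled at the end of Section 2.8,
\[ X_{\Sigma,g,m} \stackrel{\mathrm{law}}{=} \frac{X_{\Sigma^{\#2},g^{\#2},D} + X_{\Sigma^{\#2},g^{\#2},D}\circ \tau_\Sigma}{\sqrt{2}} \]
when $\partial \Sigma_D \neq \emptyset$, and the analogous one with $X_{\Sigma^{\#2},g^{\#2}}$ replacing $X_{\Sigma^{\#2},g^{\#2},D}$ when $\partial \Sigma_D = \emptyset$. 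Any Markov-type identity on $\Sigma$ will thus follow from a $\tau_\Sigma$-equivariant identity on the double by projecting onto the even part under $\tau_\Sigma^*$.

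The first step is to describe how the cuts double. Each interior cut $\mathcal{C}_j \subset \mathring{\Sigma}$ produces two disjoint analytic circles $\mathcal{C}_j$ and $\tau_\Sigma(\mathcal{C}_j)$ in $\Sigma^{\#2}$. Each boundary cut $\mathcal{B}_j$, whose endpoints lie on $\partial \Sigma_N$, glues with its reflection to a single smooth analytic embedded loop $\mathcal{B}_j \cup \tau_\Sigma(\mathcal{B}_j)$ on $\Sigma^{\#2}$ (this analyticity is built into the very definition of a boundary cut). Write $\mathcal{C}^{\#2}$ for the union of all these doubled curves; by construction it is $\tau_\Sigma$-invariant. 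Cutting $\Sigma^{\#2}$ along $\mathcal{C}^{\#2}$ produces either two $\tau_\Sigma$-permuted components $\Sigma_1^{\#2}$ and $\Sigma_2^{\#2}$ (in case 1), which are identified with the Neumann doubles of $\Sigma_1$ and $\Sigma_2$ equipped with their enlarged Dirichlet markings, or a single $\tau_\Sigma$-invariant component $(\Sigma')^{\#2}$ (in case 2).

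Next I would apply the standard Markov property of the GFF on $\Sigma^{\#2}$ along $\mathcal{C}^{\#2}$. When $\partial \Sigma_D \neq \emptyset$, the Dirichlet GFF on $\Sigma^{\#2}$ decomposes in law as $X_{\Sigma^{\#2},g^{\#2},D} \stackrel{\mathrm{law}}{=} X_1^{\#2} + X_2^{\#2} + P^{\#2}$ in case 1, where the $X_q^{\#2}$ are independent Dirichlet GFFs on $\Sigma_q^{\#2}$ and $P^{\#2}$ is the harmonic extension to $\Sigma^{\#2} \setminus \mathcal{C}^{\#2}$ of the restriction of $X_{\Sigma^{\#2},g^{\#2},D}$ to $\mathcal{C}^{\#2}$. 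Since $\mathcal{C}^{\#2}$, the boundary markings, and the harmonic extension are all $\tau_\Sigma$-equivariant, I apply $(1 + \tau_\Sigma^*)/\sqrt{2}$ to each summand to descend the identity to $\Sigma$. A covariance check, using the symmetrization formula \eqref{Green_symmetric} for Green functions, shows that the symmetrized $X_q^{\#2}$ has covariance $G_{\Sigma_q, g, m}$ (with the updated markings) and is thus a mixed b.c. GFF on $\Sigma_q$, while the symmetrized $P^{\#2}$ equals the harmonic extension $P$ of the statement. This proves case 1(a), and case 2(a) is analogous with a single doubled summand.

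The main obstacle --- manageable, but requiring care --- is the Neumann case $\partial \Sigma_D = \emptyset$, where $\Sigma^{\#2}$ is closed and its GFF carries a zero-mean constraint in the $g^{\#2}$-metric. The standard Markov decomposition on a closed surface reads $X_{\Sigma^{\#2},g^{\#2}} \stackrel{\mathrm{law}}{=} X_1^{\#2} + X_2^{\#2} + P^{\#2} - c_{g^{\#2}}$, where $c_{g^{\#2}}$ is the $g^{\#2}$-mean on $\Sigma^{\#2}$ of the right-hand side. Symmetrizing, and exploiting the $\tau_\Sigma$-invariance of $g^{\#2}$ together with the invariance in law of all summands, the doubled mean $c_{g^{\#2}}$ descends exactly to the $g$-mean $c_g$ appearing in the statement, so the correction term is correctly recovered. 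Independence of the three summands on $\Sigma$ follows from their independence on $\Sigma^{\#2}$, preserved by symmetrization because each summand is individually distributed as its own $\tau_\Sigma$-symmetrization.
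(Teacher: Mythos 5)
Your proposal is correct in outline but takes a genuinely different route from the paper. The paper proves the statement by a direct covariance computation on $\Sigma$ itself: it sets $u:=G_{g,m}-G_{m,\Sigma}$ (resp.\ the analogous mean-corrected difference in the Neumann case), checks that $u(\cdot,y)$ is harmonic off the cuts with the correct mixed boundary conditions and boundary value $G_{g,m}(\cdot,y)$ on $\mathcal{C}\cup\mathcal{B}$, and concludes $u=\mc{P}_g\mc{S}_g\mc{P}_g^*$ by uniqueness of the Poisson representation — i.e.\ the covariance of $Y_g$ splits as that of $Y_1+Y_2$ plus that of $P$, which for Gaussian fields gives the identity in law. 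You instead pass to the Neumann double, where every boundary cut closes up to a circle and the boundary condition becomes pure Dirichlet (or the surface becomes closed), invoke the known Markov decomposition there, and project onto the $\tau_\Sigma$-even part. Both are valid; your reduction is consistent with the doubling strategy the paper uses for DN maps and determinants and offloads the analytic work to \cite{GKRV21_Segal}, while the paper's argument is self-contained and produces the explicit covariance identity $G_{g,m}=G_{m,\Sigma}+\mc{P}_g\mc{S}_g\mc{P}_g^*$ that is used directly in the Cameron--Martin step of the gluing proof. Two points in your write-up deserve tightening: the components of $\Sigma^{\#2}\setminus\mathcal{C}^{\#2}$ in case 1 are $\tau_\Sigma$-\emph{invariant}, not permuted by $\tau_\Sigma$; and in the Neumann case the symmetrization $(1+\tau_\Sigma^*)/\sqrt{2}$ sends the constant $c_{g^{\#2}}$ to $\sqrt{2}\,c_{g^{\#2}}$, so you must actually verify $\sqrt{2}\,c_{g^{\#2}}=c_g$ (it holds, using ${\rm v}_g(\Sigma)=\tfrac12{\rm v}_{g^{\#2}}(\Sigma^{\#2})$ and the $\tau_\Sigma$-invariance of $g^{\#2}$), rather than assert that the mean "descends exactly".
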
     
 
  \begin{proof} We first prove 1.a). Define  $G_{m,\Sigma}:=G_{m,\Sigma_1}\mathbf{1}_{\Sigma_1\times \Sigma_1}+G_{m,\Sigma_2}\mathbf{1}_{\Sigma_2\times \Sigma_2}$ with $G_{m,\Sigma_1},G_{m,\Sigma_2}$ the mixed b.c. Green functions on $\Sigma_1,\Sigma_2$. Set
  \[ u(x,y):=G_{g,m}(x,y)-G_{m,\Sigma}(x,y)\]
  with $G_{g,m}$ the mixed b.c. Green function on $\Sigma$. Then  $u\in C^0(\Sigma\times \Sigma\setminus {\rm diag}(\Sigma))$. For $y\in \Sigma\setminus  (\mathcal{C}\cup\mathcal{B})$ fixed, we have  
\[\Delta_gu(\cdot,y)_{|  \Sigma\setminus  (\mathcal{C}\cup\mathcal{B})}=\delta(y)-\delta(y)=0 , \quad u(\cdot,y)|_{\mathcal{C}\cup\mathcal{B}}=G_{g,m}(\cdot,y)|_{\mathcal{C}\cup\mathcal{B}}\]
with boundary conditions
$$u(\cdot,y)|_{\partial\Sigma_D}=0, \quad \partial_n u(\cdot,y)|_{\partial\Sigma_N}=0.
$$
If $\mc{P}_g$ is the Poisson kernel for the pair $(\Sigma,\mathcal{C}\cup\mathcal{B})$ with corresponding boundary conditions on $\partial\Sigma_D$ and  $\partial\Sigma_N$, we deduce
\[ u(x,y)=\int_{\mathcal{C}\cup\mathcal{B}} \mc{P}_g(x,x')G_{g,m}(x',y){\rm v}_{\mathcal{C}\cup\mathcal{B}}(\dd x') \]
for all $x\in \Sigma,y\in \Sigma\setminus (\mathcal{C}\cup\mathcal{B})$, with ${\rm v}_{\mathcal{C}\cup\mathcal{B}}$ the length measure on $\mathcal{C}\cup\mathcal{B}$. But this extends by continuity to $(\Sigma\times \Sigma)\setminus \textrm{diag}(\Sigma)$. Similarly, for $x\in \Sigma\setminus(\mathcal{C}\cup\mathcal{B})$, we have on $\Sigma  \setminus(\mathcal{C}\cup\mathcal{B})$
\[\Delta_gu(x,\cdot)=0,\quad \textrm{ with } u(x,\cdot)|_{\mathcal{C}\cup\mathcal{B}}=\int_{\mathcal{C}\cup\mathcal{B}} \mc{P}_g(x,x')G_{g,m}(x',\cdot)|_{\mathcal{C}\cup\mathcal{B}}{\rm v}_{\mathcal{C}\cup\mathcal{B}}(\dd x')\]
and corresponding boundary conditions on $\partial\Sigma_D$  and   $\partial\Sigma_N$. 
This implies that for $y\in \Sigma$, $x\in \Sigma\setminus (\mathcal{C}\cup\mathcal{B})$
\[u(x,y)= \int_{\mathcal{C}\cup\mathcal{B}} \int_{\mathcal{C}\cup\mathcal{B}} \mc{P}_g(x,x')G_{g,m}(x',y')\mc{P}_g(y,y'){\rm v}_{\mathcal{C}\cup\mathcal{B}}(\dd x'){\rm v}_{\mathcal{C}\cup\mathcal{B}}(\dd y').\]
This extends by continuity to $(x,y)\in (\Sigma\times \Sigma)\setminus \textrm{diag}(\Sigma)$. Hence our claim.

  Now we prove 1.b). Define  $G_{N,\Sigma}:=G_{\Sigma_1}\mathbf{1}_{\Sigma_1\times \Sigma_1}+G_{\Sigma_2}\mathbf{1}_{\Sigma_2\times \Sigma_2}$ with $G_{\Sigma_1},G_{\Sigma_2}$ the mixed b.c. Green functions on $\Sigma_1,\Sigma_2$ with Neumann b.c. on $\partial\Sigma$ and Dirichlet b.c. on $\mathcal{C}\cup\mathcal{B}$. Let us consider
  \[ u(x,y):=G_{g,N}(x,y)-G_{N,\Sigma}(x,y)+\frac{1}{{\rm v}_g(\Sigma)}\int_\Sigma G_{N,\Sigma}(x,y)\,  {\rm v}_g(\dd x) +\frac{1}{{\rm v}_g(\Sigma)}\int_\Sigma G_{N,\Sigma}(x,y)\, {\rm v}_g(\dd y).\]
Note that $u\in C^0(\Sigma\times \Sigma\setminus {\rm diag}(\Sigma))$. 
For $y\in \Sigma\setminus  (\mathcal{C}\cup\mathcal{B})$ fixed, we have on  $  \Sigma\setminus (\mathcal{C}\cup\mathcal{B})$ 
\[\Delta_gu(\cdot,y)=2\pi(\delta(y)-\frac{1}{{\rm v}_g(\Sigma)}-\delta(y)+\frac{1}{{\rm v}_g(\Sigma)})=0 , \quad u(\cdot,y)|_{\mathcal{C}\cup\mathcal{B}}=G_{g,N}(\cdot,y)|_{\mathcal{C}\cup\mathcal{B}}+c(y)\]
where $c(y):=\frac{1}{{\rm v}_g(\Sigma)}\int_\Sigma G_{N,\Sigma}(x,y)\,{\rm v}_g(\dd x)$.
Thus, if $\mc{P}_g$ is the Poisson kernel for the pair $(\Sigma,\mathcal{C}\cup\mathcal{B})$ with Neumann b.c. on $\partial\Sigma$ then
\[ u(x,y)=\int_{\mathcal{C}\cup\mathcal{B}} \mc{P}_g(x,x')G_{g,N}(x',y){\rm v}_{\mathcal{C}\cup\mathcal{B}}(\dd x')+c(y)\]
for all $x\in \Sigma,y\in \Sigma\setminus(\mathcal{C}\cup\mathcal{B})$, with ${\rm v}_{\mathcal{C}\cup\mathcal{B}}$ the length measure on $\mathcal{C}\cup\mathcal{B}$. But this extends by continuity to $(\Sigma\times \Sigma)\setminus \textrm{diag}(\Sigma)$.
Note that $c(y)=0$ if $y\in \mathcal{C}\cup\mathcal{B}$. 

Now similarly, for $x\in \Sigma\setminus (\mathcal{C}\cup\mathcal{B})$, we have on $  \Sigma\setminus (\mathcal{C}\cup\mathcal{B})$
\[\Delta_gu(x,\cdot)=0,\quad \textrm{ with } u(x,\cdot)|_{\mathcal{C}\cup\mathcal{B}}=\int_{\mathcal{C}\cup\mathcal{B}} \mc{P}_g(x,x')G_{g,N}(x',\cdot)|_{\mathcal{C}}{\rm v}_{\mathcal{C}\cup\mathcal{B}}(\dd x').\] 
This implies that for $y\in \Sigma$, $x\in \Sigma\setminus (\mathcal{C}\cup\mathcal{B})$
\[u(x,y)= \int_{\mathcal{C}\cup\mathcal{B}}  \int_{\mathcal{C}\cup\mathcal{B}} \mc{P}_g(x,x')G_{g,N}(x',y')\mc{P}_g(y,y'){\rm v}_{\mathcal{C}\cup\mathcal{B}}(\dd x'){\rm v}_{\mathcal{C}\cup\mathcal{B}}(\dd y').\]
This extends by continuity to $(x,y)\in (\Sigma\times \Sigma)\setminus \textrm{diag}(\Sigma)$.
Thus,  denoting  by $\mc{S}_g$ the restriction of $G_{g,N}$  to $(\mathcal{C}\cup\mathcal{B})^2$
\[ G_{g,N}(x,y)=G_{N,\Sigma}(x,y)-\frac{1}{{\rm v}_g(\Sigma)}\int_\Sigma G_{N,\Sigma}(x,y) {\rm v}_g(\dd x) -\frac{1}{{\rm v}_g(\Sigma)}\int_\Sigma G_{N,\Sigma}(x,y)\,{\rm v}_g(\dd y)+\mc{P}_g\mc{S}_g\mc{P}_g^*(x,y).\]
Observe that, if $C:=\int G_{N,\Sigma}(x',y'){\rm v}_g(\dd x'){\rm v}_g(\dd y')$,  
\[\int_\Sigma u(x,y){\rm v}_g(\dd x)=\int_\Sigma u(x,y){\rm v}_g(\dd y)=C\frac{1}{{\rm v}_g(\Sigma)}, \quad \int_{\Sigma\times \Sigma}u(x,y){\rm v}_g(\dd x){\rm v}_g(\dd y)=C.\]
This implies that 
\[\begin{split}
\mc{P}_g\mc{S}_g\mc{P}_g^*(x,y)=& \mc{P}_g\mc{S}_g\mc{P}_g^*(x,y)-\frac{1}{{\rm v}_g(\Sigma)}\int_\Sigma\mc{P}_g\mc{S}_g\mc{P}_g^*(x,y'){\rm v}_g(\dd y')-\frac{1}{{\rm v}_g(\Sigma)}\int_\Sigma\mc{P}_g\mc{S}_g\mc{P}_g^*(x',y){\rm v}_g(\dd x')\\
& +\frac{1}{{\rm v}_g(\Sigma)^2}\int_{\Sigma\times \Sigma}\mc{P}_g\mc{S}_g\mc{P}_g^*(x',y'){\rm v}_g(\dd y'){\rm v}_g(\dd x')+\frac{1}{{\rm v}_g(\Sigma)^2}\int G_{N,\Sigma}(x',y'){\rm v}_g(\dd x'){\rm v}_g(\dd y').
\end{split}\]
In particular we have proved that 
\[\begin{split}
G_{g,N}(x,y)=& G_{N,\Sigma}(x,y)+\mc{P}_g\mc{S}_g\mc{P}_g^*(x,y) -\frac{1}{{\rm v}_g(\Sigma)}\int_\Sigma G_{N,\Sigma}(x',y){\rm v}_g(\dd x') -\frac{1}{{\rm v}_g(\Sigma)}\int_\Sigma G_{N,\Sigma}(x,y'){\rm v}_g(\dd y')\\ 
 & +\frac{1}{{\rm v}_g(\Sigma)^2} \int_\Sigma G_{N,\Sigma}(x',y'){\rm v}_g(\dd x'){\rm v}_g(\dd y')-\frac{1}{{\rm v}_g(\Sigma)}\int_\Sigma\mc{P}_g\mc{S}_g\mc{P}_g^*(x,y'){\rm v}_g(\dd y')\\\
 & -\frac{1}{{\rm v}_g(\Sigma)}\int_\Sigma\mc{P}_g\mc{S}_g\mc{P}_g^*(x',y){\rm v}_g(\dd x')+\frac{1}{{\rm v}_g(\Sigma)^2}\int_{\Sigma\times \Sigma}\mc{P}_g\mc{S}_g\mc{P}_g^*(x',y'){\rm v}_g(\dd x'){\rm v}_g(\dd y').
\end{split}\]
This proves our second claim.

The proofs of 2.a) and 2.b) follow the same lines.
   \end{proof}
 \section{Several identities about determinant of Laplacians}\label{app:determinants}

 \begin{lemma}\label{asympt_exp_heat}
Let $\Sigma^{\#2}$ be the Neumann double of a Riemann surface $\Sigma$ with corners and $g$ be an admissible 
metric on $\Sigma$ extended to a symmetric metric $g^{\#2}$ with respect to the involution $\tau_\Sigma$. The integral  kernel 
$k_{\Sigma^{\#2}}(t,x,x')$ of the heat operator $e^{-t\Delta_{\Sigma^{\#2},g^{\#2},D}}$ of the Laplacian with Dirichlet condition on $\pl \Sigma^{\#2}$ satisfies
\[ \int_{\Sigma} (k_{\Sigma^{\#2}}(t,x,x)+k_{\Sigma^{\#2}}(t,x,\tau_\Sigma(x))){\rm dv}_g(x)=\sum_{k=0}^{2} c_k t^{-1+k/2}+\mc{O}(t^{1/2})\]
as $t\to 0$, for some $c_j\in \R$. 
 \end{lemma}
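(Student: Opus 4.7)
The starting point is the identity \eqref{heatkernel}, which yields
\[
\int_{\Sigma} \bigl(k_{\Sigma^{\#2}}(t,x,x)+k_{\Sigma^{\#2}}(t,x,\tau_\Sigma(x))\bigr)\,{\rm dv}_g(x) \;=\; {\rm Tr}\bigl(e^{-t\Delta_{\Sigma,g,m}}\bigr).
\]
So the task reduces to establishing a short-time heat trace expansion for the mixed boundary-condition Laplacian on the compact Riemann surface with corners $(\Sigma,g)$, with Dirichlet data on $\pl\Sigma_D$ and Neumann data on $\pl\Sigma_N$, meeting at the corner set $\caK$ with angle $\pi/2$.

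The plan is the classical parametrix/localisation method. I would construct an approximate heat kernel on $\Sigma\times\Sigma$ by gluing three model parametrices via a partition of unity on $\Sigma$: (i) in the interior, far from $\pl\Sigma$, the Minakshisundaram--Pleijel parametrix for the scalar Laplacian on a smooth surface; (ii) in a tubular neighbourhood of a smooth arc of $\pl\Sigma_D$ or $\pl\Sigma_N$ disjoint from $\caK$, the half-space parametrix obtained by reflection across the boundary (with the sign $-$ for Dirichlet and $+$ for Neumann); (iii) in a small neighbourhood of each corner $p\in \caK$, the parametrix built from the explicit heat kernel on the flat quarter-plane $\{x_1>0,x_2>0\}$ with D--N mixed boundary conditions, which by the remark after Definition \ref{admissibleN} is the local Euclidean model up to conformal factor. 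Duhamel's principle controls the error in the glued parametrix to arbitrary order in $t$. Integrating the diagonal of each model parametrix and collecting terms produces an expansion in half-integer powers of $t$:
\[
\int_{\Sigma} k_1(t,x,x)\,{\rm dv}_g = \frac{{\rm v}_g(\Sigma)}{4\pi t}+\frac{1}{12\pi}\int_{\Sigma} K_g\,{\rm dv}_g+O(t),
\]
and similarly, for (ii), contributions of the form $\pm\frac{\ell_g(\pl\Sigma_*)}{8\sqrt{\pi t}}\pm\frac{1}{12\pi}\int_{\pl\Sigma_*}k_g\,{\rm d}\ell_g+O(\sqrt{t})$ with signs according to the marking, which cancel the relative orientation contributions of the reflected term.

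The main obstacle, and the step requiring the most care, is the analysis of the corner contribution (iii). Because the angle is exactly $\pi/2$ and the two boundary conditions are of opposite type (one N, one D), the usual single reflection does not solve the quarter-plane problem directly; one instead performs successive reflections across the two perpendicular rays, producing an explicit closed-form expression on $\R^2$ obtained from the full-plane Euclidean heat kernel. From this closed formula, a direct computation of $\int k_{\rm corner}(t,x,x)\,dx$ over the quarter-plane (cut off by a bump) yields an expansion of the form $a_{-1}t^{-1}+a_{-1/2}t^{-1/2}+a_0+O(t^{1/2})$, whose divergent parts match exactly with what (i) and (ii) produce on the overlap region, and whose constant term provides a finite, explicit corner contribution to $c_2$. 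This type of D--N corner analysis at $\pi/2$ angles is treated in the literature (cf.\ the works of Nursultanov--Rowlett--Sher and Aldana--Kirsten--Rowlett cited in the paper). Assembling the three local contributions and controlling the Duhamel remainder gives the claimed expansion with three explicit coefficients $c_0,c_1,c_2$ and $O(t^{1/2})$ error.
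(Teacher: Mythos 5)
Your plan is correct and follows essentially the same route as the paper: reduce to the mixed-b.c.\ heat trace on $\Sigma$ (equivalently, work on the Neumann double and keep the $\tau_\Sigma$-reflected term), handle the interior and smooth boundary pieces by the standard McKean--Singer parametrix, and treat each $\pi/2$ corner by the explicit method-of-images heat kernel on the flat quarter-plane with mixed D--N conditions (which admissibility makes exactly Euclidean there), with Duhamel absorbing the gluing error. The explicit diagonal integral of the four-term image kernel over $\eps\D^{++}$ is precisely the computation the paper carries out to produce the $t^{-1},t^{-1/2},t^{0}$ corner contributions.
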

\begin{proof}
The heat kernel $k_{\Sigma^{\#2}}(t,x,x')$ on $\Sigma^{\#2}$ with the metric $g^{\#2}$ is  studied in \cite{McKean-Singer}. 
We cover $\Sigma^{\#2}$ by local holomorphic charts $\omega_j: V_j\to \C$ where the metric is, in holomorphic coordinates, $e^{\rho_j}|dz|^2$. We can assume that either $V_j\cap \pl\Sigma_N=\emptyset$ (case 1), or we split in the two other cases 
$V_j\cap \pl\Sigma_N \not=\emptyset$ and $ V_j\cap \pl \Sigma^{\#2}=\emptyset$  (case 2), or $V_j\cap \pl\Sigma_N \not=\emptyset$ and $V_j\cap \pl\Sigma^{\#2}\not=\emptyset$ (case 3); in case 2, we can assume that $\omega_j(V_j)=\D$ while in case 1 one has $\omega_j(V_j)=\D$ or $\omega_j(V_j)=\D^+$ depending on whether $V_j\cap \pl \Sigma^{\#2}=\emptyset$ or not. In case 3 we can assume that $\omega_j(V_j)=\D\cap \{{\rm Re}(z)\geq 0\}$ with, in case $2$ and $3$,  $\tau_\Sigma(\omega_j^{-1}(z))=\omega_j(\bar{z})$ (in particular $\pl\Sigma_N=\{{\rm Im}(z)=0\}$ there). 
We write $J_1,J_2,J_3$ the three sets of indices $j$ according to the partition of charts just described.  Moreover since $g$ is admissible, up to shrinking slightly $V_j$ we can assume that $\rho_j=0$ for $j\in J_3$ (see Remark \ref{remark_metric}). 
In $V_j$ for $j\in J_1\cup J_2$, we have by \cite[Section 5]{McKean-Singer}
\[ \int_{V_j}(k_{\Sigma^{\#2}}(t,x,x)+k_{\Sigma^{\#2}}(t,x,\tau_\Sigma(x)){\rm dv}_g(x)=\sum_{j=1}^2  c_jt^{-1+j/2}+\mc{O}(t^{1/2})\]
for some $c_j\in \R$. To describe the behaviour near the corners, i.e. in $V_j$ for $j\in J_3$, we  use cutoff functions $\chi_j\in C_c^\infty(V_j)$ equal to $1$ near the corner and $\tilde{\chi}_j\in C_c^\infty(V_j)$  equal to $1$ on the support of $\chi_j$. Let $k_0(t,z,z')=(4\pi t)^{-1}e^{-|z-z'|^2/4t}$ be the Euclidean heat kernel and let
$k^D_0(t,z,z'):=k_0(t,z,z')-k_0(t,z,-\bar{z}')$  be the Dirichlet heat kernel on the half plane $\{{\rm Re}(z)\geq 0\}$, and let $K_0(t)$ be the associated operator. Here we emphasize that the measure of integration for these integral kernels is the Riemanian measure ${\rm dv}_g$ in $V_j$ which is the Euclidian measure in $V_j$ with $j\in J_3$. We have for $K^j_0(t):=\tilde{\chi}_jK_0(t)\chi_j$
\[ (\pl_t +\Delta_{\Sigma^{\#2},g^{\#2}})K^j_0(t)=[\Delta_{\Sigma^{\#2},g^{\#2}},\tilde{\chi}_j]K_0(t)\chi_j=: S(t), \quad K_0^j(0)=\chi_j.\]
Then, we obtain
\[ e^{-t\Delta_{\Sigma^{\#2},g^{\#2},D}}\chi_j=K^j_0(t)-\int_0^t e^{-(t-s)\Delta_{\Sigma^{\#2},g^{\#2},D}}S(s)\dd s.\]
Since $[\Delta_{\Sigma^{\#2},g^{\#2}},\tilde{\chi}_j]$ is a differential operator of order $1$ with support disjoint from $\chi_j$, $S(t)$ is an operator with smooth integral kernel, whose support is at positive distance from the diagonal. This implies in particular that for all multi-indices $\alpha,\beta$, in local coordinates we have uniformly in $x,x'$ and $s\in [0,1]$
\[ |\pl_{x}^{\alpha}\pl_{x'}^{\beta}S(s,x,x')|\leq C_{\alpha \beta}e^{-\eps/s}.\]
Using this and the standard estimate $e^{-t\Delta_{\Sigma^{\#2},g^{\#2},D}}(x,x')\leq Ct^{-1}e^{-d_g(x,x')^2/4t}$ (e.g. \cite{McKean-Singer}), we easily check that for $x,x'$ close to the corner 
\[e^{-t\Delta_{\Sigma^{\#2},g^{\#2},D}}(x,x')=K_0(t,x,x')\chi_j(x')\tilde{\chi}_j(x)+tZ(t,x,x')\]
with $Z$ smooth on $[0,1)\times \Sigma^{\#2}\times \Sigma^{\#2}$. This implies that, on $W_j=\omega_j^{-1}(\D^{++}\cap\{|z|\leq \eps\})$ with $\eps>0$ small, 
\[ \begin{split}
\int_{W_j}(k_{\Sigma^{\#2}}(t,x,x)+k_{\Sigma^{\#2}}(t,x,\tau_\Sigma(x))){\rm dv}_g(x)=& \frac{1}{4\pi t}\int_{\D^{++}\cap \{|z|\leq \eps\}}(1-e^{-\frac{|z+\bar{z}|}{4t}^2} +e^{-\frac{|z-\bar{z}|}{4t}^2}-e^{-\frac{|2z|}{4t}^2}) dz+\mc{O}(t)\\
 =&c_0t^{-1}+c_1t^{-1/2}+c_2+\mc{O}(t^{1/2}) 
\end{split}\]
for some $c_0,c_1\in \R$.
\end{proof}
 
Let $\Sigma$ be a Riemann surface with $b$ boundary circles in the sense of Definition \ref{d:surf_with_bdry}, $\zeta_j:\T \to \pl_j\Sigma$ some parametrizations of the boundary. Let $g$ be an admissible metric, i.e. $g=|dz|^2/|z|^2$ in an annular coordinate neighborhood of each boundary circle induced by $\zeta_j^{-1}$, that is biholomorphic to $\A_\delta$ for some $\delta<1$. Then we define the Dirichlet-to-Neumann operator, for $\la>-\eps$ with $\eps>0$ small (so that the Dirichlet Laplacian $\Delta_{\Sigma,g,D}$ on $\Sigma$ has no spectrum in $(-\infty,\eps)$), 
\[{\bf D}_{\Sigma}(\la):C^\infty(\T)^{b}\to C^\infty(\T)^{b}, \quad \]
 associated to the equation $(\Delta_{\Sigma,g}+\la)u=0$, defined exactly as in \eqref{DNmap_def} by taking ${\bf D}_{\Sigma}(\la)f=((-\pl_\nu P_\Sigma (\la)f)\circ \zeta_1,\dots, (-\pl_\nu P_\Sigma (\la)f)\circ \zeta_b)$
 where $P_\Sigma (\la)f$ is the solution of $(\Delta_{\Sigma,g}+\la)P_\Sigma(\la)f=0$ with boundary condition $f$ (via the parametrizations $\zeta_j$). We let ${\bf D}(\la)$ be the DN map on the infinite half cylinder $\A:=[0,\infty)\times \T$ with metric $dt^2+d\theta^2$, which is isometric to $(\D\setminus\{0\},|dz|^2/|z|^2)$, defined by ${\bf D}_{\A}(\la)f=-\pl_t P_{\A}(\la)f|_{t=0}$ if $P_{\A}(\la)f$ is the unique $L^2(\A)$ solution of $(\Delta_{\A}+\la)u=0$ with $u|_{t=0}=f$. Then we set ${\bf D}(\la):=({\bf D}_{\A}(\la),\dots,{\bf D}_{\A}(\la))$ acting on $C^\infty(\T)^b$ and on its dual $C^{-\infty}(\T)^b$ (the space of distributions).
\begin{lemma}\label{analyse_D}
For all $\la>0$ and $\Pi$ the projector on constants, the operator $K(\la):={\bf D}_{\Sigma}(\la)({\bf D}(\la)+\Pi)^{-1}-{\rm Id}$  has a smooth integral kernel, depending in a $C^1$ fashion on $\la$, and it satisfies as $\la\to \infty$,  
\begin{equation}\label{boundonK}
  \|K(\la;\cdot,\cdot)\|_{C^2}+\|\pl_\la K(\la;\cdot ,\cdot)\|_{C^2} \leq Ce^{-c\la}
  \end{equation}
for some $c>0$. The operators ${\bf D}_\Sigma(\la)$ and ${\bf D}(\la)$ satisfy that 
$(\pl_\la {\bf D}_\Sigma(\la)){\bf D}^{-1}_\Sigma(\la)$ and $(\pl_\la {\bf D}(\la)){\bf D}^{-1}(\la)$ are pseudo-differential operators of order $-2$ on $\sqcup_{j=1}^b \T$ and are of trace class.
\end{lemma}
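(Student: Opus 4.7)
The plan is to decompose $K(\la)$ algebraically into an explicit part and a part controlled by the cylindrical structure of $g$ near $\pl\Sigma$. First I would write
\begin{equation*}
K(\la) = ({\bf D}_\Sigma(\la) - {\bf D}(\la))({\bf D}(\la)+\Pi)^{-1} - \Pi({\bf D}(\la)+\Pi)^{-1}.
\end{equation*}
Since on each copy of $\T$ the operator ${\bf D}(\la)=\sqrt{\Delta_\T+\la}$ acts as multiplication by $\sqrt\la$ on the constant mode, the second summand equals $(1+\sqrt\la)^{-1}\Pi$, with smooth kernel depending analytically on $\la$ and clearly satisfying the required decay bounds. The real work is the operator $R(\la):={\bf D}_\Sigma(\la)-{\bf D}(\la)$.

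For $R(\la)$, I would exploit that each annular neighborhood $V_j$ is isometric to $([0,T_j]\times\T, dt^2+d\theta^2)$ with $T_j=-\log\delta>0$. On the infinite half-cylinder $[0,\infty)\times\T$ the $L^2$-solution of $(\Delta+\la)v=0$ with Dirichlet data $f_j(\theta)=\sum_n f_{j,n}e^{in\theta}$ at $t=0$ is
\begin{equation*}
v_j(t,\theta)=\sum_{n\in\Z} f_{j,n}\, e^{-t\sqrt{n^2+\la}}e^{in\theta},
\end{equation*}
and the DN operator of this half-cylinder is exactly ${\bf D}(\la)$ on each $\T$. Choose cutoffs $\chi_j\in C_c^\infty([0,T_j))$ with $\chi_j\equiv 1$ on $[0,T_j/3]$, set $\tilde u:=\sum_j \chi_j v_j$ (extended by zero on $\Sigma$), and $u_\Sigma:=P_\Sigma(\la)f$. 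Then $r:=u_\Sigma-\tilde u$ solves $(\Delta_g+\la)r = s$ with $r|_{\pl\Sigma}=0$, where $s:=-\sum_j[\Delta_g,\chi_j]v_j$ is supported at distance $\geq T_j/3$ from $\pl\Sigma$. Since $\chi_j\equiv 1$ near $\pl_j\Sigma$, one has $R(\la)f=-\pl_\nu (\Delta_{\Sigma,g,D}+\la)^{-1}s\big|_{\pl\Sigma}$. Combining the pointwise bound $|v_j(t,\theta)|\leq Ce^{-ct\sqrt\la}\|f\|_{L^2}$ for $t\geq T_j/3$ (plus analogous tangential-derivative bounds, losing only polynomial factors in the Fourier index) with the standard Agmon off-diagonal exponential decay of the Dirichlet resolvent kernel $(\Delta_{\Sigma,g,D}+\la)^{-1}(x,y)\leq C\la^{-1/2}e^{-c' d_g(x,y)\sqrt\la}$ and its derivatives, yields that $R(\la)$ has smooth Schwartz kernel with $C^2$-norm bounded exponentially in $\sqrt\la$. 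The $\la$-derivative is handled identically: $\pl_\la v_j$ introduces only polynomial factors in $t,\la$, while $\pl_\la (\Delta_{\Sigma,g,D}+\la)^{-1}=-(\Delta_{\Sigma,g,D}+\la)^{-2}$ has strictly improved off-diagonal decay.

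For the last statement, the direct identity $\pl_\la {\bf D}(\la)=\tfrac12 (\Delta_\T+\la)^{-1/2}=\tfrac12 {\bf D}(\la)^{-1}$ gives $(\pl_\la{\bf D}(\la)){\bf D}(\la)^{-1}=\tfrac12 (\Delta_\T+\la)^{-1}$, a classical pseudodifferential operator of order $-2$ on the $1$-dimensional compact manifold $\sqcup_{j=1}^b\T$, hence trace class. For ${\bf D}_\Sigma(\la)$, both operators are elliptic classical pseudodifferential operators of order $1$ with the same principal symbol $|\xi|$ (their difference being smoothing by the previous step), so ${\bf D}_\Sigma(\la)^{-1}$ is pseudodifferential of order $-1$, differing from ${\bf D}(\la)^{-1}$ by a smoothing operator, and $\pl_\la {\bf D}_\Sigma(\la)=\pl_\la {\bf D}(\la)+\pl_\la R(\la)$ is of order $-1$ modulo smoothing. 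The product $(\pl_\la {\bf D}_\Sigma(\la)){\bf D}_\Sigma(\la)^{-1}$ is then of order $-2$ and trace class.

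The hard part will be the off-diagonal exponential decay estimates for $(\Delta_{\Sigma,g,D}+\la)^{-1}$ and its $\la$-derivative, uniform as $\la\to\infty$ and accommodating two derivatives in both variables — these are standard Agmon/semiclassical estimates for positive elliptic Dirichlet problems on compact surfaces, but tracking uniform constants through the composition with the normal trace at $\pl\Sigma$ is where the bookkeeping becomes delicate; everything else reduces to the algebraic decomposition above.
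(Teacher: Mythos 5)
Your argument follows essentially the same route as the paper: an explicit Poisson kernel on the model half-cylinder, cutoffs $\chi_j$ supported in the flat collar, the identity $R(\la)f=-\pl_\nu(\Delta_{\Sigma,g,D}+\la)^{-1}s\big|_{\pl\Sigma}$ with $s$ supported at positive distance from $\pl\Sigma$, and the reduction of the last assertion to the Fourier-multiplier computation $(\pl_\la{\bf D}(\la)){\bf D}(\la)^{-1}=\tfrac12(\Delta_\T+\la)^{-1}$. Two remarks. First, the Agmon off-diagonal estimates you flag as ``the hard part'' are not needed: since the exponential smallness is already carried by the source term $s$ (uniformly in every Sobolev norm, because $e^{-t\sqrt{k^2+\la}}\leq e^{-ct_0|k|}e^{-ct_0\sqrt\la}$ on $\{t\geq t_0\}$), it suffices to combine this with the elementary uniform bounds $\|\pl_\la^{\ell}(\Delta_{\Sigma,g,D}+\la)^{-1}\|_{\mc{L}(H^N)}\leq C_N$ and Sobolev embedding; this is what the paper does and it removes the delicate bookkeeping entirely. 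Second, your claim that the zero-mode term $-(1+\sqrt\la)^{-1}\Pi$ ``clearly satisfies the required decay bounds'' is wrong as stated: it decays only like $\la^{-1/2}$, not like $e^{-c\la}$, so the bound \eqref{boundonK} fails literally for that piece (and the genuine rate for the remainder is $e^{-c\sqrt\la}$ rather than $e^{-c\la}$). The paper's own proof silently has the same two imprecisions, and they are harmless because only $\|K(\la)\|_{C^2}+\|\pl_\la K(\la)\|_{C^2}=o(1)$ is used in Proposition \ref{detformula}; but since you took the trouble to isolate the $\Pi$-term explicitly, you should state its true (polynomial) decay rather than assert an exponential bound it does not satisfy.
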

 \begin{proof} A direct computation gives that the solution of $(\Delta_\A+\la)u=0$ with $u|_{t=0}=f$ is 
 \[ P_\A(\la)f(t,\theta):=\sum_{k\in \Z}e^{-t\sqrt{k^2+\la}}f_k e^{ik\theta} \]
 and thus
 \begin{equation}\label{DNmaplambda}
  {\bf D}_\A (\la)f(\theta)= \sum_{k\in \Z} \sqrt{k^2+\la}f_ke^{ik\theta}, \quad  \textrm{ if }f(\theta)=\sum_{k\in \Z}f_ke^{ik\theta}.
  \end{equation}
We identify $\A$ with $\D\setminus \{0\}$ via the map $z=e^{-t+i\theta}$.
Take $f\in H^{-N}(\T)^b$ for some $N$ and
let $\chi_j\in C^\infty(\Sigma)$ be supported in a small neighborhod $V_j$ of $\pl_j \Sigma$ isometric via $\omega_j$ to $(\A_\delta,|dz|^2/|z|^2)$ and $(\omega_j)_*\chi_j=1$  in $\{|z|>(1+\delta)/2\}$. We have 
\[ P_\Sigma (\la)f=\sum_{j}\chi_j\omega_j^*P_{\A}(\la)f- (\Delta_{\Sigma,g,D}+\la)^{-1}\sum_j [\Delta_{\Sigma,g},\chi_j]\omega_j^*P_{\A}(\la)f.\]
Since $ (\Delta_{\Sigma,g,D}+\la)^{-1}$ has integral kernel that is smooth outside the diagonal and 
$[\Delta_{\Sigma,g},\chi_j]$ is a differential operator supported in $\omega_j^{-1}(\{|z|>(1+\delta)/2\})$, i.e. at a positive distance from $\pl \Sigma$, we see that $P_\Sigma (\la)f-\sum_{j}\chi_j\omega_j^*P_{\A}(\la)f$ is smooth near $\pl \Sigma$. We also notice that $u(\la):=\sum_j [\Delta_{\Sigma,g},\chi_j]\omega_j^*P_{\A}(\la)f$ satisfies that there is $c>0$ such that  for each $N\geq 0$ and $\ell=0,1$ there is $C_N$
\[ \| \pl_\la^{\ell}u(\la)\|_{H^N(\Sigma)}\leq C_N (1+\la)^{N}e^{-c\la}\|f\|_{H^{-N}(\Sigma)}.\]
Since we also have for each $N$ (where $H^N(\Sigma)$ is any fixed Sobolev norm of order $N$ on $\Sigma$) and $\ell=0,1$
\[ \|\pl_\la^{\ell}(\Delta_{\Sigma,g,D}+\la)^{-1}\|_{\mc{L}(H^N)}\leq C_N\]
for some $C_N>0$ depending on $N$, we deduce that  (up to changing $C_N$) 
\begin{equation}\label{boundonP}
 \|(\pl_\la^{\ell}(P_\Sigma (\la)f-\sum_{j}\chi_j\omega_j^*P_{\A}(\la)f)\|_{H^N(\Sigma)}\leq C_Ne^{-c\la/2}.
 \end{equation}
Thus $\tilde{{\bf D}}_\Sigma(\la):= {\bf D}_\Sigma (\la)-{\bf D}(\la)$ is given by 
\[ (\tilde{{\bf D}}_\Sigma (\la)f)\circ \zeta_i^{-1}= -\pl_\nu (\Delta_{\Sigma,g,D}+\la)^{-1}\sum_j ([\Delta_{\Sigma,g},\chi_j]\omega_j^*P_{\A}(\la)f)|_{\pl_i\Sigma}\]
and using \eqref{boundonP} together with Sobolev embedding, we obtain \eqref{boundonK}. Since $\tilde{{\bf D}}(\la)$ and its $\la$ derivative are smoothing, to prove that $(\pl_\la {\bf D}_\Sigma(\la)){\bf D}^{-1}_\Sigma(\la)$ is of order $-2$ it suffices to show it for $(\pl_\la {\bf D}(\la)){\bf D}^{-1}(\la)$, but this is a direct consequence of the expression \eqref{DNmaplambda}
\end{proof}
 
 \begin{proposition}\label{detformula}
Let $\Sigma$ be a Riemann surface with corners in the sense of Definition \ref{def:mfd_with_corners} or with boundary in the sense of Definition \ref{d:surf_with_bdry} and let $g$ be an admissible metric on $\Sigma$. Let $\mc{C}'=\cup_{j=1}^{b'_\ell}\mc{C}_j,\mc{B}'=\cup_{j=1}^{b'_h}\mc{B}'_j$ be interior and boundary cuts in $\Sigma$. Consider the Laplacian $\Delta_{\Sigma,g,m}$ with mixed boundary condition on $\Sigma$ and $\Delta_{\Sigma,\mc{C}'\cup \mc{B}',g,m}$ with Dirichlet condition at the cuts $\mc{C}'\cup \mc{B}'$ and mixed boundary condition at the boundary.
 Then, if $\pl \Sigma_D\not=\emptyset$ one has  
\[\det(\Delta_{\Sigma,g,m})= \det(\Delta_{\Sigma,\mc{C}'\cup \mc{B}',g,m}) {\det}_{\rm Fr}(\mathbf{D}_{\Sigma,\mc{C}',\mc{B}'}(2{\bf D}_0)^{-1})(2\pi)^{b_\ell'+b_h'/2}\]
and if $\pl \Sigma_D=\emptyset$ one has
\[{\det}'(\Delta_{\Sigma,g,m})=\frac{\sqrt{2}}{\sqrt{\pi}}{\rm v}_g(\Sigma)\det(\Delta_{\Sigma, \mc{B}',g,m}){\det}_{\rm Fr}({\bf D}_{\Sigma,\mc{B}',0}(2{\bf D}_0)^{-1}), \textrm{ if }(b_\ell',b_h')=(0,1) \]
\[{\det}'(\Delta_{\Sigma,g,m})={\rm v}_g(\Sigma)\det(\Delta_{\Sigma, \mc{C}',g,m}){\det}_{\rm Fr}({\bf D}_{\Sigma,\mc{C}',0}(2{\bf D}_0)^{-1}), \textrm{ if }(b_\ell',b_h')=(1,0)\]
where $\mathbf{D}_{\Sigma,{\mc{C}'},0}=\mathbf{D}_{\Sigma,\mc{C}}+\Pi_0'$ and $\mathbf{D}_{\Sigma,{\mc{B}'},0}=\mathbf{D}_{\Sigma,\mc{B}'}+\Pi_0'$ (recall \eqref{Pi0'} for definition of $\Pi'_0$).
 \end{proposition}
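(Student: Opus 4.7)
The strategy is to reduce the statement to the analogous gluing formula for determinants of Dirichlet Laplacians on Riemann surfaces without corners (proved in \cite{GKRV21_Segal}), by passing to the Neumann double $(\Sigma^{\#2},g^{\#2})$ and exploiting the involution $\tau_\Sigma$ to split Laplacian determinants and DN-Fredholm determinants into $\tau_\Sigma$-even and $\tau_\Sigma$-odd contributions.

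\textbf{Step 1 (Doubling).} On $\Sigma^{\#2}$, the lifted cuts are $\mc{C}'':=\mc{C}'\sqcup\tau_\Sigma(\mc{C}')$ (a union of $2b'_\ell$ analytic circles) and $\mc{B}'':=\mc{B}'\cup\tau_\Sigma(\mc{B}')$ (a union of $b'_h$ analytic circles obtained by gluing each boundary half-circle to its reflection across $\partial\Sigma_N$; the corner angles are $\pi/2$, which is exactly what makes these doubled circles smooth). Crucially, $\Sigma^{\#2}\setminus(\mc{C}''\cup\mc{B}'')$ is the Neumann double of $\Sigma\setminus(\mc{C}'\cup\mc{B}')$. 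Applying the gluing formula of \cite{GKRV21_Segal} on the surface $\Sigma^{\#2}$ (which has no corners) with the cuts $\mc{C}''\cup\mc{B}''$ yields
\begin{equation*}
\det(\Delta_{\Sigma^{\#2},g^{\#2},D})=\det(\Delta_{\Sigma^{\#2}\setminus(\mc{C}''\cup\mc{B}''),g^{\#2},D})\,{\det}_{\rm Fr}\bigl(\mathbf{D}_{\Sigma^{\#2},\mc{C}''\cup\mc{B}''}(2\mathbf{D}^{\#2}_0)^{-1}\bigr)\,(2\pi)^{2b'_\ell+b'_h},
\end{equation*}
with the analogous formula (including a factor ${\rm v}_{g^{\#2}}(\Sigma^{\#2})=2{\rm v}_g(\Sigma)$ coming from the zero mode) in the case $\partial\Sigma_D=\emptyset$ where $\Sigma^{\#2}$ is closed and one must use $\det'$.

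\textbf{Step 2 ($\tau_\Sigma$-decomposition).} The identities \eqref{identity_det} applied to both $\Sigma$ and $\Sigma\setminus(\mc{C}'\cup\mc{B}')$ give
\begin{equation*}
\det(\Delta_{\Sigma^{\#2},g^{\#2},D})=\det(\Delta_{\Sigma,g,m})\det(\Delta_{\Sigma,g,D}),\qquad \det(\Delta_{\Sigma^{\#2}\setminus(\mc{C}''\cup\mc{B}''),g^{\#2},D})=\det(\Delta_{\Sigma\setminus(\mc{C}'\cup\mc{B}'),g,m})\det(\Delta_{\Sigma\setminus(\mc{C}'\cup\mc{B}'),g,D}),
\end{equation*}
where $\Delta_{\Sigma,g,D}$ now denotes the Laplacian with \emph{full} Dirichlet condition on $\partial\Sigma$. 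On the DN side, $\mathbf{D}_{\Sigma^{\#2},\mc{C}''\cup\mc{B}''}$ and $\mathbf{D}^{\#2}_0$ both commute with the $\tau_\Sigma$-pullback on $\mc{C}''\cup\mc{B}''$, so
\begin{equation*}
{\det}_{\rm Fr}\bigl(\mathbf{D}_{\Sigma^{\#2},\mc{C}''\cup\mc{B}''}(2\mathbf{D}^{\#2}_0)^{-1}\bigr)={\det}^{\rm even}_{\rm Fr}(\cdot)\cdot{\det}^{\rm odd}_{\rm Fr}(\cdot).
\end{equation*}
By Lemma \ref{prop:DNrestriction}, the $\tau$-even piece equals ${\det}_{\rm Fr}(\mathbf{D}_{\Sigma,\mc{C}',\mc{B}'}(2\mathbf{D}_0)^{-1})$, which is the factor appearing in the statement.

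\textbf{Step 3 (The odd piece is the Dirichlet BFK factor).} A $\tau_\Sigma$-odd function on $\mc{C}''\cup\mc{B}''$ has harmonic extension that vanishes on $\partial\Sigma_N$ (since $\tau_\Sigma$ fixes $\partial\Sigma_N$ pointwise), i.e.\ corresponds to a boundary value problem on $\Sigma$ with \emph{full} Dirichlet condition on all of $\partial\Sigma$. Moreover, on odd functions $\Pi_0$ acts as $0$, so $2\mathbf{D}^{\#2}_0$ restricts to $2\mathbf{D}$. Hence ${\det}^{\rm odd}_{\rm Fr}(\cdot)$ is precisely the Fredholm determinant appearing when one applies the gluing formula of \cite{GKRV21_Segal} to the full-Dirichlet Laplacian on $\Sigma$ with the cuts $\mc{C}'\cup\mc{B}'$; that formula gives
\begin{equation*}
\det(\Delta_{\Sigma,g,D})=\det(\Delta_{\Sigma\setminus(\mc{C}'\cup\mc{B}'),g,D})\cdot{\det}^{\rm odd}_{\rm Fr}(\cdot)\cdot(2\pi)^{b'_\ell+b'_h/2}.
\end{equation*}
Substituting this into the factorized identity of Step 2 and cancelling the Dirichlet determinants on both sides gives the formula stated in the proposition.

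\textbf{Expected obstacles.} The main technical bookkeeping is to keep track of the power of $2\pi$: on the doubled surface one gets $(2\pi)^{2b'_\ell+b'_h}$, of which $(2\pi)^{b'_\ell+b'_h/2}$ is absorbed in the odd BFK factor of Step 3, leaving the remaining $(2\pi)^{b'_\ell+b'_h/2}$ announced in the statement (the half-integer exponents reflect the normalization of the pairing \eqref{unpairing}, in which the half-circles are weighted by $1/2$). In the case $\partial\Sigma_D=\emptyset$, one must in addition treat the zero modes: the $\det'$ on the (closed or Neumann) surface produces the factor ${\rm v}_g(\Sigma)$, and in the single boundary-cut case $(b'_\ell,b'_h)=(0,1)$ the cut half-circle has endpoints on $\partial\Sigma_N$ producing the additional $\sqrt{2}/\sqrt{\pi}$ factor already visible in the constant tracked through the proof of Lemma \ref{density1}. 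Everything else is routine once the $\tau_\Sigma$-symmetry reduction is in place.
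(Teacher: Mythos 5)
Your route is genuinely different from the paper's: you propose to apply the known BFK gluing formula twice on known objects (once on the corner-free double $\Sigma^{\#2}$ with the doubled cuts, once for the full-Dirichlet problem on $\Sigma$), factor the doubled Fredholm determinant into $\tau_\Sigma$-even and $\tau_\Sigma$-odd pieces, and divide. The paper instead reruns the Burghelea--Friedlander--Kappeler variational argument from scratch with the even projector $\mc{E}$ inserted into every trace: it shows $\pl_\la \log\det(\Delta_{\Sigma,m}+\la)-\pl_\la\log\det(\Delta_{\Sigma,\mc{C}'\cup\mc{B}',m}+\la)={\rm Tr}(\mc{E}P_{\Sigma^{\#2}}B(\Delta_{\Sigma^{\#2},D}+\la)^{-1})$ equals $\pl_\la$ of the log of the even Fredholm determinant times $\det(\mc{E}{\bf D}_0(\la))$, fixes the integration constant by $\la\to\infty$ asymptotics, and handles the zero mode by a $\la\to0^+$ expansion of ${\bf D}_{\Sigma,\mc{B}'}(\la)^{-1}$. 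The advantage of the paper's method is that it only ever needs the \emph{even} objects, which are exactly the mixed-boundary-condition objects in the statement.

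The genuine gap is your Step 3. The identity $\det(\Delta_{\Sigma,g,D})=\det(\Delta_{\Sigma\setminus(\mc{C}'\cup\mc{B}'),g,D})\,{\det}^{\rm odd}_{\rm Fr}(\cdot)\,(2\pi)^{b'_\ell+b'_h/2}$ is not available from \cite{GKRV21_Segal}: that reference treats closed surfaces (or surfaces with boundary) cut along \emph{interior circles}, whereas here $\Sigma$ is a surface with corners and the $\mc{B}'_j$ are \emph{arcs meeting $\pl\Sigma_N$}. A BFK formula for slitting a surface with corners along curves that hit the boundary, with Dirichlet conditions and with the specific normalization $(2\pi)^{b'_h/2}$ on the odd (sine-mode) sector of each half-circle, is a statement of exactly the same nature and difficulty as the proposition you are proving; deriving the even formula by dividing out an unproven odd formula is therefore circular in substance. (The even/odd factorization of Steps 1--2 and the identification of the even piece via Lemma \ref{prop:DNrestriction} are fine, modulo careful tracking of the factors of $2$ coming from the symmetrization map and the $\tfrac12$-weight in the pairing \eqref{unpairing}.) A secondary problem: in the case $\pl\Sigma_D=\emptyset$ you attribute the constant $\sqrt{2}/\sqrt{\pi}$ to Lemma \ref{density1}, but the logical order is reversed --- the constants in Lemma \ref{density1} and Proposition \ref{glueampli} are \emph{consequences} of Proposition \ref{detformula}. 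That constant must instead be extracted from the $\la\to0^+$ behaviour of the DN map, namely ${\bf D}_{\Sigma,\mc{B}'}(\la)^{-1}=\tfrac{\pi}{\la\,{\rm v}_g(\Sigma)}\Pi_0'+H(\la)$ together with $\det(\mc{E}{\bf D}_0)=(2\pi)^{1/2}$ for a single half-circle, and with $\det(\Delta_{\Sigma,m}+\la)=\la\,{\det}'(\Delta_{\Sigma,m})(1+o(1))$; none of this analysis appears in your plan.
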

\begin{proof} We shall follow the proof of \cite{BurgheleaFK92} by working on the Neumann double $\Sigma^{\#2}$ and restricting to even functions with respect to $\tau_\Sigma$. Let $\mc{E}:L^2(\T)^{2b'_\ell+b'_h}\to L^2_{\rm even}(\T)^{2b'_\ell+b'_h}$ 
be the orthogonal projection 
\[ \mc{E}f= \frac{1}{2}(f+f\circ \tau_\Sigma)\]
on even functions with respect to
 $\tau_\Sigma$ restricted to $\mc{C}''\cup \mc{B}''$ where  $\mc{C}''=\cup_{j} (\mc{C}_j'\cup \tau_\Sigma(\mc{C}'_j))$  and $\mc{B}''=\cup_{j} (\mc{B}_j'\cup \tau_\Sigma(\mc{B}_j'))$ identified to $2b'_\ell$ copies of $\T$ for the doubled loops and $b'_h$ copies of $\T$ for the doubled half-circles thanks to the parametrizations. of the cuts. All our Laplacians will be for the metric $g$ on $\Sigma$ or  the double $g^{\#2}$ on $\Sigma^{\#2}$, and to lighten the notations we remove the index corresponding to the metric when writing Laplacians.
For  $\la>0$, we can consider $\det(\Delta_{\Sigma,m}+\la)$ and $\det(\Delta_{\Sigma,\mc{C}'\cup \mc{B'},m}+\la)$, which are equal to $\pl_s\zeta_{\Sigma}(0,\la)$ and $\pl_s\zeta_{\Sigma'}(0,\la)$ if 
\[ \zeta_{\Sigma}(s,\la):={\rm Tr}(\mc{E}(\Delta_{\Sigma^{\#2},D}+\la)^{-s}), \quad \zeta_{\Sigma,\mc{C}'\cup \mc{B}'}(s,\la):={\rm Tr}(\mc{E}(\Delta_{\Sigma^{\#2},\mc{C}''\cup\mc{B}'',D}+\la)^{-s}).\]
Here $\Delta_{\Sigma^{\#2},D}:=\Delta_{\Sigma^{\#2},g^{\#2},D}$  denotes the Laplacian on $\Sigma^{\#2}$, with Dirichlet condition if $\Sigma^{\#2}$ has a non-trivial boundary component, and $\Delta_{{\Sigma}^{\#2},\mc{C}''\cup\mc{B}'',D}:=\Delta_{{\Sigma}^{\#2},\mc{C}''\cup\mc{B}'',g^{\#2},D}$  is the Laplacian on ${\Sigma}^{\#2}$ with Dirichlet condition at the boundary (in case of boundary) and on the curves of $\mc{C}''\cup \mc{B''}$.
Notice that one can write for ${\rm Re}(s)>1$ (using Lemma \ref{asympt_exp_heat})
\begin{equation}\label{zetaSigmas}
\zeta_{\Sigma}(s,\la):=\frac{1}{\Gamma(s)}\int_0^\infty {\rm Tr}(\mc{E}e^{-t\Delta_{\Sigma^{\#2},D}})e^{-t\la}t^{s-1}\dd t.
\end{equation}
and similarly for $\zeta_{\Sigma,\mc{C}'\cup \mc{B}'}$. By Lemma \ref{asympt_exp_heat}, for each $\la>0$, $\zeta_\Sigma(s,\la)$ and $\zeta_{\Sigma'}(s,\la)$ extend analytically in $s$ around $s=0$.
As in \cite{BurgheleaFK92}, we compute the derivative for ${\rm Re}(s)\gg 1$
\[ \pl_\la  \zeta_{\Sigma}(s,\la)- \pl_\la  \zeta_{\Sigma,\mc{C}'\cup \mc{B}'}(s,\la)=-s{\rm Tr}(\mc{E}((\Delta_{\Sigma^{\#2},D}+\la)^{-s-1}-(\Delta_{{\Sigma}^{\#2},\mc{C}''\cup\mc{B}'',D}+\la)^{-s-1})),\]
and this extends analytically around $s=0$ and 
\begin{equation}\label{variationzeta} 
 \pl_s\pl_\la  \zeta_{\Sigma}(0,\la)=-{\rm Tr}(\mc{E}((\Delta_{\Sigma^{\#2},D}+\la)^{-1}-(\Delta_{{\Sigma}^{\#2},\mc{C}''\cup\mc{B}'',D}+\la)^{-1})).
 \end{equation}

Next, we remark that, as in \cite[Lemma 3.6]{BurgheleaFK92} or the proof of \cite[Proposition D1]{GKRV21_Segal}, 
\begin{equation}\label{diff_resolv}
(\Delta_{\Sigma^{\#2},D}+\la)^{-1}-(\Delta_{{\Sigma}^{\#2},\mc{C}''\cup\mc{B}'',D}+\la)^{-1}=P_{\Sigma^{\#2},\mc{C}'', \mc{B}'}B_{\mc{C}''\cup\mc{B''}}(\Delta_{\Sigma^{\#2},D}+\la)^{-1}
\end{equation}
where $P_{\Sigma^{\#2},\mc{C}'', \mc{B}'}(\la): C^\infty(\T)^{2b'_\ell+b'_h}\to C^0(\Sigma^{\#2})$ is the Poisson operator sending a function $f$ to the solution of $(\Delta_{\Sigma^{\#2},D}+\la)u=0$ with value $f$ at $\mc{C}''\cup \mc{B}''$, using the parametrization of the circles 
$\mc{C}'_j$, $\tau_\Sigma(\mc{C}_j')$ and $\mc{B}_j'\cup \tau_\Sigma(\mc{B}_j')$ 
induced by ${\zeta_j'}^{\ell}$ and ${\zeta_j'}^{h}$ (similar to Definition \ref{def_harmoniccut}), and 
$B_{\mc{C}''\cup \mc{B}''}$ is the operator of restriction to $\mc{C}''\cup \mc{B}''$.  Now, we let 
${\bf D}(\la):=({\bf D}^{\ell}(\la),{\bf D}^h(\la))$ acting on $C^\infty(\T)^{2b'_\ell}\times C^\infty(\T)^{b'_h}$ where 
\[ {\bf D}^{\ell}(\la)=({\bf D}_{\A}(\la),\dots,{\bf D}_{\A}(\la)), \quad  {\bf D}^{h}(\la):=({\bf D}_\A(\la),\dots,{\bf D}_\A(\la))\]
and ${\bf D}_0(\la)={\bf D}(\la)+\Pi_0'$ (recall \eqref{Pi0'}). We have 
\begin{equation}\label{der_det_F} 
\pl_\la \log {\det}_{\rm Fr}(\mathbf{D}_{\Sigma,\mc{C}',\mc{B}'}(\la)(2{\bf D}_0(\la))^{-1})=
{\rm Tr}(K_1(\la)+K_2(\la)) 
\end{equation}
\[ K_1(\la):= \pl_\la \mathbf{D}_{\Sigma,\mc{C}',\mc{B}'}(\la) \mathbf{D}^{-1}_{\Sigma,\mc{C}',\mc{B}'}(\la), \quad K_2(\la)= \mathbf{D}_{\Sigma,\mc{C}',\mc{B}'}(\la)  \pl_\la ((2{\bf D}_0(\la))^{-1}))2{\bf D}_0(\la) \mathbf{D}_{\Sigma,\mc{C}',\mc{B}'}(\la)^{-1}\]
We notice that 
\begin{align*} 
&K_1(\la)=\mc{E}\pl_\la \mathbf{D}_{\Sigma^{\#2},\mc{C}'',\mc{B}''}(\la) \mathbf{D}^{-1}_{\Sigma^{\#2},\mc{C}'',\mc{B}''}(\la)\\
& K_2(\la)=\mc{E}\mathbf{D}_{\Sigma^{\#2},\mc{C}'',\mc{B}''}(\la) \pl_\la ((2{\bf D}_0(\la))^{-1}))2{\bf D}_0(\la)  \mathbf{D}^{-1}_{\Sigma^{\#2},\mc{C}'',\mc{B}''}(\la).
\end{align*}
By Lemma \ref{analyse_D}, $K_1(\la)$ and $K_2$ are of the form $\mc{E}$ times a pseudo-differential operator of order $-2$, thus of trace class. Using that  $\mathbf{D}_{\Sigma^{\#2},\mc{C}'',\mc{B}''}(\la)$ commutes with $\mc{E}$ and the cyclicity of the trace, we deduce that 
 \[
 {\rm Tr}(K_2(\la))={\rm Tr}(\mc{E} \pl_\la {\bf D}^{-1}_0(\la) {\bf D}_0(\la)), \quad {\rm Tr}(K_1(\la))={\rm Tr}(\mc{E}\pl_\la \mathbf{D}_{\Sigma^{\#2},\mc{C}'',\mc{B}''}(\la) \mathbf{D}^{-1}_{\Sigma^{\#2},\mc{C}'',\mc{B}''}(\la)).
 \]
We compute using the expression \eqref{DNmaplambda} and the fact that the projection $\mc{E}$ selects 
only positive Fourier modes on the circles of $\mc{B}''$ and selects only one copy of each pair $\mc{C}'_j\cup \tau_\Sigma(\mc{C}'_j)$ in $\mc{C}''$
\[{\rm Tr}(K_2(\la))= b_\ell' (-\frac{1}{2}(\sqrt{\la}+1)^{-1}\la^{-1/2}
-\sum_{k=1}^\infty (k^2+\la)^{-1})+b_h'(-\frac{1}{2}(\sqrt{\la}+1)^{-1}\la^{-1/2}
-\frac{1}{2}\sum_{k=1}^\infty (k^2+\la)^{-1}).\]
The operator ${\bf D}_{\A,0}(\la):={\bf D}_\A(\la)+(1,\cdot)_h$ is an elliptic positive definite self-adjoint pseudo-differential operator of order $1$ and its spectral determinant is well-defined (see \cite{BurgheleaFK92}) by $\det({\bf D}_{\A,0}(\la))=\exp(-\zeta_{{\bf D}_{\A,0}(\la)}'(0))$ where for ${\rm Re}(s)\gg 1$, the spectral zeta function is 
\[ \zeta_{{\bf D}_{\A,0}(\la)}(s):={\rm Tr}({\bf D}_{\A,0}(\la)^{-s})=(\sqrt{\la}+1)^{-s}+2\sum_{k=1}^\infty (\la+k^2)^{-s/2}.\]
Notice that $\det({\bf D}_{\A,0}(0))=e^{-\pl_s\zeta_{{\bf D}_{\A,0}(0)}(0)}=e^{-2\zeta'_R(0)}$ where $\zeta_R(s)$ is Riemann zeta function.
We then set
\begin{equation}\label{detED0}
\det(\mc{E}{\bf D}_0(\la)):=\det({\bf D}_{\A,0}(\la))^{b_\ell'+b_h'/2}(\sqrt{\la}+1)^{b_h'/2}
\end{equation} 
and check that 
\begin{equation}\label{TrK2}
{\rm Tr}(K_2(\la))=-\pl_\la \log \det(\mc{E}{\bf D}_0(\la)).
\end{equation}
By \cite[Corollary 3.8]{BurgheleaFK92}, we also have 
\[ \mathbf{D}^{-1}_{\Sigma^{\#2},\mc{C}'',\mc{B}''}(\la) \pl_\la \mathbf{D}_{\Sigma^{\#2},\mc{C}'',\mc{B}''}(\la)=B_{\mc{C}'', \mc{B}'} (\Delta_{\Sigma^{\#2},D}+\la)^{-1}P_{\Sigma^{\#2},\mc{C}'', \mc{B}'}(\la).\]
Thus by combining with \eqref{TrK2} and using the cyclicity of the trace we get 
\[\begin{split}
 {\rm Tr}(K_1(\la))=& {\rm Tr}(\mc{E} P_{\Sigma^{\#2},\mc{C}'', \mc{B}'}B_{\mc{C}''\cup\mc{B''}}(\Delta_{\Sigma^{\#2},D}+\la)^{-1})\\
 =&- \pl_s\pl_\la  \zeta_{\Sigma}(0,\la)
 \end{split}\]
 where we used \eqref{variationzeta} and \eqref{diff_resolv} in the second line. Combining with \eqref{der_det_F} and 
 \eqref{TrK2}, we deduce that
 \[\begin{split} 
 \pl_\la \log\Big(\frac{\det(\Delta_{\Sigma,m}+\la)}{\det(\Delta_{\Sigma,\mc{C}'\cup \mc{B}',m}+\la)}\Big)=&
 {\rm Tr}(K_1(\la))= \pl_\la \log \Big({\det}_{\rm Fr}(\mathbf{D}_{\Sigma,\mc{C}',\mc{B}'}(\la)(2{\bf D}_0(\la))^{-1})\det(\mc{E}{\bf D}_0(\la))\Big)
 \end{split}\]
 and there exists a constant $A\in \R$ such that
 \begin{equation}\label{identity_of_det}
 \frac{\det(\Delta_{\Sigma,m}+\la)}{\det(\Delta_{\Sigma,\mc{C}'\cup \mc{B}',m}+\la)}=A
{\det}_{\rm Fr}(\mathbf{D}_{\Sigma,\mc{C}',\mc{B}'}(\la)(2{\bf D}_0(\la))^{-1})\det(\mc{E}{\bf D}_0(\la)).
\end{equation}
To determine the constant $A$, we consider the asymptotics of these quantities as $\la\to \infty$. By Lemma \ref{asympt_exp_heat} and \eqref{zetaSigmas}, 
\[\begin{split} 
\zeta_{\Sigma}(s,\la)=& \frac{1}{\Gamma(s)}\int_0^\infty \sum_{j=0}^{2}c_j t^{s-2+j/2}e^{-t\la}\dd t+G(s,\la)\\
=& \sum_{j=0}^2 \la^{1-s-j/2}c_j \frac{\Gamma(s-1+j/2)}{\Gamma(s)}+G(\la,s)
\end{split}\]
with $G(s,\la)=o(1)$ as $\la\to \infty$, uniformly in $s$ in a small neighborhood of $0$. We apply $\pl_s$ and observe that there are $c_j,d_j\in \R$ such that
\[ \log \det(\Delta_{\Sigma,m}+\la)=-\pl_s\zeta_{\Sigma}(0,\la)=\sum_{j=0}^1  \la^{1-j/2}(c_j\log\la+d_j)+c_2\log(\la)+o(1)\]
as $\la\to \infty$, in particular the coefficient of order $\la^0$ vanishes in the asymptotic exansion. Since $\Delta_{\Sigma,\mc{C}'\cup \mc{B}',m}$ can also be viewed as a Dirichlet Laplacian on a surface with corners $\Sigma'$ by taking $\Sigma\setminus (\mc{C}'\cup\mc{B}')$ and adding two copies of $\mc{C}'_j$ and $\mc{B}_j$ (one from each side when approaching $\mc{C}'_j$ or $\mc{B}_j'$), we can apply the same argument as for $\det(\Delta_{\Sigma,m})$ to get 
\[ \log \det(\Delta_{\Sigma,\mc{C}'\cup \mc{B}',m}+\la)=\sum_{j=0}^1  \la^{1-j/2}(c'_j\log\la+d_j')+c'_2\log(\la)+o(1)\]
as $\la\to \infty$, for some $c_j',d_j'\in \R$. Similarly $\Sigma^{\#2}\setminus (\mc{C}''\cup \mc{B}'')$ can be compactified by adding two copies of $\mc{C}''$ and $\mc{B}''$ (one from each side) and 
$\mathbf{D}_{\Sigma^{\#2},\mc{C}'',\mc{B}''}(\la)$ on $\mc{C}''\cup\mc{B}''$ 
is the sum of the two Dirichlet-to-Neumann defined on each copy. We can thus apply  
Lemma \ref{analyse_D} to deduce that 
\begin{equation}\label{relativeDN} 
\mc{E}\mathbf{D}_{\Sigma^{\#2},\mc{C}'',\mc{B}''}(\la)(2{\bf D}_0(\la))^{-1}=\mc{E}(1+K(\la))
\end{equation}
with $K(\la)$ smoothing and satisfying the estimates  \eqref{boundonK}. This implies that as $\la\to \infty$
\[ \log{\det}_{\rm Fr}(\mathbf{D}_{\Sigma,\mc{C}',\mc{B}'}(\la)(2{\bf D}_0(\la))^{-1})=o(1).\]
Next we consider the asymptotic of \eqref{detED0} as $\la\to \infty$: by \cite[Sec 4.7]{BurgheleaFK92}\footnote{The main argument here is that ${\bf D}_{\A}(\la)^2=\Delta_{\T}+\la$ on $\T$ so we can reduce the analysis to the Laplacian on the the circle.}, we have 
\[ \log \det({\bf D}_{\A,0}(\la))=\sum_{j=0}^1  \la^{1-j/2}(c''_j\log\la+d_j'')+c''_2\log(\la)+o(1)\]
for some $c_j'',d_j''$, in particular the term of order $\la^0$ vanishes too. Since $\log(1+\sqrt{\la})$ does not have $\la^0$ term in its asymptotic expansion at $\la\to \infty$, we have proved that $A=1$. We can now take $\la\to 0$ in \eqref{identity_of_det}. 

First, if $\Sigma$ has at least one Dirichlet boundary, $\Delta_{\Sigma,m}$ is invertible and we get
\[\frac{\det(\Delta_{\Sigma,m})}{\det(\Delta_{\Sigma,\mc{C}'\cup \mc{B}',m})}=
{\det}_{\rm Fr}(\mathbf{D}_{\Sigma,\mc{C}',\mc{B}'}(2{\bf D}_0)^{-1}) \det(\mc{E}{\bf D}_0)={\det}_{\rm Fr}(\mathbf{D}_{\Sigma,\mc{C}',\mc{B}'}(2{\bf D}_0)^{-1})(2\pi)^{b_\ell'+b_h'/2}\]
where we notice that $\det(\mc{E}{\bf D}_0):=e^{-2\zeta_R'(0)(b_\ell'+b_h'/2)}$ and $\zeta_R'(0)=-\frac{1}{2}\log(2\pi)$.

Second, if $\Sigma$ has no Dirichlet boundary, the kernel of $\Delta_{\Sigma,m}$ is made of constants and $\Delta_{\Sigma,\mc{C}'\cup \mc{B}',m}$ has no kernel. We can assume that the cut is a single boundary cut $\mc{B}'$ as the case $\mc{C}'$ is similar (and indeed simpler). We follow the proof of \cite[Theorem B*]{BurgheleaFK92} but with slight variations, by letting $\la\to 0^+$ in \eqref{identity_of_det} (with $A=1$). We have for $\la\to 0^+$
\[  \det(\Delta_{\Sigma,m}+\la)=\la {\det}'(\Delta_{\Sigma,m})(1+o(1)).\]
. 
We shall compute the asymptotic expansion as $\la\to 0^+$ of 
\[d(\la):={\det}_{\rm Fr}(2{\bf D}_0(\la)\mathbf{D}^{-1}_{\Sigma,\mc{B}'}(\la))=\frac{1}{{\det}_{\rm Fr}(\mathbf{D}_{\Sigma,\mc{B}'}(\la)(2{\bf D}_0(\la))^{-1})}.\]
By the same argument as the proof of Lemma \ref{rest_DNmap} again, ${\bf D}_{\Sigma,\mc{B}'}(\la)$ is equal to 
$\mc{E} {\bf D}_{\Sigma^{\#2},\mc{B}''}(\la)\mc{E}$ and thus is self-adjoint on $L^2_{\rm even}(\T)$ and preserves the orthogonal decomposition ${\rm Im}(\Pi_0')+\ker(\Pi_0')$ of $L^2_{\rm even}(\T)$.
By the same argument as Lemma \ref{rest_DNmap}, ${\bf D}_{\Sigma,\mc{B}'}(\la)^{-1}$ has integral kernel $(\Delta_{\Sigma,m}+\la)^{-1}(\cdot,\cdot)|_{{\mc{B}'}^2}$.
One then has as $\la\to 0^+$
\[ {\bf D}_{\Sigma,\mc{B}'}(\la)^{-1}u=\frac{1}{\la  {\rm v}_g(\Sigma)}\int_{\mc{B}'}u \dd\ell_g+H(\la)u=\Big(\frac{\pi}{\la {\rm v}_g(\Sigma)}\Pi_0'+H(\la)\Big)u\]
for some $H(\la)$ holomorphic near $\la=0$. Let ${\bf D}_{\Sigma,\mc{B}',0}(\la)={\bf D}_{\Sigma,\mc{B}'}(\la)+\Pi_0'$, then
 $ {\bf D}_{\Sigma,\mc{B}',0}(\la)^{-1}(1-\Pi'_0)=H(\la)(1-\Pi'_0)$. Since ${\bf D}_0(\la)$ also commutes with $\Pi'_0$, we obtain
\[\begin{split} 
2{\bf D}_0(\la) {\bf D}_{\Sigma,\mc{B}'}(\la)^{-1}=& (\frac{2\pi(1+\sqrt{\la})}{\la  {\rm v}_g(\Sigma)}+2H(\la))\Pi'_0+ 2{\bf D}_0(\la){\bf D}_{\Sigma,\mc{B}',0}(\la)^{-1}(1-\Pi'_0) \end{split}\]
using ${\bf D}_0(\la)\Pi_0'=(1+\sqrt{\la})\Pi_0'$. Then, we conclude that
\[ d(\la)= \frac{\pi(1+\mc{O}(\sqrt{\la}))}{\la  {\rm v}_g(\Sigma)}{\det}_{\rm Fr}(2{\bf D}_0(\la){\bf D}_{\Sigma,\mc{B}',0}(\la)^{-1})= \frac{\pi+\mc{O}(\sqrt{\la})}{\la  {\rm v}_g(\Sigma)}\frac{1}{{\det}_{\rm Fr}({\bf D}_{\Sigma,\mc{B}',0}(\la)(2{\bf D}_0(\la))^{-1})}.\]
We can use that ${\det}_{\rm Fr}({\bf D}_{\Sigma,\mc{B}',0}(\la)(2{\bf D}_0(\la))^{-1})={\det}_{\rm Fr}({\bf D}_{\Sigma,\mc{B}',0}(2{\bf D}_0)^{-1})(1+o(1))$ as $\la\to 0^+$ and $\det(\mc{E}{\bf D}_0(\la))=\det(\mc{E}{\bf D}_0)(1+o(1))=(2\pi)^{1/2}(1+o(1))$ 
to deduce 
\[ {\det}'(\Delta_{\Sigma,m})=\frac{\sqrt{2}}{\sqrt{\pi}}{\rm v}_g(\Sigma)\det(\Delta_{\Sigma,\mc{B}',m}){\det}_{\rm Fr}({\bf D}_{\Sigma,\mc{B}',0}(2{\bf D}_0)^{-1}).\]
This concludes the proof for $\mc{B}'$ cut. The same argument in the $\mc{C}'$ cut case gives
\[ {\det}'(\Delta_{\Sigma,m})={\rm v}_g(\Sigma)\det(\Delta_{\Sigma,\mc{C}',m}){\det}_{\rm Fr}({\bf D}_{\Sigma,\mc{C}',0}(2{\bf D}_0)^{-1}).\qedhere\]
\end{proof}

Next, we compute the determinant of Laplacian for the half-annulus with mixed boundary condition or Dirichlet boundary condition. 
For the mixed condition on the half annulus $\A^+_{\delta}=\A_\delta \cap\H$ (with $\delta\in (0,1)$), we put Dirichlet boundary condition on the two half circles $\T^+$ and $\delta\T^+$ and Neumann boundary condition on the two intervals $[-1,-\delta]\cup [\delta,1]$, and  the metric is $g_\A=|dz|^2/|z|^2$ as before. 
We also recall that the ordinary Riemann zeta function $\zeta_{\text{Rie}}(s):=\sum_{k=1}^\infty \frac{1}{k^s}$ satisfies $\zeta_{\text{Rie}}(0)=-\frac{1}{2}$, $ \zeta_{\text{Rie}}'(0)= -\frac{\log 2 \pi}{2}$.

\begin{proposition}\label{det of half annulus}
	For $\delta \in (0,1)$, the determinant of the Laplacian with mixed boundary condition on $\A_\delta^+$ is given by 
	\begin{equation*}
			\det(\Delta_{ \A_\delta^+,m,g_\A})= -\sqrt{\frac{2}{\pi} }(\log \delta)  \delta^{1/12} \prod_{n \geq 1}  (1- \delta^{2n})
	\end{equation*}
while the determinant of the Dirichlet Laplacian on  $\A_\delta^+$ is given by 
 \begin{equation}\label{detA^+deltaD}
	\text{det}(\Delta_{\mathbb A^+_\delta,D,g_\A})= \frac{1}{\sqrt{2\pi}}\delta^{1/12}\prod_{n \geq 1}  (1- \delta^{2n}).
\end{equation}
\end{proposition}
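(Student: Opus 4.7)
The plan is to identify $\A_\delta^+$ conformally with a flat rectangle, separate variables, and reduce the computation to an infinite product of one-dimensional Dirichlet determinants on an interval.

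Setting $\tau := -\log\delta$, the map $(t,\theta) \mapsto z = e^{-t+i\theta}$ is a biholomorphism $[0,\tau]\times[0,\pi] \to \A_\delta^+$ that pulls back $g_\A = |dz|^2/|z|^2$ to the flat metric $dt^2+d\theta^2$. The Dirichlet half-circles $\T^+$ and $\delta\T^+$ correspond to $\{t=0\}\cup\{t=\tau\}$, while the Neumann real segments $[-1,-\delta]\cup[\delta,1]$ correspond to $\{\theta=0\}\cup\{\theta=\pi\}$. Separation of variables then yields explicit orthonormal bases: the mixed-BC eigenfunctions are $\sin(k\pi t/\tau)\cos(n\theta)$ for $k\geq 1$, $n\geq 0$, and the all-Dirichlet eigenfunctions are $\sin(k\pi t/\tau)\sin(n\theta)$ for $k,n\geq 1$, both families with eigenvalues $\lambda_{k,n} = (k\pi/\tau)^2 + n^2$.

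Next I would use the factorization of zeta-regularized determinants over tensor-sum spectra of commuting operators, which reduces the computation to one-dimensional Dirichlet determinants on $[0,\tau]$:
\[
\det(\Delta_{\A_\delta^+,m,g_\A}) = \det\bigl(-\partial_t^2\bigr) \cdot {\prod_{n\geq 1}}^{\!\!\zeta} \det\bigl(-\partial_t^2+n^2\bigr), \qquad \det(\Delta_{\A_\delta^+,D,g_\A}) = {\prod_{n\geq 1}}^{\!\!\zeta} \det\bigl(-\partial_t^2+n^2\bigr),
\]
where every Laplacian has Dirichlet BC on $[0,\tau]$ and the outer $n$-product is zeta-regularized. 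The one-dimensional determinants are classical: a direct zeta computation using $\zeta_{\rm Rie}(0) = -\tfrac12$ and $\zeta_{\rm Rie}'(0) = -\tfrac12\log(2\pi)$ yields $\det(-\partial_t^2)_{D,[0,\tau]} = 2\tau$, while Gelfand-Yaglom (normalized consistently with the zeta convention) gives $\det(-\partial_t^2+n^2)_{D,[0,\tau]} = 2\sinh(n\tau)/n$ for $n\geq 1$.

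It then remains to evaluate ${\prod_{n\geq 1}}^{\!\!\zeta} 2\sinh(n\tau)/n$. Writing $2\sinh(n\tau) = e^{n\tau}(1-\delta^{2n})$, the factor $\prod_{n\geq 1}(1-\delta^{2n})$ converges absolutely, while the zeta identities ${\sum_{n\geq 1}}^{\!\!\zeta} n = \zeta_{\rm Rie}(-1) = -\tfrac{1}{12}$ and ${\prod_{n\geq 1}}^{\!\!\zeta} n = e^{-\zeta_{\rm Rie}'(0)} = \sqrt{2\pi}$ combine to give
\[
{\prod_{n\geq 1}}^{\!\!\zeta} \frac{2\sinh(n\tau)}{n} = \frac{\delta^{1/12}\prod_{n\geq 1}(1-\delta^{2n})}{\sqrt{2\pi}},
\]
which is exactly \eqref{detA^+deltaD}. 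Multiplying by the $n=0$ contribution $2\tau = -2\log\delta$ then produces the mixed-BC formula $\det(\Delta_{\A_\delta^+,m,g_\A}) = -\sqrt{2/\pi}(\log\delta)\delta^{1/12}\prod_{n\geq 1}(1-\delta^{2n})$. The main technical point is justifying the factorization of $\det_\zeta$ over the tensor-sum spectrum and the compatibility of ${\prod}^\zeta$ with the splitting $2\sinh(x) = e^x(1-e^{-2x})$; both reduce to a Mellin-transform/heat-trace computation that splits the spectral zeta of the rectangle into $n=0$ and $n\geq 1$ contributions and verifies the absence of any multiplicative anomaly. This is a standard computation, as carried out e.g.\ in \cite{OsgoodPS88}, and beyond this careful bookkeeping no new idea is required.
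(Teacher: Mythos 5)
Your computation is correct and yields exactly the stated formulas (and one can check the consistency $\det(\Delta_{\A_\delta^+,m,g_\A})\det(\Delta_{\A_\delta^+,D,g_\A})=\det(\Delta_{\A_\delta,D,g_\A})$ with Weisberger's formula, as required by \eqref{identity_det}), but your route is genuinely different from the paper's. The paper never factorizes the two-dimensional determinant: it writes the spectral zeta functions of $\A_\delta^+$ (mixed b.c.) and of the full annulus $\A_\delta$ (Dirichlet) via separation of variables, observes the exact identity $\zeta_{\A_\delta,D}(s)=2\zeta_{\A_\delta^+,m}(s)-(2\ell)^{2s}\zeta_{\rm Rie}(2s)$ with $\ell=-\tfrac{1}{2\pi}\log\delta$, differentiates at $s=0$ to get $\det(\Delta_{\A_\delta^+,m,g_\A})=\sqrt{4\pi\ell}\,\det(\Delta_{\A_\delta,D,g_\A})^{1/2}$, quotes the known annulus formula from \cite{Wei}, and then obtains the Dirichlet half-annulus determinant for free from the doubling identity \eqref{identity_det} rather than by a second computation. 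The trade-off is clear: the paper's comparison argument only requires the meromorphic continuation of the elementary correction term $(2\ell)^{2s}\zeta_{\rm Rie}(2s)$ and outsources the hard analysis to the cited annulus formula, whereas your approach is self-contained but carries the full burden of justifying the zeta-regularized factorization over the tensor-sum spectrum and the compatibility of $\prod^\zeta$ with the splitting $2\sinh(n\tau)=e^{n\tau}(1-\delta^{2n})$ — which you correctly identify as the main technical point, and which is essentially the content of the computation in \cite{Wei} that the paper cites. If you were to write this up in full, that Mellin/heat-trace interchange is the step that must be carried out carefully; as a way of re-deriving Weisberger's formula it is fine, but it does not shorten the argument relative to the paper's.
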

\begin{proof}
The spectral zeta function for the mixed boundary condition half-annulus $\A_\delta^+$ is 
\[\zeta_{\A^+_\delta,m,g_\A}(s)= \sum_{m \geq 0, n \geq 1}  \frac{1}{  (m^2 +(\frac{n}{2\ell})^2)^s }\]
where $\ell:=-\frac{1}{2\pi}\log\delta$, the eigenvalues $\lambda_{m,n}=m^2 +(\frac{n}{2\ell})^2$ appearing in the above formula correspond to the eigenvectors $ \sin( \sqrt{\lambda_{m,n}- m^2} x)  \cos(m \theta)$ when we map $\A_\delta^+$ to the strip $[0, \log \frac{1}{|\delta|}] \times [0,2 \pi]$. Similarly, the spectral zeta of $\A_\delta$ with Dirichlet boundary condition is given by 
\[\zeta_{\mathbb A_\delta,D,g_\A}(s)= \sum_{m \in \mathbb Z, n \geq 1}  \frac{1}{  (m^2 +(\frac{n}{2\ell})^2)^s }.\]
Then we have
\[\zeta_{\mathbb A_\delta,D,g_\A}(s)= 2 \zeta_{\mathbb A^+_\delta,m,g_\A}(s)- \sum_{ n \geq 1}  \frac{1}{ (\frac{n}{2\ell})^{2s} }= 2 \zeta_{\A^+_\delta,m,g_\A}(s)-(2\ell)^{2s} \zeta_{\text{Rie}}(2s)\]
and therefore
\[\zeta_{\mathbb A_\delta,D,g_\A}'(0)= 2 \zeta_{\mathbb A^+_\delta,m,g_\A}'(0)- 2 \log (2\ell) \zeta_{\text{Rie}}(0)-2 \zeta_{\text{Rie}}'(0)= 2 \zeta_{\A^+_\delta,m,g_\A}'(0)+ \log (4 \pi \ell).\]
Finally, we have
\begin{equation}\label{detA^+delta}
\det(\Delta_{ \mathbb A^+_\delta,m,g_\A})= \sqrt{4 \pi \ell} \text{det}(\Delta_{\mathbb A_\delta,D,g_\A})^{1/2}= -\sqrt{\frac{2}{\pi} }(\log \delta) \delta^{1/12} \prod_{n \geq 1}  (1- \delta^{2n}).
\end{equation}
where we used the formula (see \cite{Wei})
\begin{equation}\label{detAdeltaD}
\text{det}(\Delta_{\mathbb A_\delta,D,g_\A})= 2\ell  \delta^{1/6} \prod_{n \geq 1}  (1- \delta^{2n})^{2}= \frac{- (\log \delta)}{\pi}  \delta^{1/6} \prod_{n \geq 1}  (1- \delta^{2n})^{2}.
\end{equation}
Using \eqref{identity_det}, \eqref{detA^+delta} and \eqref{detAdeltaD}, we deduce that  the determinant of the Dirichlet Laplacian on the half-annulus $\A_\delta^+$ is given by \eqref{detA^+deltaD}.
\end{proof}

\section{GMC estimates}\label{GMC}\label{app:GMC_estimates}
In this appendix, we prove some results announced in Section \ref{sub:regbhamiltonian} (and we will accordingly stick to the notations there). We will prove a result that is slightly more general than needed. For $\alpha\in (0,1)$, we set

\begin{align}\label{defValphak}
V^{(k)}_{\alpha}(\varphi^h):=& \int_{0}^\pi e^{\gamma  \varphi^{h,k}(\theta)-\frac{\gamma^2}{2}\E[  \varphi^{h,k}(\theta)^2]}\Big(\frac{1}{2|\sin(\theta)|+\frac{1}{k}}\Big)^{\alpha}\dd\theta.
\end{align}
In particular, we recover $V^{(k)}_{+}(\varphi^h)$ for $\alpha=\tfrac{\gamma}{2}$.

The sequence $(V^{(k)}_{\alpha}(\varphi^h))_k$ is not a martingale but it is natural to expect that it converges towards a GMC defined as follows. We consider the family of random measures on $[0,\pi]$ indexed by $k$ and defined by
$$
W^{(k)}_{\alpha}(\varphi^h,B):= \int_{B} e^{\gamma  \varphi^{h,k}(\theta)-\frac{\gamma^2}{2}\E[  \varphi^{h,k}(\theta)^2]}\Big(\frac{1}{2|\sin(\theta)| }\Big)^{\alpha}\dd\theta
$$
for any Borel measurable subset $B\subset [0,\pi]$.  This family is a nonnegative martingale for each such $B$, and therefore it converges almost surely. Standard GMC theory (see \cite{Kahane85,rhodes2014_gmcReview}) ensures the uniform integrability of this martingale if $B$ stays away from the singularities located at $\theta=0,\pi$, namely if the closure of $B$ is contained in $(0,\pi)$. More precisely, this martingale is then bounded in $L^p(\Omega_{\T^+})$ for $p<2/\gamma^2$. Actually, GMC theory (see \cite{Kahane85,rhodes2014_gmcReview})  also ensures that, almost surely, the family of random measures $(W^{(k)}_{\alpha}(\varphi^h,\dd x))_k$ converges vaguely towards a limiting measure denoted by 
 \begin{align}\label{defValpha+}
B\mapsto V_{\alpha}(\varphi^h,B):=& \int_{B} e^{\gamma  \varphi^{h}(\theta)-\frac{\gamma^2}{2}\E[  \varphi^{h}(\theta)^2]}\Big(\frac{1}{2|\sin(\theta)|}\Big)^{\alpha}\dd\theta
\end{align}
for measurable subsets $B$ of $(0,\pi)$. Such a convergence does not give information on the behaviour of the limiting measure near the singularities at $\theta=0,\pi$. Yet we can consider the total mass of this measure
$$V_{\alpha}(\varphi^h):=\sup_B\int_{B} e^{\gamma  \varphi^{h}(\theta)-\frac{\gamma^2}{2}\E[  \varphi^{h}(\theta)^2]}\Big(\frac{1}{2|\sin(\theta)|}\Big)^{\alpha}\dd\theta$$
where the supremum runs over the closed subsets $B\subset (0,\pi)$.  It is not clear a priori that this quantity is finite. Actually, the multifractal analysis argument in \cite[Lemma 3.3]{DKRV16} shows directly that the  total mass $V_{\alpha}(\varphi^h)$ is finite for $\alpha<1+\tfrac{\gamma^2}{2}$, even though we will recover this fact without using this argument. Another point that has to be elucidated is whether the random variable $V_{\alpha}(\varphi^h)$ coincides with the almost sure  limit of the martingale $W^{(k)}_{\alpha}(\varphi^h,(0,\pi))$. If so, one may also wonder in which $L^p$ spaces this convergence holds. The following Lemma adresses this point.

 \begin{lemma}
Let  $\alpha\in (0,1)$ and set\footnote{Note that $p_2>1$ iff $\alpha\in (0,1)$.} $p_2:=\tfrac{1}{2}(1-\tfrac{\alpha}{\gamma^2})+ \sqrt{\tfrac{1}{\gamma^2}+\tfrac{1}{4}(1-\tfrac{\alpha}{\gamma^2})^2}$. We have
\begin{align}\label{defValpha}
V_{\alpha}(\varphi^h)=\lim_{k\to\infty}V_\alpha^{(k)}(\varphi^h)=\lim_{k\to\infty}\int_{0}^\pi e^{\gamma  \varphi^{h,k}(\theta)-\frac{\gamma^2}{2}\E[  \varphi^{h,k}(\theta)^2]}\Big(\frac{1}{2|\sin(\theta)|}\Big)^{\alpha}\dd\theta
\end{align}
almost surely and both convergences also hold in $L^p(\Omega_{\T^+})$ for $p \in (0, p_2)$.
\end{lemma}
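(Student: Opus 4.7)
The plan is to prove the lemma in three phases: first establish almost sure convergence via a submartingale argument, then identify the limit as $V_\alpha(\varphi^h)$ by a boundary-truncation, and finally upgrade to $L^p$ convergence via a uniform moment bound valid in the range $p\in(0,p_2)$.

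First I would observe that $(V_\alpha^{(k)}(\varphi^h))_k$ is a submartingale for the filtration $\mc{F}_k=\sigma(\varphi^{h,k})$. The standard Gaussian cancellation gives $\E[e^{\gamma\varphi^{h,k+1}(\theta)-\frac{\gamma^2}{2}\E[\varphi^{h,k+1}(\theta)^2]}\mid\mc{F}_k]=e^{\gamma\varphi^{h,k}(\theta)-\frac{\gamma^2}{2}\E[\varphi^{h,k}(\theta)^2]}$ by independence of the newly introduced Fourier mode, while the weight $(2|\sin\theta|+\tfrac{1}{k})^{-\alpha}$ is monotonically increasing in $k$, so that $\E[V_\alpha^{(k+1)}\mid\mc{F}_k]\geq V_\alpha^{(k)}$. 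Combined with the $L^1$ bound $\E[V_\alpha^{(k)}]\leq\int_0^\pi(2|\sin\theta|)^{-\alpha}d\theta<\infty$ (finite since $\alpha<1$), Doob's submartingale convergence theorem yields $V_\alpha^{(k)}\to V_\alpha^\infty$ a.s.\ for some $V_\alpha^\infty\in L^1(\Omega_{\T^+})$. The analogous argument applied to the unregularized weight shows that the quantity $\int_0^\pi(2|\sin\theta|)^{-\alpha}e^{\gamma\varphi^{h,k}-\frac{\gamma^2}{2}\E[\cdot^2]}d\theta$ is in fact a (positive) martingale in $k$ with the same $L^1$ bound, hence also converges a.s.

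Second, to identify the limit with $V_\alpha(\varphi^h)$, I would split $V_\alpha^{(k)}=V_{\alpha,\eps}^{(k)}+R_\eps^{(k)}$ where $V_{\alpha,\eps}^{(k)}$ restricts the integration to $[\eps,\pi-\eps]$. On this interior sub-interval the weight converges uniformly to $(2|\sin\theta|)^{-\alpha}$ and the regularized GMC measure $e^{\gamma\varphi^{h,k}-\frac{\gamma^2}{2}\E[\cdot^2]}d\theta$ converges a.s.\ in the vague topology, so $V_{\alpha,\eps}^{(k)}\to V_\alpha(\varphi^h,[\eps,\pi-\eps])$ a.s.\ by the standard GMC martingale convergence on compacta away from the boundary singularities. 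The same submartingale argument gives $R_\eps^{(k)}\to R_\eps^\infty$ a.s.\ with $\E[R_\eps^\infty]\leq\liminf_k\E[R_\eps^{(k)}]\leq\int_{[0,\eps]\cup[\pi-\eps,\pi]}(2|\sin\theta|)^{-\alpha}d\theta=O(\eps^{1-\alpha})$. Hence $V_\alpha^\infty=V_\alpha(\varphi^h,[\eps,\pi-\eps])+R_\eps^\infty$ for every $\eps>0$; sending $\eps\downarrow 0$, monotone convergence gives $V_\alpha(\varphi^h,[\eps,\pi-\eps])\nearrow V_\alpha(\varphi^h)$ while $\E[R_\eps^\infty]\to 0$ forces $R_\eps^\infty\searrow 0$ a.s., so $V_\alpha^\infty=V_\alpha(\varphi^h)$. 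The same argument shows the martingale version converges to the same limit.

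Third, for the $L^p$ convergence I would prove the uniform moment bound $\sup_k\E[(V_\alpha^{(k)})^p]<\infty$ for $p\in(0,p_2)$. The case $p\leq 1$ is immediate from Jensen's inequality and the first step. For $p\in(1,p_2)$, the pointwise domination $V_\alpha^{(k)}\leq\int_0^\pi(2|\sin\theta|)^{-\alpha}e^{\gamma\varphi^{h,k}-\frac{\gamma^2}{2}\E[\cdot^2]}d\theta$ combined with Kahane's convexity inequality (comparing $\varphi^{h,k}$ with a log-correlated Gaussian of independent scale-by-scale structure) reduces the problem to the finiteness of the multi-point integral $\int_{(0,\pi)^p}\prod_i(2|\sin\theta_i|)^{-\alpha}\prod_{i<j}(|e^{i\theta_i}-e^{i\theta_j}||e^{i\theta_i}-e^{-i\theta_j}|)^{-\gamma^2}d\theta$, whose image-charge factors $|e^{i\theta_i}-e^{-i\theta_j}|^{-\gamma^2}$ come from the doubled (even) structure of $\varphi^h$. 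A careful multifractal analysis of the clustering at the boundary points $\theta=0,\pi$ yields precisely the Seiberg-type inequality $\gamma^2p^2-(\gamma^2-\alpha)p-1<0$, i.e.\ $p<p_2$, which ensures integrability. Combined with the a.s.\ convergence of the second step, Vitali's convergence theorem (using any bound with $p+\delta<p_2$) upgrades to $L^p$ convergence. The main obstacle is this moment bound: one must pass from integer-$p$ estimates to fractional $p$ via the Kahane comparison to reach the full critical range $p<p_2$, and carefully account for both the bulk log-correlated singularity and the additional boundary factor $(2|\sin\theta|)^{-\alpha}$ together with the image-charge contributions.
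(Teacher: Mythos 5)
Your first two steps are sound and take a mildly different route from the paper: you get almost sure convergence of $V_\alpha^{(k)}$ directly from Doob's theorem applied to the submartingale $\E[V_\alpha^{(k+1)}\mid \mc{F}_k]\geq V_\alpha^{(k)}$ (the Gaussian cancellation plus monotonicity of the weight is correct), whereas the paper sandwiches $V_\alpha^{(k)}$ between the vague limits on closed subsets of $(0,\pi)$ and the martingale $W^{(k)}_\alpha=\E[V_\alpha(\varphi^h,\T^+)\mid\mc{F}_k]$, the latter identity being obtained from an increasing approximation $f_n\nearrow 1$ and Fatou. Your identification of the limit via the interior/boundary split and $\E[R_\eps^\infty]=O(\eps^{1-\alpha})$ also works and is essentially equivalent in content to the paper's closed-martingale identification.

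The genuine gap is in your third step, the moment bound $\sup_k\E[(V_\alpha^{(k)})^p]<\infty$ for all $p\in(1,p_2)$. The multi-point integral you write down, $\int_{(0,\pi)^p}\prod_i(2|\sin\theta_i|)^{-\alpha}\prod_{i<j}(\cdots)^{-\gamma^2}$, only makes sense for \emph{integer} $p$, and the interval $(1,p_2)$ typically contains no integer at all: for instance with $\gamma^2=2$ and $\alpha=1/2$ one has $p_2\approx 1.18$. Kahane's convexity inequality compares expectations of a fixed convex functional under two fields with ordered covariances; it does not interpolate moments between integer orders, so the sentence ``pass from integer-$p$ estimates to fractional $p$ via the Kahane comparison'' is not an argument, and this is exactly where all the work of the lemma lies. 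The paper's actual mechanism is different: Kahane's inequality is used only to replace $\varphi^h$ near $\theta=0$ by an \emph{exactly scaling} field $Y(z)=\tfrac{1}{\sqrt 2}(\tilde Y(z)+\tilde Y(-z))$ with covariance $\log\tfrac{1}{|z-z'||z+z'|}$, and then the fractional moment is controlled by the scaling identity $\E[(Z_{\lambda\eps}((0,\lambda\delta)))^p]=\lambda^{p(1-\alpha+\gamma^2)-p^2\gamma^2}\E[(Z_\eps((0,\delta)))^p]$ combined with the elementary inequality $\lambda^p(a+b)^p\leq \lambda a^p+(1-\lambda)b^p$, which yields a contraction precisely when $f(p)=1+p(\gamma^2-\alpha)-p^2\gamma^2>0$, i.e.\ $p<p_2$. (An alternative would be the Girsanov/multifractal argument of \cite[Lemma 3.3]{DKRV16}, which the paper mentions.) Without such a scaling or localization device your proof does not reach the stated range of $p$, so the key quantitative step is missing. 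A secondary remark: once the moment bound is known for the limit $V_\alpha(\varphi^h,\T^+)$, the cleanest route to uniform integrability is conditional Jensen, $\E[(W^{(k)}_\alpha)^p]\leq\E[(V_\alpha(\varphi^h,\T^+))^p]$ together with $V_\alpha^{(k)}\leq W^{(k)}_\alpha$, rather than re-deriving a bound for each $k$.
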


\begin{proof}
In this proof, it will be convenient for us to   see $V^{(k)}_{\alpha}(\varphi^h)$ as a measure so that we extend the definition to
$$
V^{(k)}_{\alpha}(\varphi^h,B):= \int_{B} e^{\gamma  \varphi^{h,k}(\theta)-\frac{\gamma^2}{2}\E[  \varphi^{h,k}(\theta)^2]}\Big(\frac{1}{2|\sin(\theta)|+\frac{1}{k}}\Big)^{\alpha}\dd\theta
$$
for any Borel measurable subset $B\subset (0,\pi)$. To make the notations less cluttered, we will write $\T^+$ for the open interval $(0,\pi)$ even though this is a slight abuse of notations.

Our first goal is to identify  the limit of the martingale $(W^{(k)}_{\alpha}(\varphi^h,\T^+))_k$, and in the same process that of the sequence $(V^{(k)}_{\alpha}(\varphi^h,\T^+))_k$.  It is straightforward to see that 
\begin{equation}\label{card}
\lim_{k\to\infty} V^{(k)}_{\alpha}(\varphi^h,B) =\lim_{k\to\infty} W^{(k)}_{\alpha}(\varphi^h,B)=V_{\alpha}(\varphi,B)
\end{equation}
for any closed set $B\subset(0,\pi)$. What is not completely obvious is to replace $B$ by $\T^+$ in this relation (in that case,  standard GMC theory does not provide the uniform integrability of the martingale $(W^{(k)}_{\alpha}(\varphi^h,\T^+))_k$). From \eqref{card}, we deduce that a.s. $\liminf _{k\to\infty} V^{(k)}_{\alpha}(\varphi^h,\T^+)\geq V_{\alpha}(\varphi^h,\T^+)$. From the relation 
\[V^{(k)}_{\alpha}(\varphi^h,\T^+)\leq W^{(k)}_{\alpha}(\varphi^h,\T^+),\] 
we also deduce  $\limsup _{k\to\infty} V^{(k)}_{\alpha}(\varphi^h,\T^+)\leq \lim _{k\to\infty} W^{(k)}_{\alpha}(\varphi^h,\T^+)$.  Now we claim that we have the almost sure convergence
\begin{equation}\label{limartin}
\lim _{k\to\infty} W^{(k)}_{\alpha}(\varphi^h,\T^+)=V_{\alpha}(\varphi^h,\T^+).
\end{equation}
Let us denote by $\mc{F}_k$ the sigma algebra generated by the field $\varphi^{h,k}$, so that $(\mc{F}_k)_k$ is a filtration. Consider an increasing family $(f_n)_n$ of continuous functions on $\T^+$, with compact support in $(0,\pi)$, with $0\leq f_n\leq 1$, and $f_n\nearrow 1$ as $n\to\infty$. For $k,n\in\N$, we have
$$\E[V_{\alpha}(\varphi^h,\T^+)|\mc{F}_k]\geq \E[V_{\alpha}(\varphi^h,f_n)|\mc{F}_k]=W^{(k)}_{\alpha}(\varphi^h,f_n),$$
the last equality following from the the convergence in  $L^p(\Omega_{\T^+})$ for $p<2/\gamma^2$ of the martingale $(W^{(k)}_{\alpha}(\varphi^h,f_n))_k$ towards $V_{\alpha}(\varphi^h,f_n)$ (note that the martingale $(W^{(k)}_{\alpha}(\varphi^h,f_n))_k$  is bounded by $(W^{(k)}_{\alpha}(\varphi^h,B))_k$ for some closed $B$ contained in $(0,\pi)$, hence bounded in   $L^p(\Omega_{\T^+})$ for $p<2/\gamma^2$). Since $n$ is arbitrary, we can send it to $\infty$ and deduce that 
\[\E[V_{\alpha}(\varphi^h,\T^+)|\mc{F}_k]\geq W^{(k)}_{\alpha}(\varphi^h,\T^+).\] Also, noticing that $\mathbf{1}_{\T^+}=\lim_nf_n$ pointwise, by Fatou's Lemma we have 
$$\E[V_{\alpha}(\varphi^h,\T^+)|\mc{F}_k]\leq \liminf_n \E[V_{\alpha}(\varphi^h,f_n)|\mc{F}_k]= \liminf_n W^{(k)}_{\alpha}(\varphi^h,f_n)=W^{(k)}_{\alpha}(\varphi^h,\T^+),$$
hence our claim. From all these considerations, we deduce that, almost surely,
$$\lim_{k\to\infty}V^{(k)}_{\alpha}(\varphi^h,\T^+)=\lim_{k\to\infty}W^{(k)}_{\alpha}(\varphi^h,\T^+)=V_{\alpha}(\varphi^h,\T^+).$$

 Now we show that 
\begin{equation}\label{Lpmom}
\E\Big[\Big(V_{\alpha}(\varphi^h,\T^+)\Big)^p\Big]<+\infty
\end{equation}
for $p\in (1,p_2)$. First recall that $\E\Big[\Big(V_{\alpha}(\varphi^h,B)\Big)^p\Big]<+\infty$ for any closed set $B\subset (0,\pi)$ and $p<2/\gamma^2$ by standard GMC theory, as explained before.  The region that will affect these $L^p$-estimates is thus located near the extremal points $\{0,\pi\}$. Since they play a symmetric role, we only focus on $0$. We want to show $\E\Big[\Big(V_{\alpha}(\varphi^h,B)\Big)^p\Big]<+\infty$ for some small interval $B\subset(0,\pi)$ with closure containing $0$, and for $p\in (1,p_2)$. For $\theta,\theta'\in (0,\delta)$ with $\delta$ small enough we have
\[\E[\varphi^h(\theta)\varphi^h(\theta')]=\log\frac{1}{|e^{i\theta}-e^{i\theta'}||e^{i\theta}-e^{-i\theta'}|}\leq \log\frac{1}{|\theta-\theta'||\theta+\theta'|}+C\] 
for some $C>0$. This inequality serves to use Kahane's convexity inequality (see \cite{Kahane85} or \cite{rhodes2014_gmcReview}): it says that the bound \eqref{Lpmom} holds provided we can show the same estimate for a Gaussian field with covariance $\log\frac{1}{|\theta-\theta'||\theta+\theta'|}$. Let us first identify such a Gaussian field. We consider a centered Gaussian field $\tilde Y$ with covariance $\log\frac{1}{|z-z'|}$ on the unit disk $\D$. Then we set $Y(z)=\frac{1}{\sqrt{2}}(\tilde{Y}(z)+\tilde{Y}(-z))$, which has covariance $\log\frac{1}{|z-z'||z+z'|}$. Therefore, this field has the desired covariance: it suffices to consider its restriction to the real line. Thus we have to show
\begin{equation}\label{Lpmombis}
\E\Big[\Big(Z((0,\delta))\Big)^p\Big]<+\infty
\end{equation}
where the GMC $Z$ is  defined by
$$Z(B):=\lim_{\epsilon\to 0}Z_\epsilon(B),\quad \text{with }\quad Z_\epsilon(B):=\int_Be^{\gamma Y_\epsilon(x)-\frac{\gamma^2}{2}\E[Y_\epsilon(x)^2]}\frac{\dd x}{|x|^\alpha}$$
for any  open subset  $B$  of $(0,\delta)$, and $Y_\epsilon$ is the circle average regularisation of $Y$ in the metric $|dz|^2$. The choice of this regularisation is especially convenient as it obeys a nice scaling relation: indeed it is straightforward to check that, for $\lambda \in (0,1)$, 
$$Y_{\lambda\epsilon}(\lambda\cdot)\stackrel{\rm law}{=}Y_{\epsilon}(\cdot)+N_\lambda$$
where $N_\lambda$ is a centered Gaussian random variable independent of the process $Y_{\epsilon}$ and with variance $-2\log\lambda$.  We are now in position to focus on the $L^p$-moments. This implies, using a change of variables, that
\begin{align*}
\E\big[\big(  Z_{\lambda\epsilon}((0,\lambda\delta))   \big)^p\big]=&\E\big[\big( \lambda^{1-\alpha}e^{\gamma N_\lambda-\frac{\gamma^2}{2}\E[N_{\lambda}^2]} Z_{ \epsilon}((0, \delta))   \big)^p\big]\label{play}
\\
=&\lambda^{p(1-\alpha+\gamma^2)-p^2\gamma^2}\E\big[\big(   Z_{ \epsilon}((0, \delta))   \big)^p\big].
\end{align*}
Using the inequalities $\lambda^p(a+b)^p\leq (\lambda a+(1-\lambda)b)^p\leq \lambda a^p+(1-\lambda)b^p$ for $\lambda\in(0,1/2)$ and $a,b>0$, we deduce  
\begin{align}
\E\big[\big(  Z_{\lambda\epsilon}((0,\delta))  \big)^p\big]=& \E\big[\big(  Z_{\lambda\epsilon}((0,\lambda\delta))+Z_{\lambda\epsilon}((\lambda\delta,\delta))  \big)^p\big]
\\
\leq&
 \lambda^{1-p}  \E\big[\big(  Z_{\lambda\epsilon}((0,\lambda\delta))   \big)^p\big]+\lambda^{-p}(1-\lambda) \E\big[\big(   Z_{\lambda\epsilon}((\lambda\delta,\delta))  \big)^p\big] \nonumber
\\
=&
\lambda^{1-p}  \lambda^{p(1-\alpha+\gamma^2)-p^2\gamma^2}\E\big[\big(   Z_{ \epsilon}((0, \delta))   \big)^p\big]  +\lambda^{-p}(1-\lambda)  \E\big[\big(   Z_{\lambda\epsilon}((\lambda\delta,\delta))  \big)^p\big]\nonumber
\\ 
=&
\lambda^{ f(p)}\E\big[\big(   Z_{ \epsilon}((0, \delta))   \big)^p\big]  + \lambda^{-p}(1-\lambda) \E\big[\big(   Z_{\lambda\epsilon}((\lambda\delta,\delta))  \big)^p\big]\nonumber
\end{align}
where we have set $f(p):=1+p(\gamma^2-\alpha)-p^2\gamma^2$. This polynomial of degree two has two roots $p_1,p_2=\tfrac{1}{2}(1-\tfrac{\alpha}{\gamma^2})\pm \sqrt{\tfrac{1}{\gamma^2}+\tfrac{1}{4}(1-\tfrac{\alpha}{\gamma^2})^2}$ and we have $p_2>1$ for $\alpha\in (0,1)$. Now we claim that, for $p>1$, there exists a constant $C_p>0$ such that for every $\lambda\in(0,1)$ and every $\epsilon>0$
\begin{equation}\label{inegkahp}
\E\big[\big(   Z_{ \epsilon}((0, \delta))   \big)^p\big]\leq C_p \E\big[\big(   Z_{\lambda \epsilon}((0, \delta))   \big)^p\big].
\end{equation}
This follows easily from Kahane's inequality and the fact that, for some $C>0$
$$\E[Y_\epsilon(x)Y_\epsilon(x')]\leq C+\E[Y_{\lambda\epsilon}(x)Y_{\lambda\epsilon}(x')] $$
for arbitrary $\epsilon>0$ and $\lambda\in (0,1)$. Indeed, this follows from the relations $\E[Y_\epsilon(x)Y_\epsilon(x')]=\log\frac{1}{|x-x'|}$ if $|x-x'|>\epsilon$, and $\E[Y_\epsilon(x)Y_\epsilon(x')]=-\log\epsilon+f((x-x')/\epsilon)$ for $|x-x'|\leq\epsilon$ with the function $f$ defined for $|z|\leq 1$ by
$$f(z):=\frac{1}{(2\pi)^2}\iint_{[0,2\pi]^2}\log\frac{1}{|z+e^{i\theta}-e^{i\theta'}|}\dd \theta\dd\theta'.$$
The function $f$ is bounded for $|z|\leq 1$, and our claim follows.

We deduce from \eqref{inegkahp} and \eqref{play} that, for $p\in (1,p_2)$, 
\[\begin{split}
\E\big[\big(  Z_{\epsilon}((0,\delta))  \big)^p\big]\leq & C_p \E\big[\big(   Z_{\lambda \epsilon}((0, \delta))   \big)^p\big]\\
\leq & C_p\Big(C_p \lambda^{ f(p)}\E\big[\big(  Z_{ \lambda \epsilon}((0, \delta))   \big)^p\big] + \lambda^{-p}(1-\lambda) \E\big[\big(   Z_{\lambda\epsilon}((\lambda\delta,\delta))  \big)^p\big]\Big).
\end{split}\]
Since $f(p)>0$ we can choose $\lambda>0$ small enough so as to make $C_p \lambda^{ f(p)}<1$ and we end up with
\[\E\big[\big(  Z_{\epsilon}((0,\delta))  \big)^p\big]\leq C_p\frac{\lambda^{-p}(1-\lambda)}{1- C_p \lambda^{ f(p)}  } \E\big[\big(   Z_{\lambda\epsilon}((\lambda\delta,\delta))  \big)^p\big]\leq C\frac{\lambda^{-p}(1-\lambda)}{1- C_p \lambda^{ f(p)}  } \E\big[\big(   Z((\lambda\delta,\delta))  \big)^p\big].\]
We have used again Kahane's inequality for the last inequality (which produces eventually the further constant $C>0$). Note that $ \E\big[\big(   Z((\lambda\delta,\delta))  \big)^p\big]<+\infty$ since $p<2/\gamma^2$. This shows \eqref{Lpmombis}, and then \eqref{Lpmom}.  As a consequence this shows that the convergence \eqref{limartin} also holds in $L^p$ for $p<p_2$ since $(W^{(k)}_{\alpha}(\varphi^h,\T^+))_k$ is a  martingale. 

Finally the relation $0\leq V^{(k)}_{\alpha}(\varphi^h,\T^+)\leq W^{(k)}_{\alpha}(\varphi^h,\T^+)$,  the convergences of $V^{(k)}_{\alpha}(\varphi^h,\T^+)- W^{(k)}_{\alpha}(\varphi^h,\T^+)\to 0$ almost surely  and the convergence of the family $(W^{(k)}_{\alpha}(\varphi^h,\T^+))_k$ in $L^p(\Omega_{\T^+})$ for $1<p<p_2$ imply the convergence of the family $(V^{(k)}_{\alpha}(\varphi^h,\T^+))_k$ in $L^p(\Omega_{\T^+})$ for $1<p<p_2$. Indeed, consider $1<p<p'<p_2$, and $R>0$. We have
\begin{align*}
\E\big[\big|  V^{(k)}_{\alpha}(\varphi^h,\T^+)&-W^{(k)}_{\alpha}(\varphi^h,\T^+)\big|^p\big] 
\\
\leq & \E\big[\big| V^{(k)}_{\alpha}(\varphi^h,\T^+)-W^{(k)}_{\alpha}(\varphi^h,\T^+)\big|^p\mathbf{1}_{|V^{(k)}_{\alpha}(\varphi^h,\T^+)-W^{(k)}_{\alpha}(\varphi^h,\T^+) |\leq R}\big] 
\\
&+2^p\E\big[\big| W^{(k)}_{\alpha}(\varphi^h,\T^+)\big|^p\mathbf{1}_{| W^{(k)}_{\alpha}(\varphi^h,\T^+) |>R/2}\big] 
\\
\leq & \E\big[\big| V^{(k)}_{\alpha}(\varphi^h,\T^+)-W^{(k)}_{\alpha}(\varphi^h,\T^+)\big|^p\mathbf{1}_{|V^{(k)}_{\alpha}(\varphi^h,\T^+)-W^{(k)}_{\alpha}(\varphi^h,\T^+) |\leq R}\big] 
\\
& +2^{p'}R^{-(p'-p)}\sup_k\E\big[\big| W^{(k)}_{\alpha}(\varphi^h,\T^+)\big|^{p'} \big] .
\end{align*}
We can choose $R$ large enough so as to make the second term in the right hand side arbitrarily small, independently of $k$, and then send $k$ to $\infty$ to get the first term arbitrarily small.
\end{proof}

\bibliographystyle{alpha}
\bibliography{references}

\end{document}